\documentclass[11pt]{article}
\usepackage{fullpage}
\usepackage{amssymb, amsmath, amsthm}
\usepackage{natbib}
\usepackage{setspace}
\usepackage{multirow}
\usepackage{booktabs,longtable,pdflscape}
\usepackage{graphicx}
\usepackage{caption}
\usepackage{subcaption}
\usepackage{algorithm}
\usepackage{algpseudocode}
\usepackage{url}
\usepackage{enumitem}
\usepackage{pdfpages}
\usepackage{prodint}
\usepackage{mathrsfs}

\usepackage{natbib}

\usepackage{color}

\usepackage[colorlinks,citecolor=blue,urlcolor=blue]{hyperref}
\RequirePackage{hypernat}

\usepackage{imakeidx}
\makeindex

\usepackage{booktabs}

\makeatletter
\newcommand{\myitem}[1]{%
\item[#1]\protected@edef\@currentlabel{#1}%
}
\makeatother

\newcommand{\indep}{\rotatebox[origin=c]{90}{$\models$}}

\usepackage[usenames,dvipsnames]{xcolor}

\providecommand{\simfigdir}{figures/simulation}
\providecommand{\simtabledir}{tables/simulation}

\newcommand{\EE}{{\mathbb E}}

\newcommand{\trainingsigma}{\mathscr{T}}
\newcommand{\E}{\mathbb{E}}

\theoremstyle{plain}

\newtheorem{proposition}{Proposition}
\newtheorem{corollary}{Corollary}

\newtheorem{assumption}{Assumption}
\newtheorem{theorem}{Theorem}
\newtheorem{lemma}{Lemma}
\newtheorem{remark}{Remark}

\newtheorem{assumptionalt}{Assumption}[assumption]

\newenvironment{assumptionp}[1]{
  \renewcommand\theassumptionalt{#1}
  \assumptionalt
}{\endassumptionalt}

\title{Debiased inference for proximal dose-response function}

\author{Daeyoung Ham \qquad Sihan Wu \qquad Yifan Cui}

\begin{document}

\maketitle

\begin{abstract}
In this paper, we study nonparametric inference for the causal dose-response curve of a continuous-treatment under unmeasured confounding by leveraging treatment- and outcome-inducing confounding proxies.
To estimate the curve, we introduce a novel proximal doubly robust pseudo-outcome whose conditional mean given treatment equals the dose-response curve whenever either bridge function is correctly specified, thereby addressing a key gap in proximal causal inference for continuous-treatments.
Furthermore, we derive an influence function for its smoothed causal estimand, and construct a cross-fitted debiased local-linear estimator with a proper local-quadratic bias correction. 
We establish pointwise and finite-dimensional asymptotic normality and a uniform Gaussian approximation over compact treatment intervals. 
Both smoothing bandwidths may have the mean-squared-error-optimal order without undersmoothing, while cross-fitting accommodates flexible bridge estimators under a product convergence rate conditions without fitted-class entropy restrictions. 
We also develop practical bandwidth selectors, pointwise confidence intervals, and simultaneous confidence bands. 
Extensive simulations and a data analysis highlight the practical performance of the proposed method under latent confounding and multiple proxies.
\end{abstract}

\singlespacing

\section{Introduction}

Much of the classical causal inference literature is framed around a binary
treatment, with the target defined as a contrast between potential outcomes
under treatment and control.  Many empirical studies, however, concern
treatments that vary continuously.  Dichotomizing such treatments can obscure
important effect variation.  Examples include dose finding in biomedical
studies, exposure--response analysis in environmental health, and policy
evaluation with continuously varying policy intensity.  The corresponding
causal estimand is the dose-response curve
\(\theta(a)=\mathbb E\{Y(a)\}\), the mean potential outcome under treatment
level \(a\).

There is now a substantial literature on causal inference with continuous
treatments.  Early work extends the propensity score to general treatment
regimes and continuous exposures
\citep{hirano2004propensity,ImaiVanDyk2004}.  Subsequent work develops
semiparametric and nonparametric estimators of dose-response functions and
related causal parameters
\citep{GalvaoWang2015,kennedy2017non,calonico2018effect}.  More recent papers
develop nonparametric doubly robust estimation, testing, robust bias correction,
and confidence bands for causal curves
\citep{Doss2024nonparametric,Takatsu2025debiased,ColangeloLee2026}.  These
methods provide important tools for continuous-treatment problems, but their
causal validity relies on no unmeasured confounding.  This condition is often
difficult to defend in observational studies, particularly when relevant
socioeconomic, behavioral, or biological factors are incompletely recorded.
Latent confounding can then invalidate both estimation and uncertainty
quantification.

Proximal causal inference offers a different route to addressing unmeasured
confounding.  Rather than requiring the latent confounder to be observed, it
uses treatment-inducing and outcome-inducing confounding proxies.  Under
completeness and bridge-function conditions, these proxies can identify causal
effects when adjustment for observed covariates alone is insufficient
\citep{miao2018identifying,cui2024semiparametric,tchetgen2024introduction}.
The proximal framework has since been extended to longitudinal studies
\citep{tchetgen2024introduction,Ying2023}, survival outcomes with dependent censoring \citep{Ying2024},
synthetic controls \citep{shi2026theory}, mediation analysis
\citep{Dukes2023,bai2026proximal,ghassami2025causal,wu2026proximal}, individualized treatment regimes and heterogeneous effects
\citep{qi2024proximal,Shen2023,sverdrup2023proximal}, and sequential
decision-making problems
\citep{shi2022minimax,Zhang2026,wang2026blessing,gao2025multiple}.  These developments
establish proxy-based adjustment as a general statistical framework for modern
confounded data structures.

Closest to our setting, \citet{wu2024doubly} construct a kernel-localized
augmented estimator of the proximal dose-response curve, together with minimax
estimators of the bridge functions.  Their analysis derives a pointwise
smoothing bias of order \(h^2\) and variance of order \((nh)^{-1}\), yielding
the MSE-optimal bandwidth \(h\asymp n^{-1/5}\) and MSE of order
\(n^{-4/5}\) \citep{wasserman2006all}.  Their pointwise
bias--variance theorem assumes consistency of both bridge estimators and a
product-rate condition, and does not establish an asymptotic distribution or
provide feasible standard errors, confidence intervals, or simultaneous
confidence bands.  Moreover, at the MSE-optimal bandwidth, the leading bias is
generically of the same order as the standard error.  
Motivated by \citet{calonico2018effect} and
\citet{Takatsu2025debiased}, we instead use a local-linear fit with
smoothing bandwidth \(h\), estimate the second derivative entering its
leading-bias correction by a local-quadratic fit with bandwidth \(b\), and
incorporate the stochastic contribution of the correction into the limiting
variance.  This permits \(h\asymp b\asymp n^{-1/5}\), without
undersmoothing.

Our starting point is a proximal doubly robust pseudo-outcome whose conditional
mean given \(A=a\) equals \(\theta(a)\) when either bridge function is correctly
specified.  
This representation separates proximal adjustment from curve learning, thereby
allowing flexible choice of the second-stage estimator. Scientifically supported
monotonicity, convexity, or other shape restrictions can be incorporated through
shape-constrained regression \citep{westling2020causal,doss2026doubly}; beyond
these special cases, we develop smoothing-based estimation and inference for
general nonparametric dose-response curves. The same construction yields targeted
improvements in the regularity conditions required for inference.

The theoretical comparison with \citet{wu2024doubly} can most naturally be made
through \citet{ColangeloLee2026}.  In the standard setting without unmeasured
confounding, \citet{ColangeloLee2026} develop pointwise inference for a related
doubly robust estimator.
Their results clarify what would be required
to extend the proximal estimator of \citet{wu2024doubly} to pointwise inference.
Applying the pointwise inference strategies of \citet{ColangeloLee2026} to this
proximal estimator would require either \(h^2\sqrt{nh}\to0\) or a separate
estimate of the leading bias.  The latter uses a pilot bandwidth \(b\), chosen
specifically for bias estimation, and requires \(b\to0\) and
\(nb^5\to\infty\).  When \(h\asymp n^{-1/5}\), these conditions imply
\(b/h\to\infty\), so bias estimation pools observations over a much wider range
of treatment values than curve estimation.  Our local-quadratic correction
instead permits \(b\asymp h\asymp n^{-1/5}\) and accounts for its estimation
error in the limiting variance, eliminating the need for a separate, more
heavily smoothed fit.

This comparison also clarifies the relevant regularity conditions.  The
pointwise theory of \citet{ColangeloLee2026} requires the joint observed-data
density, the outcome regression, and the probability limits of both nuisance
estimators to be three times differentiable in treatment, with uniformly bounded
derivatives.  A direct extension of their argument to the proximal estimator
would impose analogous smoothness conditions on the observed-data distribution
and the probability limits of the bridge estimators.  Our theory instead
requires smoothness only for \(\theta\), the marginal treatment density, and the
conditional pseudo-outcome variance.  These are low-dimensional quantities
obtained after the covariates and proxies have been properly adjusted.  
The bridge
estimators can therefore be fitted using flexible methods, provided that their \(L_2\) errors converge at the required rates.  For uniform
inference, our theory also avoids the sup-norm convergence and
treatment-Lipschitz conditions imposed on the fitted nuisances by
\citet{ColangeloLee2026}.  Finally, unlike the non-cross-fitted analysis of
\citet{Takatsu2025debiased}, our use of cross-fitting does not require complexity
restrictions on the classes used to estimate the bridges or a condition linking
their complexity to the bandwidth, which also provides flexibility in the nuisance estimation stage.

Our contributions are fourfold.  First, we establish the proximal doubly robust
regression identity, formulate a local-linear estimator, derive an influence
function representation for its smoothed target, and construct a
local-quadratic debiased estimator.  Second, we construct cross-fitted estimators
and establish pointwise and finite-dimensional asymptotic normality with
\(h\asymp b\asymp n^{-1/5}\), without undersmoothing or entropy conditions on
the fitted bridge classes.  Third, we extend the theory from fixed treatment
values to compact treatment intervals and establish an oracle uniform Gaussian
approximation.
Fourth, we develop and compare practical bandwidth selectors
based on two plug-in rules, and leave-one-out
cross-validation, including a full search over a prespecified relative bandwidth grid; the resulting numerical performance supports the inferential theory.
In simulations, the proposed
estimators attain nominal pointwise and simultaneous coverage across nuisance-specification regimes, including settings in which only one bridge
function is correctly specified, which supports the double robustness of our procedure. 

The rest of the paper is organized as follows.
Section~\ref{sec:Prox_conti} introduces the proximal continuous-treatment setup,
the bridge assumptions, and the proximal pseudo-outcome.
Section~\ref{sec:estimation} derives the influence function representation for
the smoothed proximal target and presents the cross-fitted estimator.
Section~\ref{sec:debiased-inference-main} establishes pointwise and
finite-dimensional inference and the oracle uniform Gaussian approximation,
Section~\ref{sec:inference-implementation-covariance} presents the implementable confidence intervals, plug-in simultaneous-band
construction, and bandwidth selectors used in practice.
Section~\ref{sec:numerical-experiments} evaluates the
proposed estimators in simulations.  Section~\ref{sec:realdata} applies the
method to the legalized abortion and crime study.  Technical proofs and further numerical details are collected in the appendix.

\section{Model setup and the existence of the pseudo-outcome}\label{sec:Prox_conti}

\subsection{Setup}
We consider a real-valued outcome $Y$, a continuously distributed scalar
treatment $A$, and possibly multivariate observed variables $X$, $Z$, and
$W$.  The observed vectors may contain discrete and continuous components.
The unmeasured confounders $U$ introduced below may be multivariate, and may
also contain discrete and continuous components.  For a treatment level $a$,
$Y(a)$ is the real-valued potential outcome under an intervention setting
$A=a$.  The target curve is the population mean of $Y(a)$ as a function of
$a$; at a fixed $a$, it is the average outcome under assignment of the
population to treatment level $a$.

The variable $X$ contains baseline covariates, $Z$ is a treatment-inducing confounding proxy and $W$
is an outcome-inducing confounding proxy.  The treatment bridge is a function of $Z$, whereas the outcome bridge is a function of $W$; see \citet{miao2018identifying,cui2024semiparametric,tchetgen2024introduction} for further discussion of the role of the proxy variables.
.

Let $O=(Y,A,Z,W,X)$, where $A$ has support $\mathcal A$ and marginal Lebesgue density $f_A$. Conditional densities of $A$ are denoted by $f(a\mid\cdot)$.
The observed data consist of independent and identically distributed copies
of $O$.

\subsection{Population pseudo-outcome}

\begin{assumptionp}{A}\label{assm:A}
\par\medskip\noindent
The following conditions hold.
\begin{enumerate}[label=\textup{(A\arabic*)},ref=A\arabic*,leftmargin=*,itemsep=0.35em]
\item\label{assm:consistency} $Y=Y(A)$ almost surely and $\mathbb E(Y^2)<\infty$.
\item\label{assm:basic.proxy} There exists an unobserved confounders $U$ such that
\[
\begin{aligned}
&Y \indep \,Z \mid (U,X,A),\qquad
W \indep \,(Z,A) \mid (U,X),
\end{aligned}
\]
and, for Lebesgue-almost every $a\in\mathcal A$,
\[
Y(a)\indep \,A \mid (U,X),\qquad
0<c_1\leq f(a\mid U,X)<\infty\quad\text{almost surely}.
\]
Moreover, $\sup_{a\in\mathcal A}f_A(a)<\infty$.
\item\label{assm:completeness} For almost every $(a,x)$ under the joint law of $(A,X)$ and every square-integrable function $g$,
\[
\mathbb{E}\!\left[g(U)\mid W, A=a, X=x\right] = 0 \ \text{a.s.}
\quad\Longrightarrow\quad
 g(U)=0 \ \text{a.s.}.
\]
For almost every $(a,x)$ under the joint law of $(A,X)$ and every square-integrable function $g$,
\[
\mathbb{E}\!\left[g(U)\mid Z, A=a, X=x\right] = 0 \ \text{a.s.}
\quad\Longrightarrow\quad
 g(U)=0 \ \text{a.s.}.
\]
\end{enumerate}
\end{assumptionp}

\begin{assumptionp}{B}\label{assm:B}
\par\medskip\noindent
There exist measurable functions $h$ and $q$ satisfying
\[
\mathbb E\{h^2(W,A,X)+q^2(Z,A,X)\}
+
\int\mathbb E\{h^2(W,a,X)\}\,dF_A(a)
<\infty
\]
and the following equations.
\begin{enumerate}[label=\textup{(B\arabic*)},ref=B\arabic*,leftmargin=*,itemsep=0.35em]
\item\label{assm:outcome-bridge}
\begin{equation}
\mathbb E\{Y\mid Z,A,X\}=\int h(w,A,X)\,dF(w\mid Z,A,X). \label{eq:Bh}
\end{equation}
\item\label{assm:treatment-bridge} For $F_A$-almost every $a$,
\begin{equation}
\mathbb{E}\{q(Z,a,X)\mid W,A=a,X\}
=
\frac{f_A(a)}{f(a\mid W,X)}. \label{eq:Bq}
\end{equation}
\end{enumerate}
\end{assumptionp}

Assumption~\ref{assm:consistency} states consistency and the second-moment
condition.  The two conditional-independence restrictions in
Assumption~\ref{assm:basic.proxy} specify the proxy roles.  The restriction
$Y\indep Z\mid(U,X,A)$ is the exclusion restriction for the
treatment-inducing confounding proxy, and
$W\indep(Z,A)\mid(U,X)$ is the exclusion restriction for the
outcome-inducing confounding proxy.  These restrictions do not require either
proxy to equal $U$.  Assumption~\ref{assm:basic.proxy} imposes conditional
exchangeability given $(U,X)$, rather than
$Y(a)\indep A\mid(X,Z,W)$, and positivity of the conditional treatment density
within $(U,X)$ strata.  It also bounds the marginal treatment density.

Each implication in Assumption~\ref{assm:completeness} states that a
conditional-expectation operator on square-integrable functions of $U$ is
injective.  Completeness given $(Z,A=a,X=x)$ is used for the outcome bridge,
whereas completeness given $(W,A=a,X=x)$ is the corresponding condition for
the treatment bridge.  In a finite categorical model, each condition is a
full-column-rank restriction on the associated conditional probability matrix
and requires the relevant proxy to have at least as many categories as $U$
\citep{miao2018identifying,cui2024semiparametric,tchetgen2024introduction}.

Assumption~\ref{assm:B} imposes the existence and integrability of real-valued
outcome and treatment bridge functions.  Equation~\eqref{eq:Bh} represents the
observed conditional mean of $Y$ through the outcome bridge $h(W,A,X)$.
Equation~\eqref{eq:Bq} makes the conditional mean of $q(Z,a,X)$, rather than
$q(Z,a,X)$ itself, equal the standardized propensity function \citep{Takatsu2025debiased}.  
This is the standardized form of the continuous-treatment bridge introduced in Remark~3 of
\citet{cui2024semiparametric}.  
Both bridge equations are Fredholm integral equations of the first kind \citep{cui2024semiparametric,carrasco2007linear,kress2014linear,miao2018identifying}.  
Sufficient conditions for the existence
of square-integrable solutions are given in
Appendix~\ref{sec:existence-bridges}.

Assumptions~\ref{assm:A} and~\ref{assm:B} imply, for $F_A$-almost every $a$,
\begin{equation}
\theta(a):=\EE\{Y(a)\}=\mathbb E\{h(W,a,X)\}.
\label{eq:ID-theta}
\end{equation}
The proof of \eqref{eq:ID-theta} is given in Appendix~\ref{sec: outcome bridge ID proof}.

We next define a population pseudo-outcome by adding a treatment-bridge-weighted
outcome-bridge residual to the marginal outcome-bridge mean.
For measurable functions $(\tilde q,\tilde h)$ satisfying
\[
\int \mathbb E\{|\tilde h(W,a,X)|\}\,dF_A(a)
+
\mathbb E\bigl[|\tilde q(Z,A,X)\{Y-\tilde h(W,A,X)\}|\bigr]
<\infty,
\]
define
\begin{equation}
S_{\tilde h}(a)
:=
\mathbb{E}\{\tilde h(W,a,X)\},
\end{equation}
and
\begin{equation}\label{def:proximal_PO}
\xi_{\mathrm{prox}}(O;\tilde q,\tilde h)
:=
\tilde q(Z,A,X)\{Y-\tilde h(W,A,X)\}
+
S_{\tilde h}(A).
\end{equation}

\begin{theorem}\label{thm:DR_for_PO}
Let $(\bar q,\bar h)$ be measurable functions satisfying
\[
\int \mathbb E\{|\bar h(W,a,X)|\}\,dF_A(a)
+
\mathbb E\bigl[|\bar q(Z,A,X)\{Y-\bar h(W,A,X)\}|\bigr]
<\infty
\]
and, for $F_A$-almost every $a$,
\[
\begin{aligned}
\mathbb E\bigl[
&|\bar q(Z,a,X)|\,|Y-h(W,a,X)|\\
&+\{|\bar q(Z,a,X)|+|q(Z,a,X)|\}
 |h(W,a,X)-\bar h(W,a,X)|
\mid A=a
\bigr]<\infty.
\end{aligned}
\]
For \(F_A\)-almost every \(a\),
\[
\begin{aligned}
&\mathbb{E}\{\xi_{\mathrm{prox}}(O;\bar q,\bar h)\mid A=a\}-\theta(a)\\
&\qquad=
\mathbb E\bigl[
\{\bar q(Z,a,X)-q(Z,a,X)\}
\{h(W,a,X)-\bar h(W,a,X)\}
\mid A=a
\bigr].
\end{aligned}
\]
Consequently,
\[
\mathbb{E}\{\xi_{\mathrm{prox}}(O;\bar q,\bar h)\mid A=a\}
=
\theta(a)
\]
if either \(\bar h=h\) or \(\bar q=q\).
\end{theorem}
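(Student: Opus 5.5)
Conditioning on $A=a$ in the definition of $\xi_{\mathrm{prox}}$ gives
\[
\mathbb E\{\xi_{\mathrm{prox}}(O;\bar q,\bar h)\mid A=a\}
=\mathbb E[\bar q(Z,a,X)\{Y-\bar h(W,a,X)\}\mid A=a]+S_{\bar h}(a),
\]
because $S_{\bar h}(A)$ is a deterministic function of $A$. The identification result \eqref{eq:ID-theta} gives $\theta(a)=S_h(a)$ for $F_A$-almost every $a$. I would therefore aim for the decomposition
\[
\mathbb E\{\xi_{\mathrm{prox}}\mid A=a\}-\theta(a)
= \underbrace{\mathbb E[\bar q\{Y-h\}\mid A=a]}_{(\mathrm{I})}
+\underbrace{\mathbb E[\bar q\{h-\bar h\}\mid A=a]}_{(\mathrm{II})}
+\underbrace{S_{\bar h}(a)-S_h(a)}_{(\mathrm{III})},
\]
with all functions evaluated at $(Z,a,X)$ or $(W,a,X)$. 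The integrability hypotheses in the statement are exactly what is needed to split these conditional expectations linearly.

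Term (I) vanishes by the outcome bridge \eqref{eq:Bh}. I would condition on $(Z,A=a,X)$ and use the tower property with the integrable factor $\bar q(Z,a,X)$. The inner expectation $\mathbb E\{Y-h(W,a,X)\mid Z,A=a,X\}$ is zero, because \eqref{eq:Bh} states $\mathbb E(Y\mid Z,A,X)=\mathbb E\{h(W,A,X)\mid Z,A,X\}$. Evaluating this at $A=a$ is legitimate for $F_A$-almost every $a$, using regular conditional distributions.

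For term (III), the key step is to rewrite the marginal mean $S_{\bar h}(a)-S_h(a)=\mathbb E\{\bar h(W,a,X)-h(W,a,X)\}$ as a conditional-on-$A=a$ expectation weighted by $q$. With $\Delta(W,X):=h(W,a,X)-\bar h(W,a,X)$, the change of measure gives
\[
\mathbb E\{\Delta(W,X)\}=\mathbb E\Bigl\{\Delta(W,X)\frac{f_A(a)}{f(a\mid W,X)}\Bigm| A=a\Bigr\},
\]
which holds by Bayes' rule $dF(w,x\mid a)=f(a\mid w,x)\,dF(w,x)/f_A(a)$. Here $f(a\mid W,X)>0$ follows from positivity in \eqref{assm:basic.proxy}, after integrating over $U$ and using $W\indep A\mid(U,X)$. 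Then \eqref{eq:Bq} replaces the density ratio by $\mathbb E\{q(Z,a,X)\mid W,A=a,X\}$, and the tower property gives $\mathbb E\{\Delta\}=\mathbb E\{q(Z,a,X)\Delta(W,X)\mid A=a\}$. Hence $(\mathrm{III})=-\mathbb E[q\{h-\bar h\}\mid A=a]$, and adding (II) yields the claimed product form $\mathbb E[(\bar q-q)(h-\bar h)\mid A=a]$.

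The final claim is then immediate, since the product vanishes when $\bar h=h$ or when $\bar q=q$. I expect the main obstacle to be the measure-theoretic bookkeeping in term (III). Two points need care. First, the Bayes identity requires $f(a\mid W,X)$ to be a proper version of the conditional density, positive almost surely. Second, each tower step must hold for $F_A$-almost every $a$ simultaneously, which requires a jointly measurable version of the regular conditional distributions. The stated integrability of $\{|\bar q|+|q|\}|h-\bar h|$ given $A=a$, together with $\int\mathbb E|\bar h(W,a,X)|\,dF_A(a)<\infty$ and the square-integrability in Assumption~\ref{assm:B}, ensures every expectation is finite. I would state these facts first and then run the algebra above.
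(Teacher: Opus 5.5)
Your proposal is correct and follows essentially the same route as the paper's proof. The outcome-bridge residual term vanishes by \eqref{eq:Bh}. The density-ratio change of measure, which the paper isolates as Lemma~\ref{lem:DR_lemma} and proves with test functions $r(A)$ rather than a direct Bayes formula, combined with \eqref{eq:Bq}, turns $S_{\bar h}(a)-S_h(a)$ into $-\mathbb E[q(Z,a,X)\{h(W,a,X)-\bar h(W,a,X)\}\mid A=a]$ and leaves the product term. Your remark about needing a jointly measurable version, so that the identity holds simultaneously for $F_A$-almost every $a$ when $g$ itself depends on $a$, is a fair point of care that the paper's lemma glosses over.
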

\noindent Proof is given in Appendix~\ref{sec: PoP DR proximal proof}.
Theorem~\ref{thm:DR_for_PO} is a population identity.  The conditional
mean error is the conditional expectation of the product of the two bridge
differences.  If $\bar h=h$,
\eqref{eq:Bh} gives
$\mathbb E[\bar q(Z,a,X)\{Y-h(W,a,X)\}\mid A=a]=0$, and
\eqref{eq:ID-theta} gives $S_h(a)=\theta(a)$.  If $\bar q=q$,
\eqref{eq:Bq} converts the conditional expectation of
$h(W,a,X)-\bar h(W,a,X)$ given $A=a$ to its marginal expectation, which
cancels $S_{\bar h}(a)-S_h(a)$.  
Hence the discrepancy does not have terms involving
either bridge difference alone, and the conditional mean of the pseudo-outcome
equals $\theta(a)$ when either bridge is correct.  The result is the
continuous-treatment conditional-regression counterpart of the proximal
doubly robust identity in Theorem~3.1(3) of
\citet{cui2024semiparametric}; see also \citet{wu2024doubly}. The causal dose-response curve is therefore the conditional regression of the population
pseudo-outcome on $A$, which is the input to the local-polynomial construction
in the following section \citep{kennedy2017non,Takatsu2025debiased}.

\section{Estimation}
\label{sec:estimation}

\subsection{Influence function of the smoothed proximal target}

For a continuous-treatment, point evaluation of a dose-response curve is not pathwise differentiable relative to a nonparametric observed-data
model \citep{kennedy2017non}.  We therefore consider a smoothed functional
at a fixed $a_0\in\mathcal A$.  The bandwidth $h>0$ is used for the population
local-linear projection, defining the smoothed level at $a_0$, whereas $b>0$
is used for the population local-quadratic projection, defining the
second-derivative bias correction \citep{Takatsu2025debiased}.

Let $P_0$ denote the true observed-data law.  For a generic observed-data law
$P$, write $E_P$ for expectation under $P$, let $F_P$ and $P_{WX}$ denote the
marginal laws of $A$ and $(W,X)$, respectively, and write a subscript $0$ for
evaluation at $P_0$.  Let $K$ be a bounded symmetric kernel density supported
on $[-1,1]$.

\begin{assumptionp}{C}\label{assm:C}
\par\medskip\noindent
Fix $a_0\in\mathcal A$ and $h,b>0$. Let $\mathcal M^\dagger$ denote the collection of observed-data laws $P$ satisfying the following conditions.
\begin{enumerate}[label=\textup{(C\arabic*)},ref=C\arabic*,leftmargin=*,itemsep=0.35em]
\item\label{assm:smooth-model-bridges} $A$ has a marginal Lebesgue density $f_P$, and there exist unique bridge functions $h_P$ and $q_P$ satisfying
\[
E_P\{Y-h_P(W,A,X)\mid Z,A,X\}=0
\]
and, for $F_P$-almost every $a$,
\[
E_P\{q_P(Z,a,X)\mid W,A=a,X\}=\frac{f_P(a)}{f_P(a\mid W,X)}.
\]
Moreover,
\[
h_P\in L_2(P_{WX}\otimes F_P),
\qquad
E_P\{Y^2+h_P^2(W,A,X)+q_P^2(Z,A,X)\}<\infty,
\]
and
\[
E_P\!\left[
I\{|A-a_0|\le \max(h,b)\}
q_P(Z,A,X)^2\{Y-h_P(W,A,X)\}^2
\right]<\infty.
\]
\item\label{assm:smooth-model-positivity} There exists $\kappa_P>0$ such that
\[
f_P(a\mid W,X)\ge \kappa_P
\]
for Lebesgue-almost every $a$ satisfying $|a-a_0|\le \max(h,b)$, $P_{WX}$-almost surely.
\end{enumerate}
\end{assumptionp}

Condition~\ref{assm:smooth-model-bridges} makes the maps $P\mapsto h_P$
and $P\mapsto q_P$ single-valued and imposes the integrability used below.
The outcome bridge defines the smoothed target, and the treatment bridge enters
its influence function representation.
Condition~\ref{assm:smooth-model-positivity} applies on the union of the
supports of the two kernel weights.

For \(P\in \mathcal M^\dagger\), define
\[
\theta_P(a):=\mathbb E_P\{h_P(W,a,X)\},
\quad
\xi_P(O):=
q_P(Z,A,X)\{Y-h_P(W,A,X)\}+\theta_P(A).
\]
The function $\theta_P$ is an observed-data functional on
$\mathcal M^\dagger$.  At $P_0$, Assumptions~\ref{assm:A} and~\ref{assm:B}
give $\theta_0(a)=E_0\{Y(a)\}$ for $F_0$-almost every $a$.  The outcome-bridge
equation and iterated expectation give
\[
\mathbb E_P\{\xi_P(O)\mid A=a\}=\theta_P(a)
\]
for \(F_P\)-almost every \(a\).

Consequently, each population local-polynomial moment involving
$\theta_P(A)$ is unchanged when $\theta_P(A)$ is replaced by $\xi_P(O)$.
Thus the bridge equations reduce the construction of the smoothed proximal
target to population local-polynomial regression of $\xi_P(O)$ on $A$. 
In particular, this construction
is a proximal extension of the population local-polynomial target and
influence function analysis in Sections~2.3 and 2.4 of
\citet{Takatsu2025debiased}.

Define \(K_{h,a_0}(a):=h^{-1}K\{(a-a_0)/h\}\) and
\(K_{b,a_0}(a):=b^{-1}K\{(a-a_0)/b\}\).  Let
\(w_{h,a_0,1}(a):=(1,(a-a_0)/h)^\top\) and
\(w_{b,a_0,2}(a):=(1,(a-a_0)/b,((a-a_0)/b)^2)^\top\), and define
\(\widetilde w_{h,a_0,1}(a):=w_{h,a_0,1}(a)((a-a_0)/h)^2\).
Then $w_{h,a_0,1}(a),\widetilde w_{h,a_0,1}(a)\in\mathbb R^2$ and
$w_{b,a_0,2}(a)\in\mathbb R^3$.  Let
\(e_1:=(1,0)^\top\in\mathbb R^2\) and
\(e_3:=(0,0,1)^\top\in\mathbb R^3\).  For
\(P\in\mathcal M^\dagger\), define
\(D_{P,h,a_0,1}:=P\{w_{h,a_0,1}(A)w_{h,a_0,1}(A)^\top
K_{h,a_0}(A)\}\).  Likewise, define
\(D_{P,b,a_0,2}:=P\{w_{b,a_0,2}(A)w_{b,a_0,2}(A)^\top
K_{b,a_0}(A)\}\).
Condition~\ref{assm:smooth-model-positivity} implies that $f_P$ is bounded
below by $\kappa_P$ on the supports of $K_{h,a_0}$ and $K_{b,a_0}$.  Since a
nonzero polynomial of degree at most two has only finitely many zeros and $K$
is a density, for every nonzero $v\in\mathbb R^2$,
\[
v^\top D_{P,h,a_0,1}v
=
\int \{v^\top(1,u)^\top\}^2K(u)f_P(a_0+hu)\,du
>0,
\]
and, for every nonzero $v\in\mathbb R^3$,
\[
v^\top D_{P,b,a_0,2}v
=
\int \{v^\top(1,u,u^2)^\top\}^2K(u)f_P(a_0+bu)\,du
>0.
\]
Thus $D_{P,h,a_0,1}\in\mathbb R^{2\times2}$ and
$D_{P,b,a_0,2}\in\mathbb R^{3\times3}$ are nonsingular.  Define
\(M_{P,h,a_0}:=P\{w_{h,a_0,1}(A)K_{h,a_0}(A)\theta_P(A)\}\in\mathbb R^2\), and 
\(M_{P,b,a_0}:=P\{w_{b,a_0,2}(A)K_{b,a_0}(A)\theta_P(A)\}\in\mathbb R^3\).
We further define \(r_{P,h,a_0}(a):=e_1^\top D_{P,h,a_0,1}^{-1}w_{h,a_0,1}(a)K_{h,a_0}(a)\), and
\[
\Gamma_{P,h,b,a_0}(a)
:=
r_{P,h,a_0}(a)
-\frac{1}{2}h^2 c_{P,h,a_0,2}s_{P,b,a_0}(a),
\]
where
\begin{equation*}
\begin{split}
 &s_{P,b,a_0}(a):=2b^{-2}e_3^\top D_{P,b,a_0,2}^{-1}w_{b,a_0,2}(a)K_{b,a_0}(a),\\
 &c_{P,h,a_0,2}:=e_1^\top D_{P,h,a_0,1}^{-1}
P\{\widetilde w_{h,a_0,1}(A)K_{h,a_0}(A)\}
\end{split}
\end{equation*}
The smoothed and debiased target parameter is
\[
\theta_{P,h,b}(a_0)
:=
P\{\Gamma_{P,h,b,a_0}(A)\theta_P(A)\}
=
\theta_{P,h}^{\mathrm{LL}}(a_0)
-\frac{1}{2}h^2 c_{P,h,a_0,2}\theta_{P,b}''(a_0),
\]
where
\(\theta_{P,h}^{\mathrm{LL}}(a_0):=e_1^\top D_{P,h,a_0,1}^{-1}M_{P,h,a_0}\),
\(\theta_{P,b}''(a_0):=2b^{-2}e_3^\top D_{P,b,a_0,2}^{-1}M_{P,b,a_0}\).
The vector $D_{P,h,a_0,1}^{-1}M_{P,h,a_0}$ contains the coefficients
of the population kernel-weighted local-linear projection of $\theta_P(A)$,
and $\theta_{P,h}^{\mathrm{LL}}(a_0)$ is its intercept.  Similarly,
$D_{P,b,a_0,2}^{-1}M_{P,b,a_0}$ contains the coefficients of the population
local-quadratic projection, and $\theta_{P,b}''(a_0)$ is twice its quadratic
coefficient rescaled by $b^{-2}$.  The coefficient $c_{P,h,a_0,2}$ is the
image of the quadratic monomial under the local-linear intercept functional.
The definitions also give \(P\{\Gamma_{P,h,b,a_0}(A)\}=1\),
\(P\{\Gamma_{P,h,b,a_0}(A)(A-a_0)\}=0\), and
\(P\{\Gamma_{P,h,b,a_0}(A)(A-a_0)^2\}=0\).
Its
support is contained in $\{a:|a-a_0|\le\max(h,b)\}$, and
$\theta_{P,h,b}(a_0)=\theta_P(a_0)$ whenever $\theta_P$ is a polynomial of
degree at most two on this set.

Define
\[
\gamma_{P,h,a_0}(a)
:=
e_1^\top D_{P,h,a_0,1}^{-1}w_{h,a_0,1}(a)K_{h,a_0}(a)
\,w_{h,a_0,1}(a)^\top D_{P,h,a_0,1}^{-1}M_{P,h,a_0},
\]
\[
\gamma_{P,b,a_0}''(a)
:=
2b^{-2}e_3^\top D_{P,b,a_0,2}^{-1}w_{b,a_0,2}(a)K_{b,a_0}(a)
\,w_{b,a_0,2}(a)^\top D_{P,b,a_0,2}^{-1}M_{P,b,a_0},
\]
and
\[
\begin{aligned}
&\gamma_{P,h,a_0}^c(a)\\
:=
&e_1^\top D_{P,h,a_0,1}^{-1}
\Bigl[
\widetilde w_{h,a_0,1}(a)-
w_{h,a_0,1}(a)w_{h,a_0,1}(a)^\top
D_{P,h,a_0,1}^{-1}
P\{\widetilde w_{h,a_0,1}(A)K_{h,a_0}(A)\}
\Bigr]K_{h,a_0}(a).
\end{aligned}
\]
The following theorem establishes the efficient influence function of the smoothed proximal estimand.
\begin{theorem}\label{thm:EIF of the smoothed proximal target}
Define
\[
\gamma_{P,h,b,a_0}(a)
:=
\gamma_{P,h,a_0}(a)
-
\frac{1}{2}h^2 c_{P,h,a_0,2}\gamma_{P,b,a_0}''(a)
+
\frac{1}{2}h^2 \theta_{P,b}''(a_0)\gamma_{P,h,a_0}^c(a).
\]
Under Assumptions~\ref{assm:smooth-model-bridges} and
\ref{assm:smooth-model-positivity} and the regularity condition in
Appendix~\ref{app:pathwise-bridge-regularity}, the map
\(P\mapsto \theta_{P,h,b}(a_0)\) is pathwise differentiable at \(P_0\), and
\[
\begin{aligned}
\varphi_{0,h,b,a_0}(O)
=&
\Gamma_{0,h,b,a_0}(A)\xi_0(O)-\gamma_{0,h,b,a_0}(A)+
\int \Gamma_{0,h,b,a_0}(\bar a)
\{h_0(W,\bar a,X)-\theta_0(\bar a)\}\,dF_0(\bar a)
\end{aligned}
\]
is an influence function at $P_0$.
\end{theorem}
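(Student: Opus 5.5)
We differentiate the map \(P\mapsto\theta_{P,h,b}(a_0)\) along a regular one-dimensional submodel \(\{P_t\}\subset\mathcal M^\dagger\) through \(P_0\) with score \(g(O)\), and read off the influence function from the identity \(\partial_t\theta_{P_t,h,b}(a_0)|_{t=0}=E_0\{\varphi_{0,h,b,a_0}(O)g(O)\}\). Write
\[
\theta_{P,h,b}(a_0)=e_1^\top D_{P,h,a_0,1}^{-1}M_{P,h,a_0}-\tfrac12h^2c_{P,h,a_0,2}\,2b^{-2}e_3^\top D_{P,b,a_0,2}^{-1}M_{P,b,a_0}.
\]
This is a smooth function of finitely many finite-dimensional functionals: the matrices \(D_{P,h,a_0,1}\) and \(D_{P,b,a_0,2}\), the vector \(P\{\widetilde w_{h,a_0,1}K_{h,a_0}\}\), and the vectors \(M_{P,h,a_0}\) and \(M_{P,b,a_0}\). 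The plan is to apply the chain rule, differentiating each block separately. Nonsingularity of the \(D\) matrices was shown above, which justifies differentiating the inverses via \(\partial D^{-1}=-D^{-1}(\partial D)D^{-1}\).

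First, the moment blocks depending only on the law of \(A\). For a bounded function \(m(A)\) supported near \(a_0\), the derivative of \(P\{m(A)\}\) is \(E_0[\{m(A)-P_0m\}g]\). Applying this to \(D_{P,h,a_0,1}\) gives, after the inverse rule, the term
\[
-e_1^\top D^{-1}_{0,h,a_0,1}w_{h,a_0,1}w_{h,a_0,1}^\top K_{h,a_0}D^{-1}_{0,h,a_0,1}M_{0,h,a_0}=-\gamma_{0,h,a_0}(A).
\]
Likewise, \(D_{P,b,a_0,2}\) inside the correction contributes \(+\tfrac12h^2c_{0,h,a_0,2}\gamma''_{0,b,a_0}(A)\). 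The derivative of \(c_{P,h,a_0,2}\), which involves both \(D_{P,h,a_0,1}\) and \(P\{\widetilde w K\}\), contributes \(-\tfrac12h^2\theta''_{0,b}(a_0)\gamma^c_{0,h,a_0}(A)\). Together these give exactly \(-\gamma_{0,h,b,a_0}(A)\). Centering constants can be ignored because scores have mean zero.

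Second, and this is the main obstacle, the \(M\) blocks. Each is of the form \(P\{m(A)\theta_P(A)\}\) with \(\theta_P(a)=E_P h_P(W,a,X)\), so the combined linear functional is \(\Psi(P)=P\{\Gamma_{P,h,b,a_0}(A)\theta_P(A)\}\), with \(\Gamma\) frozen at \(P_0\). Its derivative splits into three pieces.

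1. Perturbing the \(A\)-marginal gives \(E_0[\Gamma_0(A)\theta_0(A)g]\).
2. Perturbing \(P_{WX}\) with \(h_0\) held fixed gives
\[
E_0\Bigl[\int\Gamma_0(\bar a)\{h_0(W,\bar a,X)-\theta_0(\bar a)\}dF_0(\bar a)\,g\Bigr].
\]
3. Perturbing the bridge \(h_P\) is the delicate piece.

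For the third piece, I would differentiate the bridge equation \(E_{P_t}\{Y-h_{P_t}(W,A,X)\mid Z,A,X\}=0\). Under the appendix regularity condition, \(\dot h\) exists in \(L_2\), and
\[
E_0\{\dot h(W,A,X)\mid Z,A,X\}=E_0[\{Y-h_0\}g\mid Z,A,X].
\]
The quantity to compute is \(\int\Gamma_0(\bar a)E_0\{\dot h(W,\bar a,X)\}dF_0(\bar a)\). Using the treatment bridge \eqref{eq:Bq} together with positivity (C2), I would change measure:
\[
E_0\{\dot h(W,\bar a,X)\}=E_0\Bigl\{\frac{f_0(\bar a)}{f_0(\bar a\mid W,X)}\dot h(W,\bar a,X)\,\Big|\,A=\bar a\Bigr\}=E_0\{q_0(Z,\bar a,X)\dot h(W,\bar a,X)\mid A=\bar a\}.
\]
Then iterate on \((Z,A,X)\) and substitute the differentiated bridge equation. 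Integrating against \(\Gamma_0\,dF_0\) yields \(E_0[\Gamma_0(A)q_0(Z,A,X)\{Y-h_0\}g]\).

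The integrability conditions in (C1) on the kernel support legitimize these Fubini and iterated-expectation steps. The regularity condition is where existence of \(\dot h\) (equivalently, a uniqueness and continuity argument for the Fredholm equation) is taken as given.

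Summing the contributions and using \(\xi_0=q_0(Y-h_0)+\theta_0(A)\), the derivative equals \(E_0[\varphi_{0,h,b,a_0}g]\). I would finally check that \(E_0\varphi_{0,h,b,a_0}=0\) using \(E_0\{\xi_0\mid A\}=\theta_0(A)\) and \(P_0\Gamma_0\theta_0=\theta_{0,h,b}(a_0)\), together with the corresponding identities showing that the \(\gamma\) terms have mean equal to the matching pieces of the target. If the mean is not zero, \(\varphi\) is centered, which does not affect the pathwise derivative. This establishes that \(\varphi_{0,h,b,a_0}\) is an influence function.
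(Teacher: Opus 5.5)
Your proposal is correct and follows essentially the same route as the paper: a chain rule through the moment matrices (giving the three $\gamma$ pieces via $\partial D^{-1}=-D^{-1}(\partial D)D^{-1}$), a fixed-weight functional $\int\Gamma_0(a)\theta_P(a)\,dF_P(a)$ split into $A$-marginal, $(W,X)$-marginal and bridge perturbations, and identification of the bridge term by differentiating the outcome-bridge equation and changing measure with the treatment bridge. The only difference is cosmetic: you differentiate the conditional bridge equation, whereas the paper differentiates the unconditional moment $E_\varepsilon[r(A)\{(-M\vee q_0)\wedge M\}\{Y-h_\varepsilon\}]=0$ and then lets $M\to\infty$, which spares it from justifying the derivative of a conditional expectation.
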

\noindent Proof is given in Appendix~\ref{app:pathwise-bridge-regularity}.
The product $q_0(Z,A,X)\{Y-h_0(W,A,X)\}$ is the weighted proximal residual, and its coefficient $\Gamma_{0,h,b,a_0}(A)$ represents the contribution of the
outcome bridge to the pathwise derivative.  The term
$\Gamma_{0,h,b,a_0}(A)\theta_0(A)$ represents variation of the marginal law
of $A$.  The subtraction of $\gamma_{0,h,b,a_0}(A)$ accounts for the
dependence of the local-polynomial moment matrices and $c_{P,h,a_0,2}$ on this
law.  The integral term accounts for variation of the marginal law of $(W,X)$ in
$\theta_P(a)=E_P\{h_P(W,a,X)\}$.
The definitions above yield $P_0\{\gamma_{0,h,a_0}(A)\}=\theta_{0,h}^{\mathrm{LL}}(a_0)$, $P_0\{\gamma_{0,b,a_0}''(A)\}=\theta_{0,b}''(a_0)$, and 
$P_0\{\gamma_{0,h,a_0}^c(A)\}=0$.
Consequently, one has
\[
P_0\{\gamma_{0,h,b,a_0}(A)\}
=
P_0\{\Gamma_{0,h,b,a_0}(A)\theta_0(A)\}
=
\theta_{0,h,b}(a_0).
\]
The integral term has mean zero by the definition of $\theta_0$.  Hence
$P_0\{\varphi_{0,h,b,a_0}(O)\}=0$.  For fixed $h$ and $b$,
Theorem~\ref{thm:EIF of the smoothed proximal target} supplies the pathwise
first-order correction used in the cross-fitted estimator in the following section.

\subsection{Cross-fitted estimator}\label{sec:cross-fitted-estimator}

Let $P_n$ denote the empirical distribution of $O_1,\ldots,O_n$.  Fix
$h_n,b_n>0$ and an integer $K\ge2$, and let $I_1,\ldots,I_K$ be a partition of
$\{1,\ldots,n\}$ into nonempty folds.  Write $n_k:=|I_k|$ and
$P_{n,k}f:=n_k^{-1}\sum_{i\in I_k}f(O_i)$.  Let $F_{n,k}$ and
$Q_{n,k}^{WX}$ denote the empirical distributions of
$\{A_i:i\in I_k\}$ and $\{(W_i,X_i):i\in I_k\}$, respectively.  For every
measurable function $g$ for which the following expression is finite, define
\[
(Q_{n,k}^{WX}\times F_{n,k})g
:=
\frac{1}{n_k^2}\sum_{i\in I_k}\sum_{j\in I_k}g(W_i,X_i,A_j).
\]

For each fold $k$, let $\hat q_n^{(-k)}$ and $\hat h_n^{(-k)}$ be estimators
constructed using only $\{O_i:i\notin I_k\}$.  Conditional on these training
observations, the fitted bridges are fixed when their contributions on $I_k$
are evaluated.  
In contrast to the non-cross-fitted construction of the canonical dose-response curve estimators in \citet{Takatsu2025debiased}, due to the benefit of using cross-fitting, the pointwise, finite-dimensional, and uniform results below impose foldwise conditional
convergence conditions without the need for Donsker or entropy conditions on the classes containing the fitted bridge functions, thereby improving theoretical guarantees in high-complexity settings \citep{kennedy2023towards,Chernozhukov2018,vanderlaan2011crossvalidated,belloni2018uniformly}.  
The empirical local-polynomial
quantities below are computed from $P_n$ and are common to all folds.

For each $a_0$, use the kernel weights and polynomial vectors defined in the
preceding subsection with $h=h_n$ and $b=b_n$.  Define
\[
\begin{aligned}
D_{n,h_n,a_0,1}
&:=
P_n\{w_{h_n,a_0,1}(A)w_{h_n,a_0,1}(A)^\top K_{h_n,a_0}(A)\},\\
D_{n,b_n,a_0,2}
&:=
P_n\{w_{b_n,a_0,2}(A)w_{b_n,a_0,2}(A)^\top K_{b_n,a_0}(A)\}.
\end{aligned}
\]
Whenever $D_{n,h_n,a_0,1}$ is nonsingular, define
\[
c_{n,h_n,a_0,2}
:=
e_1^\top D_{n,h_n,a_0,1}^{-1}
P_n\{\widetilde w_{h_n,a_0,1}(A)K_{h_n,a_0}(A)\}
\]
and otherwise set $c_{n,h_n,a_0,2}:=0$.  Whenever both matrices are
nonsingular, define
\[
\begin{aligned}
\Gamma_{n,h_n,b_n,a_0}(a)
:={}&
e_1^\top D_{n,h_n,a_0,1}^{-1}
w_{h_n,a_0,1}(a)K_{h_n,a_0}(a)\\
&-
\frac{1}{2}h_n^2c_{n,h_n,a_0,2},2b_n^{-2}
e_3^\top D_{n,b_n,a_0,2}^{-1}
w_{b_n,a_0,2}(a)K_{b_n,a_0}(a).
\end{aligned}
\]
If either matrix is singular, set $\Gamma_{n,h_n,b_n,a_0}(a):=0$ for every
$a$.

For each fold $k$, define
\[
\begin{aligned}
\hat\xi_{n,1}^{(-k)}(O)
&:=
\hat q_n^{(-k)}(Z,A,X)
\{Y-\hat h_n^{(-k)}(W,A,X)\},\\
\hat m_{n,k}^{(-k)}(a)
&:=
\int \hat h_n^{(-k)}(w,a,x)\,dQ_{n,k}^{WX}(w,x)
=
\frac{1}{n_k}\sum_{i\in I_k}\hat h_n^{(-k)}(W_i,a,X_i),\\
\hat\xi_n^{(-k)}(O)
&:=
\hat\xi_{n,1}^{(-k)}(O)+\hat m_{n,k}^{(-k)}(A).
\end{aligned}
\]
The cross-fitted estimator has the equivalent representations
\begin{equation}\label{eq:DB-cf-representations}
\begin{aligned}
\hat\theta_n^{DB,cf}(a_0)
:={}&
\sum_{k=1}^K\frac{n_k}{n}
\Bigl[
P_{n,k}\{\Gamma_{n,h_n,b_n,a_0}(A)\hat\xi_{n,1}^{(-k)}(O)\}\\
&\hspace{6.2em}+
(Q_{n,k}^{WX}\times F_{n,k})
\{\Gamma_{n,h_n,b_n,a_0}(A)\hat h_n^{(-k)}(W,A,X)\}
\Bigr]\\
={}&
\sum_{k=1}^K\frac{n_k}{n}
P_{n,k}\{\Gamma_{n,h_n,b_n,a_0}(A)\hat\xi_n^{(-k)}(O)\}.
\end{aligned}
\end{equation}
Expanding the empirical measures gives
\begin{equation}\label{def:DB_cf_estimation}
\begin{aligned}
\hat\theta_n^{DB,cf}(a_0)
={}&
\frac{1}{n}\sum_{k=1}^K\sum_{i\in I_k}
\Gamma_{n,h_n,b_n,a_0}(A_i)
\hat q_n^{(-k)}(Z_i,A_i,X_i)
\{Y_i-\hat h_n^{(-k)}(W_i,A_i,X_i)\}\\
&+
\sum_{k=1}^K\frac{1}{n n_k}
\sum_{i\in I_k}\sum_{j\in I_k}
\Gamma_{n,h_n,b_n,a_0}(A_j)
\hat h_n^{(-k)}(W_i,A_j,X_i).
\end{aligned}
\end{equation}
Appendix~\ref{sec:DB-cf-equivalence} verifies the equality of these
representations.
The same folds and fitted bridges are used for every $a_0$.

\section{Debiased inference for proximal dose-response}
\label{sec:debiased-inference-main}

\subsection{Pointwise and finite-dimensional inference}

In contrast to root-$n$ inference for the pathwise differentiable
average treatment effect in Theorem~3.2 of \citet{cui2024semiparametric},
point evaluation of a continuously indexed dose-response curve is not pathwise differentiable in a nonparametric model, regular estimation at the parametric rate is impossible \citep{bickel1982adaptive,pfanzagl2012contributions,kennedy2017non}.
Thus, our smoothed target functionals indexed by a smoothing parameter, each of which is pathwise differentiable and converges to the original parameter as the smoothing vanishes, providing a basis for valid semiparametric inference \citep{Takatsu2025debiased,ColangeloLee2026}.

Fix an exposure value $a_0$ in the interior of the support of $A$, and let
$(q_0,h_0)$ be a fixed pair satisfying Assumption~\ref{assm:B} under $P_0$.
For $\delta>0$, define
\[
B_\delta(a_0):=\{a\in\mathcal A:|a-a_0|\le\delta\}.
\]
For deterministic measurable $(q_\infty,h_\infty)$ representing the limiting nuisance bridge functions, define
\[
\begin{aligned}
m_\infty(a)
&:=E_0\{h_\infty(W,a,X)\},\\
\xi_\infty(O)
&:=q_\infty(Z,A,X)\{Y-h_\infty(W,A,X)\}+m_\infty(A),
\end{aligned}
\]
and
\[
\sigma_0^2(a):=
E_0\left([\xi_\infty(O)-\theta_0(A)]^2\mid A=a\right).
\]

\begin{assumptionp}{D}\label{assm:D}
\par\medskip\noindent
The kernel and bandwidth conditions are as follows.
\begin{enumerate}[label=\textup{(D\arabic*)},ref=D\arabic*,leftmargin=*,itemsep=0.35em]
\item\label{assm:kernel} The kernel $K$ is a mean-zero, symmetric, nonnegative, and Lipschitz continuous density function with support contained in $[-1,1]$. Additionally, $K$ belongs to the linear span of the functions whose subgraph can be represented as a finite number of Boolean operations among sets of the form $\{(s,u)\in\mathbb R\times\mathbb R:\ p(s,u)\le \varphi(u)\}$, where $p$ is a polynomial and $\varphi$ is an arbitrary real function.
\item\label{assm:bandwidths} As $n\to\infty$, the bandwidths $h=h_n$ and $b=b_n$ satisfy $h_n\to0$, $nh_n\to\infty$, $b_n\to0$, and $\tau_n:=h_n/b_n\to\tau\in[0,\infty)$.
\end{enumerate}
\end{assumptionp}

\begin{assumptionp}{E}\label{assm:E}
\par\medskip\noindent
The pointwise smoothness and moment conditions are as follows.
\begin{enumerate}[label=\textup{(E\arabic*)},ref=E\arabic*,leftmargin=*,itemsep=0.35em]
\item\label{assm:regul1} There exists $\delta_1>0$ such that:
\begin{enumerate}[label=\textup{(\alph*)},leftmargin=2.2em,itemsep=0.25em]
\item $\theta_0$ is twice continuously differentiable on $B_{\delta_1}(a_0)$;
\item $f_A$ is positive and Lipschitz continuous on $B_{\delta_1}(a_0)$;
\item there exist $\delta_2>0$ and $C_6>0$ such that $\mathbb E_0[|Y|^{2+\delta_2}\mid A=a]<C_6$ for all $a\in B_{\delta_1}(a_0)$ and $\mathbb E_0[|Y|^4]<\infty$;
\item $a\mapsto\sigma_0^2(a)$ is bounded and continuous on $B_{\delta_1}(a_0)$.
\end{enumerate}
\end{enumerate}
\end{assumptionp}

For $\mathcal I\subseteq\mathcal A$ and measurable $\mathcal S\subseteq\mathcal I\times\mathcal Z\times\mathcal W\times\mathcal X$, define $d_{q,k}(\bar q_1,\bar q_2;\mathcal I,\mathcal S)$ by
\[
\begin{aligned}
\sup_{a\in\mathcal I}
\Biggl[
E_0\Bigl\{
I_{\mathcal S}(a,Z,W,X)
[\bar q_1(Z,a,X)-\bar q_2(Z,a,X)]^2
\,\Bigm|\, A=a,\trainingsigma_k
\Bigr\}
\Biggr]^{1/2}
\end{aligned}
\]
and $d_{h,k}(\bar h_1,\bar h_2;\mathcal I,\mathcal S)$ by
\[
\begin{aligned}
\sup_{a\in\mathcal I}
\Biggl[
E_0\Bigl\{
I_{\mathcal S}(a,Z,W,X)
[\bar h_1(W,a,X)-\bar h_2(W,a,X)]^2
\,\Bigm|\, A=a,\trainingsigma_k
\Bigr\}
\Biggr]^{1/2}.
\end{aligned}
\]
For pointwise inference, take $\mathcal I=B_{\delta_1}(a_0)$, where
$\delta_1$ is as in Assumption~\ref{assm:E}.

\begin{assumptionp}{F}\label{assm:F}
\par\medskip\noindent
The pointwise cross-fitting and conditional nuisance conditions are as follows.
\begin{enumerate}[label=\textup{(F\arabic*)},ref=F\arabic*,leftmargin=*,itemsep=0.35em]
\item\label{assm:folds} The number of folds is the fixed integer $K$, and the fold partition is deterministic or independent of the observations. For some constant $\underline\pi>0$,
\[
\min_{1\le k\le K}\frac{n_k}{n}\ge \underline\pi
\]
for all sufficiently large $n$. For each $k$, let
\[
\trainingsigma_k:=\sigma\{I_1,\ldots,I_K,\{O_i:i\notin I_k\}\}
\]
denote the training sigma-field. Then $\hat q_n^{(-k)}$ and $\hat h_n^{(-k)}$ are $\trainingsigma_k$-measurable, whereas $\{O_i:i\in I_k\}$ is conditionally IID from $P_0$ given $\trainingsigma_k$. Whenever $P_0$ or $E_0$ is applied to a $\trainingsigma_k$-measurable random function, integration is with respect to an independent generic observation $O\sim P_0$, conditionally on $\trainingsigma_k$.

\item\label{assm:nuisance1} There exist finite constants $C_q$ and $C_h$ such that, with probability tending to one,
\[
\max_{1\le k\le K}\sup_{\substack{a\in B_{\delta_1}(a_0)\\ z,x}}|\hat q_n^{(-k)}(z,a,x)|\le C_q,
\qquad
\max_{1\le k\le K}\sup_{\substack{a\in B_{\delta_1}(a_0)\\ w,x}}|\hat h_n^{(-k)}(w,a,x)|\le C_h.
\]

\item\label{assm:DR_and_rates} The functions $q_\infty$ and $h_\infty$ are uniformly bounded on $B_{\delta_1}(a_0)$: for finite constants $C_q$ and $C_h$,
\[
\sup_{\substack{a\in B_{\delta_1}(a_0)\\ z,x}}|q_\infty(z,a,x)|\le C_q,
\qquad
\sup_{\substack{a\in B_{\delta_1}(a_0)\\ w,x}}|h_\infty(w,a,x)|\le C_h.
\]
There exists a measurable partition
\[
\mathcal S_1\cup\mathcal S_2\cup\mathcal S_3
=
B_{\delta_1}(a_0)\times\mathcal Z\times\mathcal W\times\mathcal X
\]
up to $P_0(\cdot\mid A=a)$-null sets for $f_A$-almost every $a\in B_{\delta_1}(a_0)$. Moreover, for $f_A$-almost every $a\in B_{\delta_1}(a_0)$,
\[
\begin{aligned}
&I_{\mathcal S_1\cup\mathcal S_3}(a,Z,W,X)
\{q_\infty(Z,a,X)-q_0(Z,a,X)\}=0,\\
&I_{\mathcal S_2\cup\mathcal S_3}(a,Z,W,X)
\{h_\infty(W,a,X)-h_0(W,a,X)\}=0
\end{aligned}
\]
$P_0(\cdot\mid A=a)$-almost surely, and
\[
\sup_{a\in B_{\delta_1}(a_0)}
E_0\!\left[
\begin{aligned}
&\{q_\infty(Z,a,X)-q_0(Z,a,X)\}^2\\
&\quad+
\{h_\infty(W,a,X)-h_0(W,a,X)\}^2
\end{aligned}
\,\middle|\, A=a
\right]
<\infty.
\]
The pointwise doubly robust nuisance rates are
\[
\begin{aligned}
&\max_{1\le k\le K}
d_{q,k}(\hat q_n^{(-k)},q_\infty;B_{\delta_1}(a_0),\mathcal S_1)
=o_p\{(nh_n)^{-1/2}\},\\
&\max_{1\le k\le K}
d_{h,k}(\hat h_n^{(-k)},h_\infty;B_{\delta_1}(a_0),\mathcal S_1)
=o_p(1),\\
&\max_{1\le k\le K}
d_{h,k}(\hat h_n^{(-k)},h_\infty;B_{\delta_1}(a_0),\mathcal S_2)
=o_p\{(nh_n)^{-1/2}\},\\
&\max_{1\le k\le K}
d_{q,k}(\hat q_n^{(-k)},q_\infty;B_{\delta_1}(a_0),\mathcal S_2)
=o_p(1),\\
&\max_{1\le k\le K}
\Bigl\{
d_{q,k}(\hat q_n^{(-k)},q_\infty;B_{\delta_1}(a_0),\mathcal S_3)\\[-0.2em]
&\hspace{10em}\times
d_{h,k}(\hat h_n^{(-k)},h_\infty;B_{\delta_1}(a_0),\mathcal S_3)
\Bigr\}
=o_p\{(nh_n)^{-1/2}\}.
\end{aligned}
\]

\item\label{assm:pointwise-pseudooutcome-rate} The nuisance estimators satisfy
\[
\begin{aligned}
&\sup_{\substack{1\le k\le K\\a\in B_{\delta_1}(a_0)}}
\left[
E_0\!\left\{
(\hat q_n^{(-k)}(Z,a,X)-q_\infty(Z,a,X))^2
\,\middle|\, A=a,\trainingsigma_k
\right\}
\right]^{1/2}=o_p(1),\\
&\sup_{\substack{1\le k\le K\\a\in B_{\delta_1}(a_0)}}
\left[
E_0\!\left\{
(\hat h_n^{(-k)}(W,a,X)-h_\infty(W,a,X))^2
\,\middle|\, A=a,\trainingsigma_k
\right\}
\right]^{1/2}=o_p(1).
\end{aligned}
\]
\end{enumerate}
\end{assumptionp}

Assumption~\ref{assm:D} imposes the local-polynomial kernel and bandwidth conditions. 
The bandwidth $b_n$ localizes the
local-quadratic bias correction, and the limit of $\tau_n=h_n/b_n$ determines
whether that correction contributes to the first-order variance. 
Assumption~\ref{assm:E} is local to $a_0$. 
Positivity of $f_A$ makes the
population local-polynomial moment matrices nonsingular for all sufficiently
small bandwidths, while its Lipschitz continuity permits replacement of the
local density by $f_A(a_0)$ in the variance calculation.  Continuity of
$\sigma_0^2$ has the analogous role for the conditional variance. 
The moment conditions provide the moment bounds required for the Lyapunov condition and for controlling the remainder term in the Taylor expansion. Further details are provided by \citet{Takatsu2025debiased}.

In Assumption~\ref{assm:F}, conditional on $\trainingsigma_k$,
$\hat q_n^{(-k)}$ and $\hat h_n^{(-k)}$ are fixed relative to the observations in $I_k$.  
This assumption indicates that the nuisance-dependent components of the empirical remainder and the fold-specific empirical bridge-relevant remainder can therefore be
controlled by the moment bounds and the required $L_2$-rates.
By contrast, the condition (c) of \citet{Takatsu2025debiased} requires polynomial uniform covering-number bounds for classes containing the bridge estimators. 
The usage of cross-fitting avoids these nuisance-class entropy restrictions and the corresponding
stochastic-equicontinuity arguments over the possible nuisance fits \citep{VanderVaartWellner2023weak}.
The boundedness condition on the treatment bridge differs structurally from the condition (c)(i) of \citet{Takatsu2025debiased}.  
Their pseudo-outcome
weights the outcome-regression residual by the reciprocal standardized propensity, and their condition requires uniform bounds on both functions.  
In this proximal construction, \eqref{eq:Bq} makes the conditional mean of $q_0(Z,a,X)$ equal
$f_A(a)/f(a\mid W,X)$, the reciprocal standardized treatment density. Assumption~\ref{assm:F} therefore
imposes boundedness of the fitted treatment bridges but neither a positive
lower bound on $\hat q_n^{(-k)}$ nor a bound on its reciprocal.

Assumption~\ref{assm:DR_and_rates} is the continuous-treatment
analogue of the proximal doubly robust moment condition for the binary treatment setting in Theorem~3.1(3) of \citet{cui2024semiparametric}.  
As stated in Remark~6 of \citet{cui2024semiparametric}, this condition does not require uniqueness of the corresponding bridge solution.  
Here the partition in Assumption~\ref{assm:DR_and_rates} implies, for $f_A$-almost
every $a\in B_{\delta_1}(a_0)$,
\[
\{q_\infty(Z,a,X)-q_0(Z,a,X)\}
\{h_0(W,a,X)-h_\infty(W,a,X)\}=0
\]
$P_0(\cdot\mid A=a)$-almost surely.  The product-bias identity in
Theorem~\ref{thm:DR_for_PO} consequently gives $E_0\{\xi_\infty(O)\mid A=a\}=\theta_0(a)$,
for $f_A$-almost every $a\in B_{\delta_1}(a_0)$.
The associated rate conditions are the proximal bridge extension of condition~(d) of \citet{Takatsu2025debiased}.  On $\mathcal S_1$,
$q_\infty=q_0$, so the treatment-bridge estimation error must be
$o_p\{(nh_n)^{-1/2}\}$, whereas the outcome-bridge estimator need only
converge to $h_\infty$.  On $\mathcal S_2$, the corresponding requirements
hold with the roles reversed.  On $\mathcal S_3$, both limiting bridges agree
with the population bridges, and only the product of their conditional
$L_2$ rates must be $o_p\{(nh_n)^{-1/2}\}$.  Hence neither limiting bridge is
required to agree with the population bridge throughout the relevant
conditional support.  

For $j\ge0$, let $c_j:=\int u^jK(u)\,du$, and set
\[
V_{K,\tau}:=
\int\left\{
K(u)-\tau^3c_2\frac{(\tau u)^2-c_2}{c_4-c_2^2}K(\tau u)
\right\}^2\,du.
\]
The term inside braces is the limiting combined kernel weight: its first
term is the local-linear weight, and its second term is the local-quadratic correction that removes the leading bias.  Thus, $V_{K,\tau}$ is the integral
of the squared combined weight and determines the kernel-dependent component
of the asymptotic variance.

We let $\Gamma_{0,h,b,a_0}$ and
$\gamma_{0,h,b,a_0}$ denote the population local-polynomial quantities defined in Section \ref{sec:estimation}, evaluated under $P_0$ with
$\theta_P=\theta_0$.  
Define
\begin{equation}\label{def: population leading term}
\begin{aligned}
\phi_{\infty,h,b,a_0}(O)
:={}&
\Gamma_{0,h,b,a_0}(A)\xi_\infty(O)
-\gamma_{0,h,b,a_0}(A)\\
&\qquad+
\int\Gamma_{0,h,b,a_0}(\bar a)
\{h_\infty(W,\bar a,X)-m_\infty(\bar a)\}\,dF_0(\bar a),
\end{aligned}
\end{equation}
and let $\phi_{\infty,a_0}:=\phi_{\infty,h_n,b_n,a_0}$.  The conditions above and some local polynomial moment identities give $P_0\phi_{\infty,a_0}=0$.

\begin{theorem}\label{thm:pointwise-proximal-inference}
Fix $a_0$ in the interior of the support of $A$.  Under Assumptions~\ref{assm:A}, \ref{assm:B}, and~\ref{assm:D}--\ref{assm:F},
\[
\hat\theta_n^{DB,cf}(a_0)-\theta_0(a_0)
=
P_n\phi_{\infty,a_0}+o_p\{(nh_n)^{-1/2}+h_n^2\}.
\]
Moreover,
\[
(nh_n)^{1/2}P_n\phi_{\infty,a_0}
\rightsquigarrow
N\left(0,\frac{V_{K,\tau}\sigma_0^2(a_0)}{f_A(a_0)}\right).
\]
Consequently, if $nh_n^5=O(1)$, then
\[
(nh_n)^{1/2}\{\hat\theta_n^{DB,cf}(a_0)-\theta_0(a_0)\}
\rightsquigarrow
N\left(0,\frac{V_{K,\tau}\sigma_0^2(a_0)}{f_A(a_0)}\right).
\]
\end{theorem}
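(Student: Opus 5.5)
The proof rests on a decomposition. Write $\hat\theta_n^{DB,cf}(a_0)-\theta_0(a_0)$ as the sum of four terms:
\[
\{\hat\theta_n^{DB,cf}(a_0)-\tilde\theta_n(a_0)\}+\{\tilde\theta_n(a_0)-\theta_{0,h_n,b_n}(a_0)\}+\{\theta_{0,h_n,b_n}(a_0)-\theta_0(a_0)\}.
\]
Here $\tilde\theta_n(a_0)$ is the oracle version of \eqref{def:DB_cf_estimation}. It keeps the empirical weights $\Gamma_{n,h_n,b_n,a_0}$ but replaces $(\hat q_n^{(-k)},\hat h_n^{(-k)})$ by $(q_\infty,h_\infty)$. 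The argument then has three steps.

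\textbf{Step 1: deterministic smoothing bias.} The third piece is handled first. The moment identities $P_0\Gamma_{0,h,b,a_0}=1$ and $P_0\{\Gamma_{0,h,b,a_0}(A)(A-a_0)^j\}=0$ for $j=1,2$ hold, and $\Gamma_{0,h,b,a_0}$ has support in $B_{\max(h,b)}(a_0)$. Subtracting the second-order Taylor polynomial of $\theta_0$ at $a_0$ then gives
\[
\theta_{0,h,b}(a_0)-\theta_0(a_0)=P_0\bigl[\Gamma_{0,h,b,a_0}(A)\{\theta_0(A)-\theta_0(a_0)-\theta_0'(a_0)(A-a_0)-\tfrac12\theta_0''(a_0)(A-a_0)^2\}\bigr].
\]
By continuity of $\theta_0''$ on $B_{\delta_1}(a_0)$ this is $o(h_n^2+b_n^2)\,P_0|\Gamma_{0,h_n,b_n,a_0}|$. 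The factor $P_0|\Gamma|$ is $O(1)$, using positivity of $f_A$ and $h_n^2b_n^{-2}=\tau_n^2=O(1)$ for the correction term. Here $b_n^2\lesssim h_n^2$ is needed: if $\tau=0$ the $b_n^2$ remainder is only multiplied by $h_n^2b_n^{-2}\cdot b_n^2 o(1)$. A more careful check is that the correction term's error is $h_n^2\{\theta''_{0,b}(a_0)-\theta_0''(a_0)\}=h_n^2o(1)$. So the bias is $o(h_n^2)$.

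\textbf{Step 2: linearization of the oracle term.} The weights $\Gamma_{n}$ are smooth functions of the empirical moments $D_{n,h,1}$, $D_{n,b,2}$ and $P_n\{\widetilde w K_h\}$. These moments concentrate at rates $(nh_n)^{-1/2}$ and $(nb_n)^{-1/2}$ after rescaling. Expanding $D_n^{-1}=D_0^{-1}-D_0^{-1}(D_n-D_0)D_0^{-1}+O_p(\|D_n-D_0\|^2)$ and $c_n$ similarly produces the $-\gamma_{0,h,b,a_0}(A)$ correction terms of $\phi_{\infty,a_0}$, exactly as in the pathwise derivative of Theorem~\ref{thm:EIF of the smoothed proximal target}. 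The within-fold V-statistic $(Q^{WX}_{n,k}\times F_{n,k})$ is handled by a Hoeffding decomposition. Its two Hájek projections give $\Gamma_0(A)m_\infty(A)$ and the integral term $\int\Gamma_0(\bar a)\{h_\infty(W,\bar a,X)-m_\infty(\bar a)\}dF_0(\bar a)$. The degenerate part and the diagonal terms are $O_p(n^{-1}h_n^{-1/2})=o_p\{(nh_n)^{-1/2}\}$, using boundedness of $h_\infty$. Finally, $E_0\{\xi_\infty\mid A\}=\theta_0(A)$ by Theorem~\ref{thm:DR_for_PO} together with \ref{assm:DR_and_rates}; this centers the leading term at $\theta_{0,h,b}(a_0)$. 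Second-order products are $o_p\{(nh_n)^{-1/2}\}$.

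\textbf{Step 3: nuisance remainder, the main obstacle.} The first piece is where I expect the main difficulty. I split it into a centered empirical-process part and a conditional-bias part, both taken conditional on $\trainingsigma_k$.

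For the empirical-process part, I first replace $\Gamma_n$ by $\Gamma_0$; the cost is $O_p((nh_n)^{-1/2})$ times a nuisance difference that is $o_p(1)$ by \ref{assm:pointwise-pseudooutcome-rate}. Conditionally on $\trainingsigma_k$, the remaining term has mean zero and variance bounded by $n^{-1}P_0[\Gamma_0^2(\hat\xi-\xi_\infty)^2]\lesssim (nh_n)^{-1}o_p(1)$. This uses \ref{assm:nuisance1} and \ref{assm:pointwise-pseudooutcome-rate} and integrates the conditional-on-$A$ rates against $f_A$. By Chebyshev's inequality it is $o_p\{(nh_n)^{-1/2}\}$, with no entropy conditions needed.

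The conditional-bias part is $P_0[\Gamma_0(A)\{\hat q(Y-\hat h)+\hat m(A)-\xi_\infty\}]$. Applying the product identity of Theorem~\ref{thm:DR_for_PO} pointwise in $a$ on each set $\mathcal S_j$ reduces it to
\[
\int\Gamma_0(a)E_0[(\hat q-q_0)(h_0-\hat h)-(q_\infty-q_0)(h_0-h_\infty)\mid A=a]f_A(a)\,da.
\]
On each $\mathcal S_j$ this is bounded by Cauchy–Schwarz using the rates in \ref{assm:DR_and_rates}, giving $o_p\{(nh_n)^{-1/2}\}\int|\Gamma_0|f_A$. The delicate point is the term $\hat m_{n,k}$, an empirical average over fold $k$ itself. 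Its deviation from $E_0\hat h$ is absorbed into the V-statistic argument of Step 2 applied with $\hat h-h_\infty$, which is again conditionally fixed.

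\textbf{Conclusion: asymptotic normality.} It remains to show $(nh_n)^{1/2}P_n\phi_{\infty,a_0}\rightsquigarrow N(0,V_{K,\tau}\sigma_0^2(a_0)/f_A(a_0))$. The $\gamma$ and integral terms have variance $O(n^{-1})$ and are therefore negligible. The dominant term $\Gamma_0(A)\{\xi_\infty-\theta_0(A)\}$ has variance $\int\Gamma_0^2(a)\sigma_0^2(a)f_A(a)da$. After the substitution $a=a_0+h_nu$ and the limits $D_{0,h,1}\to f_A(a_0)\mathrm{diag}(1,c_2)$ and $b_n^{-2}e_3^\top D_{0,b,2}^{-1}\to f_A(a_0)^{-1}(u^2-c_2)/(c_4-c_2^2)$ at scale $b$, multiplying by $nh_n$ gives $V_{K,\tau}\sigma_0^2(a_0)/f_A(a_0)$. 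Lyapunov's condition follows from the $2+\delta_2$ conditional moments. The final display then holds because $h_n^2(nh_n)^{1/2}=O(1)$ multiplies an $o(1)$ bias.
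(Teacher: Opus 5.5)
Your proposal is correct and is essentially the paper's argument. Introducing the oracle $\tilde\theta_n$ only regroups the remainders $R_{n,1},\ldots,R_{n,6}$ of Lemma~\ref{lem:prox-six-rem-decomp}, and each step uses the paper's tools: reproduction identities and Taylor expansion for the bias, conditional Chebyshev given $\trainingsigma_k$ for the nuisance empirical process, the partitioned product-bias identity for the drift, a Hoeffding decomposition of the within-fold V-statistic, first-order expansion of the inverse moment matrices for the $\gamma$-terms, and Lyapunov for the leading term.

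The one claim to state more carefully is that the ``$\gamma$ terms'' are negligible. On its own, $\gamma_{0,h_n,b_n,a_0}(A)$ generally has variance of order $h_n^{-1}$. Only the combination $\Gamma_{0,h_n,b_n,a_0}(A)\theta_0(A)-\gamma_{0,h_n,b_n,a_0}(A)$ is $O(h_n)$ on the kernel support, by the local-polynomial normal equations and Taylor's theorem, and this is exactly what Lemma~\ref{lem:smoothing_negligible_single_a} verifies.
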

\noindent The first display in Theorem~\ref{thm:pointwise-proximal-inference} is a first-order representation.  At the
$(nh_n)^{1/2}$ scale, the localized term
$\Gamma_{0,h_n,b_n,a_0}(A)\{\xi_\infty(O)-\theta_0(A)\}$ determines the
Gaussian limit; the terms involving $\gamma_{0,h_n,b_n,a_0}$ and marginal
averaging over $(W,X)$ are of smaller order.  
The kernel $K$ and the bandwidth ratio $\tau$ determine $V_{K,\tau}$, whereas $f_A(a_0)$ and $\sigma_0^2(a_0)$ characterize the contributions of the exposure density and the proximal pseudo-outcome, respectively.

The bias correction reduces the deterministic smoothing error to $o(h_n^2)$.
Hence $nh_n^5=O(1)$ makes this error negligible at the $(nh_n)^{1/2}$ scale
and permits $h_n$ of order $n^{-1/5}$, and the stronger undersmoothing condition $nh_n^5\to0$ is no longer required \citep{Takatsu2025debiased}. 
If $\tau=0$, then
$V_{K,0}=\int K(u)^2\,du$ and the bias correction does not contribute to the
first-order variance.  If $\tau>0$, its stochastic contribution is included
in $V_{K,\tau}$.
In particular, when $b_n$ is of the same order as $h_n$, the
local-quadratic second-derivative component need not be consistent by itself; see Sections 2--3 of \citet{Takatsu2025debiased}.
Proofs for Theorem \ref{thm:pointwise-proximal-inference} and the following Theorem \ref{thm:finite-dimensional-proximal-inference} are given in Section~\ref{sec: multiple a inference proofs}.

\begin{theorem}\label{thm:finite-dimensional-proximal-inference}
Fix distinct points $a_1,\ldots,a_m$ in the interior of the support of $A$.
Suppose that Assumptions~\ref{assm:A}, \ref{assm:B}, and~\ref{assm:D}--\ref{assm:F}
hold at each $a_j$, $j=1,\ldots,m$, for the same pair
$(q_\infty,h_\infty)$ and that $nh_n^5=O(1)$.  Then
\[
(nh_n)^{1/2}
\begin{pmatrix}
\hat\theta_n^{DB,cf}(a_1)-\theta_0(a_1)\\
\vdots\\
\hat\theta_n^{DB,cf}(a_m)-\theta_0(a_m)
\end{pmatrix}
\]
converges in distribution to a mean-zero multivariate normal vector with
diagonal covariance matrix
\[
\operatorname{diag}\left(
\frac{V_{K,\tau}\sigma_0^2(a_1)}{f_A(a_1)},\ldots,
\frac{V_{K,\tau}\sigma_0^2(a_m)}{f_A(a_m)}
\right).
\]
\end{theorem}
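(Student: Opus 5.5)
The plan is to reduce Theorem~\ref{thm:finite-dimensional-proximal-inference} to Theorem~\ref{thm:pointwise-proximal-inference} plus a Cramér--Wold argument. The first display of Theorem~\ref{thm:pointwise-proximal-inference}, applied at each $a_j$, gives
\[
\hat\theta_n^{DB,cf}(a_j)-\theta_0(a_j)=P_n\phi_{\infty,a_j}+o_p\{(nh_n)^{-1/2}+h_n^2\},\qquad j=1,\ldots,m.
\]
This holds because Assumptions~\ref{assm:A}, \ref{assm:B}, and~\ref{assm:D}--\ref{assm:F} hold at every $a_j$ with a common $(q_\infty,h_\infty)$. Since $m$ is fixed, the remainders are jointly $o_p\{(nh_n)^{-1/2}+h_n^2\}$. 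Under $nh_n^5=O(1)$ we have $h_n^2=O\{(nh_n)^{-1/2}\}$, so $(nh_n)^{1/2}$ times the remainder vector is $o_p(1)$. By Slutsky's lemma it therefore suffices to show that the vector $(nh_n)^{1/2}(P_n\phi_{\infty,a_1},\ldots,P_n\phi_{\infty,a_m})^\top$ converges to $N(0,\Sigma)$ with the stated diagonal $\Sigma$.

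By Cramér--Wold, fix $t\in\mathbb R^m$ and consider $(nh_n)^{1/2}P_n\psi_n$, where $\psi_n:=\sum_j t_j\phi_{\infty,a_j}$. This is a normalized sum of i.i.d.\ mean-zero triangular-array terms, since $P_0\phi_{\infty,a_j}=0$. Following the proof of the pointwise result, I would split each $\phi_{\infty,a_j}$ into three parts.

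The leading part is $\Gamma_{0,h_n,b_n,a_j}(A)\{\xi_\infty(O)-\theta_0(A)\}$.

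The remaining parts are
\[
\Gamma_{0,h_n,b_n,a_j}(A)\theta_0(A)-\gamma_{0,h_n,b_n,a_j}(A)
\]
together with the $(W,X)$-integral term. These parts have variance $o\{(nh_n)^{-1}\}$. The integral term is an average of bounded functions against a kernel weight integrating to one, so its variance is $O(1)=o(h_n^{-1})$. The $\gamma$-difference has variance $O(1)$ by smoothness of $\theta_0$ and the local moment identities. These bounds are exactly what the pointwise proof uses, so they carry over termwise.

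For the leading part, the variance of $(nh_n)^{1/2}P_n$ of the combined term equals $h_n\sum_{j,l}t_jt_l E_0[\Gamma_{0,a_j}(A)\Gamma_{0,a_l}(A)\sigma_0^2(A)]$, using the tower property with $E_0\{\xi_\infty(O)\mid A\}=\theta_0(A)$. The diagonal terms converge to $t_j^2V_{K,\tau}\sigma_0^2(a_j)/f_A(a_j)$, by the same computation as in Theorem~\ref{thm:pointwise-proximal-inference}. The off-diagonal terms vanish for all large $n$. The reason is that $\Gamma_{0,h_n,b_n,a_j}$ is supported in $\{|a-a_j|\le\max(h_n,b_n)\}$, and $\max(h_n,b_n)\to0$ by Assumption~\ref{assm:bandwidths}. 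Once $2\max(h_n,b_n)<\min_{j\neq l}|a_j-a_l|$, these supports are disjoint and the products $\Gamma_{0,a_j}\Gamma_{0,a_l}$ are identically zero. This gives the diagonal covariance.

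Asymptotic normality then follows from the Lyapunov condition for the linear combination. It suffices to verify it termwise via Minkowski's inequality, and each term is controlled by the $(2+\delta_2)$-moment bound in Assumption~\ref{assm:regul1}(c), the boundedness of $q_\infty$ and $h_\infty$, and $|\Gamma_{0,a_j}|\lesssim h_n^{-1}$ (using $\tau<\infty$). The step needing the most care is confirming that the remainder and negligible-term bounds from the pointwise proof hold simultaneously. This is immediate for fixed $m$, so I expect no genuine obstacle beyond the disjoint-support observation.
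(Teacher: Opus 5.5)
Your proposal is correct and follows the paper's own route. You use the first-order expansion componentwise (the paper's decomposition into $R_{n,1},\ldots,R_{n,6}$), show the smoothing part $\psi_{n,a_j}^{sm}$ is negligible at the $(nh_n)^{1/2}$ scale, and apply Cram\'er--Wold to the leading terms, using the eventually disjoint supports of $\Gamma_{0,h_n,b_n,a_j}$ and a Lyapunov check. The one detail the paper adds is the degenerate case $\sum_j t_j^2V_{K,\tau}\sigma_0^2(a_j)/f_A(a_j)=0$, which Lyapunov's theorem does not cover but Chebyshev's inequality handles immediately.
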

\noindent The same limiting pair $(q_\infty,h_\infty)$ defines a common first-order representation at all $m$ points.  
Since the points are fixed and distinct and $h_n,b_n\to0$, the compact supports of the corresponding
localized leading terms are eventually disjoint.  Their cross-covariances are
therefore eventually zero, while the remaining terms are negligible
at the $(nh_n)^{1/2}$ scale.  The diagonal limit appears naturally.  
The theorem permits
joint inference for fixed dose contrasts by the delta method.

\subsection{Uniform inference}\label{sec: uniform inference}

Theorem~\ref{thm:finite-dimensional-proximal-inference} implies that the
studentized estimators at any fixed collection of distinct exposure values
converge jointly to independent standard normal variables.  Suppose that
$\mathcal A_0$ is infinite, and choose distinct points
$a_1,a_2,\ldots\in\mathcal A_0$.  If the studentized estimator process
converged weakly in $\ell^\infty(\mathcal A_0)$ to a tight process $G$, then
continuity of the coordinate projections would imply
$\bigl(G(a_1),\ldots,G(a_m)\bigr)^\top\sim N_m(0,I_m)$,
for every $m\ge1$ \citep{VanderVaartWellner2023weak}.  Hence, for every $M<\infty$,
\[
\begin{aligned}
\Pr\left\{\sup_{j\ge1}|G(a_j)|\le M\right\}
=
\lim_{m\to\infty}
\left\{\Pr\bigl(|N(0,1)|\le M\bigr)\right\}^{m}
=0.
\end{aligned}
\]
Therefore, $\sup_{a\in\mathcal A_0}|G(a)|=\infty$ almost surely, which
contradicts $G\in\ell^\infty(\mathcal A_0)$.  The resulting
finite-dimensional law, consisting of independent standard normal
coordinates at distinct exposure values, is referred to as Gaussian white noise.  This white-noise argument is discussed in Section 3.3 of \citet{Takatsu2025debiased}. 
Uniform inference therefore requires a bandwidth-dependent Gaussian approximation rather than weak
convergence to a fixed process in $\ell^\infty(\mathcal A_0)$.
Following the local empirical process approximation machinery of \citep{Chernozhukov2014Gaussian} and its covariate-adjusted extension technique of \citet{Takatsu2025debiased}, we therefore use a Gaussian process whose covariance agrees with that of the normalized leading term.

\begin{assumptionp}{G}\label{assump:regularity}\label{assm:G}
\par\medskip\noindent
Let $\mathcal A_0$ be a compact subset of the interior of $\mathcal A$.  For
$\delta>0$, define
\[
\mathcal A_\delta
:=
\{a\in\mathbb R:|a-a_0|\le\delta
\text{ for some }a_0\in\mathcal A_0\}.
\]
There exists $\delta_3>0$ such that
$\mathcal A_{\delta_3}\subset\mathcal A$ and the following conditions hold.
\begin{enumerate}[label=\textup{(G\arabic*)},ref=G\arabic*,leftmargin=*,itemsep=0.35em]
\item\label{assm:G-bandwidth}
The bandwidth sequences are deterministic, the limit in
Assumption~\ref{assm:bandwidths} satisfies $\tau\in(0,\infty)$, and
\[
nh_n^5=O(1),
\qquad
nh_n^\kappa\to\infty
\quad\text{for some }\kappa\in(1,5).
\]

\item\label{assm:G-crossfit}
Assumption~\ref{assm:folds} holds.  There exist finite constants $C_q$ and
$C_h$ such that, with probability tending to one,
\[
\max_{1\le k\le K}
\sup_{\substack{a\in\mathcal A_{\delta_3}\\ z,x}}
|\hat q_n^{(-k)}(z,a,x)|
\le C_q,
\qquad
\max_{1\le k\le K}
\sup_{\substack{a\in\mathcal A_{\delta_3}\\ w,x}}
|\hat h_n^{(-k)}(w,a,x)|
\le C_h.
\]

\item\label{assm:G-DR-rates}
The functions $q_\infty$ and $h_\infty$ satisfy
\[
\sup_{\substack{a\in\mathcal A_{\delta_3}\\z,x}}
|q_\infty(z,a,x)|\le C_q,
\qquad
\sup_{\substack{a\in\mathcal A_{\delta_3}\\w,x}}
|h_\infty(w,a,x)|\le C_h.
\]
There exist measurable sets
$\mathcal S_1',\mathcal S_2',\mathcal S_3'$ in
$\mathcal A_{\delta_3}\times\mathcal Z\times\mathcal W\times\mathcal X$
such that, for $f_A$-almost every $a\in\mathcal A_{\delta_3}$, the sets
$\{(z,w,x):(a,z,w,x)\in\mathcal S_j'\}$, $j\in\{1,2,3\}$, form a partition
up to $P_0(\cdot\mid A=a)$-null sets and
\[
\begin{aligned}
&I_{\mathcal S_1'\cup\mathcal S_3'}(a,Z,W,X)
\{q_\infty(Z,a,X)-q_0(Z,a,X)\}=0,\\
&I_{\mathcal S_2'\cup\mathcal S_3'}(a,Z,W,X)
\{h_\infty(W,a,X)-h_0(W,a,X)\}=0
\end{aligned}
\]
$P_0(\cdot\mid A=a)$-almost surely.  Moreover,
\[
\sup_{a\in\mathcal A_{\delta_3}}
E_0\!\left[
\begin{aligned}
&\{q_\infty(Z,a,X)-q_0(Z,a,X)\}^2\\[-0.2em]
&\quad+\{h_\infty(W,a,X)-h_0(W,a,X)\}^2
\end{aligned}
\,\middle|\,A=a
\right]
<\infty.
\]
The nuisance rates satisfy
\[
\begin{aligned}
\max_{1\le k\le K}
d_{q,k}(\hat q_n^{(-k)},q_\infty;
\mathcal A_{\delta_3},\mathcal S_1')
&=o_p\{(nh_n\log n)^{-1/2}\},\\
\max_{1\le k\le K}
d_{h,k}(\hat h_n^{(-k)},h_\infty;
\mathcal A_{\delta_3},\mathcal S_2')
&=o_p\{(nh_n\log n)^{-1/2}\},
\end{aligned}
\]
and
\[
\begin{aligned}
\max_{1\le k\le K}
&d_{q,k}(\hat q_n^{(-k)},q_\infty;
\mathcal A_{\delta_3},\mathcal S_3')
d_{h,k}(\hat h_n^{(-k)},h_\infty;
\mathcal A_{\delta_3},\mathcal S_3')
=o_p\{(nh_n\log n)^{-1/2}\}.
\end{aligned}
\]

\item\label{assm:G-nuisance-L2}
The foldwise conditional $L_2$ errors satisfy
\[
\begin{aligned}
\max_{1\le k\le K}
\sup_{a\in\mathcal A_{\delta_3}}
&
\left[
E_0\!\left\{
(\hat q_n^{(-k)}(Z,a,X)-q_\infty(Z,a,X))^2
\,\middle|\,A=a,\trainingsigma_k
\right\}
\right]^{1/2}\\[-0.2em]
&\quad\times\sqrt{\log(h_n^{-1})\log n}
=o_p(1),\\
\max_{1\le k\le K}
\sup_{a\in\mathcal A_{\delta_3}}
&
\left[
E_0\!\left\{
(\hat h_n^{(-k)}(W,a,X)-h_\infty(W,a,X))^2
\,\middle|\,A=a,\trainingsigma_k
\right\}
\right]^{1/2}\\[-0.2em]
&\quad\times\sqrt{\log(h_n^{-1})\log n}
=o_p(1).
\end{aligned}
\]

\item\label{assm:regul2}\label{assm:G-smooth}
The following conditions hold on $\mathcal A_{\delta_3}$:
\begin{enumerate}[label=\textup{(\roman*)},leftmargin=2.2em,itemsep=0.25em]
\item $\theta_0$ is twice continuously differentiable, and $\theta_0''$ is
H\"older continuous with exponent $\alpha\in(0,1]$;
\item $f_A$ is Lipschitz continuous and
$0<\inf_{a\in\mathcal A_{\delta_3}}f_A(a)
\le\sup_{a\in\mathcal A_{\delta_3}}f_A(a)<\infty$;
\item $|Y|$ is $P_0$-almost surely bounded; and
\item $\sigma_0^2$ admits a continuous version on
$\mathcal A_{\delta_3}$ and
$\inf_{a\in\mathcal A_0}\sigma_0^2(a)>0$.
\end{enumerate}

\item\label{assm:G-grid}
The set $\mathcal A_n\subset\mathcal A_0$ is nonempty, deterministic, and
finite.  Its mesh size
\[
\omega_n
:=
\sup_{a_0\in\mathcal A_0}
\inf_{a\in\mathcal A_n}|a_0-a|
\]
satisfies $\omega_n=o(h_n^\rho)$ for some $\rho>1$.
\end{enumerate}
\end{assumptionp}

Assumption~\ref{assm:G-bandwidth} places $b_n$ and $h_n$ on the same
asymptotic scale and permits $h_n\asymp n^{-1/5}$.  Together with
Assumption~\ref{assm:G-smooth}\textup{(i)}, the robust bias correction then
leaves a uniform smoothing remainder of order $O(h_n^{2+\alpha})$, which is
$o\{(nh_n\log n)^{-1/2}\}$.  The two bandwidth conditions also imply
$\log(h_n^{-1})=O(\log n)$,
$\log(h_n^{-1})\{\log n/(nh_n)\}^{1/2}\to0$, and
$h_n\log(h_n^{-1})\log n\to0$.

Assumptions~\ref{assm:G-DR-rates} and
\ref{assm:G-nuisance-L2} are the uniform counterparts of the pointwise
proximal nuisance conditions in \ref{assm:DR_and_rates} and \ref{assm:pointwise-pseudooutcome-rate}. 
The strengthened
rates make the resulting product-bias remainder
$o_p\{(nh_n\log n)^{-1/2}\}$ uniformly over $\mathcal A_0$.
In contrast to non-cross-fitting, again our approach avoids the uniform entropy restrictions on the nuisance-estimator class, including the stronger
outcome-regression restriction used in \citet{Takatsu2025debiased} to control the empirical $U$-process, and the associated entropy--bandwidth restriction in condition (f) therein.  

For $a_0\in\mathcal A_0$, define
$\sigma_{\infty,h,b}^2(a_0):= hP_0\{\phi_{\infty,h,b,a_0}^2\}$.
Under Assumptions~\ref{assm:A}, \ref{assm:B}, \ref{assm:D}, and
conditions~\ref{assm:G-bandwidth}--\ref{assm:G-smooth},
\[
\sup_{a_0\in\mathcal A_0}
\left|
\sigma_{\infty,h_n,b_n}^2(a_0)
-
\frac{V_{K,\tau}\sigma_0^2(a_0)}{f_A(a_0)}
\right|
\to0.
\]
Consequently, $\sigma_{\infty,h_n,b_n}^2$ is bounded away from zero and
infinity uniformly on $\mathcal A_0$ for all sufficiently large $n$.  For
such $n$, let $Z_{\infty,h_n,b_n}$ be a desired
mean-zero Gaussian process indexed by $\mathcal A_0$, with covariance
\[
\Sigma_{\infty,h_n,b_n}(u,v)
:=
\frac{
h_nP_0\{\phi_{\infty,h_n,b_n,u}\phi_{\infty,h_n,b_n,v}\}
}{
\sigma_{\infty,h_n,b_n}(u)
\sigma_{\infty,h_n,b_n}(v)
}.
\]
Define
\[
T_n
:=
\sup_{a_0\in\mathcal A_0}
\frac{(nh_n)^{1/2}}
{\sigma_{\infty,h_n,b_n}(a_0)}
\left|
\hat\theta_n^{DB,cf}(a_0)-\theta_0(a_0)
\right|.
\]

\begin{theorem}\label{thm:uniform-proximal-inference}
Under Assumptions~\ref{assm:A}, \ref{assm:B}, \ref{assm:D}, and
\ref{assm:G},
\[
\sup_{a_0\in\mathcal A_0}
\left|
\hat\theta_n^{DB,cf}(a_0)-\theta_0(a_0)
-P_n\phi_{\infty,h_n,b_n,a_0}
\right|
=o_p\{(nh_n\log n)^{-1/2}\}.
\]
Moreover,
\[
\sup_{t\in\mathbb R}
\left|
P_0(T_n\le t)-
P_0\left(
\sup_{a_0\in\mathcal A_0}
|Z_{\infty,h_n,b_n}(a_0)|
\le t
\right)
\right|
\to0.
\]
In addition,
\[
\sup_{t\in\mathbb R}
\left|
P_0(T_n\le t)-
P_0\left(
\max_{a\in\mathcal A_n}
|Z_{\infty,h_n,b_n}(a)|
\le t
\right)
\right|
\to0.
\]
\end{theorem}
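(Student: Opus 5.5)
The plan is to prove Theorem~\ref{thm:uniform-proximal-inference} in three stages: a uniform first-order expansion, a Gaussian approximation for the supremum of the normalized leading term, and a discretization step onto the grid $\mathcal A_n$.

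\emph{Uniform expansion.} First we handle the estimated local-polynomial weights. We write $\Gamma_{n,h_n,b_n,a_0}-\Gamma_{0,h_n,b_n,a_0}$ by a first-order expansion of $D_{n,h_n,a_0,1}^{-1}$, $D_{n,b_n,a_0,2}^{-1}$ and $c_{n,h_n,a_0,2}$ around their population versions. Uniformly over $\mathcal A_{\delta_3}$, these kernel-moment matrices are $O_p\{(\log n/(nh_n))^{1/2}\}$-close to their limits. This follows from the VC-type property of the kernel class in Assumption~\ref{assm:kernel} together with standard Talagrand/Einmahl--Mason bounds, since $b_n\asymp h_n$ when $\tau\in(0,\infty)$. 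Their linear parts produce exactly the $-\gamma_{0,h_n,b_n,a_0}(A)$ term in $\phi_{\infty}$, and the quadratic remainders are $O_p\{\log n/(nh_n)\}=o\{(nh_n\log n)^{-1/2}\}$ by $nh_n^\kappa\to\infty$ with $\kappa<5$.

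Next, on each fold and conditionally on $\trainingsigma_k$, we decompose $P_{n,k}\{\Gamma_0(\hat\xi^{(-k)}_n-\xi_\infty)\}$ into a centered empirical-process part plus a conditional bias $P_0\{\Gamma_0(\hat\xi_n^{(-k)}-\xi_\infty)\}$.
\begin{itemize}
\item \textbf{Bias part.} The argument of Theorem~\ref{thm:DR_for_PO}, applied on $\mathcal S_1',\mathcal S_2',\mathcal S_3'$, reduces the bias to $\int\Gamma_0(a)E_0[(\hat q-q_\infty)(h_\infty-\hat h)I_{\mathcal S}\mid A=a]f_A(a)\,da$ plus terms involving the $\mathcal S_1'$ errors of $\hat q$ and the $\mathcal S_2'$ errors of $\hat h$. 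Cauchy--Schwarz and $\sup_a\int|\Gamma_0|f_A=O(1)$ then bound it by the rates in \ref{assm:G-DR-rates}, uniformly in $a_0$.
\item \textbf{Centered part.} Conditionally, this is an empirical process indexed by $a_0$ with a fixed function. We bound its supremum by a maximal inequality over the VC-type class $\{\Gamma_{0,h_n,b_n,a_0}\cdot g:a_0\in\mathcal A_0\}$ with envelope $O(h_n^{-1})\|g\|_\infty$, where $g=\hat\xi-\xi_\infty$ is bounded by \ref{assm:G-crossfit}. The variance scale is $h_n^{-1}\sup_a E_0(g^2\mid A=a)$. This gives $O_p\bigl[(nh_n)^{-1/2}\{\log(h_n^{-1})\}^{1/2}\,\|g\|+\log(h_n^{-1})/(nh_n)\bigr]$, which is $o_p\{(nh_n\log n)^{-1/2}\}$ by \ref{assm:G-nuisance-L2}.
\item \textbf{U-statistic part.} The fold-specific term $(Q^{WX}_{n,k}\times F_{n,k})$ is handled by a Hoeffding decomposition conditional on $\trainingsigma_k$. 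The linear projections give the integral term of $\phi_\infty$ plus an error controlled as above. The degenerate part is $O_p\{(n^2h_n)^{-1/2}\}$-type uniformly, which is negligible.
\end{itemize}
Finally, the deterministic smoothing error $\theta_{0,h_n,b_n}(a_0)-\theta_0(a_0)$ follows from the moment identities $P\{\Gamma(A-a_0)^j\}=0$ for $j=1,2$ together with the H\"older condition on $\theta_0''$. This gives $O(h_n^{2+\alpha})$ uniformly, which is $o\{(nh_n\log n)^{-1/2}\}$ since $nh_n^5=O(1)$. Interactions between the weight error and the nuisance error are products of small terms.

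\emph{Gaussian approximation.} Define $\mathcal F_n=\{h_n^{1/2}\phi_{\infty,h_n,b_n,a_0}/\sigma_{\infty}(a_0):a_0\in\mathcal A_0\}$. Because $Y$, $q_\infty$ and $h_\infty$ are bounded, this class is VC-type with envelope $O(h_n^{-1/2})$ and variances bounded below. The uniform variance convergence displayed before the theorem supplies the lower bound, and it can be proved by the pointwise calculation of Theorem~\ref{thm:pointwise-proximal-inference} made uniform through continuity of $\sigma_0^2$ and $f_A$. We then apply the Chernozhukov--Chetverikov--Kato coupling for suprema of empirical processes, $\sup_f|\mathbb G_nf|$ versus $\sup|Z|$, whose error is polylogarithmic times $(nh_n)^{-1/6}$ and tends to zero. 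The expansion remainder contributes $o_p\{(\log n)^{-1/2}\}$ at the $T_n$ scale. Anti-concentration of $\sup|Z|$, whose expectation is $O\{(\log h_n^{-1})^{1/2}\}$, then converts this additive perturbation into Kolmogorov distance $o(1)$. Replacing $\sigma_\infty$ by its value introduces no further error, since $T_n$ is defined with $\sigma_\infty$ itself.

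\emph{Grid.} For the last display, $Z$ has increments satisfying $E|Z(u)-Z(v)|^2\lesssim|u-v|/h_n$, using the Lipschitz kernel and Lipschitz $\sigma_\infty$. With mesh $\omega_n=o(h_n^\rho)$ and $\rho>1$, Dudley's inequality gives $E\sup_{|u-v|\le\omega_n}|Z(u)-Z(v)|\lesssim(\omega_n/h_n)^{1/2}\{\log(1/\omega_n)\}^{1/2}\to0$ faster than $(\log n)^{-1/2}$. Anti-concentration again closes the argument.

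The main obstacle is the uniform control of the cross-fitted empirical and U-process remainders without entropy conditions on the nuisance fits. The key is that conditioning on $\trainingsigma_k$ leaves only the index $a_0$, which ranges over a VC-type kernel class, and one must track the $\log$ factors carefully enough to match \ref{assm:G-nuisance-L2}.
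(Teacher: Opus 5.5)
Your proposal is correct and follows essentially the same route as the paper. The paper's steps are: a remainder decomposition in which the linearized weight error produces the $-\gamma_{0,h_n,b_n,a_0}$ term with an $O_p\{\log n/(nh_n)\}$ quadratic remainder; conditional-on-$\trainingsigma_k$ kernel maximal inequalities with a fixed nuisance multiplier; the partitioned product-bias bound; a negligible degenerate same-fold U-process term; the $O(h_n^{2+\alpha})$ smoothing bias; then the Chernozhukov--Chetverikov--Kato coupling with anti-concentration, and a Dudley-modulus argument for the mesh. The only bookkeeping difference is that you evaluate the drift with $\Gamma_0$ and treat the $\Gamma_n-\Gamma_0$ part as an interaction, whereas the paper bounds the drift directly using $\int|\Gamma_{n,h_n,b_n,a_0}|\,dF_0=O_p(1)$. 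Also, controlling $|S_{\hat h_n^{(-k)}}(a)-m_\infty(a)|$ by the conditional rate in \ref{assm:G-nuisance-L2} needs the density-ratio bound $f_A(a)/f(a\mid W,X)\le\sup f_A/c_1$ from Assumption~\ref{assm:basic.proxy}.
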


The first conclusion of
Theorem~\ref{thm:uniform-proximal-inference} gives the uniform
triangular-array expansion at the rate required by Gaussian
anti-concentration \citep{Chernozhukov2014Gaussian}.  The second conclusion applies the Gaussian
approximation technique of \citet{Chernozhukov2014Gaussian} to the normalized
influence function class.  The process $Z_{\infty,h_n,b_n}$ is therefore an
$n$-dependent oracle process rather than a fixed weak limit.  Its
expected supremum is $O\{\sqrt{\log(h_n^{-1})}\}$, so the corresponding
oracle half-width is
$O[\sqrt{\log(h_n^{-1})/(nh_n)}]$.
Assumption~\ref{assm:G-grid} makes the oscillation of the oracle Gaussian
process between adjacent mesh points negligible at the Gaussian
anti-concentration scale.

Note that, although practical implementation naturally replaces the population covariance $\Sigma_{\infty,h_n,b_n}$ with an estimated covariance, establishing the validity of such a plug-in procedure requires a separate uniform covariance approximation, which is beyond the scope of the present work. We therefore leave a rigorous theoretical analysis of covariance estimation for future research. In the numerical studies, however, we demonstrate that the plug-in covariance estimator performs well empirically and provides satisfactory finite-sample performance.

\providecommand{\simfigdir}{figures/simulation}
\providecommand{\simtabledir}{tables/simulation}

\section{Implementation and numerical studies}\label{sec:implelemtation and simulations}

\subsection{Covariance estimation}
\label{sec:inference-implementation-covariance}

Following the implementation details suggested by \citet{Takatsu2025debiased} (see Sections 3.1--3.3 therein), we estimate the covariance using a finite-sample version of the influence function of the smoothed causal target.  For given bandwidths \(h\) and \(b\), write
\(\Gamma_{n,h,b,a_0}\) for the empirical weight defined in the preceding
subsection with \((h_n,b_n)\) replaced by \((h,b)\).

Let \(\widehat\gamma_{n,h,b,a_0}\) denote the empirical analogue of
\(\gamma_{P,h,b,a_0}\) defined in
Theorem~\ref{thm:EIF of the smoothed proximal target}.  It is obtained by
replacing the population design moments and \(c_{P,h,a_0,2}\) by their
full-sample empirical counterparts and replacing
\(M_{P,h,a_0}\) and \(M_{P,b,a_0}\), respectively, by
\[
\begin{aligned}
&\sum_{k=1}^K\frac{n_k}{n}
P_{n,k}\!\left[
w_{h,a_0,1}(A)K_{h,a_0}(A)\widehat\xi_n^{(-k)}(O)
\right],\\
&\sum_{k=1}^K\frac{n_k}{n}
P_{n,k}\!\left[
w_{b,a_0,2}(A)K_{b,a_0}(A)\widehat\xi_n^{(-k)}(O)
\right].
\end{aligned}
\]
For \(i\in I_k\), define
\begin{align}
\widehat\psi_{n,h,b,a_0,i}
:={}&
\Gamma_{n,h,b,a_0}(A_i)\widehat\xi_n^{(-k)}(O_i)
-\widehat\gamma_{n,h,b,a_0}(A_i)
\nonumber\\
&\qquad+
P_{n,k}\!\left[
\Gamma_{n,h,b,a_0}(A)
\left\{
\widehat h_n^{(-k)}(W_i,A,X_i)
-\widehat m_{n,k}^{(-k)}(A)
\right\}
\right],
\label{eq:implementation-psi}
\end{align}
and its centered version
\begin{equation}
\widehat\phi_{n,h,b,a_0,i}
:=
\widehat\psi_{n,h,b,a_0,i}
-\frac{1}{n}\sum_{j=1}^n\widehat\psi_{n,h,b,a_0,j},
\label{eq:implementation-phi-centered}
\end{equation}
where it serves as a finite-sample version of \eqref{def: population leading term},  which is shown to be the leading term in terms of the asymptotic linearity by Theorem~\ref{thm:pointwise-proximal-inference}.

For exposure values \(u\) and \(v\), define
\begin{equation}
\widehat\Sigma_{n,h,b}(u,v)
:=
\frac{h}{n}\sum_{i=1}^n
\widehat\phi_{n,h,b,u,i}\widehat\phi_{n,h,b,v,i},
\qquad
\widehat\sigma_{n,h,b}^2(u)
:=
\widehat\Sigma_{n,h,b}(u,u).
\label{eq:implementation-sigma-matrix}
\end{equation}
The proposed covariance estimator for
\(\widehat\theta_n^{\mathrm{DB,cf}}(u)\) and
\(\widehat\theta_n^{\mathrm{DB,cf}}(v)\) is \(\widehat\Sigma_{n,h,b}(u,v)/(nh)\),
and the estimated pointwise standard error at \(a_0\) is \(\widehat\sigma_{n,h,b}(a_0)/(nh)\).
Although estimators at distinct fixed exposure values are asymptotically
independent, their finite-sample covariance need not vanish.  We therefore
use the full influence function covariance matrix, rather than its diagonal
asymptotic limit, for finite-dimensional Wald and further delta-method-based inference.

For a finite evaluation grid \(\mathcal A_n\), the corresponding estimated
correlation function is
\begin{equation}
\widehat R_{n,h,b}(u,v)
:=
\frac{\widehat\Sigma_{n,h,b}(u,v)}
{\widehat\sigma_{n,h,b}(u)\widehat\sigma_{n,h,b}(v)},
\qquad
u,v\in\mathcal A_n,
\label{eq:implementation-correlation}
\end{equation}
whenever the diagonal terms are positive.  We use
\(\widehat R_{n,h,b}\) as the conditional covariance of the Gaussian vector
employed to obtain the simultaneous critical value.

\begin{remark}\label{remark: which to cross fit?}
We note that only the bridge estimators require sample splitting. 
By contrast, the empirical design matrices, the bias coefficient
\(c_{n,h,a_0,2}\), and \(\Gamma_{n,h,b,a_0}\) depend only on the observed
exposures, the kernel, and the bandwidths.  They may therefore be computed
from the full sample and need not be cross-fitted.  After the cross-fitted
pseudo-outcomes have been pooled, the moment vectors entering
\(\widehat\gamma_{n,h,b,a_0}\), the centering operation, and all variance and
covariance calculations are also performed once on the full sample.
\end{remark}

\begin{remark}\label{remark: estimated covariance for the band}
Theorem~\ref{thm:uniform-proximal-inference} concerns the population
covariance \(\Sigma_{\infty,h,b}\).  Extending the result to inference based on the estimated covariance \(\widehat R_{n,h,b}\)requires an additional uniform covariance approximation, which lies beyond the scope of the present work and is left for future investigation.
In the meantime, the numerical experiments in Section~\ref{sec:numerical-experiments} indicate that the resulting empirical procedure exhibits satisfactory finite-sample performance.
\end{remark}

\subsection{Numerical experiments}
\label{sec:numerical-experiments}

\subsubsection{Simulation setup}
\label{sec:simulation}

We evaluate the finite-sample behavior of the proposed estimators under
correct and misspecified bridge models.  Let $X:=(X_1,X_2)^\top$, and $O=(Y,A,Z,W,X_1,X_2)$.
The mutually independent primitive variables are generated according to $X_1,X_2\overset{\mathrm{iid}}{\sim}N(0,0.25^2)$, $U\sim N(0,1)$, $\varepsilon_A\sim N(0,0.79)$, $\varepsilon_Z,\varepsilon_W,\varepsilon_Y
  \overset{\mathrm{iid}}{\sim}N(0,0.5^2)$.
Set
\begin{equation}\label{eqs:simulation design}
\begin{split}
  A
  &:=
  0.25X_1+0.25X_2+0.15U+\varepsilon_A,\\
  Z
  &:=
  0.25+0.25A+0.25X_1+0.25X_2+U+\varepsilon_Z,
  \\
  m_W(u,x)
  &:=
  0.25+0.25x_1+0.25x_2+u,
  \\
  W
  &:=
  m_W(U,X)+\varepsilon_W,
  \\
  Y
  &:=
  2+A+\frac{1}{2}A^2+0.25X_1+0.25X_2
  +m_W(U,X)+W+\varepsilon_Y.
\end{split}
\end{equation}
This construction satisfies $W\perp\!\!\!\perp(Z,A)\mid(U,X)$, $Y\perp\!\!\!\perp Z\mid(U,A,X)$, and 
$Y(a)\perp\!\!\!\perp A\mid(U,X)$.
The outcome bridge and dose-response function are
\begin{equation}
  h_0(w,a,x)
  =
  2+a+\frac{1}{2}a^2+0.25x_1+0.25x_2+2w
  \label{eq:simulation-h0}
\end{equation}
and
\begin{equation}
  \theta_0(a)
  =
  E\{h_0(W,a,X)\}
  =
  2.5+a+\frac{1}{2}a^2,
  \label{eq:simulation-theta0}
\end{equation}
respectively.  In particular,
\(Y-h_0(W,A,X)=-\varepsilon_W+\varepsilon_Y\).
Let $ d(a,x):=a-0.25(x_1+x_2)$,
$r(z,a,x):=z-0.25-0.25a-0.25(x_1+x_2)-d(a,x)/(0.15)$,
$ V_A:=105/128$,
$ V_R:=0.5^2+(0.79)/{(0.15)^2}$
and $Q:={(0.79)}/{(0.15\sqrt{V_AV_R}})$.
The treatment bridge is
\begin{equation}
  q_0(z,a,x)
  =
  Q\exp\!\left\{
  -\frac{a^2}{2V_A}
  +\frac{r(z,a,x)^2}{2V_R}
  \right\},
  \label{eq:simulation-q0}
\end{equation}
which satisfies
\[
  E\{q_0(Z,a,X)\mid W,A=a,X\}
  =
  \frac{f_A(a)}{f(a\mid W,X)}.
\]
The bridge equations are verified in
Section~\ref{sec:simulation-bridge-compatibility}.

The bridge functions are estimated using five-fold cross-fitting.  On each
training sample, the outcome bridge is fit by generalized method of
moments and the treatment bridge by exponential calibration.  The correctly
specified outcome-bridge model contains
\((1,A,A^2,X_1,X_2,W)\), whereas its misspecified counterpart omits \(W\).
The correctly specified treatment-bridge model uses a complete quadratic
basis in \((Z,A,X_1,X_2)\), whereas its misspecified counterpart contains
only the intercept and main effects.  The scenarios CC, CM, MC, and MM
specify, respectively, that both bridge models are correct, only the outcome
bridge model is correct, only the treatment bridge model is correct, and
neither bridge model is correct.  The first letter refers to the outcome
bridge and the second to the treatment bridge.  Scenario MM is included only
as a misspecification diagnostic.  The nuisance estimators are defined in
Section~\ref{sec:simulation-nuisance}.

We consider $n\in\{2500,5000,7500,10000\}$,
and use 5000 Monte Carlo replications at each sample size.  Estimation and
inference are evaluated on $\mathcal G:=
\{-0.90,-0.87,\ldots,0.90\}$, where $|\mathcal G|=61$.
We compare four versions of the proposed debiased proximal dose-response
(PDR) estimator.  PDR--PI1 and PDR--PI2 are plug-in selectors: the former
minimizes an estimated integrated mean-squared-error criterion, whereas the
latter applies a direct scale rule.  PDR--CV1 uses debiased leave-one-out
cross-validation over a normal-reference grid, and PDR--CV2 uses the same
criterion over a relative grid centered at the PDR--PI2 bandwidth.  All four
estimators set \(b=h\).  We also include ProxLL--US, the proximal analogue of
the undersmoothed local-linear estimator of \citet{kennedy2017non}.
The bandwidth selectors are defined in
Section~\ref{sec:implementation-bandwidth}.  Pointwise confidence intervals
are constructed for all five estimators, and simultaneous confidence bands
are constructed for the four PDR estimators.

\subsubsection{Simulation results}
\label{sec:simulation-results}

Figure~\ref{fig:sim-bandwidths} displays the selected bandwidths.  Since we use
\(b=h\), the unregularized debiased fit equals the local-quadratic intercept.
Note that, by the design \(\theta_0(a)=2.5+a+a^2/2\), its deterministic target smoothing bias
is zero in this design.  Both cross-validation selectors favor the upper
portion of their candidate grids.  PDR--CV1 places the greatest mass at the
upper endpoint, whereas the selections of PDR--CV2 are more dispersed across
the two largest candidates.  By contrast, both cross validation selectors tend to choose smaller bandwidths than the plug-in selectors.

\begin{figure}[!t]
\centering
\includegraphics[width=\textwidth]{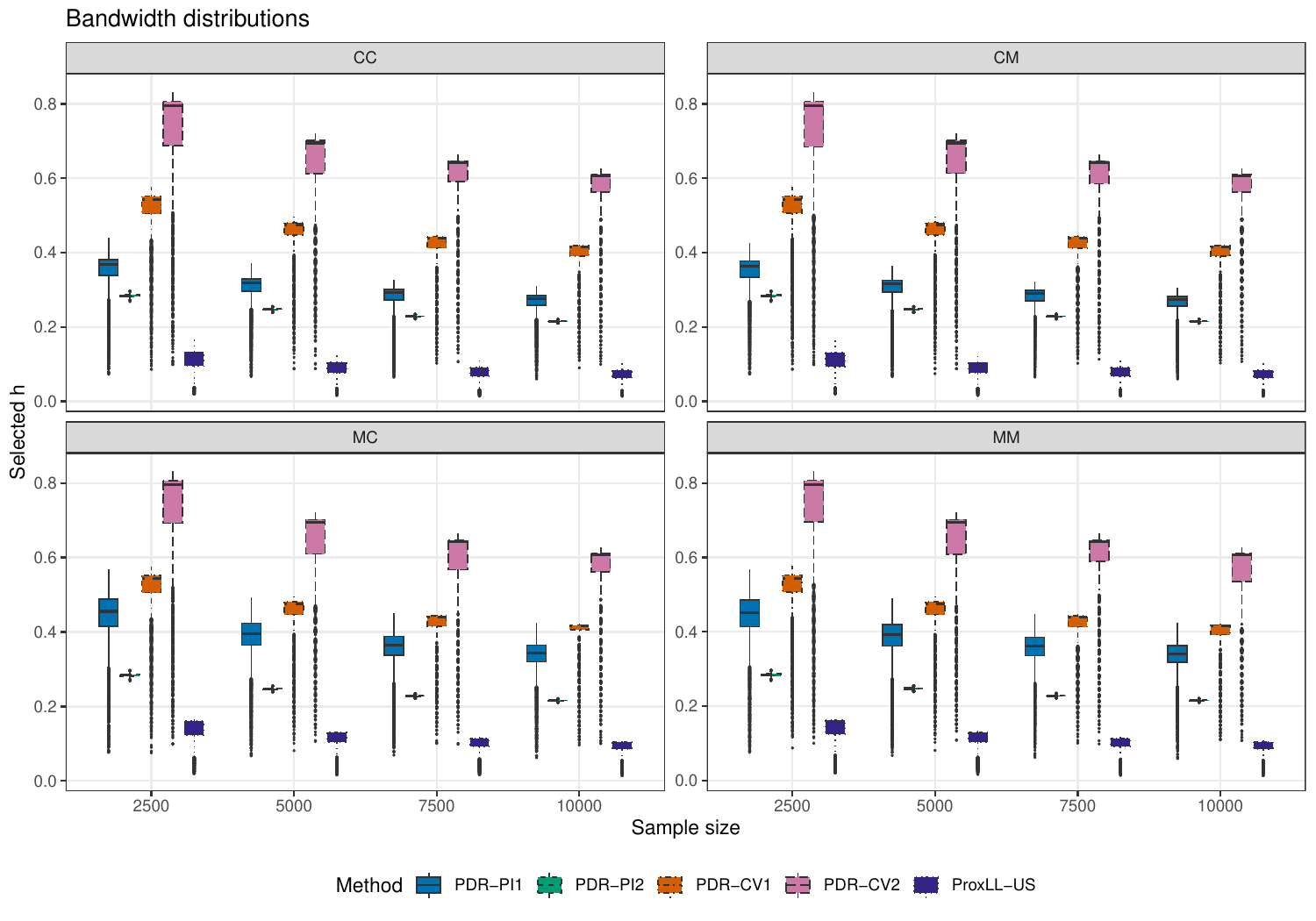}
\caption{Selected bandwidth distributions by sample size, nuisance-model
scenario, and estimator.}
\label{fig:sim-bandwidths}
\end{figure}

Figure~\ref{fig:sim-mean-curves} displays the simulated mean curves.  The
estimated curves closely follow \(\theta_0\) in CC, CM, and MC at every sample
size, including the two scenarios in which one bridge model is misspecified.
This behavior agrees with the robustness property of the proposed estimator.
Under MM, where neither bridge model is correctly specified, the mean curves
depart from \(\theta_0\), as expected.  The
integrated-error results displayed in Figure~\ref{fig:sim-integrated-error} (see Section~\ref{sec:simulation-additional-results}) improve with
sample size for every estimator and nuisance-model scenario.  PDR--CV2 has
the smallest mean integrated squared error throughout, whereas ProxLL--US has
the largest, which supports the strength of bias correction.

\begin{figure}[!t]
\centering
\includegraphics[width=\textwidth]{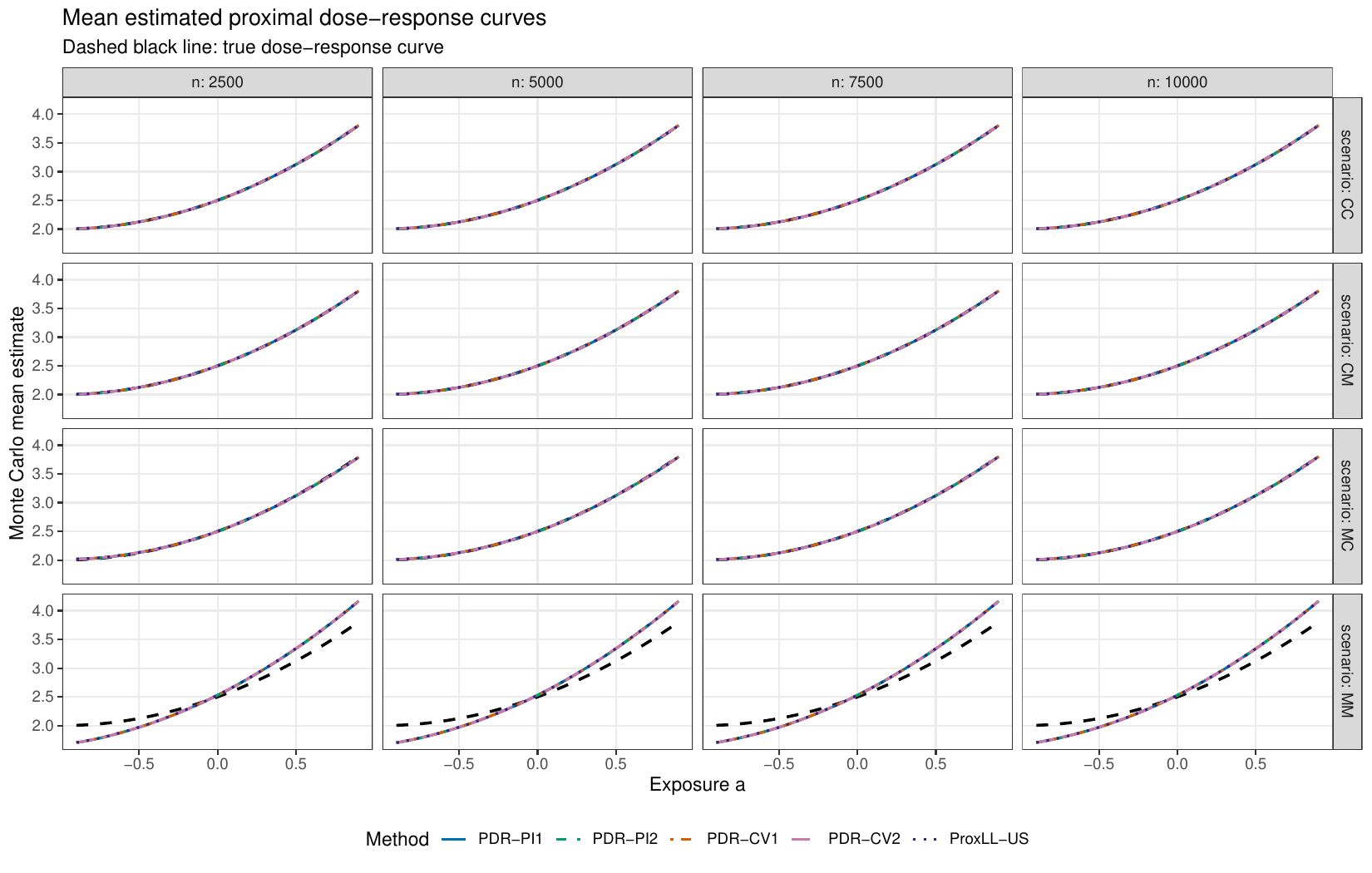}
\caption{Simulation mean curves over the 61-point treatment grid. Rows index
the nuisance-model scenario, columns index sample size, and the dashed curve
is \(\theta_0\).}
\label{fig:sim-mean-curves}
\end{figure}

Figures~\ref{fig:sim-pointwise-coverage}
and~\ref{fig:sim-pointwise-width-trends} summarize pointwise inference in CC,
CM, and MC.  Coverage remains close to the nominal level across estimators,
sample sizes, and misspecification scenarios.  Mean interval width decreases
substantially with sample size for every method and scenario.  Intervals are
widest in MC, reflecting the larger variance under outcome-bridge
misspecification in this design.  
Every PDR estimator produces
narrower intervals than ProxLL--US throughout.  
Within the plug-in pair, PDR--PI2 has slightly higher coverage and wider intervals than PDR--PI1.
Within the cross-validation pair, PDR--CV2 is uniformly narrower than
PDR--CV1, while their coverage is nearly indistinguishable.
We remark that these pointwise findings are analogous to the simulation results in \citet{Takatsu2025debiased} (see Section~4.2 and Appendix~N therein).

\begin{figure}[!t]
\centering
\includegraphics[width=\textwidth]{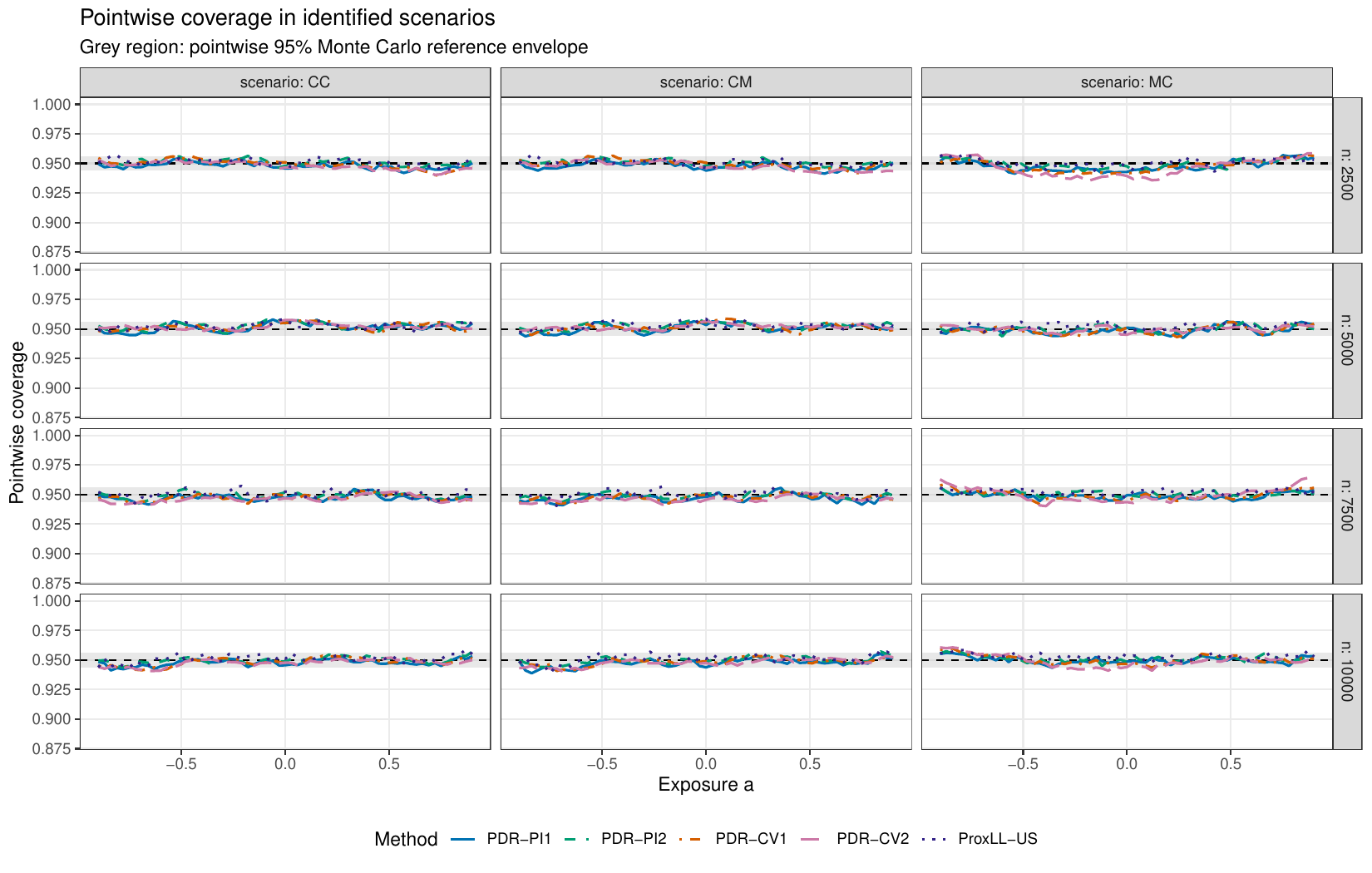}
\caption{Empirical coverage of nominal 95\% pointwise confidence intervals in
CC, CM, and MC. The dashed line is \(0.95\); rows index sample size, columns
index nuisance-model scenario, and shading gives the central 95\% binomial
reference region.}
\label{fig:sim-pointwise-coverage}
\end{figure}

\begin{figure}[!t]
\centering
\includegraphics[width=0.95\textwidth]{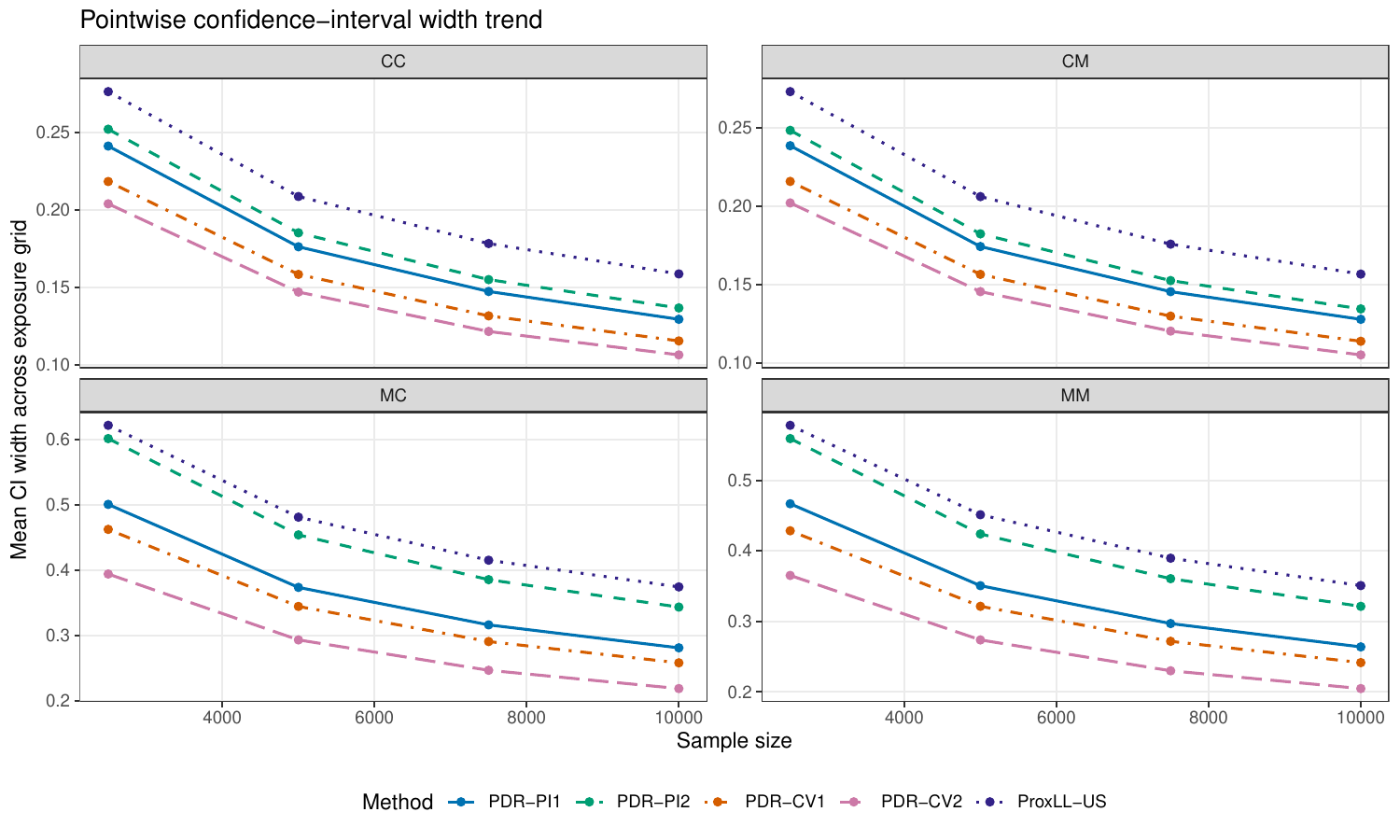}
\caption{Mean pointwise confidence-interval width over the 61 treatment
values by sample size.}
\label{fig:sim-pointwise-width-trends}
\end{figure}

Figures~\ref{fig:sim-band-coverage} and~\ref{fig:sim-band-width} report
simultaneous coverage and mean band width over \(\mathcal G\).  Simultaneous
coverage remains near the nominal level, although it is more sensitive to
bandwidth selection than pointwise coverage.  
The width quickly decreases with sample size under every method and scenario.  
Within the
plug-in pair, PDR--PI2 consistently provides higher coverage than PDR--PI1 at
the cost of wider bands.  Within the cross-validation pair, PDR--CV2 produces
the narrower bands, whereas neither selector uniformly dominates the other
in coverage.  This distinction between stable pointwise coverage and greater
bandwidth sensitivity for simultaneous coverage were also reported by Section~4.2 of \citet{Takatsu2025debiased}.

\begin{figure}[!t]
\centering
\includegraphics[width=0.95\textwidth]{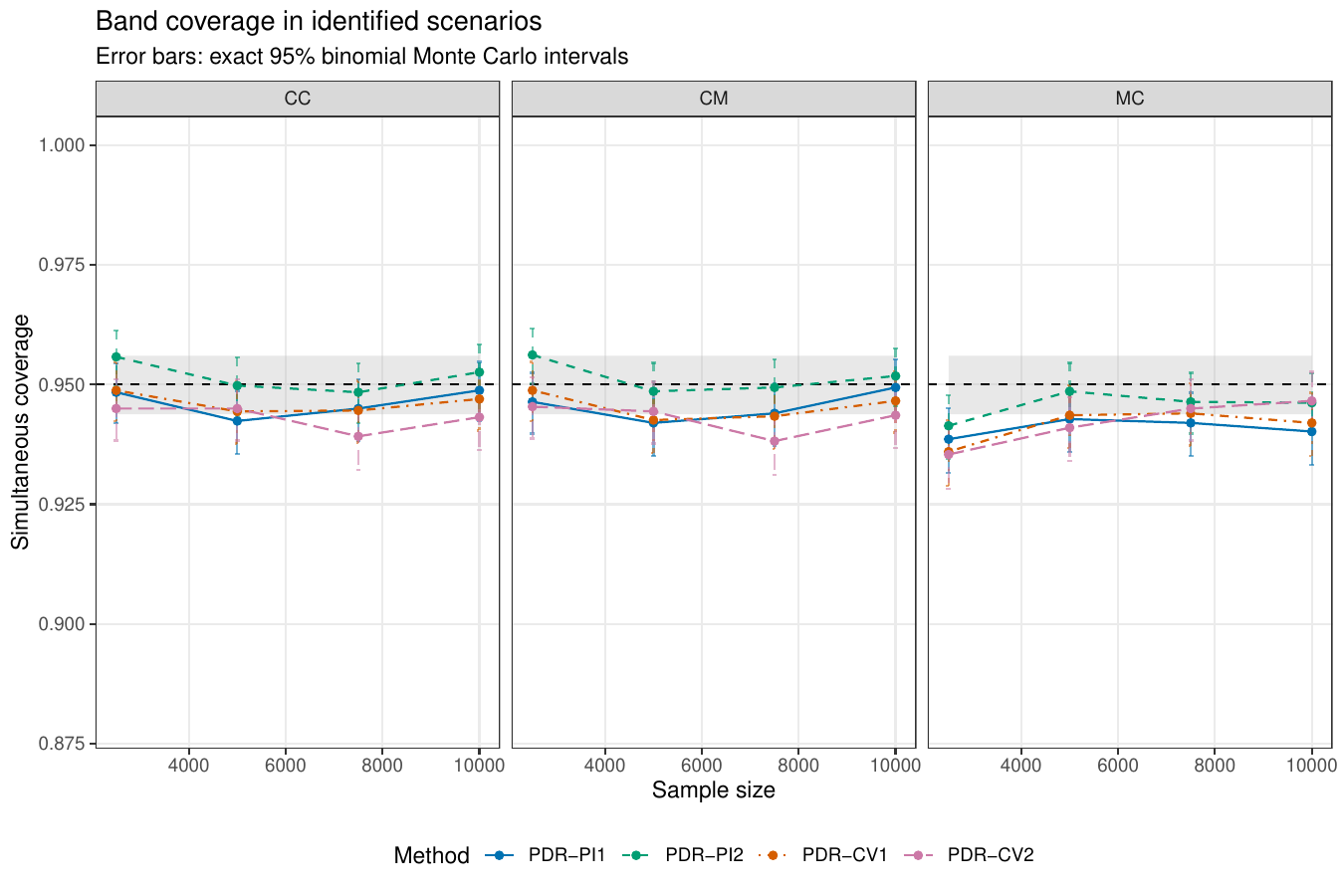}
\caption{Empirical simultaneous coverage of nominal 95\% confidence bands
over the 61-point treatment grid in CC, CM, and MC. The dashed line denotes
\(0.95\), the shaded region is the corresponding central 95\% binomial
reference region, and the error bars are 95\% binomial intervals.}
\label{fig:sim-band-coverage}
\end{figure}

\begin{figure}[!t]
\centering
\includegraphics[width=0.95\textwidth]{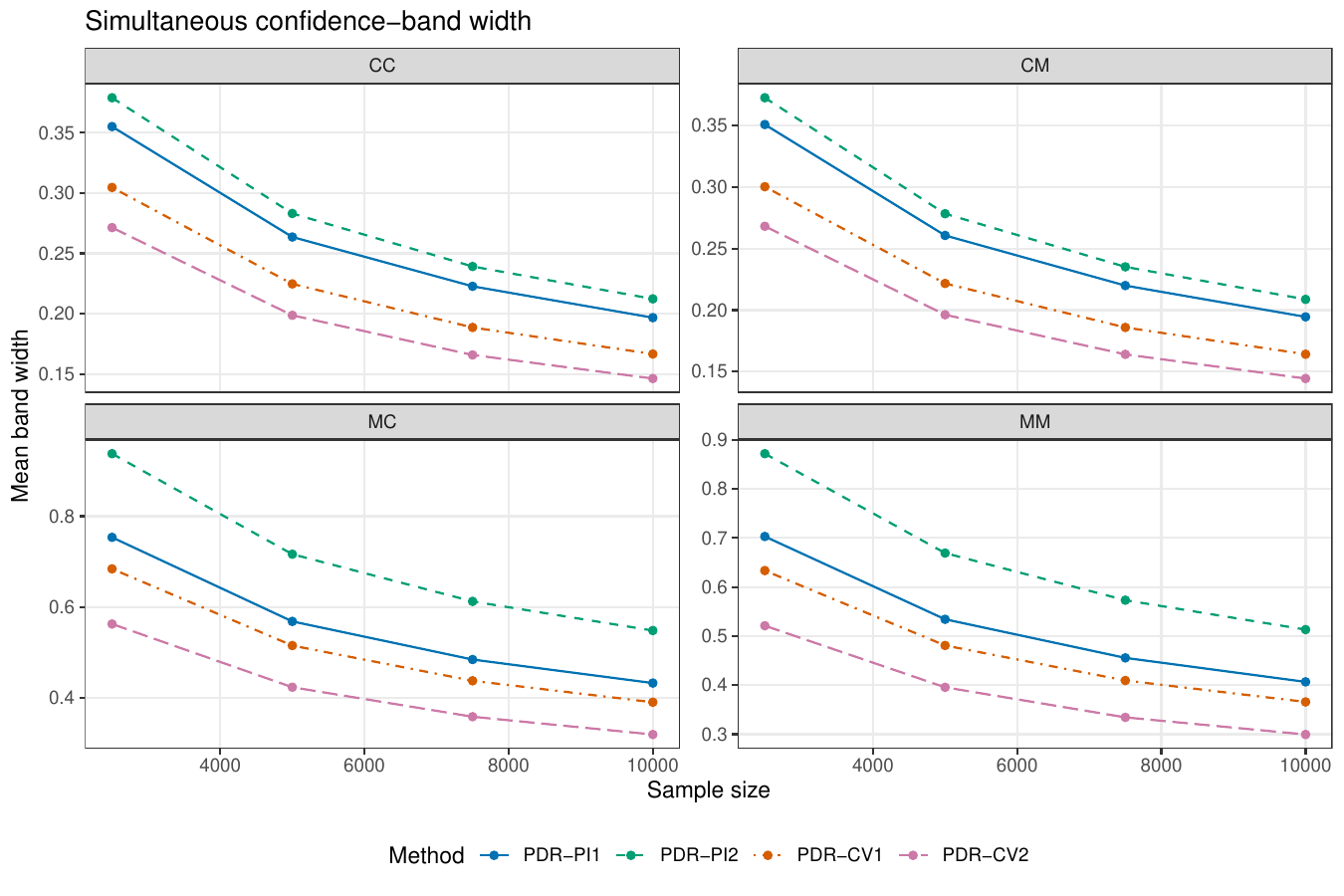}
\caption{Mean width of nominal 95\% simultaneous confidence bands over the
61-point treatment grid.}
\label{fig:sim-band-width}
\end{figure}

We refer the readers to Appendix~\ref{sec:additional-numerical-details} for further detailed coverage and width results for each setting.

\section{Real Data Analysis: Legalized Abortion and Crime}
\label{sec:realdata}

We evaluate the proposed proximal dose-response estimators on the legalized
abortion and crime application originally studied by
\citet{donohue2001impact} and later used in proximal causal learning benchmarks
\citep{mastouri2021proximal,wu2024doubly}. The data are based on a state-year
panel of 48 U.S. states observed from 1985 to 1997, yielding 624 state-year
observations. Following 
\citet{woody2020estimating}, we take the treatment \(A\) to be the effective
abortion rate and the outcome \(Y\) to be the murder rate. The
treatment-arm proxy \(Z\) is the generosity toward families with dependent
children, and the outcome-arm proxies \(W\) are beer consumption per
capita, log-prisoner population per capita, and concealed weapons laws.
Following \citet{mastouri2021proximal}, the ground truth dose-response curve
\(\beta(a)\) is obtained from a generative model fitted to the real
abortion-crime panel. Our evaluation uses \(n=4000\) observations generated from
this fitted real data design. We estimate all nuisance functions with 5-fold cross-fitting and report the four proposed estimators: PDR--PI1, PDR--PI2, PDR--CV1, and
PDR--CV2.

\begin{figure}[t]
    \centering
    \includegraphics[width=0.86\textwidth]{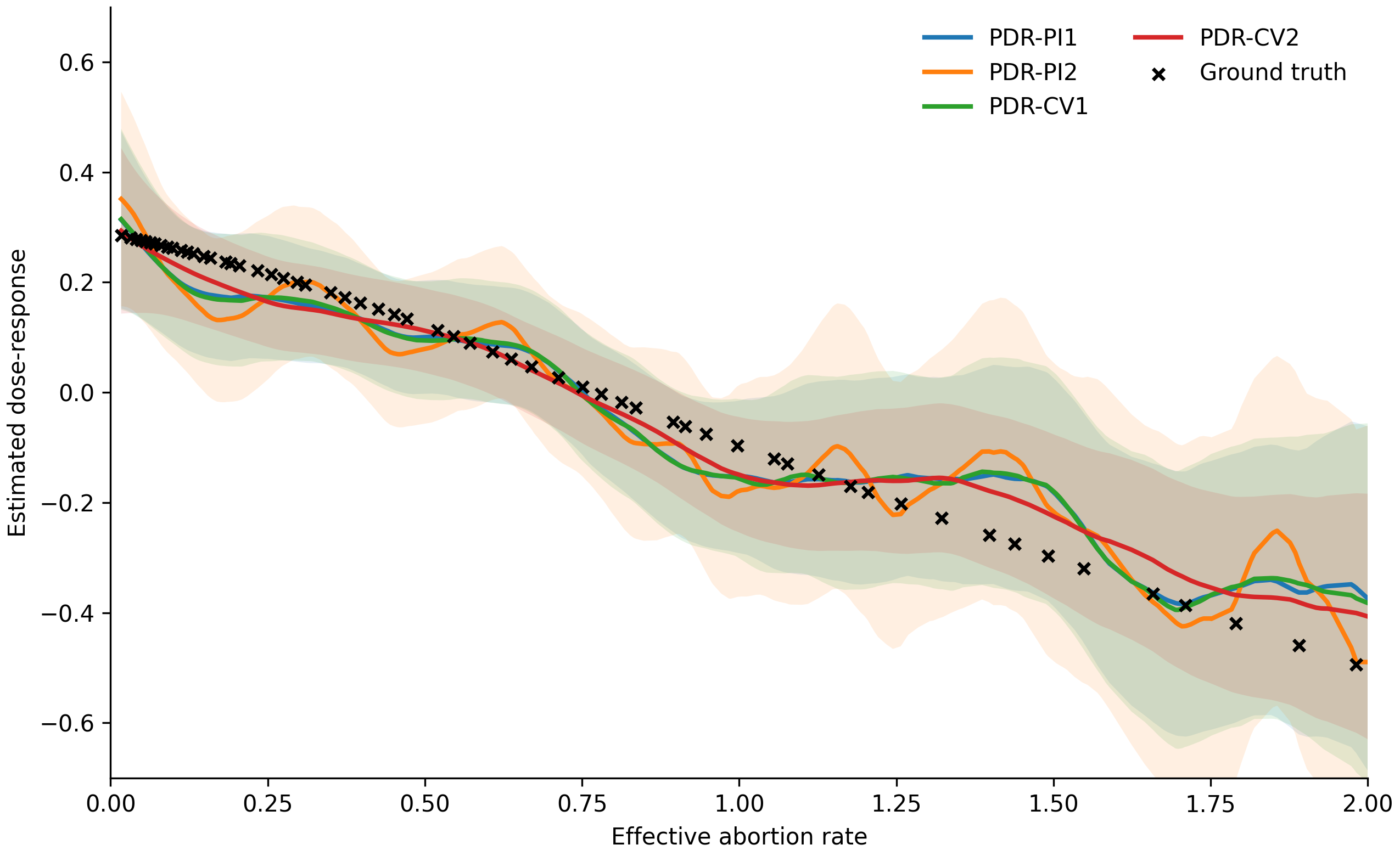}
    \caption{
    Estimated dose-response curves for the legalized abortion and crime
    application. Black crosses denote the ground-truth curve obtained from the
    fitted generative model. All four PDR estimators closely track the
    ground-truth dose-response curve over the support of the treatment.
    }
    \label{fig:abortion_crime_realdata}
\end{figure}

\begin{table}[t]
    \centering
    \caption{
    Estimation accuracy on the abortion-crime data design over
    \(a \in [0,2]\). Following \citet{mastouri2021proximal}, the MSE is
    computed on 100 equally spaced treatment values in this interval.
    }
    \label{tab:abortion_crime_realdata}
    \begin{tabular}{lccc}
        \toprule
        Estimator & MSE & RMSE & MAE \\
        \midrule
        PDR--PI1 & 0.003871 & 0.062219 & 0.050004 \\
        PDR--PI2 & 0.005140 & 0.071693 & 0.056407 \\
        PDR--CV1 & 0.003915 & 0.062571 & 0.050782 \\
        PDR--CV2 & 0.002472 & 0.049715 & 0.042384 \\
        \bottomrule
    \end{tabular}
\end{table}

The results are shown in Figure~\ref{fig:abortion_crime_realdata} and
Table~\ref{tab:abortion_crime_realdata}. All four estimators recover the
ground-truth dose-response curve well, with PDR--CV2 achieving the smallest
MSE, RMSE, and MAE. Following
 \citet{wu2024doubly}, we compute MSE over 100 equally spaced treatment values
in \(a \in [0,2]\). As a benchmark, \citet{wu2024doubly} report MSEs between
0.01 and 0.07 for several competing methods on the same abortion-crime design,
with their best-performing methods around 0.01. In comparison, our four PDR
estimators achieve MSEs between 0.0025 and 0.0052 over the same treatment
range. This comparison suggests that the proposed estimators are highly
competitive in this real-data application.

Substantively, the fitted curves suggest a negative dose-response relationship
between the effective abortion rate and subsequent murder rates. As the
effective abortion rate increases, the estimated causal response decreases. This
pattern is consistent with the mechanism proposed by
\citet{donohue2001impact}. Legalized abortion may reduce the number of unwanted
births, especially among women facing adverse socioeconomic conditions, limited
family resources, or unstable household environments. These conditions are
closely related to later-life risk factors for crime, including poverty,
reduced parental investment, lower educational attainment, and exposure to
disadvantaged neighborhoods. Therefore, higher abortion access can change the
composition of later birth cohorts by reducing the share of children born into
high-risk environments. When these cohorts reach the ages at which criminal
activity is most common, aggregate crime and murder rates may decline. Our estimated dose-response curves are consistent with this channel after
adjusting for latent confounding through proxy variables.

At larger treatment values near the boundary of the support, the uncertainty
increases, which is expected because fewer observations are available in the region with
high abortion rate. Overall, this real data experiment demonstrates that
the proposed proximal dose-response estimators remain useful in a
policy-relevant setting with latent confounding, multiple proxy variables, and a continuous-treatment.

\clearpage
\appendix

\section{Existence of solutions to the bridge equations}
\label{sec:existence-bridges}

We give primitive sufficient conditions for the existence of
square-integrable solutions to the bridge equations.  The argument is the
continuous-treatment analogue of the singular-system argument in Appendix~B of \citet{cui2024semiparametric} and follows the Picard characterization
for compact operators in \citet[Theorem~2.41]{carrasco2007linear} and
\citet[Theorems~15.16 and~15.18]{kress2014linear}.  A related
continuous-proxy formulation is given in
\citet{miao2018identifying} (see section 3 and Appendix therein).  The outcome and treatment
bridge equations are \eqref{eq:Bh} and \eqref{eq:Bq}, respectively.  The
right-hand side of \eqref{eq:Bq} is the stabilized density ratio
\(f_A(a)/f(a\mid W,X)\).

Fix \(a\in\mathcal A\) and \(x\) in the support of \(X\).  Define
\[
T_{a,x}:L_2\{F(w\mid a,x)\}\longrightarrow L_2\{F(z\mid a,x)\}
\]
by
\[
(T_{a,x}g)(z)
:=
\mathbb E\{g(W)\mid Z=z,A=a,X=x\}.
\]
Its adjoint is $T_{a,x}^*:L_2\{F(z\mid a,x)\}\longrightarrow L_2\{F(w\mid a,x)\}$, where
\[
(T_{a,x}^*r)(w)
:=
\mathbb E\{r(Z)\mid W=w,A=a,X=x\}.
\]
Let
\[
\{\lambda_{a,x,j},\phi_{a,x,j},\psi_{a,x,j}\}_{j\geq 1}
\]
be a singular system of \(T_{a,x}\), so that
\[
T_{a,x}\phi_{a,x,j}
=
\lambda_{a,x,j}\psi_{a,x,j},
\qquad
T_{a,x}^*\psi_{a,x,j}
=
\lambda_{a,x,j}\phi_{a,x,j}.
\]
Here, \(\lambda_{a,x,j}>0\),
\(\{\phi_{a,x,j}\}_{j\geq 1}\) is an orthonormal basis of
\(N(T_{a,x})^\perp\), and
\(\{\psi_{a,x,j}\}_{j\geq 1}\) is an orthonormal basis of
\(N(T_{a,x}^*)^\perp\).

Assume that the singular systems can be selected such that, for every \(j\),
the maps
\[
(a,x)\mapsto\lambda_{a,x,j},
\qquad
(w,a,x)\mapsto\phi_{a,x,j}(w),
\qquad
(z,a,x)\mapsto\psi_{a,x,j}(z)
\]
are measurable.  Assume that \(T_{a,x}\) is compact for almost every
\((a,x)\) under the joint law of \((A,X)\).  Whenever the corresponding
conditional laws admit Lebesgue densities, a sufficient Hilbert--Schmidt
condition for compactness is
\[
\iint
f(w\mid z,a,x)f(z\mid w,a,x)\,dw\,dz
<\infty.
\]
Assume further that
\begin{equation}
\mathbb E\!\left[\mathbb E(Y\mid Z,A,X)^2\right]
<\infty
\label{eq:bridge-existence-h-square}
\end{equation}
and
\begin{equation}
\mathbb E\!\left[
\sum_{j=1}^{\infty}
\lambda_{A,X,j}^{-2}
\left|
\left\langle
\mathbb E(Y\mid Z,A,X),
\psi_{A,X,j}
\right\rangle_{z\mid A,X}
\right|^2
\right]
<\infty.
\label{eq:bridge-existence-h-picard}
\end{equation}
For the treatment bridge, assume that
\begin{equation}
\mathbb E\!\left[
\left\{
\frac{f_A(A)}{f(A\mid W,X)}
\right\}^2
\right]
<\infty
\label{eq:bridge-existence-q-square}
\end{equation}
and
\begin{equation}
\mathbb E\!\left[
\sum_{j=1}^{\infty}
\lambda_{A,X,j}^{-2}
\left|
\left\langle
\frac{f_A(A)}{f(A\mid W,X)},
\phi_{A,X,j}
\right\rangle_{w\mid A,X}
\right|^2
\right]
<\infty.
\label{eq:bridge-existence-q-picard}
\end{equation}

For almost every \((a,x)\) under the joint law of \((A,X)\), assume that, for
every square-integrable \(g\),
\begin{equation}
\mathbb E\{g(Z)\mid W,A=a,X=x\}=0\ \text{a.s.}
\quad\Longrightarrow\quad
g(Z)=0\ \text{a.s.}
\label{eq:bridge-existence-completeness-z}
\end{equation}
and
\begin{equation}
\mathbb E\{g(W)\mid Z,A=a,X=x\}=0\ \text{a.s.}
\quad\Longrightarrow\quad
g(W)=0\ \text{a.s.}
\label{eq:bridge-existence-completeness-w}
\end{equation}
The conditions in \eqref{eq:bridge-existence-completeness-z} and
\eqref{eq:bridge-existence-completeness-w} imply
\[
N(T_{a,x}^*)=\{0\}
\qquad\text{and}\qquad
N(T_{a,x})=\{0\},
\]
respectively.

For almost every \((a,x)\), \eqref{eq:bridge-existence-h-square} implies that
\(\mathbb E(Y\mid Z,A=a,X=x)\) belongs to
\(L_2\{F(z\mid a,x)\}\).  By
\eqref{eq:bridge-existence-completeness-z},
\[
\mathbb E(Y\mid Z,A=a,X=x)\in N(T_{a,x}^*)^\perp.
\]
The compactness of \(T_{a,x}\) and
\eqref{eq:bridge-existence-h-picard} therefore imply, by Picard's theorem,
that
\[
T_{a,x}h(\cdot,a,x)
=
\mathbb E(Y\mid Z,A=a,X=x)
\]
has a square-integrable solution.  Its minimum-norm solution is
\begin{equation}
h^\dagger(w,a,x)
=
\sum_{j=1}^{\infty}
\lambda_{a,x,j}^{-1}
\left\langle
\mathbb E(Y\mid Z,A=a,X=x),
\psi_{a,x,j}
\right\rangle_{z\mid a,x}
\phi_{a,x,j}(w).
\label{eq:bridge-existence-h-solution}
\end{equation}
By \eqref{eq:bridge-existence-h-picard}, the measurable partial sums in
\eqref{eq:bridge-existence-h-solution} are Cauchy in the \(L_2\) space of the
law of \((W,A,X)\).  Their limit has a measurable representative and satisfies
\[
\begin{aligned}
\mathbb E\{h^\dagger(W,A,X)^2\}
&=
\mathbb E\!\left[
\sum_{j=1}^{\infty}
\lambda_{A,X,j}^{-2}
\left|
\left\langle
\mathbb E(Y\mid Z,A,X),
\psi_{A,X,j}
\right\rangle_{z\mid A,X}
\right|^2
\right]<\infty.
\end{aligned}
\]
If \eqref{eq:bridge-existence-completeness-w} holds, then
\(N(T_{a,x})=\{0\}\), and the outcome bridge solution is unique.  Without
\eqref{eq:bridge-existence-completeness-w}, all square-integrable solutions
for each such \((a,x)\) are of the form
\[
h^\dagger(\cdot,a,x)+\nu(\cdot,a,x),
\qquad
\nu(\cdot,a,x)\in N(T_{a,x}).
\]

For almost every \((a,x)\), \eqref{eq:bridge-existence-q-square} implies that
\(f_A(a)/f(a\mid W,X=x)\) belongs to \(L_2\{F(w\mid a,x)\}\).  By
\eqref{eq:bridge-existence-completeness-w},
\[
\frac{f_A(a)}{f(a\mid W,X=x)}
\in N(T_{a,x})^\perp.
\]
The compactness of \(T_{a,x}^*\) and
\eqref{eq:bridge-existence-q-picard} therefore imply, by Picard's theorem,
that
\[
T_{a,x}^*q(\cdot,a,x)
=
\frac{f_A(a)}{f(a\mid W,X=x)}
\]
has a square-integrable solution.  Its minimum-norm solution is
\begin{equation}
q^\dagger(z,a,x)
=
\sum_{j=1}^{\infty}
\lambda_{a,x,j}^{-1}
\left\langle
\frac{f_A(a)}{f(a\mid W,X=x)},
\phi_{a,x,j}
\right\rangle_{w\mid a,x}
\psi_{a,x,j}(z).
\label{eq:bridge-existence-q-solution}
\end{equation}
By \eqref{eq:bridge-existence-q-picard}, the measurable partial sums in
\eqref{eq:bridge-existence-q-solution} are Cauchy in the \(L_2\) space of the
law of \((Z,A,X)\).  Their limit has a measurable representative and satisfies
\[
\begin{aligned}
\mathbb E\{q^\dagger(Z,A,X)^2\}
&=
\mathbb E\!\left[
\sum_{j=1}^{\infty}
\lambda_{A,X,j}^{-2}
\left|
\left\langle
\frac{f_A(A)}{f(A\mid W,X)},
\phi_{A,X,j}
\right\rangle_{w\mid A,X}
\right|^2
\right] <\infty.
\end{aligned}
\]
If \eqref{eq:bridge-existence-completeness-z} holds, then
\(N(T_{a,x}^*)=\{0\}\), and the treatment bridge solution is unique.  Without
\eqref{eq:bridge-existence-completeness-z}, all square-integrable solutions
for each such \((a,x)\) are of the form
\[
q^\dagger(\cdot,a,x)+\mu(\cdot,a,x),
\qquad
\mu(\cdot,a,x)\in N(T_{a,x}^*).
\]

Consequently, the displayed series define measurable functions \(h^\dagger\)
and \(q^\dagger\) satisfying
\[
\mathbb E\{h^\dagger(W,A,X)^2+q^\dagger(Z,A,X)^2\}<\infty.
\]
Moreover, Assumption~\ref{assm:basic.proxy} implies, for \(F_A\)-almost every
\(a\),
\[
f(a\mid W,X)
=
\mathbb E\{f(a\mid U,X)\mid W,X\}
\geq c_1
\quad\text{almost surely}.
\]
Therefore,
\[
\begin{aligned}
\int\mathbb E\{h^\dagger(W,a,X)^2\}\,dF_A(a)
&=
\mathbb E\left[
h^\dagger(W,A,X)^2
\frac{f_A(A)}{f(A\mid W,X)}
\right] \\
&\leq
\frac{\sup_{a\in\mathcal A}f_A(a)}{c_1}
\mathbb E\{h^\dagger(W,A,X)^2\}<\infty.
\end{aligned}
\]
The functions \(h^\dagger\) and \(q^\dagger\) satisfy
\eqref{eq:Bh} and \eqref{eq:Bq}, respectively, for almost every
\((a,x)\) under the joint law of \((A,X)\).  Hence
Assumption~\ref{assm:B} holds.
\section{Auxiliary results for identification and estimation}\label{app:aux-identification-estimation}

\subsection{Proof of an outcome-bridge identification}\label{sec: outcome bridge ID proof}
\begin{proof}[Proof of \eqref{eq:ID-theta}]
For $F_A$-almost every $a$, Assumptions~\ref{assm:basic.proxy} and~\ref{assm:outcome-bridge} give
\[
\begin{aligned}
0
&=
\mathbb E\{Y-h(W,a,X)\mid Z,A=a,X\}\\
&=
\mathbb E\bigl[
\mathbb E\{Y\mid U,A=a,X\}
-
\mathbb E\{h(W,a,X)\mid U,X\}
\mid Z,A=a,X
\bigr].
\end{aligned}
\]
Assumption~\ref{assm:completeness} implies
\[
\mathbb E\{Y\mid U,A=a,X\}
=
\mathbb E\{h(W,a,X)\mid U,X\}
\quad\text{almost surely}.
\]
Since $f(a\mid U,X)\ge c_1$, the preceding equality also holds under the marginal law of $(U,X)$. Therefore,
\[
\begin{aligned}
\mathbb E\{Y(a)\}
&=
\mathbb E\bigl[\mathbb E\{Y(a)\mid U,X\}\bigr]\\
&=
\mathbb E\bigl[\mathbb E\{Y(a)\mid U,A=a,X\}\bigr]\\
&=
\mathbb E\bigl[\mathbb E\{Y\mid U,A=a,X\}\bigr]\\
&=
\mathbb E\bigl[\mathbb E\{h(W,a,X)\mid U,X\}\bigr]\\
&=
\mathbb E\{h(W,a,X)\}.
\end{aligned}
\]
\end{proof}

\subsection{Proof of population proximal double robustness}\label{sec: PoP DR proximal proof}

\begin{lemma}\label{lem:DR_lemma}
Under Assumption~\ref{assm:basic.proxy}, let \(g\) be any integrable measurable function of \((W,X)\). Then, for
\(F_A\)-almost every \(a\),
\[
\mathbb{E}\left\{
\frac{f_A(a)}{f(a\mid W,X)}g(W,X)
\,\middle|\,
A=a
\right\}
=
\mathbb{E}\{g(W,X)\}.
\]
\end{lemma}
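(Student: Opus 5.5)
The approach is a direct change of measure via Bayes' rule for the conditional density of $(W,X)$ given $A=a$. By the definition of conditional densities, the conditional law of $(W,X)$ given $A=a$ has density with respect to the marginal law $P_{WX}$ equal to $f(a\mid W,X)/f_A(a)$. This holds whenever $f_A(a)>0$, which is true for $F_A$-almost every $a$. Equivalently, for any nonnegative measurable $g$,
\[
\mathbb E\{g(W,X)\mid A=a\}=\mathbb E\left\{\frac{f(a\mid W,X)}{f_A(a)}\,g(W,X)\right\}.
\]
To justify this, I will start from the disintegration identity $\int_B\mathbb E\{g(W,X)\mid A=a\}f_A(a)\,da=\mathbb E\{I(A\in B)g(W,X)\}$. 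I will then condition on $(W,X)$ to write the right side as $\int_B\mathbb E\{f(a\mid W,X)g(W,X)\}\,da$. Fubini's theorem applies here because everything is nonnegative. Since $B$ is an arbitrary Borel set, the two integrands agree for Lebesgue-almost every $a$, hence for $F_A$-almost every $a$.

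Next, I apply this identity to the product $\tilde g:=\{f_A(a)/f(a\mid W,X)\}g(W,X)$, with $a$ held fixed. The density ratio in $\tilde g$ cancels the change-of-measure weight exactly. This gives $\mathbb E\{\tilde g\mid A=a\}=\mathbb E\{g(W,X)\}$. For the cancellation to be valid I need $f(a\mid W,X)>0$ $P_{WX}$-almost surely. I obtain this from Assumption~\ref{assm:basic.proxy}: since $W\indep A\mid(U,X)$, we have $f(a\mid W,X)=\mathbb E\{f(a\mid U,X)\mid W,X\}\ge c_1>0$, exactly as in Appendix~\ref{sec:existence-bridges}. I first carry out the argument for nonnegative $g$. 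For integrable $g$, I split $g=g^+-g^-$; both parts then have finite conditional expectations, because the resulting identity shows they equal $\mathbb E(g^\pm)<\infty$.

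The main obstacle is the null-set bookkeeping. The Bayes identity holds only for almost every $a$, and the exceptional null set could a priori depend on $g$. In addition, $\tilde g$ itself depends on $a$, so the general identity cannot be invoked with a fixed function. I would handle both issues by working with the joint density of $(A,W,X)$ with respect to $\text{Lebesgue}\otimes P_{WX}$, which is given by $f(a\mid w,x)$. Concretely, I would define the regular conditional law of $(W,X)$ given $A=a$ to be $f(a\mid w,x)\,dP_{WX}(w,x)/f_A(a)$ on $\{a: f_A(a)>0\}$. With this version the identity holds pointwise in $a$, for all $g$ simultaneously, and it also covers integrands that depend on $a$. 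The only exceptional set is then $\{f_A=0\}$, which is $F_A$-null.
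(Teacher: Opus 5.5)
Your proof is correct. It is organized differently from the paper's proof of this lemma, though it uses the same ingredients.

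\textbf{Your route.} You first establish the Bayes change of measure $dP_{WX\mid A=a}(w,x)=\{f(a\mid w,x)/f_A(a)\}\,dP_{WX}(w,x)$. You then let the density ratio in the integrand cancel against this kernel.

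\textbf{The paper's route.} The proof never writes down the conditional law of $(W,X)$ given $A$. Instead it verifies the defining property of $E\{\cdot\mid A\}$ directly. For every bounded $r$, it conditions on $(W,X)$ and uses that $f(\cdot\mid W,X)$ is the conditional density of $A$. This gives $E\{r(A)f_A(A)/f(A\mid W,X)\mid W,X\}=\int r(a)f_A(a)\,da=E\{r(A)\}$, so the cancellation happens inside the $da$-integral. The lemma then follows from $E[r(A)\{E(\cdot\mid A)-E g\}]=0$ for all bounded $r$.

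\textbf{Comparison.} Both routes rest on the same Fubini interchange and the same positivity bound $f(a\mid W,X)\ge c_1$ from Assumption~\ref{assm:basic.proxy}. The paper's version is shorter and needs no choice of a regular conditional distribution. However, it tacitly identifies $E\{f_A(a)f(a\mid W,X)^{-1}g(W,X)\mid A=a\}$ with $E\{f_A(A)f(A\mid W,X)^{-1}g(W,X)\mid A\}$ evaluated at $A=a$. Your explicit kernel makes that substitution rigorous and gives the identity for all $g$ at once. It is also the representation the paper itself invokes later, in the proof of Lemma~\ref{lem:generic_P_ID}.

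\textbf{A small correction.} Your exceptional set is not only $\{f_A=0\}$. Assumption~\ref{assm:basic.proxy} gives $f(a\mid U,X)\ge c_1$, and hence the bound on $f(a\mid W,X)$ and the positivity you need for the cancellation, only for Lebesgue-almost every $a$. Adding this Lebesgue-null set keeps the exceptional set $F_A$-null, so your conclusion is unaffected.
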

\begin{proof}
Assumption~\ref{assm:basic.proxy} implies
\[
f(a\mid W,X)
=
\mathbb E\{f(a\mid U,X)\mid W,X\}
\ge c_1
\]
almost surely for $F_A$-almost every $a$. Hence all expectations below are absolutely integrable.
Let \(r\) be any bounded measurable function of \(A\). By iterated expectation,
\[
\mathbb{E}\left[
r(A)\frac{f_A(A)}{f(A\mid W,X)}g(W,X)
\right]
=
\mathbb{E}\left[
g(W,X)
\,
\mathbb{E}\left\{
r(A)\frac{f_A(A)}{f(A\mid W,X)}
\,\middle|\,
W,X
\right\}
\right].
\]
Because \(f(\cdot\mid W,X)\) is a conditional density of \(A\) given \((W,X)\),
\[
\mathbb{E}\left\{
r(A)\frac{f_A(A)}{f(A\mid W,X)}
\,\middle|\,
W,X
\right\}
=
\int r(a)f_A(a)\,da
=
\mathbb{E}\{r(A)\}.
\]
Therefore
\[
\mathbb{E}\left[
r(A)\frac{f_A(A)}{f(A\mid W,X)}g(W,X)
\right]
=
\mathbb{E}\{g(W,X)\}\mathbb{E}\{r(A)\}.
\]
Equivalently,
\[
\mathbb{E}\left[
r(A)
\left\{
\mathbb{E}\left(
\frac{f_A(A)}{f(A\mid W,X)}g(W,X)
\,\middle|\,
A
\right)
-
\mathbb{E}\{g(W,X)\}
\right\}
\right]
=
0
\]
for every bounded measurable \(r\). Hence
\[
\mathbb{E}\left(
\frac{f_A(A)}{f(A\mid W,X)}g(W,X)
\,\middle|\,
A
\right)
=
\mathbb{E}\{g(W,X)\}
\]
almost surely, which concludes the claim.
\end{proof}

\begin{proof}[Proof of Theorem~\ref{thm:DR_for_PO}]
We have
\[
\EE\left[\xi_{\mathrm{prox}}(O;\bar q,\bar h)\mid A=a\right]
=
\E\left[\bar q(Z,A,X)\{Y-\bar h(W,A,X)\}\mid A=a\right]
+
\E\left[S_{\bar h}(A)\mid A=a\right].
\]
Therefore
\[
E\left[\xi_{\mathrm{prox}}(O;\bar q,\bar h)\mid A=a\right]
=
E\left[\bar q(Z,a,X)\{Y-\bar h(W,a,X)\}\mid A=a\right]
+
S_{\bar h}(a).
\]
Subtracting $\theta(a)=E\{h(W,a,X)\}$ from both sides yields
\[
\begin{aligned}
&
\E\left[\xi_{\mathrm{prox}}(O;\bar q,\bar h)\mid A=a\right]-\theta(a)
\\
&=
\E\left[\bar q(Z,a,X)\{Y-\bar h(W,a,X)\}\mid A=a\right]
+
S_{\bar h}(a)-\E\{h(W,a,X)\}.
\end{aligned}
\]
Because $S_{\bar h}(a)=\E\{\bar h(W,a,X)\}$, we have
\[
S_{\bar h}(a)-\E\{h(W,a,X)\}
=
-\{\E[h(W,a,X)]-\E[\bar h(W,a,X)]\}.
\]
Hence
\[
\begin{aligned}
&
\E\left[\xi_{\mathrm{prox}}(O;\bar q,\bar h)\mid A=a\right]-\theta(a)
\\
&=
\E\left[\bar q(Z,a,X)\{Y-\bar h(W,a,X)\}\mid A=a\right]
-
\{\E[h(W,a,X)]-\E[\bar h(W,a,X)]\}.
\end{aligned}
\]
Now decompose
\[
Y-\bar h(W,a,X)=\{Y-h(W,a,X)\}+\{h(W,a,X)-\bar h(W,a,X)\},
\]
so that
\[
\begin{aligned}
&
\E\left[\bar q(Z,a,X)\{Y-\bar h(W,a,X)\}\mid A=a\right]
\\
&=
\E\left[\bar q(Z,a,X)\{Y-h(W,a,X)\}\mid A=a\right]
+
\E\left[\bar q(Z,a,X)\{h(W,a,X)-\bar h(W,a,X)\}\mid A=a\right].
\end{aligned}
\]
We claim that
\[
\E\left[\bar q(Z,a,X)\{Y-h(W,a,X)\}\mid A=a\right]=0.
\]
Indeed, by iterated expectation,
\[
\begin{aligned}
&
\E\left[\bar q(Z,a,X)\{Y-h(W,a,X)\}\mid A=a\right]
\\
&=
\E\left[
\E\left[\bar q(Z,a,X)\{Y-h(W,a,X)\}\mid Z,X,A=a\right]
\middle| A=a
\right].
\end{aligned}
\]
Since $\bar q(Z,a,X)$ is measurable with respect to $(Z,X)$,
\[
\begin{aligned}
&
\E\left[\bar q(Z,a,X)\{Y-h(W,a,X)\}\mid Z,X,A=a\right]
\\
&=
\bar q(Z,a,X)\,
\E\left[Y-h(W,a,X)\mid Z,X,A=a\right].
\end{aligned}
\]
By the outcome bridge equation \eqref{eq:Bh},
\[
\E\left[Y-h(W,a,X)\mid Z,X,A=a\right]=0.
\]
Therefore
\[
\E\left[\bar q(Z,a,X)\{Y-h(W,a,X)\}\mid A=a\right]=0.
\]
Substituting this into the preceding display gives
\[
\E\left[\bar q(Z,a,X)\{Y-\bar h(W,a,X)\}\mid A=a\right]
=
\E\left[\bar q(Z,a,X)\{h(W,a,X)-\bar h(W,a,X)\}\mid A=a\right].
\]
Consequently,
\[
\begin{aligned}
&
\E\left[\xi_{\mathrm{prox}}(O;\bar q,\bar h)\mid A=a\right]-\theta(a)
\\
&=
\E\left[\bar q(Z,a,X)\{h(W,a,X)-\bar h(W,a,X)\}\mid A=a\right]
-
\{\E[h(W,a,X)]-\E[\bar h(W,a,X)]\}.
\end{aligned}
\]
Next add and subtract $q(Z,a,X)$ inside the first expectation:
\[
\begin{aligned}
&
\E\left[\bar q(Z,a,X)\{h(W,a,X)-\bar h(W,a,X)\}\mid A=a\right]
\\
&=
\E\left[(\bar q(Z,a,X)-q(Z,a,X))(h(W,a,X)-\bar h(W,a,X))\mid A=a\right]
\\
&\qquad
+
\E\left[q(Z,a,X)(h(W,a,X)-\bar h(W,a,X))\mid A=a\right].
\end{aligned}
\]
Hence
\[
\begin{aligned}
&
\E\left[\xi_{\mathrm{prox}}(O;\bar q,\bar h)\mid A=a\right]-\theta(a)
\\
&=
\E\left[(\bar q(Z,a,X)-q(Z,a,X))(h(W,a,X)-\bar h(W,a,X))\mid A=a\right]
\\
&\qquad
+
\E\left[q(Z,a,X)(h(W,a,X)-\bar h(W,a,X))\mid A=a\right]
-
\{\E[h(W,a,X)]-\E[\bar h(W,a,X)]\}.
\end{aligned}
\]
It remains to simplify the last two terms. To this end, apply Lemma \ref{lem:DR_lemma} with
\[
g(W,X)=h(W,a,X)-\bar h(W,a,X).
\]
Lemma \ref{lem:DR_lemma} yields
\[
\E\left[\frac{f_A(a)}{f(a\mid W,X)}\{h(W,a,X)-\bar h(W,a,X)\}\mid A=a\right]
=
\E[h(W,a,X)]-\E[\bar h(W,a,X)].
\]
Also, by the treatment bridge equation \eqref{eq:Bq} and iterated expectation,
\[
\begin{aligned}
&
\E\left[q(Z,a,X)(h(W,a,X)-\bar h(W,a,X))\mid A=a\right]
\\
&=
\E\left[
\E\left[q(Z,a,X)\mid W,X,A=a\right]
\{h(W,a,X)-\bar h(W,a,X)\}
\middle| A=a
\right]
\\
&=
\E\left[
\frac{f_A(a)}{f(a\mid W,X)}
\{h(W,a,X)-\bar h(W,a,X)\}
\middle| A=a
\right].
\end{aligned}
\]
Combining the last two displays, we obtain
\[
\E\left[q(Z,a,X)(h(W,a,X)-\bar h(W,a,X))\mid A=a\right]
=
\E[h(W,a,X)]-\E[\bar h(W,a,X)].
\]
Therefore these terms cancel, and we conclude that
\[
\E\left[\xi_{\mathrm{prox}}(O;\bar q,\bar h)\mid A=a\right]-\theta(a)
=
\E\left[(\bar q(Z,a,X)-q(Z,a,X))(h(W,a,X)-\bar h(W,a,X))\mid A=a\right].
\]
This proves the statement.
\end{proof}

\subsection{Pathwise regularity of the smoothed target}
\label{app:pathwise-bridge-regularity}

The following condition justifies differentiation of the outcome-bridge
equation along regular parametric submodels.  The same differentiation step is
used in the proximal influence function calculations in Sections~D and G.2 of \citet{cui2024semiparametric}; the operator regularity condition used there is stated separately in their Appendix~G.1.

For every regular one-dimensional parametric submodel
$\{P_\varepsilon:|\varepsilon|<\delta\}\subset\mathcal M^\dagger$ through
$P_0$, with score $\dot\ell_0\in L_2^0(P_0)$, there exists a function
$\dot h_0\in L_2(P_{0,WX}\otimes F_0)$ such that
\[
\int \left|\frac{h_{P_\varepsilon}(w,a,x)-h_{P_0}(w,a,x)}{\varepsilon}-\dot h_0(w,a,x)\right|^2
\,d(P_{0,WX}\otimes F_0)(w,a,x)\to 0
\]
and
\[
E_0\!\left[
I\{|A-a_0|\le \max(h,b)\}
\left\{1+q_0(Z,A,X)^2\right\}\dot h_0(W,A,X)^2
\right]<\infty,
\]
and
\[
E_0\!\left[
\begin{aligned}
&I\{|A-a_0|\le \max(h,b)\}
\{1+q_0(Z,A,X)^2\}\\[-0.2em]
&\quad\times
\left|
\frac{h_{P_\varepsilon}(W,A,X)-h_0(W,A,X)}{\varepsilon}
-\dot h_0(W,A,X)
\right|^2
\end{aligned}
\right]\to0
\]
as $\varepsilon\to0$.

The first display gives the $L_2(P_{0,WX}\otimes F_0)$ derivative of the
outcome bridge.  The second display gives the integrability of this derivative
under the weight used in the pathwise calculation, and the third gives
convergence under the same weight.  These conditions permit differentiation
of the outcome-bridge equation after multiplication by the kernel-supported
weights and by $q_0$.  The map $P\mapsto\theta_{P,h,b}(a_0)$ depends on $h_P$,
$P_{WX}$, and $F_P$, but not on $q_P$.  Consequently, no pathwise derivative
of $q_P$ is required; $q_0$ represents the derivative of $h_P$ as an
observed-data score covariance.

\begin{lemma}\label{lem:generic_P_ID}
Let $P\in\mathcal M^\dagger$. Suppose that $r:\mathcal A\to\mathbb R$ is bounded and measurable,
\[
r(a)=0\qquad\text{if }|a-a_0|>\max(h,b),
\]
and $u:\mathcal A\times\mathcal W\times\mathcal X\to\mathbb R$ is measurable with
\[
E_P\{|r(A)q_P(Z,A,X)u(A,W,X)|\}<\infty
\]
and
\[
\int |r(a)|E_P\{|u(a,W,X)|\}\,dF_P(a)<\infty.
\]
Then
\[
\mathbb E_P\{r(A)q_P(Z,A,X)u(A,W,X)\}
=
\int r(a)\mathbb E_P\{u(a,W,X)\}\,dF_P(a).
\]
\end{lemma}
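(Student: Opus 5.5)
The plan is to condition on $(W,A,X)$, apply the treatment-bridge equation under $P$, and then use the density-ratio identity behind Lemma~\ref{lem:DR_lemma}, now stated for a generic $P\in\mathcal M^\dagger$.

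\emph{Step 1 (reduce to the density ratio).} Because $r(A)$ and $u(A,W,X)$ are functions of $(W,A,X)$, the tower property gives
\[
E_P\{r(A)q_P(Z,A,X)u(A,W,X)\}
=E_P\bigl[r(A)u(A,W,X)\,E_P\{q_P(Z,A,X)\mid W,A,X\}\bigr].
\]
This step is justified by the assumed integrability $E_P|rq_Pu|<\infty$. By Condition~\ref{assm:smooth-model-bridges}, for $F_P$-almost every $a$ we have $E_P\{q_P(Z,a,X)\mid W,A=a,X\}=f_P(a)/f_P(a\mid W,X)$. I would pass from this family of conditional identities, indexed by the value $a$, to the single identity $E_P\{q_P(Z,A,X)\mid W,A,X\}=f_P(A)/f_P(A\mid W,X)$, holding $P$-almost surely. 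This is done by disintegrating the law of $(Z,W,X)$ given $A=a$ and using that an $F_P$-null set of $a$ values is $P$-null for $A$.

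\emph{Step 2 (change of measure).} The left side now equals
\[
E_P\bigl[r(A)u(A,W,X)f_P(A)/f_P(A\mid W,X)\bigr].
\]
Because $r$ vanishes off $\{|a-a_0|\le\max(h,b)\}$, Condition~\ref{assm:smooth-model-positivity} gives $f_P(A\mid W,X)\ge\kappa_P$ on the support of the integrand, so the ratio is well defined there. Conditioning on $(W,X)$ and integrating over $a$ against the conditional density $f_P(a\mid W,X)$ cancels the denominator:
\[
E_P\Bigl[\int r(a)u(a,W,X)f_P(a)\,da\Bigr]=E_P\Bigl[\int r(a)u(a,W,X)\,dF_P(a)\Bigr].
\]
Fubini then gives $\int r(a)E_P\{u(a,W,X)\}\,dF_P(a)$. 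The second integrability hypothesis, $\int|r(a)|E_P|u(a,W,X)|\,dF_P(a)<\infty$, is exactly what licenses the interchange. Equivalently, this step is Lemma~\ref{lem:DR_lemma} with $g$ replaced by $r(a)u(a,W,X)$, integrated against $dF_P(a)$ rather than conditioned on $A$.

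\emph{Main obstacle.} The delicate point is integrability in Step 2. It is tempting to apply Fubini to the absolute values after inserting the ratio, but $|u|$ times the density ratio must be handled through the same change of measure. Concretely, I would first run Steps 1--2 with $|r||u|$ in place of $ru$ and with the nonnegative ratio in place of $q_P$, using Tonelli. This shows
\[
E_P\bigl[|r(A)u(A,W,X)|\,f_P(A)/f_P(A\mid W,X)\bigr]=\int|r(a)|E_P|u(a,W,X)|\,dF_P(a)<\infty,
\]
and the signed computation is then valid by dominated convergence.

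A second subtlety is that Step 1 uses the conditional expectation of $q_P$ multiplied by $ru$, which need not be bounded. I would handle this by truncating $u$ at level $M$, proving the identity for the truncated function (which is then integrable against $q_P$ because $q_P\in L_2(P)$), and letting $M\to\infty$ on both sides by dominated convergence. The dominating functions are $|rq_Pu|$ on the left and the Tonelli bound above on the right.
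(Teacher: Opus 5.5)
Your proposal is correct and follows essentially the paper's route: the tower property with the treatment-bridge equation, then the density-ratio change of measure made legitimate by positivity (\ref{assm:smooth-model-positivity}), and finally Fubini. The differences are only in ordering and rigor. The paper disintegrates over $A=a$ first and applies $dP_{W,X\mid A=a}=\{f_P(a\mid w,x)/f_P(a)\}\,dP_{W,X}$ slice by slice, whereas you integrate $a$ out given $(W,X)$ as in Lemma~\ref{lem:DR_lemma}. Your Tonelli and truncation bookkeeping fills in integrability details the paper leaves implicit. The truncation is not actually needed: $q_P\in L_2(P)$ and $E_P|rq_Pu|<\infty$ already justify pulling $ru$ out of the conditional expectation.
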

\begin{proof}
By iterated expectation,
\[
\mathbb E_P\{r(A)q_P(Z,A,X)u(A,W,X)\}
=
\int r(a)\,
\mathbb E_P\{q_P(Z,a,X)u(a,W,X)\mid A=a\}\,dF_P(a).
\]
For \(F_P\)-almost every \(a\),
\begin{equation*}
\begin{split}
\mathbb E_P\{q_P(Z,a,X)u(a,W,X)\mid A=a\}
&=
\mathbb E_P\!\left[
u(a,W,X)
\mathbb E_P\{q_P(Z,a,X)\mid W,A=a,X\}
\,\middle|\,A=a
\right]\\
&=
\mathbb E_P\!\left[
u(a,W,X)\frac{f_P(a)}{f_P(a\mid W,X)}
\,\middle|\,A=a
\right].
\end{split}
\end{equation*}
For $F_P$-almost every $a$ such that $r(a)\ne0$, Assumption~\ref{assm:smooth-model-positivity} and
\[
dP_{W,X\mid A=a}(w,x)
=
\frac{f_P(a\mid w,x)}{f_P(a)}\,dP_{W,X}(w,x),
\]
give
\[
\mathbb E_P\!\left[
u(a,W,X)\frac{f_P(a)}{f_P(a\mid W,X)}
\,\middle|\,A=a
\right]
=
\iint u(a,w,x)\,dP_{W,X}(w,x)
=
\mathbb E_P\{u(a,W,X)\}.
\]
Substituting this identity into the first display proves the claim.
\end{proof}

\begin{lemma}\label{lem:Fixed-weight pathwise derivative}
Let \(r:\mathcal A\to\mathbb R\) be bounded and measurable and satisfy
\[
r(a)=0\qquad\text{if }|a-a_0|>\max(h,b).
\]
Define
\[
\Psi_r(P)
:=
\int r(a)\theta_P(a)\,dF_P(a).
\]
Then, for every regular one-dimensional parametric submodel
\(\{P_\varepsilon:|\varepsilon|<\delta\}\subset\mathcal M^\dagger\) through \(P_0\)
with score \(\dot\ell_0\),
\[
\left.\frac{d}{d\varepsilon}\Psi_r(P_\varepsilon)\right|_{\varepsilon=0}
=
\mathbb E_0\!\left[
\left\{
r(A)\xi_0(O)
+
\int r(\bar a)\{h_0(W,\bar a,X)-\theta_0(\bar a)\}\,dF_0(\bar a)
\right\}
\dot\ell_0(O)
\right].
\]
\end{lemma}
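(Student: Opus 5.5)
The plan is to write $\Psi_r(P)=\int r(a)E_P\{h_P(W,a,X)\}\,dF_P(a)=\iint r(a)h_P(w,a,x)\,dP_{WX}(w,x)\,dF_P(a)$. Differentiating along the submodel $P_\varepsilon$ then splits into three pieces: variation of $F_P$, variation of $P_{WX}$, and variation of $h_P$. The three pieces are handled separately and added.

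\emph{Variation of $F_P$.} With $h_0$ and $P_{0,WX}$ held fixed, the integrand is $a\mapsto r(a)\theta_0(a)$. This function is integrable because $r$ is bounded and kernel-supported, and $h_0\in L_2(P_{0,WX}\otimes F_0)$. Since the score of the marginal of $A$ is $E_0(\dot\ell_0\mid A)$, this piece gives $E_0\{r(A)\theta_0(A)\dot\ell_0(O)\}$.

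\emph{Variation of $P_{WX}$.} With $F_0$ and $h_0$ held fixed, we differentiate the integral of $(w,x)\mapsto\int r(\bar a)h_0(w,\bar a,x)\,dF_0(\bar a)$ against $P_{WX}$. This gives $E_0\bigl[\int r(\bar a)h_0(W,\bar a,X)\,dF_0(\bar a)\,\dot\ell_0(O)\bigr]$. Because $\dot\ell_0$ has mean zero, we may subtract the constant $\int r\theta_0\,dF_0$, which produces the centered integral term in the statement. Both of these steps are standard quadratic-mean-differentiability arguments for bounded or square-integrable integrands.

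\emph{Variation of $h_P$ (main step).} By the first display of the regularity condition, this piece is $\int r(a)E_0\{\dot h_0(a,W,X)\}\,dF_0(a)$, using $L_2(P_{0,WX}\otimes F_0)$ convergence together with Cauchy--Schwarz. The cross terms in the product expansion vanish at $\varepsilon\to0$, again by the regularity condition and QMD. To express this piece as a score covariance, apply Lemma~\ref{lem:generic_P_ID} at $P_0$ with $u=\dot h_0$; its integrability hypotheses follow from the second display of the regularity condition. This gives $\int r(a)E_0\{\dot h_0(W,a,X)\}\,dF_0(a)=E_0\{r(A)q_0(Z,A,X)\dot h_0(W,A,X)\}$.

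Next, differentiate the bridge equation. Multiply $E_{P_\varepsilon}\{Y-h_{P_\varepsilon}(W,A,X)\mid Z,A,X\}=0$ by $r(A)q_0(Z,A,X)$ and integrate under $P_\varepsilon$, so that
\[
E_{P_\varepsilon}[r(A)q_0(Z,A,X)\{Y-h_{P_\varepsilon}(W,A,X)\}]=0
\]
for all $\varepsilon$. Differentiating at $\varepsilon=0$ gives
\[
E_0[r(A)q_0\{Y-h_0\}\dot\ell_0]-E_0[r(A)q_0\dot h_0]=0.
\]
Justifying this derivative is the main obstacle. The plan is to split the difference quotient into two parts. The first part is a change of measure with $h_0$ fixed; it is handled by QMD and the square-integrability of $r q_0(Y-h_0)$ on the kernel support, which Assumption~\ref{assm:smooth-model-bridges} provides. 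The second part is a change of $h$ under $P_\varepsilon$. It is controlled by the weighted $L_2$ convergence in the third regularity display, combined with a bound relating $P_\varepsilon$ to $P_0$ through $\sqrt{dP_\varepsilon}$ differences and Cauchy--Schwarz.

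Substituting this identity gives $E_0\{r(A)q_0(Y-h_0)\dot\ell_0\}$ for the $h_P$ piece. Adding the three pieces yields exactly
\[
E_0\Bigl[\Bigl\{r(A)\xi_0(O)+\int r(\bar a)\{h_0(W,\bar a,X)-\theta_0(\bar a)\}\,dF_0(\bar a)\Bigr\}\dot\ell_0(O)\Bigr],
\]
since $\xi_0=q_0(Y-h_0)+\theta_0(A)$.
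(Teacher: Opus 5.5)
Your proposal follows the paper's proof almost step for step. It uses the same split of the derivative into the $F_P$-, $P_{WX}$- and $h_P$-variations, the same identification of the $h_P$-piece through Lemma~\ref{lem:generic_P_ID} with $u=\dot h_0$, and the same device of differentiating the bridge moment weighted by $r(A)q_0(Z,A,X)$.

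\textbf{The gap.} The one real problem is in the step you yourself flag as the main obstacle. You assert that $E_{P_\varepsilon}[r(A)q_0(Z,A,X)\{Y-h_{P_\varepsilon}(W,A,X)\}]=0$ for every $\varepsilon$, but nothing in $\mathcal M^\dagger$ makes this expectation finite.
\begin{itemize}
\item Assumption~\ref{assm:smooth-model-bridges} controls $q_P\{Y-h_P\}$ under each $P$. Along the path it concerns $q_{P_\varepsilon}$ under $P_\varepsilon$, not the fixed and typically unbounded weight $q_0$.
\item The square-integrability of $r q_0(Y-h_0)$ that you cite holds under $P_0$ only. For an unbounded integrand, $P_0$-square-integrability does not by itself make $\varepsilon\mapsto E_{P_\varepsilon}(\cdot)$ differentiable along a quadratic-mean differentiable path. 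One needs control under $P_\varepsilon$, for instance boundedness or $\limsup_{\varepsilon\to0}E_{P_\varepsilon}(\cdot)^2<\infty$.
\item The same objection applies to the Cauchy--Schwarz step in your ``second part''. There the unbounded factor $q_0$ must be integrated against $P_\varepsilon$.
\end{itemize}

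\textbf{The fix.} The paper avoids this by truncating the weight. Put $q_0^{(M)}:=\{(-M)\vee q_0\}\wedge M$. For each fixed $M$ and every $\varepsilon$,
\[
E_{P_\varepsilon}\bigl[r(A)q_0^{(M)}(Z,A,X)\{Y-h_{P_\varepsilon}(W,A,X)\}\bigr]=0
\]
is a genuine identity. The weight is a bounded function of $(Z,A,X)$, and $E_{P_\varepsilon}\{Y^2+h_{P_\varepsilon}^2\}<\infty$ by Assumption~\ref{assm:smooth-model-bridges}.

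Differentiating at $\varepsilon=0$ gives $E_0[r q_0^{(M)}(Y-h_0)\dot\ell_0]=E_0[r q_0^{(M)}\dot h_0]$. You then let $M\to\infty$ by dominated convergence:
\begin{itemize}
\item On the left, use $E_0[I\{|A-a_0|\le\max(h,b)\}q_0^2(Y-h_0)^2]<\infty$ together with $\dot\ell_0\in L_2(P_0)$.
\item On the right, use the second regularity display, $E_0[I\{|A-a_0|\le\max(h,b)\}(1+q_0^2)\dot h_0^2]<\infty$. This is why that condition carries the weight $1+q_0^2$.
\end{itemize}
With this modification the rest of your argument goes through as in the paper.

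\textbf{A minor correction.} Of the two integrability hypotheses of Lemma~\ref{lem:generic_P_ID} for $u=\dot h_0$, only the first comes from the second regularity display. The second hypothesis, $\int|r|E_0|\dot h_0(W,a,X)|\,dF_0(a)<\infty$, follows from $\dot h_0\in L_2(P_{0,WX}\otimes F_0)$ in the first display.
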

\begin{proof}
Write \(h_\varepsilon:=h_{P_\varepsilon}\), \(\theta_\varepsilon:=\theta_{P_\varepsilon}\),
and let \(P_{\varepsilon,WX}\) and \(F_\varepsilon\) denote the
\((W,X)\)-marginal and \(A\)-marginal under \(P_\varepsilon\), respectively.
Then
\[
\Psi_r(P_\varepsilon)
=
\int\!\!\int r(a)h_\varepsilon(w,a,x)\,dP_{\varepsilon,WX}(w,x)\,dF_\varepsilon(a).
\]
Quadratic-mean differentiability of the submodel, the
\(L_2(P_{0,WX}\otimes F_0)\)-differentiability of \(h_\varepsilon\), and the
Cauchy--Schwarz inequality give
\[
\left.\frac{d}{d\varepsilon}\Psi_r(P_\varepsilon)\right|_{\varepsilon=0}
=
I_1+I_2+I_3,
\]
where
\[
I_1
:=
\int r(a)\mathbb E_0\{\dot h_0(W,a,X)\}\,dF_0(a),
\]
\[
I_2
:=
\mathbb E_0\!\left[
\left\{
\int r(\bar a)h_0(W,\bar a,X)\,dF_0(\bar a)
\right\}\dot\ell_0(O)
\right],
\]
and
\[
I_3
:=
\mathbb E_0\{r(A)\theta_0(A)\dot\ell_0(O)\}.
\]

It remains to identify \(I_1\). Since \(h_\varepsilon\) satisfies
\[
\mathbb E_\varepsilon\{Y-h_\varepsilon(W,A,X)\mid Z,A,X\}=0,
\]
for every $M>0$ and every \(\varepsilon\),
\[
0
=
\mathbb E_\varepsilon\!\left[
r(A)\bigl[\{-M\vee q_0(Z,A,X)\}\wedge M\bigr]
\{Y-h_\varepsilon(W,A,X)\}
\right].
\]
Assumption~\ref{assm:smooth-model-bridges} and the regularity condition in
Appendix~\ref{app:pathwise-bridge-regularity},
quadratic-mean differentiability, and the Cauchy--Schwarz inequality give, after
differentiation at $\varepsilon=0$ and passage to the limit $M\to\infty$,
\[
0
=
\mathbb E_0\!\left[
r(A)q_0(Z,A,X)\{Y-h_0(W,A,X)\}\dot\ell_0(O)
\right]
-
\mathbb E_0\{r(A)q_0(Z,A,X)\dot h_0(W,A,X)\}.
\]
Therefore
\[
\mathbb E_0\{r(A)q_0(Z,A,X)\dot h_0(W,A,X)\}
=
\mathbb E_0\!\left[
r(A)q_0(Z,A,X)\{Y-h_0(W,A,X)\}\dot\ell_0(O)
\right].
\]
Applying the preceding lemma with \(P=P_0\) and \(u(a,w,x)=\dot h_0(w,a,x)\),
\[
\mathbb E_0\{r(A)q_0(Z,A,X)\dot h_0(W,A,X)\}
=
\int r(a)\mathbb E_0\{\dot h_0(W,a,X)\}\,dF_0(a)
=
I_1.
\]
Hence
\[
I_1
=
\mathbb E_0\!\left[
r(A)q_0(Z,A,X)\{Y-h_0(W,A,X)\}\dot\ell_0(O)
\right].
\]

Combining \(I_1\), \(I_2\), and \(I_3\),
\begin{equation}
\begin{split}
\left.\frac{d}{d\varepsilon}\Psi_r(P_\varepsilon)\right|_{\varepsilon=0}
&=
\mathbb E_0\!\Bigg[
\Bigg\{
r(A)q_0(Z,A,X)\{Y-h_0(W,A,X)\}
+
r(A)\theta_0(A)
\\
&
\qquad\qquad\qquad
+
\int r(\bar a)h_0(W,\bar a,X)\,dF_0(\bar a)
\Bigg\}
\dot\ell_0(O)
\Bigg].
\end{split}
\end{equation}
Since
\[
\xi_0(O)=q_0(Z,A,X)\{Y-h_0(W,A,X)\}+\theta_0(A),
\]
and
\[
\int r(\bar a)\theta_0(\bar a)\,dF_0(\bar a)
=
\Psi_r(P_0),
\]
while \(\mathbb E_0\{\dot\ell_0(O)\}=0\), the preceding display is equivalent to
\[
\left.\frac{d}{d\varepsilon}\Psi_r(P_\varepsilon)\right|_{\varepsilon=0}
=
\mathbb E_0\!\left[
\left\{
r(A)\xi_0(O)
+
\int r(\bar a)\{h_0(W,\bar a,X)-\theta_0(\bar a)\}\,dF_0(\bar a)
\right\}
\dot\ell_0(O)
\right],
\]
which proves the claim.
\end{proof}

\begin{proposition}\label{prop:theta_LL_Pathwise_derivative_part1}
Define
\[
\gamma_{P,h,a_0}(a)
:=
e_1^\top D_{P,h,a_0,1}^{-1}w_{h,a_0,1}(a)K_{h,a_0}(a)
\,w_{h,a_0,1}(a)^\top D_{P,h,a_0,1}^{-1}M_{P,h,a_0}.
\]
Then, for every regular one-dimensional parametric submodel
\(\{P_\varepsilon\}\subset\mathcal M^\dagger\) through \(P_0\),
\[
\left.\frac{d}{d\varepsilon}\theta_{P_\varepsilon,h}^{\mathrm{LL}}(a_0)\right|_{\varepsilon=0}
=
\mathbb E_0\!\left[
\left\{
r_{0,h,a_0}(A)\xi_0(O)
-
\gamma_{0,h,a_0}(A)
+
\int r_{0,h,a_0}(\bar a)\{h_0(W,\bar a,X)-\theta_0(\bar a)\}\,dF_0(\bar a)
\right\}
\dot\ell_0(O)
\right].
\]
\end{proposition}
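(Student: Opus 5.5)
We compute the derivative of \(\theta_{P,h}^{\mathrm{LL}}(a_0)=e_1^\top D_{P,h,a_0,1}^{-1}M_{P,h,a_0}\) along the submodel \(\{P_\varepsilon\}\) with the product rule. Write \(D_\varepsilon:=D_{P_\varepsilon,h,a_0,1}\) and \(M_\varepsilon:=M_{P_\varepsilon,h,a_0}\). Then
\[
\frac{d}{d\varepsilon}\theta_{P_\varepsilon,h}^{\mathrm{LL}}(a_0)\Big|_{\varepsilon=0}
=
e_1^\top D_0^{-1}\dot M_0
-
e_1^\top D_0^{-1}\dot D_0 D_0^{-1}M_0 ,
\]
where \(\dot M_0\) and \(\dot D_0\) are the derivatives of \(M_\varepsilon\) and \(D_\varepsilon\) at zero. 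The second term uses \(d(D_\varepsilon^{-1})=-D^{-1}\dot D D^{-1}\), which is valid because \(D_0\) is nonsingular, as shown in Section~\ref{sec:estimation}. Each term is then identified separately as a score covariance.

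\textbf{Matrix term.} The entries of \(D_\varepsilon\) are \(\int w_{h,a_0,1}(a)w_{h,a_0,1}(a)^\top K_{h,a_0}(a)\,dF_\varepsilon(a)\). The integrand is bounded and supported in \(\{|a-a_0|\le h\}\). Quadratic-mean differentiability of the submodel therefore gives
\[
\dot D_0=E_0\{w_{h,a_0,1}(A)w_{h,a_0,1}(A)^\top K_{h,a_0}(A)\dot\ell_0(O)\}.
\]
Substituting this gives
\[
e_1^\top D_0^{-1}\dot D_0D_0^{-1}M_0=E_0\{\gamma_{0,h,a_0}(A)\dot\ell_0(O)\},
\]
directly from the definition of \(\gamma_{P,h,a_0}\). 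This produces the \(-\gamma_{0,h,a_0}(A)\) term.

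\textbf{Vector term.} The vector \(e_1^\top D_0^{-1}M_\varepsilon\), with \(D_0\) held fixed, equals \(\Psi_r(P_\varepsilon)\) in the notation of Lemma~\ref{lem:Fixed-weight pathwise derivative}, with weight
\[
r(a)=e_1^\top D_0^{-1}w_{h,a_0,1}(a)K_{h,a_0}(a)=r_{0,h,a_0}(a).
\]
This \(r\) is bounded, since \(K\) is bounded and \(w_{h,a_0,1}\) is bounded on the kernel support. It also vanishes when \(|a-a_0|>h\), and \(h\le\max(h,b)\). The lemma can therefore be applied; applying it to each coordinate of \(M_\varepsilon\) and taking the linear combination gives the same result. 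It yields
\[
e_1^\top D_0^{-1}\dot M_0
=
E_0\Big[\Big\{r_{0,h,a_0}(A)\xi_0(O)+\int r_{0,h,a_0}(\bar a)\{h_0(W,\bar a,X)-\theta_0(\bar a)\}\,dF_0(\bar a)\Big\}\dot\ell_0(O)\Big].
\]
Adding the two pieces gives the claimed display.

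\textbf{Main obstacle.} The step I expect to need the most care is justifying the joint product-rule expansion. The two-variable map \((D_\varepsilon,M_\varepsilon)\mapsto e_1^\top D_\varepsilon^{-1}M_\varepsilon\) must be differentiable at zero, not just separately differentiable in each argument. To get this I would show that \(D_\varepsilon\to D_0\), so that \(D_\varepsilon\) is nonsingular for small \(\varepsilon\), and that \(M_\varepsilon\) is differentiable at zero. Both then follow from the lemma and from quadratic-mean differentiability, and the chain rule for the smooth map \((D,M)\mapsto e_1^\top D^{-1}M\) finishes the argument.
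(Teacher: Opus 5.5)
Your proposal is correct and follows the paper's proof essentially step for step. You apply the product rule to $e_1^\top D_\varepsilon^{-1}M_\varepsilon$, use the score identity for $\dot D_0$ to obtain the $-\gamma_{0,h,a_0}$ term, and use Lemma~\ref{lem:Fixed-weight pathwise derivative} for $\dot M_0$. The only cosmetic differences are that you apply the lemma once with the combined weight $r_{0,h,a_0}$ rather than coordinatewise (identical by linearity), and that you spell out the chain-rule justification, which the paper leaves implicit.
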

\begin{proof}
Write
\[
D_\varepsilon:=D_{P_\varepsilon,h,a_0,1},
\qquad
M_\varepsilon:=M_{P_\varepsilon,h,a_0}.
\]
Then
\[
\theta_{P_\varepsilon,h}^{\mathrm{LL}}(a_0)
=
e_1^\top D_\varepsilon^{-1}M_\varepsilon.
\]
Differentiating at \(\varepsilon=0\) and using the matrix identity
\[
\left.\frac{d}{d\varepsilon}D_\varepsilon^{-1}\right|_{\varepsilon=0}
=
-D_0^{-1}\dot D_0 D_0^{-1},
\]
we obtain
\[
\left.\frac{d}{d\varepsilon}\theta_{P_\varepsilon,h}^{\mathrm{LL}}(a_0)\right|_{\varepsilon=0}
=
-e_1^\top D_0^{-1}\dot D_0 D_0^{-1}M_0
+
e_1^\top D_0^{-1}\dot M_0.
\]

We first identify \(\dot D_0\). Since
\[
D_\varepsilon
=
P_\varepsilon\{w_{h,a_0,1}(A)w_{h,a_0,1}(A)^\top K_{h,a_0}(A)\},
\]
the score identity gives
\[
\dot D_0
=
\mathbb E_0\!\left[
w_{h,a_0,1}(A)w_{h,a_0,1}(A)^\top K_{h,a_0}(A)\dot\ell_0(O)
\right].
\]
Therefore
\begin{equation}
\begin{split}
-e_1^\top D_0^{-1}\dot D_0 D_0^{-1}M_0
&=
-\mathbb E_0\!\left[
e_1^\top D_0^{-1}w_{h,a_0,1}(A)K_{h,a_0}(A)
w_{h,a_0,1}(A)^\top D_0^{-1}M_0
\dot\ell_0(O)
\right]\\
&=
-\mathbb E_0\{\gamma_{0,h,a_0}(A)\dot\ell_0(O)\}
\end{split}
\end{equation}

We next identify \(\dot M_0\). Each component of \(M_\varepsilon\) is a fixed-weight
functional of the form \(\Psi_r(P_\varepsilon)\) with weight equal to a component of
\(w_{h,a_0,1}(a)K_{h,a_0}(a)\). Applying Lemma \ref{lem:Fixed-weight pathwise derivative}, we get
\[
\dot M_0
=
\mathbb E_0\!\left[
\left\{
w_{h,a_0,1}(A)K_{h,a_0}(A)\xi_0(O)
+
\int w_{h,a_0,1}(\bar a)K_{h,a_0}(\bar a)
\{h_0(W,\bar a,X)-\theta_0(\bar a)\}\,dF_0(\bar a)
\right\}
\dot\ell_0(O)
\right].
\]
Multiplying by \(e_1^\top D_0^{-1}\) yields
\[
e_1^\top D_0^{-1}\dot M_0
=
\mathbb E_0\!\left[
\left\{
r_{0,h,a_0}(A)\xi_0(O)
+
\int r_{0,h,a_0}(\bar a)
\{h_0(W,\bar a,X)-\theta_0(\bar a)\}\,dF_0(\bar a)
\right\}
\dot\ell_0(O)
\right].
\]

Combining the last two displays proves the proposition.
\end{proof}

\begin{proposition}\label{prop:theta_LL_Pathwise_derivative_part2}
Define
\[
\gamma_{P,b,a_0}''(a)
:=
2b^{-2}e_3^\top D_{P,b,a_0,2}^{-1}w_{b,a_0,2}(a)K_{b,a_0}(a)
\,w_{b,a_0,2}(a)^\top D_{P,b,a_0,2}^{-1}M_{P,b,a_0}.
\]
Then, for every regular one-dimensional parametric submodel
\(\{P_\varepsilon\}\subset\mathcal M^\dagger\) through \(P_0\),
\[
\left.\frac{d}{d\varepsilon}\theta_{P_\varepsilon,b}''(a_0)\right|_{\varepsilon=0}
=
\mathbb E_0\!\left[
\left\{
s_{0,b,a_0}(A)\xi_0(O)
-
\gamma_{0,b,a_0}''(A)
+
\int s_{0,b,a_0}(\bar a)\{h_0(W,\bar a,X)-\theta_0(\bar a)\}\,dF_0(\bar a)
\right\}
\dot\ell_0(O)
\right].
\]
\end{proposition}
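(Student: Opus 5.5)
The argument follows the proof of Proposition~\ref{prop:theta_LL_Pathwise_derivative_part1}, with the local-linear quantities replaced by their local-quadratic counterparts. Write $D_\varepsilon:=D_{P_\varepsilon,b,a_0,2}$ and $M_\varepsilon:=M_{P_\varepsilon,b,a_0}$, so that
\[
\theta_{P_\varepsilon,b}''(a_0)=2b^{-2}e_3^\top D_\varepsilon^{-1}M_\varepsilon .
\]
Along the submodel, $D_\varepsilon$ is nonsingular for every $\varepsilon$ by the positivity argument preceding the definition of $\Gamma$, since $P_\varepsilon\in\mathcal M^\dagger$. Because $D_\varepsilon$ is a bounded-kernel moment of $A$ alone, quadratic-mean differentiability gives
\[
\dot D_0=\mathbb E_0\{w_{b,a_0,2}(A)w_{b,a_0,2}(A)^\top K_{b,a_0}(A)\dot\ell_0(O)\}.
\]
Differentiability of $\varepsilon\mapsto D_\varepsilon^{-1}$ then follows, with
\[
\left.\frac{d}{d\varepsilon}D_\varepsilon^{-1}\right|_{\varepsilon=0}=-D_0^{-1}\dot D_0D_0^{-1}.
\]
The product rule therefore yields
\[
\left.\frac{d}{d\varepsilon}\theta_{P_\varepsilon,b}''(a_0)\right|_{\varepsilon=0}
=-2b^{-2}e_3^\top D_0^{-1}\dot D_0D_0^{-1}M_0+2b^{-2}e_3^\top D_0^{-1}\dot M_0 .
\]

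The first term is handled by substituting the score representation of $\dot D_0$ and moving the deterministic row vector $2b^{-2}e_3^\top D_0^{-1}$ and column vector $D_0^{-1}M_0$ inside the expectation. This produces
\[
-\mathbb E_0\bigl[2b^{-2}e_3^\top D_0^{-1}w_{b,a_0,2}(A)K_{b,a_0}(A)\,w_{b,a_0,2}(A)^\top D_0^{-1}M_0\,\dot\ell_0(O)\bigr]
=-\mathbb E_0\{\gamma''_{0,b,a_0}(A)\dot\ell_0(O)\},
\]
which is exactly the definition of $\gamma''_{0,b,a_0}$.

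For the second term, each of the three components of $M_\varepsilon$ has the form $\Psi_r(P_\varepsilon)$. The weight is $r(a)=K_{b,a_0}(a)\{(a-a_0)/b\}^j$ for $j=0,1,2$. Each such weight is bounded, because $K$ is bounded with support in $[-1,1]$, so $|(a-a_0)/b|\le1$ on the support. Each weight also vanishes when $|a-a_0|>b\le\max(h,b)$. Lemma~\ref{lem:Fixed-weight pathwise derivative} therefore applies componentwise and gives
\[
\dot M_0=\mathbb E_0\Bigl[\Bigl\{w_{b,a_0,2}(A)K_{b,a_0}(A)\xi_0(O)+\int w_{b,a_0,2}(\bar a)K_{b,a_0}(\bar a)\{h_0(W,\bar a,X)-\theta_0(\bar a)\}\,dF_0(\bar a)\Bigr\}\dot\ell_0(O)\Bigr].
\]
Premultiplying by $2b^{-2}e_3^\top D_0^{-1}$ and using linearity of the integral turns the weights into $s_{0,b,a_0}(A)$ and $s_{0,b,a_0}(\bar a)$. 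Adding the two pieces gives the displayed formula.

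The only step needing care is the interchange of differentiation with the matrix inverse and with the expectation defining $\dot D_0$. This is routine because the integrands are bounded and compactly supported, and quadratic-mean differentiability gives $\varepsilon^{-1}(P_\varepsilon-P_0)g\to P_0(g\dot\ell_0)$ for bounded $g$. All bridge-related regularity is already absorbed into Lemma~\ref{lem:Fixed-weight pathwise derivative}, so no new obstacle arises.
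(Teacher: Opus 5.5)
Your proposal is correct and follows the paper's proof essentially step for step. You differentiate $2b^{-2}e_3^\top D_\varepsilon^{-1}M_\varepsilon$ using the inverse-matrix derivative, identify the $\dot D_0$ contribution as $-\mathbb E_0\{\gamma''_{0,b,a_0}(A)\dot\ell_0(O)\}$ through the score identity, and obtain $\dot M_0$ by applying Lemma~\ref{lem:Fixed-weight pathwise derivative} componentwise before premultiplying by $2b^{-2}e_3^\top D_0^{-1}$ to produce $s_{0,b,a_0}$; your extra checks (nonsingularity of $D_\varepsilon$ along the submodel, boundedness and support of the weights $K_{b,a_0}(a)\{(a-a_0)/b\}^j$) are details the paper leaves implicit and do not alter the argument.
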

\begin{proof}
Write
\[
\widetilde D_\varepsilon:=D_{P_\varepsilon,b,a_0,2},
\qquad
\widetilde M_\varepsilon:=M_{P_\varepsilon,b,a_0}.
\]
Then
\[
\theta_{P_\varepsilon,b}''(a_0)
=
2b^{-2}e_3^\top \widetilde D_\varepsilon^{-1}\widetilde M_\varepsilon.
\]
Differentiating at \(\varepsilon=0\),
\[
\left.\frac{d}{d\varepsilon}\theta_{P_\varepsilon,b}''(a_0)\right|_{\varepsilon=0}
=
-2b^{-2}e_3^\top \widetilde D_0^{-1}\dot{\widetilde D}_0 \widetilde D_0^{-1}\widetilde M_0
+
2b^{-2}e_3^\top \widetilde D_0^{-1}\dot{\widetilde M}_0.
\]

The score identity gives
\[
\dot{\widetilde D}_0
=
\mathbb E_0\!\left[
w_{b,a_0,2}(A)w_{b,a_0,2}(A)^\top K_{b,a_0}(A)\dot\ell_0(O)
\right],
\]
so
\[
-2b^{-2}e_3^\top \widetilde D_0^{-1}\dot{\widetilde D}_0 \widetilde D_0^{-1}\widetilde M_0
=
-\mathbb E_0\{\gamma_{0,b,a_0}''(A)\dot\ell_0(O)\}.
\]

Each component of \(\widetilde M_\varepsilon\) is again a fixed-weight functional,
now with weight equal to a component of \(w_{b,a_0,2}(a)K_{b,a_0}(a)\).
Applying Lemma \ref{lem:Fixed-weight pathwise derivative}, we have
\[
\dot{\widetilde M}_0
=
\mathbb E_0\!\left[
\left\{
w_{b,a_0,2}(A)K_{b,a_0}(A)\xi_0(O)
+
\int w_{b,a_0,2}(\bar a)K_{b,a_0}(\bar a)
\{h_0(W,\bar a,X)-\theta_0(\bar a)\}\,dF_0(\bar a)
\right\}
\dot\ell_0(O)
\right].
\]
Therefore
\[
2b^{-2}e_3^\top \widetilde D_0^{-1}\dot{\widetilde M}_0
=
\mathbb E_0\!\left[
\left\{
s_{0,b,a_0}(A)\xi_0(O)
+
\int s_{0,b,a_0}(\bar a)
\{h_0(W,\bar a,X)-\theta_0(\bar a)\}\,dF_0(\bar a)
\right\}
\dot\ell_0(O)
\right].
\]
Combining the last two displays proves the claim.
\end{proof}

\begin{proposition}\label{prop:theta_LL_Pathwise_derivative_part3}
Define
\[
U_{P,h,a_0}
:=
P\{\widetilde w_{h,a_0,1}(A)K_{h,a_0}(A)\}
\]
and
\[
\gamma_{P,h,a_0}^c(a)
:=
e_1^\top D_{P,h,a_0,1}^{-1}
\left[
\widetilde w_{h,a_0,1}(a)
-
w_{h,a_0,1}(a)w_{h,a_0,1}(a)^\top
D_{P,h,a_0,1}^{-1}
U_{P,h,a_0}
\right]
K_{h,a_0}(a).
\]
Then, for every regular one-dimensional parametric submodel
\(\{P_\varepsilon\}\subset\mathcal M^\dagger\) through \(P_0\),
\[
\left.\frac{d}{d\varepsilon}c_{P_\varepsilon,h,a_0,2}\right|_{\varepsilon=0}
=
\mathbb E_0\{\gamma_{0,h,a_0}^c(A)\dot\ell_0(O)\}.
\]
Moreover,
\[
P\{\gamma_{P,h,a_0}^c(A)\}=0
\]
for every \(P\in \mathcal M^\dagger\).
\end{proposition}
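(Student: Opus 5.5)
The plan is to mirror the proof of Proposition~\ref{prop:theta_LL_Pathwise_derivative_part1}, with one simplification: $c_{P,h,a_0,2}$ depends on $P$ only through the marginal law $F_P$ of $A$. No bridge function enters, so Lemma~\ref{lem:Fixed-weight pathwise derivative} is not needed. Only the score identity for fixed bounded functions of $A$ will be used.

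Write $D_\varepsilon:=D_{P_\varepsilon,h,a_0,1}$ and $U_\varepsilon:=U_{P_\varepsilon,h,a_0}$, so that $c_{P_\varepsilon,h,a_0,2}=e_1^\top D_\varepsilon^{-1}U_\varepsilon$.

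\textbf{Step 1: score identities.} The entries of $w_{h,a_0,1}w_{h,a_0,1}^\top K_{h,a_0}$ and $\widetilde w_{h,a_0,1}K_{h,a_0}$ are bounded functions of $A$ supported in $[a_0-h,a_0+h]$, since $K$ is bounded with support $[-1,1]$. Quadratic-mean differentiability of the submodel then gives
\[
\dot D_0=\mathbb E_0\{w_{h,a_0,1}(A)w_{h,a_0,1}(A)^\top K_{h,a_0}(A)\dot\ell_0(O)\},
\qquad
\dot U_0=\mathbb E_0\{\widetilde w_{h,a_0,1}(A)K_{h,a_0}(A)\dot\ell_0(O)\}.
\]

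\textbf{Step 2: differentiate the inverse.} $D_0$ is nonsingular by the positive-definiteness argument of Section~\ref{sec:estimation}. Hence $\varepsilon\mapsto D_\varepsilon^{-1}$ is differentiable at $0$ with derivative $-D_0^{-1}\dot D_0D_0^{-1}$, and
\[
\dot c=e_1^\top D_0^{-1}\dot U_0-e_1^\top D_0^{-1}\dot D_0D_0^{-1}U_0.
\]

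\textbf{Step 3: collect terms.} Substituting the two expectations from Step~1 and moving the deterministic vectors $e_1^\top D_0^{-1}$ and $D_0^{-1}U_0$ inside the expectation gives
\[
\dot c=\mathbb E_0\Bigl[e_1^\top D_0^{-1}\bigl\{\widetilde w_{h,a_0,1}(A)-w_{h,a_0,1}(A)w_{h,a_0,1}(A)^\top D_0^{-1}U_0\bigr\}K_{h,a_0}(A)\,\dot\ell_0(O)\Bigr].
\]
The integrand is exactly $\gamma^c_{0,h,a_0}(A)\dot\ell_0(O)$, which proves the first claim.

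\textbf{Step 4: mean-zero property.} For any $P\in\mathcal M^\dagger$, take expectations under $P$ and use linearity:
\[
P\{\gamma^c_{P,h,a_0}\}=e_1^\top D_P^{-1}U_P-e_1^\top D_P^{-1}D_PD_P^{-1}U_P=0.
\]
Here $D_P:=D_{P,h,a_0,1}$ and $U_P:=U_{P,h,a_0}$, and $D_P$ is nonsingular by Assumption~\ref{assm:smooth-model-positivity}.

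The only delicate point is justifying the score identity in Step~1, that is, differentiating $\varepsilon\mapsto P_\varepsilon g$ for bounded $g$. This is standard for quadratic-mean differentiable submodels: write $dP_\varepsilon^{1/2}-dP_0^{1/2}=\tfrac{\varepsilon}{2}\dot\ell_0\,dP_0^{1/2}+o(\varepsilon)$ in $L_2$ and apply Cauchy--Schwarz. The same step was used in Propositions~\ref{prop:theta_LL_Pathwise_derivative_part1} and~\ref{prop:theta_LL_Pathwise_derivative_part2}, so I expect no real obstacle.
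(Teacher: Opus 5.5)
Your proposal is correct and follows the paper's proof essentially step for step. Both write $c_{P_\varepsilon,h,a_0,2}=e_1^\top D_\varepsilon^{-1}U_\varepsilon$, differentiate the inverse as $-D_0^{-1}\dot D_0D_0^{-1}$, and obtain $\dot D_0$ and $\dot U_0$ from the score identity before collecting terms; the mean-zero claim then follows from $P\{w_{h,a_0,1}(A)w_{h,a_0,1}(A)^\top K_{h,a_0}(A)\}=D_{P,h,a_0,1}$, and the paper leaves implicit the boundedness of the kernel-weighted monomials and the nonsingularity of $D_0$ that you make explicit.
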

\begin{proof}
Write
\[
D_\varepsilon:=D_{P_\varepsilon,h,a_0,1},
\qquad
U_\varepsilon:=U_{P_\varepsilon,h,a_0}.
\]
Then
\[
c_{P_\varepsilon,h,a_0,2}
=
e_1^\top D_\varepsilon^{-1}U_\varepsilon.
\]
Differentiating at \(\varepsilon=0\),
\[
\left.\frac{d}{d\varepsilon}c_{P_\varepsilon,h,a_0,2}\right|_{\varepsilon=0}
=
-e_1^\top D_0^{-1}\dot D_0 D_0^{-1}U_0
+
e_1^\top D_0^{-1}\dot U_0.
\]
Using the score identity,
\[
\dot D_0
=
\mathbb E_0\!\left[
w_{h,a_0,1}(A)w_{h,a_0,1}(A)^\top K_{h,a_0}(A)\dot\ell_0(O)
\right]
\]
and
\[
\dot U_0
=
\mathbb E_0\!\left[
\widetilde w_{h,a_0,1}(A)K_{h,a_0}(A)\dot\ell_0(O)
\right].
\]
Therefore
\[
\left.\frac{d}{d\varepsilon}c_{P_\varepsilon,h,a_0,2}\right|_{\varepsilon=0}
=
\mathbb E_0\!\left[
e_1^\top D_0^{-1}
\left\{
\widetilde w_{h,a_0,1}(A)
-
w_{h,a_0,1}(A)w_{h,a_0,1}(A)^\top D_0^{-1}U_0
\right\}
K_{h,a_0}(A)
\dot\ell_0(O)
\right],
\]
which is the desired representation.

For the mean,
\[
P\{\gamma_{P,h,a_0}^c(A)\}
=
e_1^\top D_{P,h,a_0,1}^{-1}
\left[
U_{P,h,a_0}
-
D_{P,h,a_0,1}D_{P,h,a_0,1}^{-1}U_{P,h,a_0}
\right]
=
0.
\]
\end{proof}

\begin{proof}[Proof of Theorem~\ref{thm:EIF of the smoothed proximal target}]
Recall
\[
\theta_{P,h,b}(a_0)
=
\theta_{P,h}^{\mathrm{LL}}(a_0)
-\frac{1}{2}h^2 c_{P,h,a_0,2}\theta_{P,b}''(a_0).
\]
Differentiating along a regular one-dimensional parametric submodel
\(\{P_\varepsilon\}\subset \mathcal M^\dagger\) through \(P_0\),
\[
\left.\frac{d}{d\varepsilon}\theta_{P_\varepsilon,h,b}(a_0)\right|_{\varepsilon=0}
=
\left.\frac{d}{d\varepsilon}\theta_{P_\varepsilon,h}^{\mathrm{LL}}(a_0)\right|_{\varepsilon=0}
-
\frac{1}{2}h^2
\left\{
\left.\frac{d}{d\varepsilon}c_{P_\varepsilon,h,a_0,2}\right|_{\varepsilon=0}\theta_{0,b}''(a_0)
+
c_{0,h,a_0,2}
\left.\frac{d}{d\varepsilon}\theta_{P_\varepsilon,b}''(a_0)\right|_{\varepsilon=0}
\right\}.
\]
Applying Propositions~\ref{prop:theta_LL_Pathwise_derivative_part1}--\ref{prop:theta_LL_Pathwise_derivative_part3} gives
\begin{equation}
\begin{split}
&\left.\frac{d}{d\varepsilon}\theta_{P_\varepsilon,h,b}(a_0)\right|_{\varepsilon=0}\\
&=
\mathbb E_0\!\left[
\left\{
r_{0,h,a_0}(A)\xi_0(O)
-
\gamma_{0,h,a_0}(A)
+
\int r_{0,h,a_0}(\bar a)\{h_0(W,\bar a,X)-\theta_0(\bar a)\}\,dF_0(\bar a)
\right\}
\dot\ell_0(O)
\right]\\
&\qquad\qquad
-
\frac{1}{2}h^2 c_{0,h,a_0,2}
\,
\mathbb E_0\!\left[
\left\{
s_{0,b,a_0}(A)\xi_0(O)-
\gamma_{0,b,a_0}''(A)
+
\int s_{0,b,a_0}(\bar a)\{h_0(W,\bar a,X)-\theta_0(\bar a)\}\,dF_0(\bar a)
\right\}
\dot\ell_0(O)
\right]\\
&\qquad\qquad
-
\frac{1}{2}h^2 \theta_{0,b}''(a_0)
\,
\mathbb E_0\{\gamma_{0,h,a_0}^c(A)\dot\ell_0(O)\}.
\end{split}
\end{equation}
Collecting terms, we get
\begin{equation}
\begin{split}
&\left.\frac{d}{d\varepsilon}\theta_{P_\varepsilon,h,b}(a_0)\right|_{\varepsilon=0}\\
&\qquad =
\mathbb E_0\!\left[
\left\{
\left(
r_{0,h,a_0}(A)-\frac{1}{2}h^2 c_{0,h,a_0,2}s_{0,b,a_0}(A)
\right)\xi_0(O)
\right.
\right.\\
&
\left.
\left.
\qquad\qquad \qquad\qquad
-
\left(
\gamma_{0,h,a_0}(A)
-\frac{1}{2}h^2 c_{0,h,a_0,2}\gamma_{0,b,a_0}''(A)
+\frac{1}{2}h^2\theta_{0,b}''(a_0)\gamma_{0,h,a_0}^c(A)
\right)
\right.
\right.\\
&\left.
\left.
\qquad\qquad \qquad\qquad
+
\int
\left(
r_{0,h,a_0}(\bar a)
-\frac{1}{2}h^2 c_{0,h,a_0,2}s_{0,b,a_0}(\bar a)
\right)
\{h_0(W,\bar a,X)-\theta_0(\bar a)\}\,dF_0(\bar a)
\right\}
\dot\ell_0(O)
\right].
\end{split}
\end{equation}
By definition of \(\Gamma_{0,h,b,a_0}\) and \(\gamma_{0,h,b,a_0}\),
\[
\left.\frac{d}{d\varepsilon}\theta_{P_\varepsilon,h,b}(a_0)\right|_{\varepsilon=0}
=
\mathbb E_0\{\varphi_{0,h,b,a_0}(O)\dot\ell_0(O)\}.
\]
Thus \(\varphi_{0,h,b,a_0}\) is a pathwise gradient of \(\theta_{P,h,b}(a_0)\) at \(P_0\).

It remains to verify that \(\varphi_{0,h,b,a_0}\) has mean zero. First, because
\(\mathbb E_0\{\xi_0(O)\mid A\}=\theta_0(A)\),
\[
\mathbb E_0\{\Gamma_{0,h,b,a_0}(A)\xi_0(O)\}
=
\mathbb E_0\{\Gamma_{0,h,b,a_0}(A)\theta_0(A)\}
=
\theta_{0,h,b}(a_0).
\]
Second,
\[
\mathbb E_0\!\left[
\int \Gamma_{0,h,b,a_0}(\bar a)\{h_0(W,\bar a,X)-\theta_0(\bar a)\}\,dF_0(\bar a)
\right]
=
\int \Gamma_{0,h,b,a_0}(\bar a)
\{\theta_0(\bar a)-\theta_0(\bar a)\}\,dF_0(\bar a)
=
0.
\]
Third,
\begin{equation*}
\begin{split}
\mathbb E_0\{\gamma_{0,h,a_0}(A)\}
&=
e_1^\top D_{0,h,a_0,1}^{-1}
P_0\{w_{h,a_0,1}(A)K_{h,a_0}(A)w_{h,a_0,1}(A)^\top\}
D_{0,h,a_0,1}^{-1}M_{0,h,a_0}\\
&=
e_1^\top D_{0,h,a_0,1}^{-1}M_{0,h,a_0},
\end{split}
\end{equation*}
which is equal to $\theta_{0,h}^{\mathrm{LL}}(a_0)$.
Likewise,
\begin{equation*}
\begin{split}
\mathbb E_0\{\gamma_{0,b,a_0}''(A)\}
&=
2b^{-2}e_3^\top D_{0,b,a_0,2}^{-1}
P_0\{w_{b,a_0,2}(A)K_{b,a_0}(A)w_{b,a_0,2}(A)^\top\}
D_{0,b,a_0,2}^{-1}M_{0,b,a_0}\\
&=
2b^{-2}e_3^\top D_{0,b,a_0,2}^{-1}M_{0,b,a_0}\\
&=
\theta_{0,b}''(a_0),
\end{split}
\end{equation*}
and, by Proposition \ref{prop:theta_LL_Pathwise_derivative_part3},
\[
\mathbb E_0\{\gamma_{0,h,a_0}^c(A)\}=0.
\]
Therefore
\[
\mathbb E_0\{\gamma_{0,h,b,a_0}(A)\}
=
\theta_{0,h}^{\mathrm{LL}}(a_0)
-
\frac{1}{2}h^2 c_{0,h,a_0,2}\theta_{0,b}''(a_0)
+
\frac{1}{2}h^2 \theta_{0,b}''(a_0)\cdot 0
=
\theta_{0,h,b}(a_0).
\]
Combining the three mean identities,
\[
\mathbb E_0\{\varphi_{0,h,b,a_0}(O)\}=0.
\]

Assumption~\ref{assm:smooth-model-bridges}, the regularity condition in
Appendix~\ref{app:pathwise-bridge-regularity}, and the
bounded support of the fixed kernel weights imply
\[
\varphi_{0,h,b,a_0}\in L_2^0(P_0).
\]
The derivative identity and the preceding mean identity prove the theorem.
\end{proof}

\subsection{Equivalent representations of the cross-fitted estimator}
\label{sec:DB-cf-equivalence}

\begin{lemma}\label{lem:DB-cf-equivalence}
For every $a_0$, the two right-hand sides in
\eqref{eq:DB-cf-representations} and the right-hand side in
\eqref{def:DB_cf_estimation} are equal.
\end{lemma}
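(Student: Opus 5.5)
The plan is a direct algebraic verification carried out fold by fold. Since the three expressions are sums over $k$ with the common weights $n_k/n$, and the full-sample weight $\Gamma_{n,h_n,b_n,a_0}$ does not depend on $k$, it suffices to fix $k$ and prove equality of the fold-$k$ summands. Conditioning on $\trainingsigma_k$ is not needed, because the statement is a deterministic identity for the realized data and the fitted bridges $\hat q_n^{(-k)},\hat h_n^{(-k)}$.

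First I will show that the two right-hand sides in \eqref{eq:DB-cf-representations} agree. By the definition of $\hat\xi_n^{(-k)}$ and linearity of $P_{n,k}$,
\[
P_{n,k}\{\Gamma_{n,h_n,b_n,a_0}(A)\hat\xi_n^{(-k)}(O)\}
=
P_{n,k}\{\Gamma_{n,h_n,b_n,a_0}(A)\hat\xi_{n,1}^{(-k)}(O)\}
+
\frac{1}{n_k}\sum_{j\in I_k}\Gamma_{n,h_n,b_n,a_0}(A_j)\hat m_{n,k}^{(-k)}(A_j).
\]
Substituting $\hat m_{n,k}^{(-k)}(A_j)=n_k^{-1}\sum_{i\in I_k}\hat h_n^{(-k)}(W_i,A_j,X_i)$ turns the last term into
\[
\frac{1}{n_k^2}\sum_{i\in I_k}\sum_{j\in I_k}\Gamma_{n,h_n,b_n,a_0}(A_j)\hat h_n^{(-k)}(W_i,A_j,X_i).
\]
This is exactly $(Q_{n,k}^{WX}\times F_{n,k})\{\Gamma_{n,h_n,b_n,a_0}(A)\hat h_n^{(-k)}(W,A,X)\}$ under its stated definition, with $g(w,x,a)=\Gamma_{n,h_n,b_n,a_0}(a)\hat h_n^{(-k)}(w,a,x)$. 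All sums are finite, so no integrability issue arises.

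Second, I will expand the first representation. Multiplying the fold-$k$ terms by $n_k/n$ gives two pieces. The first is $n^{-1}\sum_{i\in I_k}\Gamma_{n,h_n,b_n,a_0}(A_i)\hat q_n^{(-k)}(Z_i,A_i,X_i)\{Y_i-\hat h_n^{(-k)}(W_i,A_i,X_i)\}$. The second is $(nn_k)^{-1}\sum_{i,j\in I_k}\Gamma_{n,h_n,b_n,a_0}(A_j)\hat h_n^{(-k)}(W_i,A_j,X_i)$. Summing over $k$ yields \eqref{def:DB_cf_estimation}.

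Finally, I will treat the degenerate case in which either empirical design matrix is singular. There $\Gamma_{n,h_n,b_n,a_0}\equiv0$ by convention, so all three expressions equal zero. The only point needing care is the bookkeeping of the double sum's indices: $i$ indexes $(W,X)$ and $j$ indexes $A$. No step is a genuine obstacle.
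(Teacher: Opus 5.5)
Your proposal is correct and follows essentially the same route as the paper: substitute the definition of $\hat m_{n,k}^{(-k)}$ to identify $P_{n,k}\{\Gamma_{n,h_n,b_n,a_0}(A)\hat m_{n,k}^{(-k)}(A)\}$ with the product-empirical term, then multiply by $n_k/n$ and sum over $k$. Your separate treatment of the singular case is harmless but not needed, since the identity is purely algebraic and holds for any choice of the weight function, including $\Gamma_{n,h_n,b_n,a_0}\equiv 0$.
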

\begin{proof}
For each fold $k$, the definition of $\hat m_{n,k}^{(-k)}$ gives
\[
\begin{aligned}
P_{n,k}\{\Gamma_{n,h_n,b_n,a_0}(A)\hat m_{n,k}^{(-k)}(A)\}
&=
\frac{1}{n_k}\sum_{j\in I_k}
\Gamma_{n,h_n,b_n,a_0}(A_j)
\left\{
\frac{1}{n_k}\sum_{i\in I_k}
\hat h_n^{(-k)}(W_i,A_j,X_i)
\right\}\\
&=
\frac{1}{n_k^2}
\sum_{i\in I_k}\sum_{j\in I_k}
\Gamma_{n,h_n,b_n,a_0}(A_j)
\hat h_n^{(-k)}(W_i,A_j,X_i)\\
&=
(Q_{n,k}^{WX}\times F_{n,k})
\{\Gamma_{n,h_n,b_n,a_0}(A)\hat h_n^{(-k)}(W,A,X)\}.
\end{aligned}
\]
Since
$\hat\xi_n^{(-k)}=\hat\xi_{n,1}^{(-k)}+\hat m_{n,k}^{(-k)}$, it follows that
\[
\begin{aligned}
P_{n,k}\{\Gamma_{n,h_n,b_n,a_0}(A)\hat\xi_n^{(-k)}(O)\}
={}&
P_{n,k}\{\Gamma_{n,h_n,b_n,a_0}(A)\hat\xi_{n,1}^{(-k)}(O)\}\\
&+
(Q_{n,k}^{WX}\times F_{n,k})
\{\Gamma_{n,h_n,b_n,a_0}(A)\hat h_n^{(-k)}(W,A,X)\}.
\end{aligned}
\]
This proves the equality of the two right-hand sides in
\eqref{eq:DB-cf-representations}.  Moreover,
\[
\begin{aligned}
\frac{n_k}{n}
&P_{n,k}\{\Gamma_{n,h_n,b_n,a_0}(A)\hat\xi_{n,1}^{(-k)}(O)\}\\
&=
\frac{1}{n}\sum_{i\in I_k}
\Gamma_{n,h_n,b_n,a_0}(A_i)
\hat q_n^{(-k)}(Z_i,A_i,X_i)
\{Y_i-\hat h_n^{(-k)}(W_i,A_i,X_i)\},\\
\frac{n_k}{n}
&P_{n,k}\{\Gamma_{n,h_n,b_n,a_0}(A)\hat m_{n,k}^{(-k)}(A)\}
=
\frac{1}{n n_k}
\sum_{i\in I_k}\sum_{j\in I_k}
\Gamma_{n,h_n,b_n,a_0}(A_j)
\hat h_n^{(-k)}(W_i,A_j,X_i).
\end{aligned}
\]
Summing these two identities over $k=1,\ldots,K$ yields
\eqref{def:DB_cf_estimation}.
\end{proof}

\section{Auxiliary results for debiased inference}\label{app:debiased-inference-aux}

\subsection{Multiple-$a$ inference}\label{sec: multiple a inference proofs}

Fix an exposure value $a_0$ in the interior of the support of $A$.
Let $(q_{\infty},h_{\infty})$ be a limiting nuisance pair and define
\[
m_{\infty}(a)
:=
E_0\{h_{\infty}(W,a,X)\},
\]
and
\[
\xi_{\infty}(O)
:=
q_{\infty}(Z,A,X)\{Y-h_{\infty}(W,A,X)\}
+
m_{\infty}(A).
\]
Under Assumptions~\ref{assm:A}, \ref{assm:B}, and~\ref{assm:F}, the proximal double-robust identity implies that
\[
E_0\{\xi_{\infty}(O)\mid A=a\}=\theta_0(a)
\]
for $f_A$-almost every $a\in B_{\delta_1}(a_0)$.

For exposure value $a$, define
\[
\sigma_0^2(a)
:=
E_0\left(
[\xi_{\infty}(O)-\theta_0(A)]^2
\mid A=a
\right).
\]

Also define
\[
\begin{aligned}
\phi_{\infty,a_0}(O)
:={}&
\Gamma_{0,h_n,b_n,a_0}(A)\xi_{\infty}(O)
-\gamma_{0,h_n,b_n,a_0}(A)\\
&\qquad+
\int
\Gamma_{0,h_n,b_n,a_0}(\bar a)
\{h_{\infty}(W,\bar a,X)-m_{\infty}(\bar a)\}\,dF_0(\bar a),
\end{aligned}
\]
and
\[
\begin{aligned}
\psi_{n,a_0}^{sm}(O)
:={}&
\Gamma_{0,h_n,b_n,a_0}(A)\theta_0(A)
-\gamma_{0,h_n,b_n,a_0}(A)\\
&\qquad+
\int
\Gamma_{0,h_n,b_n,a_0}(\bar a)
\{h_{\infty}(W,\bar a,X)-m_{\infty}(\bar a)\}\,dF_0(\bar a).
\end{aligned}
\]

For $j \geq 0$, define kernel moments
\[
c_j := \int u^j K(u)\,du,
\qquad
c_j^{\ast} := \int u^j K(u)^2\,du,
\qquad
c_{j,\tau}^{\ast} := \int u^j K(u)K(\tau u)\,du,
\]
and matrices
\[
S_2 :=
\begin{pmatrix}
c_0 & c_1 \\
c_1 & c_2
\end{pmatrix},
\qquad
S_3 :=
\begin{pmatrix}
c_0 & c_1 & c_2 \\
c_1 & c_2 & c_3 \\
c_2 & c_3 & c_4
\end{pmatrix}.
\]
Let
\[
V_{K,\tau}
:=
\int
\left\{
K(u)-\tau^3 c_2
\frac{(\tau u)^2-c_2}{c_4-c_2^2}
K(\tau u)
\right\}^2\,du.
\]

The assumptions used in the results below are Assumptions~\ref{assm:A}, \ref{assm:B}, and~\ref{assm:D}--\ref{assm:F} for the pointwise and finite-dimensional statements, and Assumptions~\ref{assm:A}, \ref{assm:B}, \ref{assm:D}, and~\ref{assm:G} for the uniform statements.

\subsection{Remainder decomposition}
\label{sec: remainder decomposition}

Following the decomposition in Section D of
\citet{Takatsu2025debiased}, write
\[
\Gamma_n(a):=\Gamma_{n,h_n,b_n,a_0}(a),\qquad
\Gamma_0(a):=\Gamma_{0,h_n,b_n,a_0}(a),\qquad
\gamma_0(a):=\gamma_{0,h_n,b_n,a_0}(a),
\]
and recall that \(\tau_n=h_n/b_n\).  Let \(Q_0^{WX}\) denote the
\(P_0\)-law of \((W,X)\).  Whenever \(P_0\) is applied to a
sample-dependent function, the observed sample is held fixed and \(P_0\)
denotes integration with respect to an independent observation
\(O\sim P_0\).

Define
\[
R_{n,1}(a_0)
:=
\theta_{0,h_n,b_n}(a_0)-\theta_0(a_0).
\]

The nuisance empirical-process remainder is
\[
\begin{aligned}
R_{n,2}(a_0)
:=
\sum_{k=1}^K\frac{n_k}{n}(P_{n,k}-P_0)
\Bigg[
&\Gamma_0(A)
\left\{
\hat\xi_{n,1}^{(-k)}(O)
+
S_{\hat h_n^{(-k)}}(A)
\right\} +
\int
\Gamma_0(\bar a)
\hat h_n^{(-k)}(W,\bar a,X)
\,dF_0(\bar a) \\
&\qquad-
\Gamma_0(A)\xi_\infty(O)
-
\int
\Gamma_0(\bar a)
h_\infty(W,\bar a,X)
\,dF_0(\bar a)
\Bigg].
\end{aligned}
\]

The empirical-process remainder for the estimated local-polynomial weight is
\[
\begin{aligned}
R_{n,3}(a_0)
:=
\sum_{k=1}^K\frac{n_k}{n}(P_{n,k}-P_0)
\Bigg[
&\{\Gamma_n(A)-\Gamma_0(A)\}
\left\{
\hat\xi_{n,1}^{(-k)}(O)
+
S_{\hat h_n^{(-k)}}(A)
\right\} \\
&\qquad+
\int
\{\Gamma_n(\bar a)-\Gamma_0(\bar a)\}
\hat h_n^{(-k)}(W,\bar a,X)
\,dF_0(\bar a)
\Bigg].
\end{aligned}
\]

The proximal doubly robust remainder is
\[
\begin{aligned}
R_{n,4}(a_0)
:=
\sum_{k=1}^K\frac{n_k}{n}
\int
\Gamma_n(a)
E_0\Big[
&\{\hat q_n^{(-k)}(Z,a,X)-q_0(Z,a,X)\} \\
&\quad\times
\{h_0(W,a,X)-\hat h_n^{(-k)}(W,a,X)\}
\,\Big|\,A=a
\Big]
\,dF_0(a).
\end{aligned}
\]

The product empirical-measure remainder is
\[
\begin{aligned}
R_{n,5}(a_0)
:=
\sum_{k=1}^K\frac{n_k}{n}
\iint
&\Gamma_n(a)\hat h_n^{(-k)}(w,a,x)
d(Q_{n,k}^{WX}-Q_0^{WX})(w,x)
\,d(F_{n,k}-F_0)(a).
\end{aligned}
\]

Finally, define the local-polynomial matrix remainder by
\[
\begin{aligned}
R_{n,6}(a_0)
:={}&
e_1^\top D_{0,h_n,a_0,1}^{-1}
(D_{0,h_n,a_0,1}-D_{n,h_n,a_0,1})
(D_{n,h_n,a_0,1}^{-1}-D_{0,h_n,a_0,1}^{-1})
\\
&\qquad\times
P_0\!\left\{
w_{h_n,a_0,1}(A)K_{h_n,a_0}(A)\theta_0(A)
\right\}
\\
&-
c_{0,h_n,a_0,2}\tau_n^2
e_3^\top D_{0,b_n,a_0,2}^{-1}
(D_{0,b_n,a_0,2}-D_{n,b_n,a_0,2})
(D_{n,b_n,a_0,2}^{-1}-D_{0,b_n,a_0,2}^{-1})
\\
&\qquad\times
P_0\!\left\{
w_{b_n,a_0,2}(A)K_{b_n,a_0}(A)\theta_0(A)
\right\}
\\
&-
\tau_n^2e_1^\top
(D_{n,h_n,a_0,1}^{-1}-D_{0,h_n,a_0,1}^{-1})
(P_n-P_0)
\left\{
\widetilde w_{h_n,a_0,1}(A)K_{h_n,a_0}(A)
\right\}
\\
&\qquad\times
e_3^\top D_{0,b_n,a_0,2}^{-1}
P_0\!\left\{
w_{b_n,a_0,2}(A)K_{b_n,a_0}(A)\theta_0(A)
\right\}
\\
&-
\tau_n^2e_1^\top
D_{0,h_n,a_0,1}^{-1}
(D_{0,h_n,a_0,1}-D_{n,h_n,a_0,1})
(D_{n,h_n,a_0,1}^{-1}-D_{0,h_n,a_0,1}^{-1})
\\
&\qquad\times
P_0\!\left\{
\widetilde w_{h_n,a_0,1}(A)K_{h_n,a_0}(A)
\right\}
e_3^\top D_{0,b_n,a_0,2}^{-1}
P_0\!\left\{
w_{b_n,a_0,2}(A)K_{b_n,a_0}(A)\theta_0(A)
\right\}
\\
&-
\tau_n^2
(c_{n,h_n,a_0,2}-c_{0,h_n,a_0,2})
e_3^\top
(D_{n,b_n,a_0,2}^{-1}-D_{0,b_n,a_0,2}^{-1})
\\
&\qquad\times
P_0\!\left\{
w_{b_n,a_0,2}(A)K_{b_n,a_0}(A)\theta_0(A)
\right\}.
\end{aligned}
\]

\begin{lemma}
\label{lem:prox-six-rem-decomp}
For all sufficiently large $n$, on the event that the empirical moment
matrices are nonsingular,
\[
\hat\theta_n^{DB,cf}(a_0)-\theta_0(a_0)
=
P_n\phi_{\infty,a_0}
+
\sum_{j=1}^6R_{n,j}(a_0).
\]
\end{lemma}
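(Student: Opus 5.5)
The identity in Lemma~\ref{lem:prox-six-rem-decomp} is purely algebraic, so the plan is to expand the second representation in \eqref{eq:DB-cf-representations} term by term and file each piece under one of $R_{n,1},\dots,R_{n,6}$. Throughout I work on the event where $D_{n,h_n,a_0,1}$ and $D_{n,b_n,a_0,2}$ are nonsingular. I take $n$ large enough that the population matrices are nonsingular by Assumption~\ref{assm:E}, so that $\Gamma_0$ and $\gamma_0$ are well defined. I condition on $\trainingsigma_k$ on each fold, so that $\hat q_n^{(-k)},\hat h_n^{(-k)}$ are fixed functions.

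\emph{Step 1: split the product empirical measure.} For each $k$ I write
\[
Q_{n,k}^{WX}\times F_{n,k}=Q_0^{WX}\times F_0+(Q_{n,k}^{WX}-Q_0^{WX})\times F_0+Q_0^{WX}\times(F_{n,k}-F_0)+(Q_{n,k}^{WX}-Q_0^{WX})\times(F_{n,k}-F_0).
\]
Applied to $\Gamma_n(A)\hat h_n^{(-k)}(W,A,X)$, the last piece is exactly $R_{n,5}$. The two middle pieces equal, respectively,
\[
(P_{n,k}-P_0)\int\Gamma_n(\bar a)\hat h_n^{(-k)}(W,\bar a,X)\,dF_0(\bar a)
\quad\text{and}\quad
(P_{n,k}-P_0)\{\Gamma_n(A)S_{\hat h_n^{(-k)}}(A)\}.
\]
I also write $P_{n,k}\{\Gamma_n\hat\xi_{n,1}^{(-k)}\}=P_0\{\cdot\}+(P_{n,k}-P_0)\{\cdot\}$.

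\emph{Step 2: the population part and double robustness.} The population terms are $P_0\{\Gamma_n\hat\xi_{n,1}^{(-k)}\}+\int\Gamma_n S_{\hat h_n^{(-k)}}\,dF_0$. This equals $\int\Gamma_n(a)E_0\{\xi_{\mathrm{prox}}(O;\hat q_n^{(-k)},\hat h_n^{(-k)})\mid A=a\}\,dF_0(a)$. By Theorem~\ref{thm:DR_for_PO}, with $(\bar q,\bar h)=(\hat q_n^{(-k)},\hat h_n^{(-k)})$, it becomes $\int\Gamma_n\theta_0\,dF_0$ plus the product-bias term. The integrability conditions hold because of the boundedness in \ref{assm:nuisance1} and the compact support of $\Gamma_n$. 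Averaging over folds with weights $n_k/n$ produces $R_{n,4}$.

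\emph{Step 3: replace the random weight by the population weight.} In all empirical-process terms I write $\Gamma_n=\Gamma_0+(\Gamma_n-\Gamma_0)$, and the $(\Gamma_n-\Gamma_0)$ parts are $R_{n,3}$ verbatim. From the $\Gamma_0$ parts I add and subtract
\[
(P_{n,k}-P_0)\Bigl[\Gamma_0\xi_\infty+\int\Gamma_0(\bar a)h_\infty(W,\bar a,X)\,dF_0(\bar a)\Bigr].
\]
The difference is $R_{n,2}$. Summing the remaining copy over $k$ and using $\sum_k(n_k/n)P_{n,k}=P_n$ gives
\[
(P_n-P_0)\Bigl[\Gamma_0\xi_\infty+\int\Gamma_0\,h_\infty\,dF_0\Bigr].
\]
The nonrandom term $\int\Gamma_0 m_\infty\,dF_0$ is annihilated by $P_n-P_0$.

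\emph{Step 4: linearize the weight (main obstacle).} What remains is to show
\[
\int\Gamma_n\theta_0\,dF_0-\theta_0(a_0)=R_{n,1}-(P_n-P_0)\gamma_0+R_{n,6}.
\]
I first subtract $\theta_{0,h_n,b_n}(a_0)=P_0\{\Gamma_0\theta_0\}$, which gives $R_{n,1}$. The quantity $P_0\{\Gamma_n\theta_0\}$ equals
\[
e_1^\top D_n^{-1}M_0-\tfrac12\tau_n^2\,c_n\,2e_3^\top\widetilde D_n^{-1}\widetilde M_0,
\]
a smooth function of the empirical moments $D_n$, $\widetilde D_n$ and $U_n=P_n\{\widetilde w K_h\}$, with $c_n=e_1^\top D_n^{-1}U_n$. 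I expand each inverse exactly via
\[
A^{-1}-B^{-1}=B^{-1}(B-A)B^{-1}+B^{-1}(B-A)(A^{-1}-B^{-1}),
\]
and expand the product $c_n(\cdot)$ as $c_0(\cdot)+(c_n-c_0)(\cdot)_0+(c_n-c_0)\{(\cdot)_n-(\cdot)_0\}$. The first-order terms are linear in $P_n-P_0$. By the same calculations as in Propositions~\ref{prop:theta_LL_Pathwise_derivative_part1}--\ref{prop:theta_LL_Pathwise_derivative_part3}, with the score replaced by $P_n-P_0$, they assemble to $-(P_n-P_0)\{\gamma_{0,h,a_0}-\tfrac12h^2c_0\gamma''_{0,b,a_0}+\tfrac12h^2\theta''_{0,b}\gamma^c_{0,h,a_0}\}=-(P_n-P_0)\gamma_0$. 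The second-order pieces should match the five lines of $R_{n,6}$. Checking that bookkeeping, in particular the $\tau_n^2=h_n^2b_n^{-2}$ factors and the cross term between $c_n-c_0$ and $\widetilde D_n^{-1}-\widetilde D_0^{-1}$, is the step I expect to be most error-prone. I would carry it out by writing $c_n-c_0$ itself through the same inverse identity before collecting terms.

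\emph{Step 5: assemble.} Finally, $P_0\phi_{\infty,a_0}=0$ by the identities $P_0\gamma_0=\theta_{0,h_n,b_n}(a_0)=P_0\{\Gamma_0\xi_\infty\}$ and the mean-zero integral term. Hence $P_n\phi_{\infty,a_0}=(P_n-P_0)\phi_{\infty,a_0}$, which is exactly the sum of the linear terms from Steps 3 and 4. This closes the identity.
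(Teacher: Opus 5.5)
Your proposal is correct and follows essentially the same route as the paper: the same four-way split of $Q_{n,k}^{WX}\times F_{n,k}$ for $R_{n,5}$, Theorem~\ref{thm:DR_for_PO} for $R_{n,4}$, the swap $\Gamma_n=\Gamma_0+(\Gamma_n-\Gamma_0)$ for $R_{n,2}$ and $R_{n,3}$, and an exact inverse-matrix expansion identifying $(P_n-P_0)\gamma_0+P_0\{(\Gamma_n-\Gamma_0)\theta_0\}$ with $R_{n,6}$. Your bookkeeping checks out, including the identity $\tau_n^2e_3^\top D_{0,b_n,a_0,2}^{-1}M_{0,b_n,a_0}=\tfrac12h_n^2\theta_{0,b_n}''(a_0)$ behind the $\gamma^c$ term and the $(c_n-c_0)$ cross term; the one framing point is that $\Gamma_n$ uses the fold-$k$ data, so $P_0$ should be read as integration over an independent copy with the whole sample held fixed, as in the paper, rather than as conditioning on $\trainingsigma_k$.
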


\begin{proof}
Recall \(P_0\phi_{\infty,a_0}=0\). 
We now expand the estimator fold by fold.  Fix \(k\).  Bilinearity of
product measures gives
\[
\begin{aligned}
Q_{n,k}^{WX}\times F_{n,k}
={}&
Q_0^{WX}\times F_0
+
Q_0^{WX}\times(F_{n,k}-F_0) \\
&\qquad+
(Q_{n,k}^{WX}-Q_0^{WX})\times F_0
+
(Q_{n,k}^{WX}-Q_0^{WX})\times(F_{n,k}-F_0).
\end{aligned}
\]
The first three components satisfy
\[
\begin{aligned}
&(Q_0^{WX}\times F_0)
\{\Gamma_n(A)\hat h_n^{(-k)}(W,A,X)\}  =
P_0\left[
\Gamma_n(A)S_{\hat h_n^{(-k)}}(A)
\right],
\end{aligned}
\]
\[
\begin{aligned}
&\{Q_0^{WX}\times(F_{n,k}-F_0)\}
\{\Gamma_n(A)\hat h_n^{(-k)}(W,A,X)\} =
(P_{n,k}-P_0)
\left[
\Gamma_n(A)S_{\hat h_n^{(-k)}}(A)
\right],
\end{aligned}
\]
and
\[
\begin{aligned}
&\{(Q_{n,k}^{WX}-Q_0^{WX})\times F_0\}
\{\Gamma_n(A)\hat h_n^{(-k)}(W,A,X)\} \\
&\qquad =
(P_{n,k}-P_0)
\left[
\int
\Gamma_n(\bar a)
\hat h_n^{(-k)}(W,\bar a,X)
\,dF_0(\bar a)
\right].
\end{aligned}
\]
It follows that the \(k\)th fold contribution to the estimator equals
\[
\begin{aligned}
&
P_{n,k}
\left\{
\Gamma_n(A)\hat\xi_{n,1}^{(-k)}(O)
\right\}
+
(Q_{n,k}^{WX}\times F_{n,k})
\left\{
\Gamma_n(A)\hat h_n^{(-k)}(W,A,X)
\right\}
\\
&=
(P_{n,k}-P_0)
\Bigg[
\Gamma_n(A)
\left\{
\hat\xi_{n,1}^{(-k)}(O)
+
S_{\hat h_n^{(-k)}}(A)
\right\}
+
\int
\Gamma_n(\bar a)
\hat h_n^{(-k)}(W,\bar a,X)
\,dF_0(\bar a)
\Bigg]
\\
&\qquad+
P_0\left[
\Gamma_n(A)
\left\{
\hat\xi_{n,1}^{(-k)}(O)
+
S_{\hat h_n^{(-k)}}(A)
\right\}
\right]
\\
&\qquad+
\iint
\Gamma_n(a)\hat h_n^{(-k)}(w,a,x)
\,d(Q_{n,k}^{WX}-Q_0^{WX})(w,x)
\,d(F_{n,k}-F_0)(a).
\end{aligned}
\]
Multiplying by \(n_k/n\) and summing over \(k\), the last line gives
\(R_{n,5}(a_0)\).

Since \(P_0\phi_{\infty,a_0}=0\), we may write
\[
\begin{aligned}
&P_n\phi_{\infty,a_0}\\
&=
\sum_{k=1}^K\frac{n_k}{n}(P_{n,k}-P_0)
\Bigg[
\Gamma_0(A)\xi_\infty(O)
+
\int
\Gamma_0(\bar a)h_\infty(W,\bar a,X)
\,dF_0(\bar a)
\Bigg]
-
(P_n-P_0)\{\gamma_0(A)\}.
\end{aligned}
\]
Indeed, the omitted term
\[
\int\Gamma_0(\bar a)m_\infty(\bar a)\,dF_0(\bar a)
\]
is constant and is therefore cancelled by \(P_n-P_0\).

Subtracting this representation from the preceding fold expansion and then
adding and subtracting the same fold-specific expression with \(\Gamma_n\)
replaced by \(\Gamma_0\) gives
\[
\begin{aligned}
\hat\theta_n^{DB,cf}(a_0)-\theta_0(a_0)-P_n\phi_{\infty,a_0}
&=
R_{n,2}(a_0)+R_{n,3}(a_0)+R_{n,5}(a_0)
\\
&\qquad+
\sum_{k=1}^K\frac{n_k}{n}
P_0\left[
\Gamma_n(A)
\left\{
\hat\xi_{n,1}^{(-k)}(O)
+
S_{\hat h_n^{(-k)}}(A)
\right\}
\right]
\\
&\qquad+
(P_n-P_0)\{\gamma_0(A)\}
-\theta_0(a_0).
\end{aligned}
\]

By the proximal product-bias identity in
Theorem~\ref{thm:DR_for_PO}, applied pathwise with
\[
\bar q=\hat q_n^{(-k)},
\qquad
\bar h=\hat h_n^{(-k)},
\]
we have, for \(F_0\)-almost every \(a\),
\[
\begin{aligned}
&E_0\left[
\hat\xi_{n,1}^{(-k)}(O)
+
S_{\hat h_n^{(-k)}}(A)
\,\middle|\,A=a
\right]
-\theta_0(a)
\\
&\qquad =
E_0\Big[
\{\hat q_n^{(-k)}(Z,a,X)-q_0(Z,a,X)\}
\{h_0(W,a,X)-\hat h_n^{(-k)}(W,a,X)\}
\,\Big|\,A=a
\Big].
\end{aligned}
\]
Consequently,
\[
\begin{aligned}
&\sum_{k=1}^K\frac{n_k}{n}
P_0\left[
\Gamma_n(A)
\left\{
\hat\xi_{n,1}^{(-k)}(O)
+
S_{\hat h_n^{(-k)}}(A)
\right\}
\right]
 =
P_0\{\Gamma_n(A)\theta_0(A)\}
+
R_{n,4}(a_0).
\end{aligned}
\]
It follows that
\[
\begin{aligned}
\hat\theta_n^{DB,cf}(a_0)-\theta_0(a_0)-P_n\phi_{\infty,a_0}
&=
R_{n,2}(a_0)+R_{n,3}(a_0)+R_{n,4}(a_0)+R_{n,5}(a_0)
\\
&\qquad+
(P_n-P_0)\{\gamma_0(A)\}
+
P_0\{\Gamma_n(A)\theta_0(A)\}
-
\theta_0(a_0).
\end{aligned}
\]
Adding and subtracting
\[
P_0\{\Gamma_0(A)\theta_0(A)\}
=
\theta_{0,h_n,b_n}(a_0)
\]
shows that the last line equals
\[
R_{n,1}(a_0)
+
(P_n-P_0)\{\gamma_0(A)\}
+
P_0\!\left[
\{\Gamma_n(A)-\Gamma_0(A)\}\theta_0(A)
\right].
\]

It remains to identify the last two terms.  Direct substitution of the
definitions of \(\Gamma_n,\Gamma_0\), and \(\gamma_0\) gives
\[
\begin{aligned}
&P_0\!\left[
\{\Gamma_n(A)-\Gamma_0(A)\}\theta_0(A)
\right]\\
&=
e_1^\top
(D_{n,h_n,a_0,1}^{-1}-D_{0,h_n,a_0,1}^{-1})
P_0\!\left\{
w_{h_n,a_0,1}(A)K_{h_n,a_0}(A)\theta_0(A)
\right\}
\\
&\qquad-
\tau_n^2
\left\{
c_{n,h_n,a_0,2}e_3^\top D_{n,b_n,a_0,2}^{-1}
-
c_{0,h_n,a_0,2}e_3^\top D_{0,b_n,a_0,2}^{-1}
\right\}
P_0\!\left\{
w_{b_n,a_0,2}(A)K_{b_n,a_0}(A)\theta_0(A)
\right\},
\end{aligned}
\]
whereas
\[
\begin{aligned}
&(P_n-P_0)\{\gamma_0(A)\}\\
&=
e_1^\top D_{0,h_n,a_0,1}^{-1}
(D_{n,h_n,a_0,1}-D_{0,h_n,a_0,1})
D_{0,h_n,a_0,1}^{-1}
P_0\!\left\{
w_{h_n,a_0,1}(A)K_{h_n,a_0}(A)\theta_0(A)
\right\}
\\
&\qquad-
c_{0,h_n,a_0,2}\tau_n^2
e_3^\top D_{0,b_n,a_0,2}^{-1}
(D_{n,b_n,a_0,2}-D_{0,b_n,a_0,2})
D_{0,b_n,a_0,2}^{-1}
P_0\!\left\{
w_{b_n,a_0,2}(A)K_{b_n,a_0}(A)\theta_0(A)
\right\}
\\
&\qquad+
\tau_n^2e_1^\top D_{0,h_n,a_0,1}^{-1}
\Big[
(P_n-P_0)
\{\widetilde w_{h_n,a_0,1}(A)K_{h_n,a_0}(A)\}\\
&\qquad\qquad\qquad\qquad\qquad\qquad\qquad-
(D_{n,h_n,a_0,1}-D_{0,h_n,a_0,1})
D_{0,h_n,a_0,1}^{-1}
P_0\{\widetilde w_{h_n,a_0,1}(A)K_{h_n,a_0}(A)\}
\Big]
\\
&\qquad\qquad\times
e_3^\top D_{0,b_n,a_0,2}^{-1}
P_0\!\left\{
w_{b_n,a_0,2}(A)K_{b_n,a_0}(A)\theta_0(A)
\right\}.
\end{aligned}
\]
Using
\[
D_n^{-1}-D_0^{-1}
=
D_0^{-1}(D_0-D_n)D_n^{-1}
\]
for each of the two local-polynomial moment matrices, together with
\[
\begin{aligned}
c_{n,h_n,a_0,2}-c_{0,h_n,a_0,2}
={}&
e_1^\top
(D_{n,h_n,a_0,1}^{-1}-D_{0,h_n,a_0,1}^{-1})
P_n\{\widetilde w_{h_n,a_0,1}K_{h_n,a_0}\}
\\
&\qquad+
e_1^\top D_{0,h_n,a_0,1}^{-1}
(P_n-P_0)
\{\widetilde w_{h_n,a_0,1}K_{h_n,a_0}\},
\end{aligned}
\]
and
\[
\begin{aligned}
&e_1^\top D_{0,h_n,a_0,1}^{-1}
\Big[
(P_n-P_0)\{\widetilde w_{h_n,a_0,1}K_{h_n,a_0}\}-
(D_{n,h_n,a_0,1}-D_{0,h_n,a_0,1})D_{0,h_n,a_0,1}^{-1}
P_0\{\widetilde w_{h_n,a_0,1}K_{h_n,a_0}\}
\Big]
\\
&\quad-
(c_{n,h_n,a_0,2}-c_{0,h_n,a_0,2})
\\
&=-e_1^\top
(D_{n,h_n,a_0,1}^{-1}-D_{0,h_n,a_0,1}^{-1})
(P_n-P_0)\{\widetilde w_{h_n,a_0,1}K_{h_n,a_0}\}
\\
&\qquad-
e_1^\top D_{0,h_n,a_0,1}^{-1}
(D_{0,h_n,a_0,1}-D_{n,h_n,a_0,1})
(D_{n,h_n,a_0,1}^{-1}-D_{0,h_n,a_0,1}^{-1})
P_0\{\widetilde w_{h_n,a_0,1}K_{h_n,a_0}\},
\end{aligned}
\]
and collecting terms yields
\[
(P_n-P_0)\{\gamma_0(A)\}
+
P_0\!\left[
\{\Gamma_n(A)-\Gamma_0(A)\}\theta_0(A)
\right]
=
R_{n,6}(a_0).
\]
Combining the preceding identities proves the claimed decomposition.
\end{proof}

\begin{lemma}\label{lem:D-expansion}
If conditions \ref{assm:kernel}, \ref{assm:bandwidths}, and \ref{assm:regul1}\textup{(b)} hold, then
\[
   D_{0,h,a_0,1}^{-1}
  = f_A(a_0)^{-1} S_2^{-1} + O(h),
  \quad
  c_{0,h,a_0,2}
  = c_2 + O(h)
\]
as $h\to 0$, and
\[
  D_{0,b,a_0,2}^{-1}
  = f_A(a_0)^{-1} S_3^{-1} + O(b)
\]
as $b\to 0$.

If conditions \ref{assm:kernel}, \ref{assm:bandwidths}, and \ref{assm:G-smooth}\textup{(ii)} hold, then
\[
  \sup_{a_0\in\mathcal A_0}
  \left\|
    D_{0,h,a_0,1}^{-1}
    - f_A(a_0)^{-1} S_2^{-1}
  \right\|_\infty
  = O(h),
\]
\[
  \sup_{a_0\in\mathcal A_0}
  \bigl|
    c_{0,h,a_0,2} - c_2
  \bigr|
  = O(h),
\]
and
\[
  \sup_{a_0\in\mathcal A_0}
  \left\|
    D_{0,b,a_0,2}^{-1}
    - f_A(a_0)^{-1}S_3^{-1}
  \right\|_\infty
  = O(b)
\]
as $h,b\to 0$.
\end{lemma}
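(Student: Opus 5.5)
The plan is to change variables $a=a_0+hu$ in every population moment, so that each entry becomes a fixed kernel integral against $f_A(a_0+hu)$, and then use the Lipschitz continuity of $f_A$ to replace $f_A(a_0+hu)$ by $f_A(a_0)$.

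For the pointwise statement, write
\[
D_{0,h,a_0,1}=\int (1,u)^\top(1,u)K(u)f_A(a_0+hu)\,du .
\]
Because $K$ is supported on $[-1,1]$, only $|hu|\le h$ matters. For $h\le\delta_1$, condition \ref{assm:regul1}(b) gives $|f_A(a_0+hu)-f_A(a_0)|\le Lh|u|\le Lh$. Since every monomial $u^j$ with $j\le 4$ is bounded by one on $[-1,1]$, this yields
\[
D_{0,h,a_0,1}=f_A(a_0)S_2+E_h,\qquad \|E_h\|_\infty\le Lh\int K=Lh .
\]
The same argument gives $D_{0,b,a_0,2}=f_A(a_0)S_3+O(b)$ and
\[
P_0\{\widetilde w_{h,a_0,1}(A)K_{h,a_0}(A)\}=f_A(a_0)(c_2,c_3)^\top+O(h).
\]

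Next I would show that $S_2$ and $S_3$ are nonsingular, which follows from the same argument the paper uses for $D_{P,\cdot}$. They are Gram matrices of $(1,u)$ and $(1,u,u^2)$ in $L_2(K\,du)$. Since $K$ is a density, it has infinite support, so these functions are linearly independent there.

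The inverse expansion then comes from the identity $D^{-1}-D_\ast^{-1}=-D^{-1}(D-D_\ast)D_\ast^{-1}$. Because $f_A(a_0)>0$ and $E_h\to0$, the quantity $\|D_{0,h,a_0,1}^{-1}\|$ is bounded for small $h$, for instance by a Neumann series once $\|f_A(a_0)^{-1}S_2^{-1}E_h\|\le1/2$. This gives $D_{0,h,a_0,1}^{-1}=f_A(a_0)^{-1}S_2^{-1}+O(h)$, and likewise for $D_{0,b,a_0,2}^{-1}$. For $c_{0,h,a_0,2}$, multiply the two expansions: it equals
\[
e_1^\top S_2^{-1}(c_2,c_3)^\top+O(h).
\]
By the symmetry in \ref{assm:kernel}, $c_1=c_3=0$ and $c_0=1$, so $S_2=\operatorname{diag}(1,c_2)$ and the leading term is exactly $c_2$.

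For the uniform statement the argument is identical, but all constants must be made free of $a_0$. Condition \ref{assm:G-smooth}(ii) provides a single Lipschitz constant $L$ on $\mathcal A_{\delta_3}$. For $h,b\le\delta_3$ and $a_0\in\mathcal A_0$, the points $a_0+hu$ with $|u|\le1$ lie in $\mathcal A_{\delta_3}$, so $\sup_{a_0}\|E_h\|_\infty\le Lh$. The bound $\inf f_A\ge\underline f>0$ then gives a uniform bound on $f_A(a_0)^{-1}\|S_2^{-1}\|$, and therefore a uniform threshold below which the Neumann argument applies for every $a_0$.

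The only point needing care is this uniform invertibility step: making sure the bound on $\|D_{0,h,a_0,1}^{-1}\|$ does not depend on $a_0$. The uniform lower bound on $f_A$ over $\mathcal A_{\delta_3}$ handles it.
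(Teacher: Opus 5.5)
Your argument is correct and is the standard one behind Lemma~4 of \citet{Takatsu2025debiased}, to which the paper defers without giving a proof. That argument changes variables, uses the Lipschitz bound to get $D=f_A(a_0)S+O(h)$, perturbs the inverse, and, for the uniform version, uses the lower bound on $f_A$ over $\mathcal A_{\delta_3}$ to make the constants free of $a_0$. The only implicit ingredient is the paper's standing assumption that $a_0$ lies in the interior of $\mathcal A$, which gives $[a_0-h,a_0+h]\subseteq B_{\delta_1}(a_0)$ for small $h$ and so makes your Lipschitz step valid.
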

\noindent Proof is given in Lemma 4 of \citet{Takatsu2025debiased}, so we omit the technical details here.

\begin{lemma}\label{lem:single_a_leading_term}
Fix $a_0$ in the interior of the support of $A$. Assume Assumptions~\ref{assm:A}, \ref{assm:B}, and~\ref{assm:D}--\ref{assm:F}. Then
\[
(nh_n)^{1/2}
P_n\left[
\Gamma_{0,h_n,b_n,a_0}(A)\{\xi_{\infty}(O)-\theta_0(A)\}
\right]
\rightsquigarrow
N\left(
0,
\frac{V_{K,\tau}\sigma_0^2(a_0)}{f_A(a_0)}
\right).
\]
\end{lemma}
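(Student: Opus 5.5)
The plan is to treat the sum as a triangular array of i.i.d. terms and apply the Lindeberg--Feller (Lyapunov) central limit theorem. Write
\[
\zeta_{n}(O):=(h_n/n)^{1/2}\,\Gamma_{0,h_n,b_n,a_0}(A)\{\xi_\infty(O)-\theta_0(A)\},
\]
so that the statistic is $\sum_{i=1}^n\zeta_n(O_i)$. Under Assumption~\ref{assm:DR_and_rates}, Theorem~\ref{thm:DR_for_PO} gives $E_0\{\xi_\infty(O)\mid A=a\}=\theta_0(a)$ for $f_A$-almost every $a\in B_{\delta_1}(a_0)$. Since $\Gamma_0$ is supported in $\{|a-a_0|\le\max(h_n,b_n)\}\subset B_{\delta_1}(a_0)$ for large $n$ (because $b_n\to0$), each summand has mean zero. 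It therefore suffices to compute the variance limit and to verify Lyapunov's condition.

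For the variance, condition on $A$ to obtain
\[
nE_0\zeta_n^2=h_n\int\Gamma_0(a)^2\sigma_0^2(a)f_A(a)\,da .
\]
Using Lemma~\ref{lem:D-expansion}, we write $\Gamma_0(a)=h_n^{-1}\{G_n((a-a_0)/h_n)+O(h_n)\}$ on its support. Here
\[
G_n(u)=e_1^\top S_2^{-1}(1,u)^\top K(u)f_A(a_0)^{-1}-\tfrac12 c_2\,2\tau_n^{3}\,e_3^\top S_3^{-1}(1,\tau_nu,\tau_n^2u^2)^\top K(\tau_nu)f_A(a_0)^{-1},
\]
where the factor $h_n^2b_n^{-2}b_n^{-1}=\tau_n^3h_n^{-1}$ accounts for the power of $\tau_n$. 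By symmetry of $K$ we have $c_1=c_3=0$. Direct inversion then gives $e_1^\top S_2^{-1}(1,u)^\top=1$ and $e_3^\top S_3^{-1}(1,v,v^2)^\top=(v^2-c_2)/(c_4-c_2^2)$, so $G_n(u)\,f_A(a_0)\to K(u)-\tau^3c_2\{(\tau u)^2-c_2\}K(\tau u)/(c_4-c_2^2)$. Substituting $a=a_0+h_nu$ and using continuity of $\sigma_0^2$ and $f_A$ at $a_0$ (Assumption~\ref{assm:regul1}(b),(d)), the variance becomes $\int G_n(u)^2\sigma_0^2(a_0+h_nu)f_A(a_0+h_nu)\,du$. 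This converges to $V_{K,\tau}\sigma_0^2(a_0)/f_A(a_0)$ by dominated convergence. Domination holds because the integrand is bounded and supported on $|u|\le\max(1,\tau_n^{-1})$, a set that stays bounded provided $\tau>0$. When $\tau=0$, the correction term is bounded by $\tau_n^3\cdot O(1)$ on a support of length $O(\tau_n^{-1})$, so its $L_2$ contribution is $O(\tau_n^{5})\to0$. I would handle that case separately in exactly this way.

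For Lyapunov's condition I would use the exponent $2+\delta$ with $\delta=\delta_2$. Because $|\Gamma_0|\lesssim h_n^{-1}\{1+\tau_n^3\}$ (up to the $\tau_n^{-1}$ support scaling), and because the bridges $q_\infty,h_\infty$ are bounded on $B_{\delta_1}(a_0)$ (Assumption~\ref{assm:DR_and_rates}), we get $E_0\{|\xi_\infty-\theta_0(A)|^{2+\delta}\mid A=a\}\le C$ there by Assumption~\ref{assm:regul1}(c). Together with boundedness of $\theta_0$ near $a_0$, this yields
\[
n E_0|\zeta_n|^{2+\delta}\lesssim n(h_n/n)^{1+\delta/2}h_n^{-(2+\delta)}h_n=(nh_n)^{-\delta/2}\to0 .
\]
Lindeberg--Feller then gives the stated normal limit.

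I expect the main obstacle to be the careful uniform expansion of $\Gamma_0$: the $O(h_n)$ and $O(b_n)$ errors in $D^{-1}$ from Lemma~\ref{lem:D-expansion} must be controlled jointly with the rescaling $\tau_n$ and the $\tau=0$ edge case. The rest is routine.
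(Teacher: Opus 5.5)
Your proposal is correct and follows essentially the same route as the paper. Both arguments apply a Lyapunov CLT to the triangular array, obtain the variance limit by rescaling the equivalent kernel through Lemma~\ref{lem:D-expansion} with separate treatment of $\tau>0$ and $\tau=0$ (the $O(\tau_n^5)$ bound), and verify the $(nh_n)^{-\delta_2/2}$ Lyapunov bound. The one minor difference is that you use the unnormalized Lindeberg--Feller form, with convergent total variance and $nE_0|\zeta_n|^{2+\delta}\to0$. That form automatically covers the degenerate case $\sigma_0^2(a_0)=0$, which the paper instead handles separately with Chebyshev's inequality.
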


\begin{proof}
Write
\[
V_{n,i}
:=
h_n^{1/2}
\Gamma_{0,h_n,b_n,a_0}(A_i)
\{\xi_{\infty}(O_i)-\theta_0(A_i)\},
\qquad i=1,\ldots,n.
\]
Then
\[
(nh_n)^{1/2}
P_n\left[
\Gamma_{0,h_n,b_n,a_0}(A)\{\xi_{\infty}(O)-\theta_0(A)\}
\right]
=
\frac{1}{\sqrt n}\sum_{i=1}^n V_{n,i}.
\]
First, we have
\[
E_0(V_{n,i})
=
h_n^{1/2}
E_0\left[
\Gamma_{0,h_n,b_n,a_0}(A)
E_0\{\xi_{\infty}(O)-\theta_0(A)\mid A\}
\right]
=
0.
\]
Next, since
\[
E_0\left(
[\xi_{\infty}(O)-\theta_0(A)]^2
\mid A=a
\right)=\sigma_0^2(a),
\]
we have
\[
E_0(V_{n,i}^2)
=
h_n
\int
\Gamma_{0,h_n,b_n,a_0}(a)^2
\sigma_0^2(a)f_A(a)\,da.
\]
By Assumptions~\ref{assm:regul1}, \ref{assm:nuisance1}, and~\ref{assm:DR_and_rates},
\[
\sup_{|a-a_0|\le \max(h_n,b_n)}
E_0\left[
|\xi_\infty(O)-\theta_0(A)|^{2+\delta_2}
\mid A=a
\right]
\le C
\]
for all sufficiently large $n$. Lemma~\ref{lem:D-expansion} gives
\[
|r_{0,h_n,a_0}(a)|
\le Ch_n^{-1}I(|a-a_0|\le h_n),
\qquad
|s_{0,b_n,a_0}(a)|
\le Cb_n^{-3}I(|a-a_0|\le b_n),
\]
and $|c_{0,h_n,a_0,2}|\le C$.

Suppose first that $\tau>0$. For all sufficiently large $n$, the support of
\[
u\mapsto h_nf_A(a_0)\Gamma_{0,h_n,b_n,a_0}(a_0+h_nu)
\]
is contained in a fixed compact interval. Direct substitution gives
\[
\begin{aligned}
&h_nf_A(a_0)\Gamma_{0,h_n,b_n,a_0}(a_0+h_nu)\\
&=f_A(a_0)e_1^\top D_{0,h_n,a_0,1}^{-1}(1,u)^\top K(u)\\
&\quad-
f_A(a_0)c_{0,h_n,a_0,2}\tau_n^3e_3^\top D_{0,b_n,a_0,2}^{-1}
(1,\tau_nu,\tau_n^2u^2)^\top K(\tau_nu).
\end{aligned}
\]
By symmetry of $K$,
\[
e_1^\top S_2^{-1}(1,u)^\top=1,
\qquad
e_3^\top S_3^{-1}(1,v,v^2)^\top
=\frac{v^2-c_2}{c_4-c_2^2}.
\]
Lemma~\ref{lem:D-expansion} therefore gives, uniformly on the fixed compact interval,
\[
\begin{aligned}
&h_nf_A(a_0)\Gamma_{0,h_n,b_n,a_0}(a_0+h_nu)\\
&\quad\longrightarrow
K(u)-\tau^3c_2
\frac{(\tau u)^2-c_2}{c_4-c_2^2}K(\tau u).
\end{aligned}
\]
The change of variables $a=a_0+h_nu$ and dominated convergence yield
\[
E_0(V_{n,i}^2)
\longrightarrow
\frac{V_{K,\tau}\sigma_0^2(a_0)}{f_A(a_0)}.
\]

Suppose next that $\tau=0$. The local-linear component satisfies
\[
\begin{aligned}
&h_nP_0\left[
r_{0,h_n,a_0}(A)^2
\{\xi_\infty(O)-\theta_0(A)\}^2
\right]\\
&\quad=
\int
\{e_1^\top D_{0,h_n,a_0,1}^{-1}(1,u)^\top K(u)\}^2
\sigma_0^2(a_0+h_nu)f_A(a_0+h_nu)\,du\\
&\qquad\longrightarrow
\frac{c_0^\ast\sigma_0^2(a_0)}{f_A(a_0)}.
\end{aligned}
\]
Since $h_n\le b_n$ for all sufficiently large $n$, the preceding bounds imply
\[
\begin{aligned}
&P_0\left[
|r_{0,h_n,a_0}(A)s_{0,b_n,a_0}(A)|
\{\xi_\infty(O)-\theta_0(A)\}^2
\right]
\le Cb_n^{-3},\\
&P_0\left[
s_{0,b_n,a_0}(A)^2
\{\xi_\infty(O)-\theta_0(A)\}^2
\right]
\le Cb_n^{-5}.
\end{aligned}
\]
Consequently,
\[
\begin{aligned}
&h_n^3\left|
P_0\left[
r_{0,h_n,a_0}(A)s_{0,b_n,a_0}(A)
\{\xi_\infty(O)-\theta_0(A)\}^2
\right]
\right|
=O(\tau_n^3),\\
&h_n^5P_0\left[
s_{0,b_n,a_0}(A)^2
\{\xi_\infty(O)-\theta_0(A)\}^2
\right]
=O(\tau_n^5).
\end{aligned}
\]
Therefore
\[
E_0(V_{n,i}^2)
\longrightarrow
\frac{c_0^\ast\sigma_0^2(a_0)}{f_A(a_0)}
=
\frac{V_{K,0}\sigma_0^2(a_0)}{f_A(a_0)}.
\]

For every $\tau\in[0,\infty)$, the preceding envelope and moment bounds give
\[
\begin{aligned}
E_0|V_{n,1}|^{2+\delta_2}
&\le
C h_n^{(2+\delta_2)/2}
\left\{
h_n^{-(2+\delta_2)}h_n
+h_n^{2(2+\delta_2)}b_n^{-3(2+\delta_2)}b_n
\right\}\\
&=
C h_n^{-\delta_2/2}
\{1+\tau_n^{5+3\delta_2}\}\\
&\le
C h_n^{-\delta_2/2}.
\end{aligned}
\]
If $\sigma_0^2(a_0)>0$, then $E_0(V_{n,1}^2)$ converges to a positive finite constant and
\[
\frac{
\sum_{i=1}^n E_0|V_{n,i}|^{2+\delta_2}
}{
\left\{
\sum_{i=1}^n E_0(V_{n,i}^2)
\right\}^{1+\delta_2/2}
}
=
\frac{
n E_0|V_{n,1}|^{2+\delta_2}
}{
\left\{
n E_0(V_{n,1}^2)
\right\}^{1+\delta_2/2}
}
\le
C (nh_n)^{-\delta_2/2}
\to 0.
\]
Lyapunov's theorem therefore yields
\[
\frac{1}{\sqrt n}\sum_{i=1}^n V_{n,i}
\rightsquigarrow
N\left(
0,
\frac{V_{K,\tau}\sigma_0^2(a_0)}{f_A(a_0)}
\right).
\]
If $\sigma_0^2(a_0)=0$, then $E_0(V_{n,1}^2)\to0$, and Chebyshev's inequality gives
\[
\frac{1}{\sqrt n}\sum_{i=1}^nV_{n,i}\to_p0.
\]
The two cases prove the claim.
\end{proof}

Recall 
\[
\phi_{\infty,a_0}(O)
=
\Gamma_{0,h_n,b_n,a_0}(A)\{\xi_\infty(O)-\theta_0(A)\}
+
\psi_{n,a_0}^{sm}(O).
\]
\begin{lemma}\label{lem:smoothing_negligible_single_a}
Fix $a_0$ in the interior of the support of $A$. Under Assumptions~\ref{assm:A}, \ref{assm:B}, and~\ref{assm:D}--\ref{assm:F}, it holds that
\[
P_0\psi_{n,a_0}^{sm}=0,
\]
and
\[
(nh_n)^{1/2}
\left|
P_n\phi_{\infty,a_0}
-
P_n\left[
\Gamma_{0,h_n,b_n,a_0}(A)\{\xi_\infty(O)-\theta_0(A)\}
\right]
\right|
=
(nh_n)^{1/2}|P_n\psi_{n,a_0}^{sm}|
\to_p 0.
\]
\end{lemma}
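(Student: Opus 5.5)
The argument has two parts. First I show that $\psi^{sm}_{n,a_0}$ has mean zero. Then I show its variance is $o\{(nh_n)^{-1}\}$; Chebyshev's inequality will then give $(nh_n)^{1/2}P_n\psi^{sm}_{n,a_0}\to_p0$. The displayed equality is immediate from the decomposition $\phi_{\infty,a_0}=\Gamma_{0,h_n,b_n,a_0}(A)\{\xi_\infty(O)-\theta_0(A)\}+\psi^{sm}_{n,a_0}(O)$ recorded just before the lemma. Write $\Gamma_0,\gamma_0$ for the population weights at $(h_n,b_n,a_0)$ and split $\psi^{sm}_{n,a_0}=T_1+T_2$ with
\[
T_1:=\Gamma_0(A)\theta_0(A)-\gamma_0(A),\qquad
T_2:=\int\Gamma_0(\bar a)\{h_\infty(W,\bar a,X)-m_\infty(\bar a)\}\,dF_0(\bar a).
\]

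For the mean, $P_0T_2=0$ follows from Fubini and the definition $m_\infty(\bar a)=E_0\{h_\infty(W,\bar a,X)\}$. Here $h_\infty$ is bounded on $B_{\delta_1}(a_0)$ by \ref{assm:DR_and_rates}, and $\Gamma_0$ is supported there for large $n$. For $T_1$ I reuse the identities $P_0\gamma_{0,h,a_0}=\theta^{\mathrm{LL}}_{0,h}(a_0)$, $P_0\gamma''_{0,b,a_0}=\theta''_{0,b}(a_0)$ and $P_0\gamma^c_{0,h,a_0}=0$ from the proof of Theorem~\ref{thm:EIF of the smoothed proximal target} and Proposition~\ref{prop:theta_LL_Pathwise_derivative_part3}. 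These are pure moment-matrix identities, so they hold with $\theta_0$ in place of $\theta_P$. They give $P_0\gamma_0=\theta_{0,h_n,b_n}(a_0)=P_0\{\Gamma_0\theta_0\}$, hence $P_0T_1=0$.

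For the variance it suffices to show $h_nP_0T_1^2\to0$ and $h_nP_0T_2^2\to0$. The term $T_2$ is the easy one. By Lemma~\ref{lem:D-expansion}, $|\Gamma_0(a)|\le C\{h_n^{-1}I(|a-a_0|\le h_n)+h_n^2b_n^{-3}I(|a-a_0|\le b_n)\}$. Since $f_A$ is bounded, this gives $\int|\Gamma_0|\,dF_0\le C(1+\tau_n^2)=O(1)$. Because $h_\infty$ is bounded, $|T_2|\le 2C_h\int|\Gamma_0|\,dF_0=O(1)$, and therefore $h_nP_0T_2^2=O(h_n)\to0$.

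The main obstacle is $T_1$. Both $\Gamma_0\theta_0$ and $\gamma_0$ are of size $h_n^{-1}$ on a window of width $h_n$, so a crude bound only gives $h_nP_0T_1^2=O(1)$. I have to exploit a cancellation. I would expand $\theta_0(a)=\theta_0(a_0)+\theta_0'(a_0)(a-a_0)+R(a)$, where $\sup_{|a-a_0|\le\max(h_n,b_n)}|R(a)|=O(\max(h_n,b_n)^2)$ by \ref{assm:regul1}(a). The coefficient vectors $D^{-1}_{0,h_n,a_0,1}M_{0,h_n,a_0}$ and $D^{-1}_{0,b_n,a_0,2}M_{0,b_n,a_0}$ reproduce the local linear and quadratic parts exactly. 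So $w_{h,1}(a)^\top D^{-1}M_h=\theta_0(a_0)+\theta_0'(a_0)(a-a_0)+O(h_n^2)$ on the $h$-window, and the analogous statement holds for the quadratic fit, whose scaled third coordinate is $O(1)$. Substituting these, the polynomial parts of $\Gamma_0(a)\theta_0(a)$ and of $\gamma_{0,h,a_0}(a)-\tfrac12h_n^2c_{0,h,a_0,2}\gamma''_{0,b,a_0}(a)$ coincide. What remains is bounded by $|r_0(a)|\,O(h_n^2)+h_n^2|s_0(a)|\,O(b_n^2)$. The extra term $\tfrac12h_n^2\theta''_{0,b}(a_0)\gamma^c_{0,h,a_0}(a)$ has $|\gamma^c|\le Ch_n^{-1}$ on the window and $\theta''_{0,b}(a_0)=O(1)$, so it is $O(h_n)$. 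Hence $|T_1(a)|\le C\{h_nI(|a-a_0|\le h_n)+\tau_n^3 h_n^{-1}b_n^2\,I(|a-a_0|\le b_n)\}$. This yields $h_nP_0T_1^2\le C(h_n^4+\tau_n^6b_n^{2}\,\tau_n)\to0$, using $\tau_n\to\tau<\infty$ and $b_n\to0$. If the reproduction argument is delicate when $\tau=0$, I would treat the local-quadratic correction separately using $h_n^2b_n^{-3}\cdot b_n^2\cdot b_n^{1/2}$-type bounds, exactly as in the $\tau=0$ case of Lemma~\ref{lem:single_a_leading_term}. Finally, Chebyshev gives $(nh_n)^{1/2}|P_n\psi^{sm}_{n,a_0}|=O_p\bigl(\{h_nP_0(\psi^{sm}_{n,a_0})^2\}^{1/2}\bigr)=o_p(1)$.
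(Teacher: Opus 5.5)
Your proposal is correct and follows essentially the same route as the paper. It proves mean zero via $P_0\gamma_{0,h,a_0}=\theta^{\mathrm{LL}}_{0,h}(a_0)$, $P_0\gamma''_{0,b,a_0}=\theta''_{0,b}(a_0)$ and $P_0\gamma^c_{0,h,a_0}=0$. It then splits $\psi^{sm}_{n,a_0}$ into $\Gamma_0\theta_0-\gamma_0$ and the bounded integral term, writing the former as $r_0$ and $s_0$ times the local-linear and local-quadratic fit residuals plus the $\gamma^c$ term, each of size $O(h_n)$ on its window.

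One harmless arithmetic slip: the local-quadratic contribution to $h_nP_0T_1^2$ is of order $h_n(h_n^2/b_n)^2b_n=\tau_nh_n^4$, not $\tau_n^7b_n^2$. This correct bound still tends to zero, so the conclusion is unaffected.
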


\begin{proof}
For notational simplicity, write
\[
\Gamma_n(a) := \Gamma_{0,h_n,b_n,a_0}(a), \qquad
\gamma_n(a) := \gamma_{0,h_n,b_n,a_0}(a),
\]
\[
r_n(a) := r_{0,h_n,a_0}(a), \qquad
s_n(a) := s_{0,b_n,a_0}(a), \qquad
c_n := c_{0,h_n,a_0,2},
\]
\[
\gamma_{n,h}(a) := \gamma_{0,h_n,a_0}(a), \qquad
\gamma_{n,b}^{\prime\prime}(a) := \gamma_{0,b_n,a_0}^{\prime\prime}(a), \qquad
\gamma_{n,h}^{c}(a) := \gamma_{0,h_n,a_0}^{c}(a),
\]
and
\[
\theta_{n,b}^{\prime\prime}(a_0) := \theta_{0,b_n}^{\prime\prime}(a_0).
\]
Then, note that
\[
\Gamma_n(a) = r_n(a) - \frac{1}{2} h_n^2 c_n s_n(a),
\]
and
\[
\gamma_n(a)
=
\gamma_{n,h}(a)
-
\frac{1}{2} h_n^2 c_n \gamma_{n,b}^{\prime\prime}(a)
+
\frac{1}{2} h_n^2 \theta_{n,b}^{\prime\prime}(a_0)\gamma_{n,h}^{c}(a).
\]
Since
\[
\phi_{\infty,a_0}(O)
=
\Gamma_n(A)\{\xi_\infty(O)-\theta_0(A)\}
+
\psi_{n,a_0}^{sm}(O),
\]
it only remains to prove that
\[
(nh_n)^{1/2}|P_n\psi_{n,a_0}^{sm}|\to_p 0.
\]

We first show that
\[
P_0\psi_{n,a_0}^{sm}=0.
\]
Indeed,
\[
P_0\psi_{n,a_0}^{sm}
=
P_0\{\Gamma_n(A)\theta_0(A)\}
-
P_0\{\gamma_n(A)\}
+
\int
\Gamma_n(\bar a)
E_0\{h_\infty(W,\bar a,X)-m_\infty(\bar a)\}
\,dF_0(\bar a).
\]
Since
\[
m_\infty(\bar a)=E_0\{h_\infty(W,\bar a,X)\},
\]
the integral term is zero. Therefore it suffices to show that
\[
P_0\{\Gamma_n(A)\theta_0(A)\}=P_0\{\gamma_n(A)\}.
\]
By the definition of $\Gamma_n$, we have
\[
P_0\{\Gamma_n(A)\theta_0(A)\}
=
\theta_{0,h_n}^{LL}(a_0)
-
\frac{1}{2} h_n^2 c_n \theta_{n,b}^{\prime\prime}(a_0).
\]
Next, we obtain
\[
\begin{aligned}
P_0\{\gamma_{n,h}(A)\}
={}&
e_1^\top D_{0,h_n,a_0,1}^{-1}
P_0\{w_{h_n,a_0,1}(A)w_{h_n,a_0,1}(A)^\top K_{h_n,a_0}(A)\}
D_{0,h_n,a_0,1}^{-1}M_{0,h_n,a_0}\\
=&
\theta_{0,h_n}^{LL}(a_0).
\end{aligned}
\]
because
\[
P_0\{w_{h_n,a_0,1}(A)w_{h_n,a_0,1}(A)^\top K_{h_n,a_0}(A)\}
=
D_{0,h_n,a_0,1}.
\]
Similarly, we get
\[
P_0\{\gamma_{n,b}^{\prime\prime}(A)\}
=
2b_n^{-2}e_3^\top D_{0,b_n,a_0,2}^{-1}M_{0,b_n,a_0}
=
\theta_{n,b}^{\prime\prime}(a_0).
\]
Also, by the definition of $\gamma_{0,h_n,a_0}^{c}$, we get $P_0\{\gamma_{n,h}^{c}(A)\}=0$.
Hence we have
\begin{align*}
P_0\{\gamma_n(A)\}
&=
P_0\{\gamma_{n,h}(A)\}
-
\frac{1}{2}h_n^2 c_n P_0\{\gamma_{n,b}^{\prime\prime}(A)\}
+
\frac{1}{2}h_n^2\theta_{n,b}^{\prime\prime}(a_0)P_0\{\gamma_{n,h}^{c}(A)\}\\
&=
\theta_{0,h_n}^{LL}(a_0)
-
\frac{1}{2}h_n^2 c_n \theta_{n,b}^{\prime\prime}(a_0)\\
&=
P_0\{\Gamma_n(A)\theta_0(A)\}.
\end{align*}
Therefore $P_0\psi_{n,a_0}^{sm}=0$.

Now write
\[
\psi_{n,a_0}^{sm}(O)=U_n(O)+V_n(O),
\]
where
\[
U_n(O):=\Gamma_n(A)\theta_0(A)-\gamma_n(A)
\]
and
\[
V_n(O):=
\int
\Gamma_n(\bar a)\{h_\infty(W,\bar a,X)-m_\infty(\bar a)\}
\,dF_0(\bar a).
\]
Then
\[
P_0(\psi_{n,a_0}^{sm})^2
\le 2P_0U_n^2+2P_0V_n^2.
\]

We first bound $P_0V_n^2$. By boundedness of $h_\infty$ on a neighbourhood of
$a_0$, there exists a finite constant $C_h$ such that
\[
|h_\infty(W,\bar a,X)-m_\infty(\bar a)|\le 2C_h
\]
for all $\bar a$ in the support of $\Gamma_n$. Therefore
\[
|V_n(O)|\le 2C_h \int |\Gamma_n(\bar a)|\,dF_0(\bar a).
\]
We next show that
\[
\int |\Gamma_n(\bar a)|\,dF_0(\bar a)\le C
\]
for a constant $C$ not depending on $n$. By Lemma \ref{lem:D-expansion}, we have
\[
\|D_{0,h_n,a_0,1}^{-1}\|_\infty \le C, \qquad
\|D_{0,b_n,a_0,2}^{-1}\|_\infty \le C, \qquad
|c_n|\le C
\]
for all sufficiently large $n$. Since $K$ is supported on $[-1,1]$, the coordinates of
$w_{h_n,a_0,1}(a)$ are uniformly bounded on the support of $K_{h_n,a_0}$, and the
coordinates of $w_{b_n,a_0,2}(a)$ are uniformly bounded on the support of
$K_{b_n,a_0}$. Hence there exists $C<\infty$ such that
\[
|r_n(a)|\le C h_n^{-1} I(|a-a_0|\le h_n),
\]
and
\[
|s_n(a)|\le C b_n^{-3} I(|a-a_0|\le b_n).
\]
Therefore
\[
|\Gamma_n(a)|
\le
C h_n^{-1} I(|a-a_0|\le h_n)
+
C h_n^2 b_n^{-3} I(|a-a_0|\le b_n).
\]
Since $f_A$ is bounded on a neighborhood of $a_0$ by Assumption~\ref{assm:regul1},
\[
\int |\Gamma_n(a)|\,dF_0(a)
\le
C h_n^{-1}\int_{|a-a_0|\le h_n} f_A(a)\,da
+
C h_n^2 b_n^{-3}\int_{|a-a_0|\le b_n} f_A(a)\,da
\]
\[
\le
C
+
C h_n^2 b_n^{-2}
=
C
+
C(h_n/b_n)^2
\le C,
\]
because $h_n/b_n\to \tau\in[0,\infty)$ by Assumption~\ref{assm:bandwidths}. Consequently,
we get $|V_n(O)|\le C$, which further yields
$P_0V_n^2\le C$.

We next bound $P_0U_n^2$. Write
\[
U_n(a)=U_{n,1}(a)+U_{n,2}(a)+U_{n,3}(a),
\]
where
\[
U_{n,1}(a):=r_n(a)\{\theta_0(a)-w_{h_n,a_0,1}(a)^\top\beta_{n,h}\},
\]
\[
U_{n,2}(a):=
-\frac{1}{2}h_n^2 c_n s_n(a)
\{\theta_0(a)-w_{b_n,a_0,2}(a)^\top\beta_{n,b}\},
\]
\[
U_{n,3}(a):=
-\frac{1}{2}h_n^2\theta_{n,b}^{\prime\prime}(a_0)\gamma_{n,h}^{c}(a),
\]
and
\[
\beta_{n,h}:=D_{0,h_n,a_0,1}^{-1}M_{0,h_n,a_0},\qquad
\beta_{n,b}:=D_{0,b_n,a_0,2}^{-1}M_{0,b_n,a_0}.
\]

We begin with $U_{n,1}$. Define
\[
\beta_{n,h}^\star:=
\begin{pmatrix}
\theta_0(a_0)\\
h_n\theta_0'(a_0)
\end{pmatrix}.
\]
Then
\[
\beta_{n,h}-\beta_{n,h}^\star
=
D_{0,h_n,a_0,1}^{-1}
P_0\left[
w_{h_n,a_0,1}(A)K_{h_n,a_0}(A)
\{\theta_0(A)-w_{h_n,a_0,1}(A)^\top\beta_{n,h}^\star\}
\right].
\]
For $|A-a_0|\le h_n$, Taylor's expansion yields
\[
\theta_0(A)-w_{h_n,a_0,1}(A)^\top\beta_{n,h}^\star
=
\theta_0(A)-\theta_0(a_0)-\theta_0'(a_0)(A-a_0)
=
\frac{1}{2}(A-a_0)^2\theta_0''(\widetilde A)
\]
for some $\widetilde A$ between $A$ and $a_0$. Since $\theta_0''$ is bounded on
$B_{\delta_1}(a_0)$ by Assumption~\ref{assm:regul1}, there exists $C<\infty$ such that
\[
|\theta_0(A)-w_{h_n,a_0,1}(A)^\top\beta_{n,h}^\star|
\le Ch_n^2
\]
whenever $|A-a_0|\le h_n$. Therefore
\[
\|\beta_{n,h}-\beta_{n,h}^\star\|_1\le Ch_n^2.
\]
Now if $|a-a_0|\le h_n$, then the coordinates of $w_{h_n,a_0,1}(a)$ are uniformly
bounded, so
\[
\begin{aligned}
|\theta_0(a)-w_{h_n,a_0,1}(a)^\top\beta_{n,h}|
{}&\le
|\theta_0(a)-w_{h_n,a_0,1}(a)^\top\beta_{n,h}^\star|
+
|w_{h_n,a_0,1}(a)^\top(\beta_{n,h}-\beta_{n,h}^\star)|
\\
&\le Ch_n^2.
\end{aligned}
\]
Combining this with
\[
|r_n(a)|\le Ch_n^{-1}I(|a-a_0|\le h_n),
\]
we obtain
\[
|U_{n,1}(a)|\le Ch_n I(|a-a_0|\le h_n).
\]
Hence
\[
P_0U_{n,1}^2
\le
Ch_n^2 P_0(|A-a_0|\le h_n)
\le
Ch_n^2 \int_{|a-a_0|\le h_n} f_A(a)\,da
=
O(h_n^3).
\]

For $U_{n,2}$, define
\[
\beta_{n,b}^\star:=
\begin{pmatrix}
\theta_0(a_0)\\
b_n\theta_0'(a_0)\\
\frac{1}{2}b_n^2\theta_0''(a_0)
\end{pmatrix}.
\]
Then
\[
\beta_{n,b}-\beta_{n,b}^\star
=
D_{0,b_n,a_0,2}^{-1}
P_0\left[
w_{b_n,a_0,2}(A)K_{b_n,a_0}(A)
\{\theta_0(A)-w_{b_n,a_0,2}(A)^\top\beta_{n,b}^\star\}
\right].
\]
Since $\theta_0$ is twice continuously differentiable on $B_{\delta_1}(a_0)$,
second-order Taylor expansion yields
\[
\theta_0(A)-w_{b_n,a_0,2}(A)^\top\beta_{n,b}^\star
=
\frac{1}{2}(A-a_0)^2
\{\theta_0''(\widetilde A)-\theta_0''(a_0)\}
\]
for some $\widetilde A$ between $A$ and $a_0$. Because $\theta_0''$ is continuous on
$B_{\delta_1}(a_0)$, it is bounded there, and so
\[
|\theta_0(A)-w_{b_n,a_0,2}(A)^\top\beta_{n,b}^\star|
\le Cb_n^2
\]
for $|A-a_0|\le b_n$. Hence
\[
\|\beta_{n,b}-\beta_{n,b}^\star\|_1\le Cb_n^2.
\]
If $|a-a_0|\le b_n$, the coordinates of $w_{b_n,a_0,2}(a)$ are uniformly bounded,
and therefore
\[
|\theta_0(a)-w_{b_n,a_0,2}(a)^\top\beta_{n,b}|
\le Cb_n^2.
\]
Using
\[
|s_n(a)|\le Cb_n^{-3}I(|a-a_0|\le b_n)
\]
and $|c_n|\le C$, we get
\[
|U_{n,2}(a)|
\le
C h_n^2 b_n^{-1} I(|a-a_0|\le b_n).
\]
Since $h_n/b_n=O(1)$ by Assumption~\ref{assm:bandwidths}, this implies
\[
|U_{n,2}(a)|\le Ch_n I(|a-a_0|\le b_n).
\]
Consequently,
\[
P_0U_{n,2}^2
\le
Ch_n^2 P_0(|A-a_0|\le b_n)
=
O(h_n^2 b_n).
\]

For $U_{n,3}$, we first show that $\theta_{n,b}^{\prime\prime}(a_0)$ is bounded.
Since
\[
\beta_{n,b}=\beta_{n,b}^\star+O(b_n^2),
\]
its third coordinate satisfies
\[
e_3^\top\beta_{n,b}
=
\frac{1}{2}b_n^2\theta_0''(a_0)+O(b_n^2)
=
O(b_n^2).
\]
Therefore
\[
\theta_{n,b}^{\prime\prime}(a_0)
=
2b_n^{-2}e_3^\top\beta_{n,b}
=
O(1).
\]
Next, by the definition of $\gamma_{n,h}^{c}$, we get
\begin{align*}
&\gamma_{n,h}^{c}(a)\\
&=
e_1^\top D_{0,h_n,a_0,1}^{-1}
\Bigl[
\widetilde w_{h_n,a_0,1}(a)-
w_{h_n,a_0,1}(a)w_{h_n,a_0,1}(a)^\top
D_{0,h_n,a_0,1}^{-1}
P_0\{\widetilde w_{h_n,a_0,1}(A)K_{h_n,a_0}(A)\}
\Bigr]
K_{h_n,a_0}(a).
\end{align*}
The bracketed term is uniformly bounded on the support of $K_{h_n,a_0}$, and
$D_{0,h_n,a_0,1}^{-1}$ is uniformly bounded by Lemma \ref{lem:D-expansion}. Hence
\[
|\gamma_{n,h}^{c}(a)|
\le
Ch_n^{-1}I(|a-a_0|\le h_n).
\]
Since $\theta_{n,b}^{\prime\prime}(a_0)=O(1)$, it follows that
\[
|U_{n,3}(a)|
\le
Ch_n I(|a-a_0|\le h_n),
\]
and therefore
\[
P_0U_{n,3}^2
=
O(h_n^3).
\]

Combining the three bounds,
\[
P_0U_n^2
\le
3P_0U_{n,1}^2+3P_0U_{n,2}^2+3P_0U_{n,3}^2
=
O(h_n^3)+O(h_n^2b_n)+O(h_n^3).
\]
Hence
\[
P_0(\psi_{n,a_0}^{sm})^2
\le
2P_0U_n^2+2P_0V_n^2
\le C
\]
for all sufficiently large $n$.

Since $P_0\psi_{n,a_0}^{sm}=0$,
\[
E_0\left[
\left\{
(nh_n)^{1/2}P_n\psi_{n,a_0}^{sm}
\right\}^2
\right]
=
nh_n E_0\left[
\left(
\frac{1}{n}\sum_{i=1}^n \psi_{n,a_0}^{sm}(O_i)
\right)^2
\right]
=
h_n P_0(\psi_{n,a_0}^{sm})^2
\to 0.
\]
Thus
\[
(nh_n)^{1/2}P_n\psi_{n,a_0}^{sm}\to 0
\]
in $L_2(P_0)$ and hence in probability, which proves the claim.
\end{proof}

\begin{lemma}\label{lem:multiple_a_leading_term}
Fix distinct points $a_1,\ldots,a_m$ in the interior of the support of $A$.
Assume Assumptions~\ref{assm:A}, \ref{assm:B}, and~\ref{assm:D}--\ref{assm:F} hold at each $a_k$ for the same deterministic pair $(q_\infty,h_\infty)$. Then
\[
(nh_n)^{1/2}
\begin{pmatrix}
P_n\left[
\Gamma_{0,h_n,b_n,a_1}(A)\{\xi_{\infty}(O)-\theta_0(A)\}
\right]
\\
\vdots
\\
P_n\left[
\Gamma_{0,h_n,b_n,a_m}(A)\{\xi_{\infty}(O)-\theta_0(A)\}
\right]
\end{pmatrix}
\]
converges in distribution to a mean-zero multivariate normal vector with diagonal
covariance matrix
\[
\operatorname{diag}
\left(
\frac{V_{K,\tau}\sigma_0^2(a_1)}{f_A(a_1)},
\ldots,
\frac{V_{K,\tau}\sigma_0^2(a_m)}{f_A(a_m)}
\right).
\]
\end{lemma}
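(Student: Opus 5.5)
We use the Cramér--Wold device and reduce to the univariate argument of Lemma~\ref{lem:single_a_leading_term}. Fix $t=(t_1,\ldots,t_m)^\top\in\mathbb R^m$ and set
\[
V_{n,i}:=h_n^{1/2}\sum_{j=1}^m t_j\,\Gamma_{0,h_n,b_n,a_j}(A_i)\{\xi_\infty(O_i)-\theta_0(A_i)\},
\]
so that $t^\top$ times the displayed vector equals $n^{-1/2}\sum_{i=1}^n V_{n,i}$. The $V_{n,i}$ are IID for fixed $n$. Because the same pair $(q_\infty,h_\infty)$ is used at every $a_j$, the doubly robust identity gives $E_0\{\xi_\infty(O)\mid A=a\}=\theta_0(a)$ on each $B_{\delta_1}(a_j)$. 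Hence $E_0V_{n,i}=0$, exactly as in the single-point case.

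\textbf{Variance.} Expanding $E_0V_{n,1}^2$ gives diagonal terms and cross terms. Each diagonal term $h_nP_0[\Gamma_{0,h_n,b_n,a_j}(A)^2\{\xi_\infty-\theta_0(A)\}^2]$ converges to $V_{K,\tau}\sigma_0^2(a_j)/f_A(a_j)$ by the computation in the proof of Lemma~\ref{lem:single_a_leading_term}, applied at $a_j$. For the cross terms, we use the support bound from Lemma~\ref{lem:D-expansion}: $\Gamma_{0,h_n,b_n,a_j}$ vanishes outside $\{|a-a_j|\le\max(h_n,b_n)\}$. Let $\Delta:=\min_{j\ne l}|a_j-a_l|>0$. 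Since $h_n,b_n\to0$, for $n$ large we have $2\max(h_n,b_n)<\Delta$, so the supports are disjoint and every cross term vanishes identically. Therefore
\[
E_0V_{n,1}^2\to\sum_j t_j^2V_{K,\tau}\sigma_0^2(a_j)/f_A(a_j).
\]

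\textbf{Lyapunov condition.} By Minkowski's inequality,
\[
E_0|V_{n,1}|^{2+\delta_2}\le \Bigl(\sum_j |t_j|\,\|h_n^{1/2}\Gamma_{0,h_n,b_n,a_j}(A)\{\xi_\infty(O)-\theta_0(A)\}\|_{2+\delta_2}\Bigr)^{2+\delta_2}.
\]
Each norm is of order $h_n^{-\delta_2/(2(2+\delta_2))}$ by the envelope bounds of Lemma~\ref{lem:single_a_leading_term}, so $E_0|V_{n,1}|^{2+\delta_2}\le Ch_n^{-\delta_2/2}$. If the limiting variance is positive, the Lyapunov ratio is $O\{(nh_n)^{-\delta_2/2}\}\to0$, and $n^{-1/2}\sum_iV_{n,i}$ converges to $N(0,\sum_j t_j^2V_{K,\tau}\sigma_0^2(a_j)/f_A(a_j))$. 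If the limiting variance is zero, Chebyshev's inequality gives convergence to $0$, which matches the degenerate normal.

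\textbf{Conclusion and main check.} Since $t$ is arbitrary, the Cramér--Wold device yields the multivariate normal limit with the stated diagonal covariance. The only delicate point is that the moment and envelope bounds hold uniformly on a neighbourhood of each $a_j$ simultaneously. This follows because the assumptions are imposed at each $a_j$ separately and $m$ is finite, so we may take the maximum of the constants and the minimum of the radii $\delta_1$.
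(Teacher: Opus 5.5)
Your proposal is correct and follows essentially the same route as the paper: the Cram\'er--Wold device, eventual disjointness of the supports of the $\Gamma_{0,h_n,b_n,a_j}$ to remove the cross terms, the single-point variance computation of Lemma~\ref{lem:single_a_leading_term} for each diagonal term, and a Lyapunov check with a Chebyshev fallback in the zero-variance case. The differences are cosmetic. You bound the $(2+\delta_2)$-th moment with Minkowski's inequality, whereas the paper uses a pointwise bound that relies on disjoint supports. You also cite Lemma~\ref{lem:D-expansion} for the vanishing of $\Gamma_{0,h_n,b_n,a_j}$ outside $\{|a-a_j|\le\max(h_n,b_n)\}$; that vanishing actually comes from $K$ being supported on $[-1,1]$, and Lemma~\ref{lem:D-expansion} only supplies the envelope constants.
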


\begin{proof}
We use the Cram\'er--Wold device. Let $t_1,\ldots,t_m \in \mathbb R$ be fixed and
define
\[
X_{n,i}(t)
:=
h_n^{1/2}
\sum_{k=1}^m
t_k
\Gamma_{0,h_n,b_n,a_k}(A_i)\{\xi_{\infty}(O_i)-\theta_0(A_i)\}.
\]
Then
\[
(nh_n)^{1/2}
\sum_{k=1}^m
t_k
P_n\left[
\Gamma_{0,h_n,b_n,a_k}(A)\{\xi_{\infty}(O)-\theta_0(A)\}
\right]
=
\frac{1}{\sqrt n}\sum_{i=1}^n X_{n,i}(t).
\]
Let
\[
\Delta
:=
\frac{1}{2}\min_{1\le r<s\le m}|a_r-a_s|>0.
\]
Since $h_n\to 0$ and $b_n\to 0$, for all sufficiently large $n$ we have
$\max\{h_n,b_n\}<\Delta$.
Because $K$ is supported on $[-1,1]$, the supports of
\[
a\mapsto \Gamma_{0,h_n,b_n,a_r}(a)
\quad\text{and}\quad
a\mapsto \Gamma_{0,h_n,b_n,a_s}(a)
\]
are disjoint whenever $r\neq s$ and $n$ is large enough.
Hence
\[
E_0\{X_{n,i}(t)\}=0
\]
and
\begin{equation*}
\begin{split}
E_0\{X_{n,i}(t)^2\}
&=
h_n
E_0\left[
\left\{
\sum_{k=1}^m
t_k
\Gamma_{0,h_n,b_n,a_k}(A)\{\xi_{\infty}(O)-\theta_0(A)\}
\right\}^2
\right]\\
&=
\sum_{k=1}^m
t_k^2
h_n
E_0\left[
\Gamma_{0,h_n,b_n,a_k}(A)^2
\{\xi_{\infty}(O)-\theta_0(A)\}^2
\right].
\end{split}
\end{equation*}
By the variance calculation in the proof of Lemma~\ref{lem:single_a_leading_term},
\[
E_0\{X_{n,i}(t)^2\}
\to
\sum_{k=1}^m
t_k^2
\frac{V_{K,\tau}\sigma_0^2(a_k)}{f_A(a_k)}.
\]
If the limit in the preceding display is zero, then Chebyshev's inequality gives
\[
\frac{1}{\sqrt n}\sum_{i=1}^nX_{n,i}(t)\to_p0.
\]
Suppose that the limit is positive.
The Lyapunov condition is verified exactly as in the proof of Lemma \ref{lem:single_a_leading_term}. Indeed,
using disjoint supports,
\[
|X_{n,1}(t)|^{2+\delta_2}
\le
C
\sum_{k=1}^m
|t_k|^{2+\delta_2}
h_n^{(2+\delta_2)/2}
|\Gamma_{0,h_n,b_n,a_k}(A)|^{2+\delta_2}
|\xi_{\infty}(O)-\theta_0(A)|^{2+\delta_2},
\]
and the same calculation as before yields
\[
E_0|X_{n,1}(t)|^{2+\delta_2}
\le
C h_n^{-\delta_2/2}.
\]
Therefore
\[
\frac{
\sum_{i=1}^n E_0|X_{n,i}(t)|^{2+\delta_2}
}{
\left\{
\sum_{i=1}^n E_0(X_{n,i}(t)^2)
\right\}^{1+\delta_2/2}
}
\le
C (nh_n)^{-\delta_2/2}
\to 0.
\]
Lyapunov's theorem gives the scalar normal limit. The zero-variance case and the positive-variance case apply to every $(t_1,\ldots,t_m)$. The Cram\'er--Wold device completes the proof.
\end{proof}

\subsection{Asymptotic confidence band}

\begin{lemma}\label{lem:limiting_variance}
Let $h=h_n$ and $b=b_n$. Under Assumptions~\ref{assm:A}, \ref{assm:B},
\ref{assm:D}, and~\ref{assm:G},
\[
\sup_{a_0\in\mathcal A_0}
\left|
\sigma_{\infty,h,b}^2(a_0)
-
\frac{V_{K,\tau}\sigma_0^2(a_0)}{f_A(a_0)}
\right|
\longrightarrow0.
\]
Consequently, there exist constants $0<c<C<\infty$ such that, for all
sufficiently large $n$,
\[
c\le \inf_{a_0\in\mathcal A_0}\sigma_{\infty,h,b}^2(a_0)
\le \sup_{a_0\in\mathcal A_0}\sigma_{\infty,h,b}^2(a_0)\le C.
\]
\end{lemma}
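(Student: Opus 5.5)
The plan is to decompose $\phi_{\infty,h,b,a_0}=L_{a_0}+\psi^{sm}_{n,a_0}$ with
\[
L_{a_0}(O):=\Gamma_{0,h,b,a_0}(A)\{\xi_\infty(O)-\theta_0(A)\},
\]
as in Lemma~\ref{lem:smoothing_negligible_single_a}. Since $E_0\{\xi_\infty(O)-\theta_0(A)\mid A\}=0$ on $\mathcal A_{\delta_3}$ by the product-bias identity of Theorem~\ref{thm:DR_for_PO} applied under Assumption~\ref{assm:G-DR-rates}, the cross term $P_0\{L_{a_0}\psi^{sm}_{n,a_0}\}$ vanishes. This is because $\psi^{sm}_{n,a_0}$ splits as $U_n(A)+V_n(W,X)$, and
\[
E_0\{(\xi_\infty-\theta_0(A))V_n\mid A\}
\]
need not vanish. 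So I will not claim orthogonality. Instead I will bound the cross term by Cauchy--Schwarz: $hP_0\{L\psi^{sm}\}\le \{hP_0L^2\}^{1/2}\{hP_0(\psi^{sm})^2\}^{1/2}$. Then it suffices to show two things: $\sup_{a_0}|hP_0L_{a_0}^2-V_{K,\tau}\sigma_0^2(a_0)/f_A(a_0)|\to0$, and $\sup_{a_0}hP_0(\psi^{sm}_{n,a_0})^2\to0$.

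For the second claim, I will redo the bounds in the proof of Lemma~\ref{lem:smoothing_negligible_single_a} with constants uniform in $a_0\in\mathcal A_0$. The uniform part of Lemma~\ref{lem:D-expansion} gives $\sup_{a_0}\|D^{-1}_{0,h,a_0,1}\|,\ \sup_{a_0}\|D^{-1}_{0,b,a_0,2}\|,\ \sup_{a_0}|c_{0,h,a_0,2}|\le C$. Assumption~\ref{assm:G-smooth}(i) gives a uniform bound on $\theta_0''$ over $\mathcal A_{\delta_3}$. The bounds $|h_\infty|\le C_h$ and $\sup f_A<\infty$ are uniform, and $\tau_n$ is bounded. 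Hence $\sup_{a_0}P_0(\psi^{sm}_{n,a_0})^2\le C$, and multiplying by $h\to0$ gives the claim.

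For the leading term, I will write
\[
hP_0L_{a_0}^2=\int \bigl\{h f_A(a_0)\Gamma_{0,h,b,a_0}(a_0+hu)\bigr\}^2\,\frac{\sigma_0^2(a_0+hu)f_A(a_0+hu)}{f_A(a_0)^2}\,du,
\]
following the $\tau>0$ case of Lemma~\ref{lem:single_a_leading_term}. Note that $\tau\in(0,\infty)$ under Assumption~\ref{assm:G-bandwidth}, and that $u$ ranges over the fixed compact set $[-1,\sup_n\tau_n^{-1}]$ symmetrized. There are two ingredients to make uniform. First, the rescaled weight $hf_A(a_0)\Gamma_{0,h,b,a_0}(a_0+hu)$ converges to $K(u)-\tau^3c_2\{(\tau u)^2-c_2\}K(\tau u)/(c_4-c_2^2)$ uniformly in $(u,a_0)$. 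This follows from the explicit formula in that proof together with the uniform $O(h)$ and $O(b)$ expansions of Lemma~\ref{lem:D-expansion}, $|\tau_n-\tau|\to0$, and Lipschitz continuity of $K$, which handles $K(\tau_nu)-K(\tau u)$. Second, $\sigma_0^2$ is continuous on the compact $\mathcal A_{\delta_3}$, hence uniformly continuous, and $f_A$ is Lipschitz with $\inf f_A>0$. Therefore $\sup_{a_0,|u|\le C}|\sigma_0^2(a_0+hu)f_A(a_0+hu)-\sigma_0^2(a_0)f_A(a_0)|\to0$. Bounding the integrand differences by uniform constants over a fixed compact $u$-range then gives the uniform convergence of $hP_0L^2_{a_0}$.

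Combining the pieces, $\sigma^2_{\infty,h,b}=hP_0L^2+2hP_0(L\psi^{sm})+hP_0(\psi^{sm})^2$, and the last two terms are $o(1)$ uniformly. The bounds in the second display follow because $V_{K,\tau}>0$ (the integrand is a nonzero continuous function, e.g.\ near $|u|$ close to $1/\tau$ or at $u=0$ where it need not vanish), $\inf_{\mathcal A_0}\sigma_0^2>0$ by Assumption~\ref{assm:G-smooth}(iv), and $f_A$ is bounded above and below. I expect the main obstacle to be checking that every constant in Lemmas~\ref{lem:single_a_leading_term} and \ref{lem:smoothing_negligible_single_a} is genuinely uniform in $a_0$, especially the Taylor remainders. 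Here I would use the uniform continuity of $\theta_0''$ on $\mathcal A_{\delta_3}$ and the requirement that $a_0\pm\max(h,b)\in\mathcal A_{\delta_3}$ for large $n$.
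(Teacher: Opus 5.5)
Your proposal is correct and follows essentially the same route as the paper. You write $\phi_{\infty,h,b,a_0}=\Gamma_{0,h,b,a_0}(A)\{\xi_\infty-\theta_0(A)\}+\psi^{sm}_{n,a_0}$, make the bounds of Lemma~\ref{lem:smoothing_negligible_single_a} uniform to get $\sup_{a_0\in\mathcal A_0}P_0(\psi^{sm}_{n,a_0})^2=O(1)$, absorb the cross term by Cauchy--Schwarz at cost $O(h^{1/2})$, and take the limit of $h\int\Gamma_{0,h,b,a_0}(a)^2\sigma_0^2(a)f_A(a)\,da$ using the uniform rescaled-kernel expansion from Lemma~\ref{lem:D-expansion} and uniform continuity of $\sigma_0^2f_A$ on $\mathcal A_{\delta_3}$.

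The one loose step is the positivity of $V_{K,\tau}$. Saying the weight ``need not vanish'' is not a proof, and the paper's argument should replace it. For $\tau\neq1$, the essential supports of $K(u)$ and $K(\tau u)$ differ, so the combined weight is nonzero on a set of positive measure. For $\tau=1$, its vanishing would force $u^2=c_4/c_2$ on $\{K>0\}$, which contradicts $c_4>c_2^2$.
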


\begin{proof}
The identity
\[
\phi_{\infty,h,b,a_0}
=
\Gamma_{0,h,b,a_0}(A)\{\xi_\infty-\theta_0(A)\}
+\psi_{n,a_0}^{sm}
\]
holds. The local-polynomial normal equations, Taylor's expansion, and
Lemma~\ref{lem:D-expansion} give
\[
\sup_{a_0\in\mathcal A_0}
P_0\left[
\{\Gamma_{0,h,b,a_0}(A)\theta_0(A)
-\gamma_{0,h,b,a_0}(A)\}^2
\right]=O(h)
\]
and
\[
\sup_{a_0\in\mathcal A_0}
P_0\left[
\left\{
\int\Gamma_{0,h,b,a_0}(\bar a)
\{h_\infty(W,\bar a,X)-m_\infty(\bar a)\}\,dF_0(\bar a)
\right\}^2
\right]=O(1).
\]
These are the bounds in the proof of Lemma~5 of
\citet{Takatsu2025debiased}, with $\mu_\infty(a,W)$ replaced by
$h_\infty(W,a,X)$. Assumptions~\ref{assm:G-DR-rates}\textup{(i)}
and~\ref{assm:G-smooth} make the bounds uniform over
$a_0\in\mathcal A_0$. Hence
\[
\sup_{a_0\in\mathcal A_0}P_0\{(\psi_{n,a_0}^{sm})^2\}=O(1).
\]
The kernel support, Lemma~\ref{lem:D-expansion}, and
$\tau_n=h/b\to\tau\in(0,\infty)$ also give
\[
\sup_{a_0\in\mathcal A_0}
hP_0\left[
\Gamma_{0,h,b,a_0}(A)^2\{\xi_\infty-\theta_0(A)\}^2
\right]=O(1).
\]
Therefore, by the Cauchy--Schwarz inequality,
\[
\sup_{a_0\in\mathcal A_0}
\left|
\sigma_{\infty,h,b}^2(a_0)
-
h\int\Gamma_{0,h,b,a_0}(a)^2\sigma_0^2(a)f_A(a)\,da
\right|
=O(h^{1/2}).
\]

Lemma~\ref{lem:D-expansion} yields
\[
h\Gamma_{0,h,b,a_0}(a_0+hu)
=
\frac{1}{f_A(a_0)}
\left\{
K(u)
-\tau_n^3c_2
\frac{(\tau_nu)^2-c_2}{c_4-c_2^2}K(\tau_nu)
\right\}
+r_h(a_0,u),
\]
where, for some fixed $C<\infty$,
\[
\sup_{a_0\in\mathcal A_0}\sup_{u\in[-C,C]}
|r_h(a_0,u)|\longrightarrow0,
\]
and the preceding functions vanish outside $[-C,C]$ for all sufficiently
large $n$. The change of variables $a=a_0+hu$ gives
\begin{align*}
&h\int\Gamma_{0,h,b,a_0}(a)^2\sigma_0^2(a)f_A(a)\,da\\
&=
\int
\left[
\frac{1}{f_A(a_0)}
\left\{
K(u)
-\tau_n^3c_2
\frac{(\tau_nu)^2-c_2}{c_4-c_2^2}K(\tau_nu)
\right\}
+r_h(a_0,u)
\right]^2
\sigma_0^2(a_0+hu)f_A(a_0+hu)\,du.
\end{align*}
Continuity on the compact set $\mathcal A_{\delta_3}$ implies
\[
\sup_{a_0\in\mathcal A_0}\sup_{u\in[-C,C]}
\left|
\sigma_0^2(a_0+hu)f_A(a_0+hu)
-\sigma_0^2(a_0)f_A(a_0)
\right|\longrightarrow0.
\]
Dominated convergence and $\tau_n\to\tau\in(0,\infty)$ prove the asserted
uniform convergence.

Jensen's inequality gives $c_4>c_2^2$. If $\tau\ne1$, the essential supports
of $K(u)$ and $K(\tau u)$ differ. If $\tau=1$, a zero value of $V_{K,1}$
would imply $u^2=c_4/c_2$ for $K(u)\,du$-almost every $u$, contrary to
$c_4>c_2^2$. Thus $V_{K,\tau}\in(0,\infty)$. Assumption
\ref{assm:G-smooth}\textup{(ii), (iv)} gives the asserted lower and upper
bounds.
\end{proof}

\begin{lemma}\label{lem:leading_term_L_k_bound}
Let $h=h_n$ and $b=b_n$. Under Assumptions~\ref{assm:A}, \ref{assm:B},
\ref{assm:D}, and~\ref{assm:G}, for every integer $k\ge1$,
\[
\sup_{a_0\in\mathcal A_0}
P_0|h\phi_{\infty,h,b,a_0}|^k\lesssim h.
\]
Equivalently,
\[
\sup_{a_0\in\mathcal A_0}
P_0|\phi_{\infty,h,b,a_0}|^k\lesssim h^{-(k-1)}.
\]
In particular,
\[
\sup_{a_0\in\mathcal A_0}
P_0\left|
\frac{h^{1/2}\phi_{\infty,h,b,a_0}}
{\sigma_{\infty,h,b}(a_0)}
\right|^3
\lesssim h^{-1/2}.
\]
\end{lemma}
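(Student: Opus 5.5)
We split $\phi_{\infty,h,b,a_0}$ as in the proof of Lemma~\ref{lem:limiting_variance}:
\[
\phi_{\infty,h,b,a_0}
=\Gamma_{0,h,b,a_0}(A)\{\xi_\infty(O)-\theta_0(A)\}+U_{a_0}(A)+V_{a_0}(W,X),
\]
where $U_{a_0}:=\Gamma_{0,h,b,a_0}\theta_0-\gamma_{0,h,b,a_0}$ and $V_{a_0}$ is the integral term. By $|x+y+z|^k\le 3^{k-1}(|x|^k+|y|^k+|z|^k)$, it suffices to show $\sup_{a_0}P_0|h\,T|^k\lesssim h$ for each of the three pieces $T$. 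Throughout we use the uniform envelope
\[
|\Gamma_{0,h,b,a_0}(a)|\le Ch^{-1}I(|a-a_0|\le h)+Ch^2b^{-3}I(|a-a_0|\le b)\le C h^{-1}I\{|a-a_0|\le \max(h,b)\}.
\]
This follows from the uniform part of Lemma~\ref{lem:D-expansion}, the compact kernel support, and $\tau_n\to\tau\in(0,\infty)$, which gives $b\asymp h$. Because of this, both supports have width of order $h$, uniformly in $a_0\in\mathcal A_0$.

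\textbf{Leading term.} Assumption~\ref{assm:G-smooth}(iii) makes $|Y|$ bounded. Assumption~\ref{assm:G-DR-rates} makes $q_\infty$ and $h_\infty$ bounded on $\mathcal A_{\delta_3}$, so $m_\infty$ is bounded too. Finally, $\theta_0$ is continuous on the compact set $\mathcal A_{\delta_3}$, hence bounded. Together these give $|\xi_\infty(O)-\theta_0(A)|\le C$ whenever $A\in\mathcal A_{\delta_3}$, which holds on the support of $\Gamma_{0,h,b,a_0}$ for $n$ large. Therefore
\[
P_0\bigl|h\Gamma_{0,h,b,a_0}(A)\{\xi_\infty(O)-\theta_0(A)\}\bigr|^k\le C^k\,P_0\bigl(|A-a_0|\le Ch\bigr)\le C'h\sup_{\mathcal A_{\delta_3}}f_A .
\]

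\textbf{Smoothing terms.} For $U_{a_0}$, we reuse the pointwise bounds $|U_{n,j}(a)|\le Ch\,I\{|a-a_0|\le \max(h,b)\}$ for $j=1,2,3$, established in the proof of Lemma~\ref{lem:smoothing_negligible_single_a}. Their constants come only from $\sup|\theta_0''|$, $\sup\|D^{-1}\|$, $\sup|c_{0,h,a_0,2}|$ and $\sup|\theta''_{0,b}(a_0)|$, and each of these is uniform over $\mathcal A_0$ by Lemma~\ref{lem:D-expansion} and Assumption~\ref{assm:G-smooth}(i). Hence $|hU_{a_0}|\le Ch^2$, and so $P_0|hU_{a_0}|^k\le Ch^{2k}\le Ch$. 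For $V_{a_0}$, the bound $|V_{a_0}|\le 2C_h\int|\Gamma_{0,h,b,a_0}|\,dF_0\le C$ holds uniformly, by the same computation as in Lemma~\ref{lem:smoothing_negligible_single_a} with $\sup f_A<\infty$. Thus $P_0|hV_{a_0}|^k\le C h^k\le Ch$ for $k\ge1$. Summing the three pieces gives the first claim, and the second claim is the same statement after dividing by $h^k$.

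\textbf{Third-moment bound.} Take $k=3$:
\[
P_0|h^{1/2}\phi|^3=h^{-3/2}P_0|h\phi|^3\lesssim h^{-1/2}.
\]
Dividing by $\sigma_{\infty,h,b}(a_0)^3$ keeps the bound, since Lemma~\ref{lem:limiting_variance} bounds $\sigma^2_{\infty,h,b}$ below by $c>0$ uniformly for large $n$. The main obstacle is only bookkeeping: confirming that every constant in the envelope bounds is uniform in $a_0\in\mathcal A_0$. This needs $\max(h,b)<\delta_3$ eventually and the uniform version of Lemma~\ref{lem:D-expansion}.
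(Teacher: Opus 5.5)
Your proposal is correct and follows essentially the paper's argument: uniform envelopes of order $h^{-1}$ on supports of width $O(h)$, boundedness of $\xi_\infty$, the bound $\int|\Gamma_{0,h,b,a_0}|\,dF_0\le C$ for the integral term, and Lemma~\ref{lem:limiting_variance} for the normalized third moment. The only difference is bookkeeping: the paper bounds $\Gamma_{0,h,b,a_0}\xi_\infty$ and $\gamma_{0,h,b,a_0}$ separately with cruder $O(h^{-1})$ envelopes, whereas you group $\Gamma_{0,h,b,a_0}\theta_0-\gamma_{0,h,b,a_0}$ and reuse the sharper $O(h)$ bounds from Lemma~\ref{lem:smoothing_negligible_single_a}; this is more than needed but equally valid.
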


\begin{proof}
Assumptions~\ref{assm:G-DR-rates}\textup{(i)}
and~\ref{assm:G-smooth}\textup{(iii)} imply
$\|\xi_\infty\|_\infty<\infty$. Lemma~\ref{lem:D-expansion} and
$\tau_n=h/b\to\tau\in(0,\infty)$ give, uniformly in
$a_0\in\mathcal A_0$,
\[
|\Gamma_{0,h,b,a_0}(a)|
\le
C\left[
h^{-1}I\{|a-a_0|\le h\}
+h^2b^{-3}I\{|a-a_0|\le b\}
\right].
\]
Consequently,
\[
\sup_{a_0\in\mathcal A_0}
P_0|h\Gamma_{0,h,b,a_0}(A)\xi_\infty|^k\lesssim h.
\]
The local-polynomial coefficient vectors in $\gamma_{0,h,b,a_0}$ are uniformly
bounded by the normal equations and Taylor's expansion. Moreover,
$\sup_{a_0\in\mathcal A_0}|\theta_{0,b}''(a_0)|=O(1)$. Hence
\[
|\gamma_{0,h,b,a_0}(a)|
\le
C\left[
h^{-1}I\{|a-a_0|\le h\}
+h^2b^{-3}I\{|a-a_0|\le b\}
+hI\{|a-a_0|\le h\}
\right],
\]
and therefore
\[
\sup_{a_0\in\mathcal A_0}
P_0|h\gamma_{0,h,b,a_0}(A)|^k\lesssim h.
\]
Finally,
\[
\sup_{a_0\in\mathcal A_0}
\int|\Gamma_{0,h,b,a_0}(\bar a)|\,dF_0(\bar a)\le C,
\]
so the boundedness of $h_\infty$ implies
\[
\sup_{a_0\in\mathcal A_0}
P_0\left|
h\int\Gamma_{0,h,b,a_0}(\bar a)
\{h_\infty(W,\bar a,X)-m_\infty(\bar a)\}\,dF_0(\bar a)
\right|^k
\lesssim h^k.
\]
Minkowski's inequality proves the first assertion. This is also the argument
of Lemma~23 of \citet{Takatsu2025debiased}, with
$\mu_\infty(a,W)$ replaced by $h_\infty(W,a,X)$. The last assertion follows
from Lemma~\ref{lem:limiting_variance}.
\end{proof}

\begin{lemma}\label{lem:limiting_gaussian_process}
Let $h=h_n$ and $b=b_n$. Under Assumptions~\ref{assm:A}, \ref{assm:B},
\ref{assm:D}, and~\ref{assm:G}, the mean-zero Gaussian process
$Z_{\infty,h,b}$ is tight in $\ell^\infty(\mathcal A_0)$ and
\[
E_0\left\{\sup_{a_0\in\mathcal A_0}|Z_{\infty,h,b}(a_0)|\right\}
\le C\{\log(h^{-1})\}^{1/2}.
\]
For every $\delta>0$,
\[
\begin{aligned}
E_0\left[
\sup_{\substack{u,v\in\mathcal A_0\\|u-v|<\delta}}
|Z_{\infty,h,b}(u)-Z_{\infty,h,b}(v)|
\right]
\le
Ch^{-1/2}(\delta\wedge h)^{1/2}
\left[
1+\log\left\{
1+\frac{\operatorname{diam}(\mathcal A_0)}
{\delta\wedge h}
\right\}
\right]^{1/2}.
\end{aligned}
\]
\end{lemma}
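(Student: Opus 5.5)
Both claims follow from Dudley's entropy integral applied to the normalized process $Z_{\infty,h,b}$, once the canonical (intrinsic) metric
\[
\rho(u,v):=\bigl[E_0\{Z_{\infty,h,b}(u)-Z_{\infty,h,b}(v)\}^2\bigr]^{1/2}=\{2-2\Sigma_{\infty,h,b}(u,v)\}^{1/2}
\]
has been controlled on $\mathcal A_0$. Since $\Sigma_{\infty,h,b}(u,u)=1$, we always have $\rho\le 2$.

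\textbf{Step 1: a Lipschitz bound on $\rho$.} We first show
\[
\rho(u,v)\le C\min\{1,|u-v|/h\}
\]
uniformly over $u,v\in\mathcal A_0$ for all large $n$. Write $\tilde\phi_u:=h^{1/2}\phi_{\infty,h,b,u}/\sigma_{\infty,h,b}(u)$, so that $\rho(u,v)=\|\tilde\phi_u-\tilde\phi_v\|_{L_2(P_0)}$.

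By Lemma~\ref{lem:limiting_variance}, $\sigma_{\infty,h,b}$ is bounded away from zero. It is also Lipschitz in $a_0$ at scale $h$, by the argument given next. Hence it suffices to show
\[
h\,P_0(\phi_{\infty,h,b,u}-\phi_{\infty,h,b,v})^2\le C(|u-v|/h)^2.
\]

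The dominant piece is $\Gamma_{0,h,b,a_0}(A)\xi_\infty(O)$. We have $h\Gamma_{0,h,b,a_0}(a)=g_{a_0}((a-a_0)/h)$, where $g_{a_0}$ is built from the Lipschitz kernel $K$, the matrices $D_{0,h,a_0,1}^{-1}$, $D_{0,b,a_0,2}^{-1}$, and the scalar $c_{0,h,a_0,2}$. These matrices and the scalar are Lipschitz in $a_0$ because $f_A$ is Lipschitz (Assumption~\ref{assm:G-smooth}(ii) with Lemma~\ref{lem:D-expansion}), and $\tau_n$ is bounded. Consequently
\[
|h\Gamma_{0,h,b,u}(a)-h\Gamma_{0,h,b,v}(a)|\le C(|u-v|/h)
\]
on a set of $a$ of Lebesgue measure $O(h+|u-v|)$. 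Because $\xi_\infty$ is bounded, this gives
\[
hP_0\{(\Gamma_u-\Gamma_v)\xi_\infty\}^2\lesssim h^{-1}(|u-v|/h)^2(h+|u-v|).
\]
This is $\lesssim (|u-v|/h)^2$ when $|u-v|\le h$. For $|u-v|>h$ we use the trivial bound $\rho\le 2$ instead.

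The remaining pieces of $\phi_{\infty,h,b,a_0}$ are $\gamma_{0,h,b,a_0}$ and the integral term. Lemma~\ref{lem:leading_term_L_k_bound} and the proof of Lemma~\ref{lem:limiting_variance} bound their contributions by $O(h)$ and $O(1)$ respectively, before the factor $h$ is applied. Their $a_0$-increments are handled by the same Lipschitz argument. The Hölder continuity of $\theta_0''$ in Assumption~\ref{assm:G-smooth}(i) makes the coefficient vectors $\beta_{n,h},\beta_{n,b}$ and $\theta''_{0,b}(a_0)$ Lipschitz in $a_0$ at scale $h^{-1}$.

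\textbf{Step 2: covering numbers and the entropy integrals.} Step 1 gives
\[
N(\mathcal A_0,\rho,\varepsilon)\le 1+C\operatorname{diam}(\mathcal A_0)/(h\varepsilon),\qquad \varepsilon\le 2.
\]
Dudley's bound then yields
\[
E\sup|Z|\le E|Z(a^*)|+C\int_0^2\{\log(1+C/(h\varepsilon))\}^{1/2}d\varepsilon\lesssim\{\log(h^{-1})\}^{1/2}.
\]
Finiteness of this entropy integral also gives a version with bounded, uniformly $\rho$-continuous paths, and hence tightness in $\ell^\infty(\mathcal A_0)$.

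For the modulus we use the increment form of Dudley's theorem. If $|u-v|<\delta$, then $\rho(u,v)\le \eta:=C\{(\delta\wedge h)/h\}$. The expected supremum of $|Z(u)-Z(v)|$ over such pairs is then bounded by
\[
C\int_0^{\eta}\{\log N(\varepsilon)\}^{1/2}d\varepsilon\lesssim \eta\,[1+\log\{1+\operatorname{diam}(\mathcal A_0)/(\delta\wedge h)\}]^{1/2}.
\]
Since $\eta\le C h^{-1/2}(\delta\wedge h)^{1/2}$ (because $(\delta\wedge h)/h\le1$), the stated bound follows.

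The main obstacle is Step 1, namely proving uniform-in-$a_0$ Lipschitz control of the normalized influence function. This concerns above all the moment matrices and $\sigma_{\infty,h,b}(a_0)$, whose derivatives in $a_0$ must be shown to be $O(h^{-1})$ uniformly over $\mathcal A_0$. I would get this by differentiating the change-of-variables representation in Lemma~\ref{lem:D-expansion}.
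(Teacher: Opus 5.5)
Your proposal is correct and follows essentially the paper's route: increment bounds on $\Gamma_{0,h,b,a_0}$, $\gamma_{0,h,b,a_0}$, and the integral term control the canonical semimetric, after which Corollary~2.2.8 of \citet{VanderVaartWellner2023weak} (Dudley's bound) gives the supremum bound, the modulus bound, and tightness. The differences are cosmetic: the paper keeps only the weaker bound $\rho_{\infty,h,b}(u,v)\le C[1\wedge\{h^{-1/2}|u-v|^{1/2}\}]$, so the factor $h^{-1/2}(\delta\wedge h)^{1/2}$ appears directly rather than after your final weakening of $\eta$. It also controls the normalization by the reverse triangle inequality $|\sigma_{\infty,h,b}(u)-\sigma_{\infty,h,b}(v)|\le h^{1/2}\|\phi_{\infty,h,b,u}-\phi_{\infty,h,b,v}\|_{P_0,2}$, which makes the separate differentiation of $\sigma_{\infty,h,b}$ that you flag as the main obstacle unnecessary.
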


\begin{proof}
For $u,v\in\mathcal A_0$, the kernel support, the Lipschitz continuity of
$K$ and $f_A$, the inverse-difference identity, and
Lemma~\ref{lem:D-expansion} give
\[
\|\Gamma_{0,h,b,u}-\Gamma_{0,h,b,v}\|_{P_0,2}
\le
C\left\{h^{-1/2}\wedge h^{-3/2}|u-v|\right\}.
\]
The same calculations applied to the three terms defining
$\gamma_{0,h,b,u}$ give
\[
\|\gamma_{0,h,b,u}-\gamma_{0,h,b,v}\|_{P_0,2}
\le
C\left\{h^{-1/2}\wedge h^{-3/2}|u-v|\right\}.
\]
Indeed, on the union of the two kernel supports,
\[
\sup_a|\Gamma_{0,h,b,u}(a)-\Gamma_{0,h,b,v}(a)|
+
\sup_a|\gamma_{0,h,b,u}(a)-\gamma_{0,h,b,v}(a)|
\le Ch^{-2}|u-v|,
\]
whereas the $L_2(P_0)$ norm of each summand at one index is $O(h^{-1/2})$.
Moreover, boundedness of $h_\infty$ gives
\begin{align*}
&\left\|
\int\{\Gamma_{0,h,b,u}(\bar a)-\Gamma_{0,h,b,v}(\bar a)\}
\{h_\infty(W,\bar a,X)-m_\infty(\bar a)\}\,dF_0(\bar a)
\right\|_{P_0,2}\\
&\qquad\le
C\int|\Gamma_{0,h,b,u}(\bar a)-\Gamma_{0,h,b,v}(\bar a)|\,dF_0(\bar a)\\
&\qquad\le
C\{1\wedge h^{-1}|u-v|\}.
\end{align*}
Since $\xi_\infty$ is bounded, the definition of the full influence function
therefore yields
\[
\|\phi_{\infty,h,b,u}-\phi_{\infty,h,b,v}\|_{P_0,2}
\le
C\left\{h^{-1/2}\wedge h^{-3/2}|u-v|\right\}.
\]
These are the full-influence function increment bounds in the proof of
Lemma~24 of \citet{Takatsu2025debiased}; the integral bound above is the
proximal analogue obtained by replacing $\mu_\infty(a,W)$ with
$h_\infty(W,a,X)$.

Let $\rho_{\infty,h,b}$ be the canonical semimetric of
$Z_{\infty,h,b}$. Lemma~\ref{lem:limiting_variance} and the reverse triangle
inequality in $L_2(P_0)$ give
\[
|\sigma_{\infty,h,b}(u)-\sigma_{\infty,h,b}(v)|
\le
h^{1/2}
\|\phi_{\infty,h,b,u}-\phi_{\infty,h,b,v}\|_{P_0,2}.
\]
Consequently,
\begin{align*}
\rho_{\infty,h,b}(u,v)
&=
\left\|
\frac{h^{1/2}\phi_{\infty,h,b,u}}{\sigma_{\infty,h,b}(u)}
-
\frac{h^{1/2}\phi_{\infty,h,b,v}}{\sigma_{\infty,h,b}(v)}
\right\|_{P_0,2}\\
&\le
C h^{1/2}
\|\phi_{\infty,h,b,u}-\phi_{\infty,h,b,v}\|_{P_0,2}
\end{align*}
and hence
\[
\rho_{\infty,h,b}(u,v)
\le
C\left[
1\wedge\{h^{-1/2}|u-v|^{1/2}\}
\right].
\]
It follows that
\[
N(\varepsilon,\mathcal A_0,\rho_{\infty,h,b})
\le
C\left\{
1+\frac{\operatorname{diam}(\mathcal A_0)}{h\varepsilon^2}
\right\},
\qquad 0<\varepsilon\le1.
\]
The entropy integral is finite. Corollary 2.2.8 of
\citet{VanderVaartWellner2023weak} gives
\[
E_0\left\{\sup_{a_0\in\mathcal A_0}|Z_{\infty,h,b}(a_0)|\right\}
\le C\{\log(h^{-1})\}^{1/2}
\]
and the asserted modulus bound. The same entropy integral implies almost
sure uniform continuity in the canonical semimetric. Total boundedness of
$(\mathcal A_0,\rho_{\infty,h,b})$ proves tightness.
\end{proof}

\begin{lemma}\label{lem:leading_term_gaussian_approx}
Let $h=h_n$ and $b=b_n$. Under Assumptions~\ref{assm:A}, \ref{assm:B},
\ref{assm:D}, and~\ref{assm:G}, define
\[
\eta_{h,b,a_0}
:=
\frac{h^{1/2}\phi_{\infty,h,b,a_0}}
{\sigma_{\infty,h,b}(a_0)},
\qquad
\mathcal H_{h,b}
:=
\{\eta_{h,b,a_0}:a_0\in\mathcal A_0\},
\]
and $\mathcal F_{h,b}:=\mathcal H_{h,b}\cup(-\mathcal H_{h,b})$.
The class $\mathcal F_{h,b}$ is pointwise measurable and VC--type, with
characteristics independent of $n$, and has an envelope $F_{h,b}$ satisfying
\[
\|F_{h,b}\|_{P_0,3}\le Ch^{-1/2},
\qquad
\|F_{h,b}\|_{P_0,4}\le Ch^{-1/2}.
\]
There exists a random variable
\[
\widetilde Z_{n,h,b}
\overset{d}{=}
\sup_{a_0\in\mathcal A_0}|Z_{\infty,h,b}(a_0)|
\]
such that
\[
\left|
\sup_{a_0\in\mathcal A_0}|G_n\eta_{h,b,a_0}|
-\widetilde Z_{n,h,b}
\right|
=O_p(r_n),
\]
where
\[
r_n
=
(nh)^{-1/6}\log n
+
(nh)^{-1/4}(\log n)^{5/4}
+
(nh)^{-1/2}(\log n)^{3/2}.
\]
Moreover,
\[
\sup_{t\in\mathbb R}
\left|
P_0\left\{
\sup_{a_0\in\mathcal A_0}|G_n\eta_{h,b,a_0}|\le t
\right\}
-
P_0\left\{
\sup_{a_0\in\mathcal A_0}|Z_{\infty,h,b}(a_0)|\le t
\right\}
\right|
\longrightarrow0.
\]
\end{lemma}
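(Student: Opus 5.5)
The plan is to follow the Chernozhukov--Chetverikov--Kato strategy: VC-type structure, then envelope moments, then a coupling inequality, then anti-concentration. This mirrors Lemma 25 of \citet{Takatsu2025debiased}, with $\mu_\infty(a,W)$ replaced by $h_\infty(W,a,X)$.

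\textbf{VC-type property.} I would write each $\eta_{h,b,a_0}$ as a sum of three pieces.
\begin{itemize}
\item The term $h^{1/2}\Gamma_{0,h,b,a_0}(A)\xi_\infty(O)/\sigma_{\infty,h,b}(a_0)$ is the product of the fixed bounded function $\xi_\infty$ with a function of $A$. That function of $A$ is a finite linear combination, with coefficients depending on $a_0$, of $K\{(A-a_0)/h\}$ and $K\{(A-a_0)/b\}$ times polynomials in $(A-a_0)/h$ and $(A-a_0)/b$ of degree at most two.
\item The term $\gamma_{0,h,b,a_0}(A)$ has the same form.
\item The integral term $\int\Gamma_{0,h,b,a_0}(\bar a)\{h_\infty(W,\bar a,X)-m_\infty(\bar a)\}\,dF_0(\bar a)$ is a function of $(W,X)$.
\end{itemize}
For the first two pieces, condition~\ref{assm:kernel} makes $\{u\mapsto K((u-a_0)/h)p((u-a_0)/h):a_0\}$ VC-type by Nolan--Pollard type results. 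Multiplying by the fixed function $\xi_\infty$ and taking finite sums preserves the VC-type property. The $a_0$-dependent coefficients (entries of $D_0^{-1}$, $c_{0,h,a_0,2}$, $\theta''_{0,b}$, $\sigma_{\infty,h,b}^{-1}$) are bounded uniformly by Lemmas~\ref{lem:D-expansion} and~\ref{lem:limiting_variance}. I would treat them by enlarging to the class of all bounded linear combinations, which only adds a polynomial factor to the covering numbers.

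For the integral piece, I would not use VC structure. Instead I would use the Lipschitz-in-$a_0$ bound from the proof of Lemma~\ref{lem:limiting_gaussian_process}: $\|\int(\Gamma_u-\Gamma_v)(h_\infty-m_\infty)\,dF_0\|_\infty\le Ch^{-1}|u-v|$. Since the sup-norm of this piece is $O(h^{1/2})$ after normalization, a grid of mesh $\varepsilon h^{3/2}$ over the compact $\mathcal A_0$ gives uniform covering numbers polynomial in $1/\varepsilon$ and $1/h$. A $\log(1/h)\lesssim\log n$ factor in the entropy is harmless for the CCK bound.

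Pointwise measurability comes from continuity in $a_0$ and restriction to rational $a_0$.

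\textbf{Envelope.} Using the bounds in the proof of Lemma~\ref{lem:leading_term_L_k_bound}, I would take
\[
F_{h,b}=C\bigl[h^{-1/2}I\{A\in\mathcal A_{\max(h,b)}\}+h^{1/2}\bigr].
\]
Then $\|F\|_{P_0,q}\le Ch^{-1/2}$ for $q=3,4$, since $\mathcal A_{\max(h,b)}$ has bounded probability. The weak variance satisfies $\sup_f P_0f^2=1$ by normalization, and Lemma~\ref{lem:leading_term_L_k_bound} gives $\sup_f P_0|f|^3\lesssim h^{-1/2}$.

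\textbf{Coupling.} I would apply Corollary 2.2 (or Theorem 2.1) of \citet{Chernozhukov2014Gaussian} with $\sigma=1$, $b_n\asymp h^{-1/2}$, and $K_n\asymp\log n$. Here $\log n$ dominates $\log(1/h)$ because $nh^\kappa\to\infty$. The Gaussian process is tight by Lemma~\ref{lem:limiting_gaussian_process}. Choosing $q=4$, the three coupling terms become
\[
\frac{bK_n}{\gamma^{1/2}n^{1/2}},\qquad
\frac{(b\sigma)^{1/2}K_n^{3/4}}{\gamma^{1/2}n^{1/4}},\qquad
\frac{(b\sigma^2K_n^2)^{1/3}}{\gamma^{1/3}n^{1/6}},
\]
together with the $q$-moment term. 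Since the supremum of $|G_n\eta|$ is a supremum over $\mathcal F_{h,b}$, these terms yield the displayed $r_n$.

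\textbf{Kolmogorov distance.} I would combine the coupling with the anti-concentration inequality (Lemma 2.4 of \citet{Chernozhukov2014Gaussian}), using $\inf\operatorname{Var}=1$. This gives
\[
\sup_t\bigl|P(\cdot\le t)-P(\cdot\le t)\bigr|\lesssim r_n\,E\sup|Z|+o(1)\lesssim r_n\sqrt{\log(1/h)}+o(1).
\]
This tends to zero because $r_n\sqrt{\log n}\to0$, which follows from $nh/\log^{c}n\to\infty$ under $nh^\kappa\to\infty$ with $\kappa<5$.

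\textbf{Main obstacle.} I expect the hardest step to be verifying the VC-type property with characteristics independent of $n$ while the coefficients and normalization vary with $a_0$, and especially for the integral term. The fallback there is the Lipschitz covering argument above, accepting $\log(1/h)$ in the entropy constant.
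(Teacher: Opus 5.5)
Your skeleton is the paper's. Pointwise measurability comes from continuity in $a_0$. The kernel--polynomial pieces are VC-type, times the bounded $\xi_\infty$, with uniformly bounded $a_0$-dependent coefficients. The moment bounds come from Lemma~\ref{lem:leading_term_L_k_bound}. Corollary~2.2 of \citet{Chernozhukov2014Gaussian} is applied with $\gamma_n=(\log n)^{-1}$, and Lemma~2.4 there, combined with Lemma~\ref{lem:limiting_gaussian_process}, gives the Kolmogorov distance.

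The one genuinely different step is the integral term. The paper gets its VC-type property from Lemma~5.2 of \citet{VanderVaart2006estimating}: integration against $F_0$ preserves the uniform entropy of $\{(\bar a,w,x)\mapsto\Gamma_{0,h,b,a_0}(\bar a)\{h_\infty(w,\bar a,x)-m_\infty(\bar a)\}\}$. You instead build a grid cover from the $L_1(F_0)$-Lipschitz bound on $a_0\mapsto\Gamma_{0,h,b,a_0}$. That route is more elementary and reuses estimates already proved.

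As you set it up, however, it does not deliver the lemma's claim of VC characteristics independent of $n$. Under your envelope $C[h^{-1/2}I\{A\in\mathcal A_{\max(h,b)}\}+h^{1/2}]$, a measure $Q$ with no $A$-mass near $\mathcal A_0$ has $\|F\|_{Q,2}\asymp h^{1/2}$. Your Lipschitz bound then only yields covering numbers of order $(\varepsilon h)^{-1}$. This is harmless for $K_n$, as you note, but it is weaker than what is stated.

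The fix is to take the cruder constant envelope $F_{h,b}\equiv Ch^{-1/2}$, as the paper does; it still satisfies the $L_3$ and $L_4$ bounds. Write $\iota_{a_0}$ for the normalized integral piece $h^{1/2}\sigma_{\infty,h,b}(a_0)^{-1}\int\Gamma_{0,h,b,a_0}(\bar a)\{h_\infty(W,\bar a,X)-m_\infty(\bar a)\}\,dF_0(\bar a)$. Your Lipschitz bound, together with $|\sigma_{\infty,h,b}(u)^{-1}-\sigma_{\infty,h,b}(v)^{-1}|\le Ch^{-1}|u-v|$ from Lemmas~\ref{lem:limiting_variance} and~\ref{lem:limiting_gaussian_process}, gives $\|\iota_u-\iota_v\|_\infty\le Ch^{-1/2}|u-v|$. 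Relative to $\|F_{h,b}\|_{Q,2}=Ch^{-1/2}$, a grid of mesh proportional to $\varepsilon$ on $\mathcal A_0$ then gives covering number at most $C/\varepsilon$, uniformly in $Q$ and $h$. Your ``main obstacle'' disappears: a tighter envelope makes relative covering numbers worse, not better.

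Second, the displayed $r_n$ is the $q=\infty$ case of Corollary~2.2. There the first term is $bK_n/(\gamma^{1/2}n^{1/2})$ with no separate moment term, and this case is available because your envelope is bounded. Taking $q=4$, as you say, would instead produce a first term $bK_n/(\gamma^{1/2}n^{1/4})\asymp(nh^2)^{-1/4}(\log n)^{3/2}$. That term is not part of $r_n$, and it need not vanish under Assumption~\ref{assm:G-bandwidth}, because $nh^\kappa\to\infty$ with $\kappa$ close to one permits $nh^2\to0$.
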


\begin{proof}
For every fixed observation $o$, the maps in $a_0$ defining
$D_{0,h,a_0,1}$, $D_{0,b,a_0,2}$, $\Gamma_{0,h,b,a_0}$, and
$\gamma_{0,h,b,a_0}$ are continuous. Dominated convergence gives continuity
of the integral term in $\phi_{\infty,h,b,a_0}(o)$.
Lemma~\ref{lem:limiting_variance} and the increment bound in
Lemma~\ref{lem:limiting_gaussian_process} give continuity of
$a_0\mapsto\sigma_{\infty,h,b}(a_0)$. Thus
$a_0\mapsto\eta_{h,b,a_0}(o)$ is continuous. Compactness of $\mathcal A_0$
gives a countable dense subset and a countable
pointwise-dense subclass of $\mathcal F_{h,b}$.

Assumption~\ref{assm:kernel} and Lemma~11 of
\citet{Takatsu2025debiased} imply that the translated kernel-polynomial
classes entering $\Gamma_{0,h,b,a_0}$ and $\gamma_{0,h,b,a_0}$ are VC--type.
Multiplication by the fixed bounded function $\xi_\infty$ preserves VC--type.

Lemma~5.2 of \citet{VanderVaart2006estimating} applied to
\[
\left\{
(\bar a,w,x)\mapsto
\Gamma_{0,h,b,a_0}(\bar a)
\{h_\infty(w,\bar a,x)-m_\infty(\bar a)\}
:
a_0\in\mathcal A_0
\right\}
\]
shows that the corresponding classes obtained by integration with respect
to $F_0$ are VC--type. Sums, products, and multiplication by the constants
$\sigma_{\infty,h,b}(a_0)^{-1}$ preserve this property. This is the
full-influence function argument in Corollary~5 of
\citet{Takatsu2025debiased}. Hence $\mathcal F_{h,b}$ is VC--type with
characteristics independent of $n$.

The kernel support, the uniform matrix bounds, boundedness of $\xi_\infty$
and $h_\infty$, and Lemma~\ref{lem:limiting_variance} give the deterministic
envelope
\[
F_{h,b}=Ch^{-1/2}.
\]
Thus the asserted $L_3(P_0)$ and $L_4(P_0)$ envelope bounds hold. Moreover,
\[
P_0\eta_{h,b,a_0}^2=1
\]
and Lemma~\ref{lem:leading_term_L_k_bound} gives, for $k=3,4$,
\[
\sup_{a_0\in\mathcal A_0}P_0|\eta_{h,b,a_0}|^k
\le Ch^{1-k/2}.
\]
Lemma~\ref{lem:limiting_gaussian_process} proves that
$\mathcal F_{h,b}$ is $P_0$-pre-Gaussian.

Corollary~2.2 of \citet{Chernozhukov2014Gaussian}, applied with
$\gamma_n=(\log n)^{-1}$, now gives the stated coupling and the displayed
value of $r_n$. Assumption~\ref{assm:G-bandwidth} implies
\[
r_n\{\log(h^{-1})\}^{1/2}\longrightarrow0.
\]
Lemma~2.4 of \citet{Chernozhukov2014Gaussian}, together with
Lemma~\ref{lem:limiting_gaussian_process}, gives the asserted convergence in
Kolmogorov distance. This is the argument of Lemma~25 of
\citet{Takatsu2025debiased}.
\end{proof}

\begin{corollary}
Under the conditions of the preceding lemma,
\[
\sup_{t\in\mathbb R}
\left|
P_0\left(
\sup_{a_0\in\mathcal A_0}
\left|
\frac{(nh)^{1/2}P_n\phi_{\infty,h,b,a_0}}
{\sigma_{\infty,h,b}(a_0)}
\right|
\le t
\right)
-
P_0\left(
\sup_{a_0\in\mathcal A_0}|Z_{\infty,h,b}(a_0)|\le t
\right)
\right|
\to0.
\]
\end{corollary}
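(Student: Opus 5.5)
The corollary follows from the preceding lemma once the empirical quantity is rewritten in empirical-process notation. The plan is to express $(nh)^{1/2}P_n\phi_{\infty,h,b,a_0}/\sigma_{\infty,h,b}(a_0)$ through the empirical process $G_n:=n^{1/2}(P_n-P_0)$ applied to $\eta_{h,b,a_0}$, and then read off the conclusion.

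First, I use the mean-zero property $P_0\phi_{\infty,h,b,a_0}=0$ for every $a_0\in\mathcal A_0$. For the pointwise case this was noted after the definition \eqref{def: population leading term}. In the present setting it follows from the decomposition $\phi_{\infty,a_0}=\Gamma_0(A)\{\xi_\infty-\theta_0(A)\}+\psi^{sm}_{n,a_0}$, together with two facts: $E_0\{\xi_\infty(O)\mid A=a\}=\theta_0(a)$ on $\mathcal A_{\delta_3}$, which holds by Theorem~\ref{thm:DR_for_PO} and the partition in Assumption~\ref{assm:G-DR-rates}, and $P_0\psi^{sm}_{n,a_0}=0$. The latter holds because the proof of Lemma~\ref{lem:smoothing_negligible_single_a} uses only the definitions of the local-polynomial quantities and $m_\infty$, so it applies at every $a_0$.

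With the mean-zero property in hand,
\[
\frac{(nh)^{1/2}P_n\phi_{\infty,h,b,a_0}}{\sigma_{\infty,h,b}(a_0)}
=n^{1/2}(P_n-P_0)\frac{h^{1/2}\phi_{\infty,h,b,a_0}}{\sigma_{\infty,h,b}(a_0)}
=G_n\eta_{h,b,a_0}.
\]
This identity requires $\sigma_{\infty,h,b}(a_0)>0$. Lemma~\ref{lem:limiting_variance} guarantees this uniformly over $\mathcal A_0$ for all sufficiently large $n$, so the normalization is well defined.

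Therefore the supremum appearing in the corollary equals $\sup_{a_0\in\mathcal A_0}|G_n\eta_{h,b,a_0}|$ identically. The Kolmogorov-distance convergence is then exactly the last display of Lemma~\ref{lem:leading_term_gaussian_approx}. No coupling or anti-concentration argument is needed beyond what that lemma already supplies.

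The main point to verify is that the mean-zero identity holds simultaneously for all $a_0\in\mathcal A_0$ and not merely at a fixed point. I would check this by noting that the conditional-mean identity for $\xi_\infty$ holds for $f_A$-almost every $a$ in $\mathcal A_{\delta_3}$, which contains the supports of all the kernel weights $\Gamma_{0,h,b,a_0}$ once $\max(h,b)<\delta_3$. Hence $P_0[\Gamma_0(A)\{\xi_\infty-\theta_0(A)\}]=0$ for every $a_0$ at once, and the argument above goes through uniformly.
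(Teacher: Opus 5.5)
Your proposal is correct and follows essentially the same route as the paper's proof. You establish $P_0\phi_{\infty,h,b,a_0}=0$ simultaneously for all $a_0\in\mathcal A_0$ from two facts: the doubly robust conditional-mean identity on $\mathcal A_{\delta_3}$, which eventually contains every kernel support, and the $P_0\psi^{sm}_{n,a_0}=0$ calculation from Lemma~\ref{lem:smoothing_negligible_single_a}. You then rewrite the normalized statistic as $G_n\eta_{h,b,a_0}$ and read off the final display of Lemma~\ref{lem:leading_term_gaussian_approx}, exactly as the paper does. Your explicit appeal to Lemma~\ref{lem:limiting_variance} for $\sigma_{\infty,h,b}(a_0)>0$ only makes explicit what the paper leaves implicit in the definition of $\eta_{h,b,a_0}$.
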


\begin{proof}
Assumptions~\ref{assm:A}, \ref{assm:B}, and
\ref{assm:G-DR-rates}\textup{(i)} and
Theorem~\ref{thm:DR_for_PO} give
\[
E_0\{\xi_\infty(O)-\theta_0(A)\mid A=a\}=0
\]
for \(f_A\)-almost every \(a\in\mathcal A_{\delta_3}\).  For all
sufficiently large \(n\), the support of
\(\Gamma_{0,h,b,a_0}\) is contained in \(\mathcal A_{\delta_3}\),
uniformly over \(a_0\in\mathcal A_0\).  The calculation of
\(P_0\psi_{n,a_0}^{sm}\) in the proof of
Lemma~\ref{lem:smoothing_negligible_single_a} therefore gives
\[
P_0\phi_{\infty,h,b,a_0}=0.
\]
For every $a_0\in\mathcal A_0$,
\[
G_n\eta_{h,b,a_0}
=
\frac{1}{\sqrt n}\sum_{i=1}^n
\left\{
\eta_{h,b,a_0}(O_i)-P_0\eta_{h,b,a_0}
\right\}
=
\frac{(nh)^{1/2}P_n\phi_{\infty,h,b,a_0}}
{\sigma_{\infty,h,b}(a_0)},
\]
because $P_0\eta_{h,b,a_0}=0$. The claim follows directly.
\end{proof}

\begin{lemma}\label{lem:finite-mesh-gaussian}
Under the conditions of Lemma~\ref{lem:limiting_gaussian_process}, let
$\mathcal A_n\subset\mathcal A_0$ be a finite mesh and define
\[
\omega_n
:=
\sup_{a_0\in\mathcal A_0}\inf_{a\in\mathcal A_n}|a_0-a|.
\]
If $\omega_n=o(h^p)$ for some $p>1$, then
\[
\sup_{t\in\mathbb R}
\left|
P_0\left\{
\sup_{a_0\in\mathcal A_0}|Z_{\infty,h,b}(a_0)|\le t
\right\}
-
P_0\left\{
\max_{a\in\mathcal A_n}|Z_{\infty,h,b}(a)|\le t
\right\}
\right|
\longrightarrow0.
\]
\end{lemma}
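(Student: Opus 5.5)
The plan is a coupling argument combined with Gaussian anti-concentration. Write $M_n:=\sup_{a_0\in\mathcal A_0}|Z_{\infty,h,b}(a_0)|$ and $M_n':=\max_{a\in\mathcal A_n}|Z_{\infty,h,b}(a)|$, realized on the same probability space from the same sample path. Since $\mathcal A_n\subset\mathcal A_0$, we have $M_n'\le M_n$. For every $a_0\in\mathcal A_0$ there is $a\in\mathcal A_n$ with $|a_0-a|\le\omega_n$. Hence
\[
0\le M_n-M_n'\le \Delta_n:=\sup_{u,v\in\mathcal A_0,\,|u-v|\le\omega_n}|Z_{\infty,h,b}(u)-Z_{\infty,h,b}(v)|.
\]
For any $\epsilon>0$ and $t$, this gives
\[
P(M_n\le t)\le P(M_n'\le t)\le P(M_n\le t+\epsilon)+P(\Delta_n>\epsilon).
\]
The Kolmogorov distance is therefore bounded by
\[
\sup_t P(t<M_n\le t+\epsilon)+P(\Delta_n>\epsilon).
\]

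First, bound $\Delta_n$ using the modulus estimate in Lemma~\ref{lem:limiting_gaussian_process} with $\delta$ slightly larger than $\omega_n$ (say $\delta=2\omega_n$). Since $\omega_n=o(h^p)$ with $p>1$, we have $\delta\wedge h=\delta$ eventually. Then
\[
E\Delta_n\le C h^{-1/2}\omega_n^{1/2}\{1+\log(1+\operatorname{diam}(\mathcal A_0)/\omega_n)\}^{1/2}=o\bigl(h^{(p-1)/2}\{\log(h^{-1})\}^{1/2}\bigr).
\]
Here $\log(1/\omega_n)\gtrsim\log(h^{-1})$. An upper bound of order $\log(1/\omega_n)$ is needed to avoid trouble if the mesh is extremely fine. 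However, $\omega_n\mapsto\omega_n^{1/2}\log(1/\omega_n)^{1/2}$ is increasing for small $\omega_n$, so we may replace $\omega_n$ by $h^p$ in the bound, which gives $o(h^{(p-1)/2}\log(h^{-1})^{1/2})$. Because $p>1$, this is $o(\{\log(h^{-1})\}^{-1/2})$ — indeed polynomially small.

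Second, apply the Gaussian anti-concentration inequality for suprema of Gaussian processes (Lemma~A.1 / Corollary~2.1 of \citet{Chernozhukov2014Gaussian}), using that $Z_{\infty,h,b}$ is tight and separable with unit variance, by Lemma~\ref{lem:limiting_gaussian_process} and the normalization. This yields
\[
\sup_t P(|M_n-t|\le\epsilon)\le C\epsilon\bigl(E M_n+1\bigr)\le C\epsilon\{\log(h^{-1})\}^{1/2},
\]
by the expected-supremum bound in Lemma~\ref{lem:limiting_gaussian_process}. Choose $\epsilon_n:=\{\log(h^{-1})\}^{-1/2}\eta_n$ with $\eta_n\to0$ slowly enough that $E\Delta_n/\epsilon_n\to0$, for example $\eta_n^2=h^{(p-1)/2}$. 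By Markov's inequality, $P(\Delta_n>\epsilon_n)\le E\Delta_n/\epsilon_n\to0$, and the anti-concentration term is $O(\eta_n)\to0$. This gives the claim.

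The main obstacle is the anti-concentration step. It requires that the process be centered Gaussian with variances bounded away from zero, which holds since each $Z_{\infty,h,b}(a)$ has unit variance by construction of $\Sigma_{\infty,h,b}$. The constant in the inequality must not depend on $n$, which it does not, because it depends only on the variance bounds. A secondary care point is the measurability of the suprema, which is handled by the continuity of sample paths established in Lemma~\ref{lem:limiting_gaussian_process}.
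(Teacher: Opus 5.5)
Your proposal is correct and follows essentially the same route as the paper. Both arguments sandwich the two suprema by the modulus of continuity over pairs within distance $\omega_n$, control that modulus via Lemma~\ref{lem:limiting_gaussian_process} and Markov's inequality, and handle the resulting window with the anti-concentration bound of \citet{Chernozhukov2014Gaussian} at scale $\{\log(h^{-1})\}^{-1/2}$. The differences are cosmetic: you take a vanishing $\epsilon_n=\eta_n\{\log(h^{-1})\}^{-1/2}$ where the paper fixes $\varepsilon$ and then lets $\varepsilon\downarrow0$, and your choice $\delta=2\omega_n$ reconciles the non-strict inequality in $\Delta_n$ with the strict one in the modulus bound.
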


\begin{proof}
If $\omega_n=0$, the two random variables in the statement are equal.
It therefore suffices to consider indices for which $\omega_n>0$.
The mesh definition gives
\[
\left|
\sup_{a_0\in\mathcal A_0}|Z_{\infty,h,b}(a_0)|
-
\max_{a\in\mathcal A_n}|Z_{\infty,h,b}(a)|
\right|
\le
\sup_{\substack{u,v\in\mathcal A_0\\|u-v|\le\omega_n}}
|Z_{\infty,h,b}(u)-Z_{\infty,h,b}(v)|.
\]
For every $\varepsilon>0$, the Kolmogorov distance in the statement is
bounded by
\[
\sup_{t\in\mathbb R}
P_0\left\{
\left|
\sup_{a_0\in\mathcal A_0}|Z_{\infty,h,b}(a_0)|-t
\right|
\le
\varepsilon\{\log(h^{-1})\}^{-1/2}
\right\}
\]
plus
\[
P_0\left\{
\sup_{\substack{u,v\in\mathcal A_0\\|u-v|\le\omega_n}}
|Z_{\infty,h,b}(u)-Z_{\infty,h,b}(v)|
>
\varepsilon\{\log(h^{-1})\}^{-1/2}
\right\}.
\]
Lemma~A.1 of \citet{Chernozhukov2014Gaussian}, applied
to the symmetrized unit-variance class, and
Lemma~\ref{lem:limiting_gaussian_process} imply that the first term is at
most $C\varepsilon+o(1)$; this is Lemma~26 of
\citet{Takatsu2025debiased}. Markov's inequality and
Lemma~\ref{lem:limiting_gaussian_process} bound the second term by
\[
\frac{C}{\varepsilon}
\{\log(h^{-1})\}^{1/2}
\left(\frac{\omega_n}{h}\right)^{1/2}
\left[
1+\log\left\{
1+\frac{\operatorname{diam}(\mathcal A_0)}{\omega_n}
\right\}
\right]^{1/2}
\]
for all sufficiently large $n$, since $\omega_n/h\to0$. The square of the
last display, apart from its fixed multiplicative constant, is bounded by
\[
\frac{\omega_n}{h}\log(h^{-1})
\left[
1+\log\left\{
1+\frac{\operatorname{diam}(\mathcal A_0)}{\omega_n}
\right\}
\right],
\]
which converges to zero when $\omega_n=o(h^p)$ for some $p>1$. Indeed,
$\omega_n/h\le h^{(p-1)/2}$ eventually, and monotonicity of
$x\mapsto x\log(C/x)$ near zero bounds the preceding display by
$Ch^{(p-1)/2}\{\log(h^{-1})\}^2$. Letting
$\varepsilon\downarrow0$ proves the assertion.
\end{proof}

\subsection{Proofs of the main inference statements}

\begin{proof}[Proof of Theorem~\ref{thm:pointwise-proximal-inference}]
By Lemma~\ref{lem:prox-six-rem-decomp},
\[
\hat\theta_n^{DB,cf}(a_0)-\theta_0(a_0)=P_n\phi_{\infty,a_0}+\sum_{j=1}^6R_{n,j}(a_0).
\]
Lemma~\ref{lem:R1-rate} gives $R_{n,1}(a_0)=o(h_n^2)$, and Lemmas~\ref{lem:R2-control}, \ref{lem:R3-control}, \ref{lem:R4-control}, \ref{lem:R5-control}, and~\ref{lem:R6-rate} give $R_{n,j}(a_0)=o_p\{(nh_n)^{-1/2}\}$ for $j=2,\ldots,6$.  This proves the displayed first-order expansion.  Lemmas~\ref{lem:single_a_leading_term} and~\ref{lem:smoothing_negligible_single_a} yield the stated normal limit for $(nh_n)^{1/2}P_n\phi_{\infty,a_0}$.  If $nh_n^5=O(1)$, then $(nh_n)^{1/2}R_{n,1}(a_0)=o(1)$, and the stated limit for the estimator follows by Slutsky's theorem.
\end{proof}

\begin{proof}[Proof of Theorem~\ref{thm:finite-dimensional-proximal-inference}]
Lemma~\ref{lem:prox-six-rem-decomp} gives the expansion componentwise at
$a_1,\ldots,a_m$. The pointwise remainder bounds in the proof of
Theorem~\ref{thm:pointwise-proximal-inference} apply componentwise. The
deterministic smoothing term is negligible under $nh_n^5=O(1)$.
Lemmas~\ref{lem:multiple_a_leading_term}
and~\ref{lem:smoothing_negligible_single_a} give the finite-dimensional
Gaussian limit with diagonal covariance matrix.
\end{proof}

\begin{proof}[Proof of Theorem~\ref{thm:uniform-proximal-inference}]
By Lemma~\ref{lem:prox-six-rem-decomp}, uniformly over $a_0\in\mathcal A_0$,
\[
\hat\theta_n^{DB,cf}(a_0)-\theta_0(a_0)-P_n\phi_{\infty,h_n,b_n,a_0}
=
\sum_{j=1}^6R_{n,j}(a_0).
\]
Assumptions~\ref{assm:G-bandwidth} and~\ref{assm:G-smooth} and Lemma~\ref{lem:R1-rate} give
\[
\begin{aligned}
\sqrt{nh_n\log n}
\sup_{a_0\in\mathcal A_0}|R_{n,1}(a_0)|
&=O\!\left(\sqrt{nh_n^5}\,h_n^\alpha\sqrt{\log n}\right)\\
&=o(1).
\end{aligned}
\]
Lemmas~\ref{lem:R2-control}, \ref{lem:R3-control}, \ref{lem:R4-control}, \ref{lem:R5-control}, and~\ref{lem:R6-rate} give
\[
\max_{2\le j\le6}
\sup_{a_0\in\mathcal A_0}|R_{n,j}(a_0)|
=o_p\{(nh_n\log n)^{-1/2}\}.
\]
The uniform expansion follows.

The corollary to Lemma~\ref{lem:leading_term_gaussian_approx} gives the Gaussian approximation for
\[
\sup_{a_0\in\mathcal A_0}
\left|
\frac{(nh_n)^{1/2}P_n\phi_{\infty,h_n,b_n,a_0}}
{\sigma_{\infty,h_n,b_n}(a_0)}
\right|.
\]
By Lemma~\ref{lem:limiting_variance} and the uniform expansion, the difference between this random variable and the corresponding supremum with $P_n\phi_{\infty,h_n,b_n,a_0}$ replaced by $\hat\theta_n^{DB,cf}(a_0)-\theta_0(a_0)$ is $o_p\{(\log n)^{-1/2}\}$.  Assumption~\ref{assm:G-bandwidth} implies $\log(h_n^{-1})=O(\log n)$.  The anti-concentration bound used in Lemma~\ref{lem:leading_term_gaussian_approx} therefore gives the stated Kolmogorov-distance convergence.  Lemma~\ref{lem:finite-mesh-gaussian} and Assumption~\ref{assm:G-grid} give the finite-mesh assertion.
\end{proof}

\section{Remainder control}
\subsection{The smoothing and local-polynomial matrix remainders}

We first control the deterministic smoothing remainder
\[
R_{n,1}(a_0)
=
\theta_{0,h_n,b_n}(a_0)-\theta_0(a_0).
\]

\begin{lemma}\label{lem:R1-rate}
If \ref{assm:kernel}, \ref{assm:bandwidths}, and
\ref{assm:regul1}(a)--(b) hold, then
\(R_{n,1}(a_0)=o(h_n^2)\).
If Assumption~\ref{assm:G-smooth}\textup{(i)}--\textup{(ii)} also holds, then
\[
\sup_{a_0\in\mathcal A_0}|R_{n,1}(a_0)|
=
O(h_n^{2+\alpha}+h_n^2b_n^\alpha).
\]
If Assumption~\ref{assm:G-bandwidth} also holds, then
\[
\sup_{a_0\in\mathcal A_0}|R_{n,1}(a_0)|
=
o\{(nh_n\log n)^{-1/2}\}.
\]
\end{lemma}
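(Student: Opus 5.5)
The plan rests on the exact polynomial-reproduction identities of the weight $\Gamma_{0,h_n,b_n,a_0}$ recorded after its definition:
\[
P_0\{\Gamma_{0,h_n,b_n,a_0}(A)\}=1,\qquad
P_0\{\Gamma_{0,h_n,b_n,a_0}(A)(A-a_0)\}=0,\qquad
P_0\{\Gamma_{0,h_n,b_n,a_0}(A)(A-a_0)^2\}=0.
\]
For $n$ large the support $\{|a-a_0|\le\max(h_n,b_n)\}$ lies in $B_{\delta_1}(a_0)$, or in $\mathcal A_{\delta_3}$ for the uniform case. Since $\theta_{0,h_n,b_n}(a_0)=P_0\{\Gamma_{0,h_n,b_n,a_0}(A)\theta_0(A)\}$, subtracting the second-order Taylor polynomial of $\theta_0$ at $a_0$ gives
\[
R_{n,1}(a_0)=P_0\Bigl[\Gamma_{0,h_n,b_n,a_0}(A)\,\tfrac12(A-a_0)^2\{\theta_0''(\widetilde A)-\theta_0''(a_0)\}\Bigr],
\]
with $\widetilde A$ between $A$ and $a_0$. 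The constant, linear and quadratic parts are annihilated exactly, so only the second-derivative oscillation remains. This also removes any need for a separate expansion of $c_{0,h_n,a_0,2}$ or $\theta''_{0,b_n}(a_0)$.

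The next step is to bound the weight. As in the proof of Lemma~\ref{lem:smoothing_negligible_single_a}, Lemma~\ref{lem:D-expansion} gives
\[
|\Gamma_{0,h_n,b_n,a_0}(a)|\le C\bigl[h_n^{-1}I\{|a-a_0|\le h_n\}+h_n^2b_n^{-3}I\{|a-a_0|\le b_n\}\bigr].
\]
Under the uniform assumptions this holds uniformly in $a_0\in\mathcal A_0$. Split the remainder into the two pieces. On the local-linear piece, $|A-a_0|\le h_n$, and the integral is bounded by
\[
C h_n^{-1}\cdot h_n^2\,\omega(h_n)\cdot P_0(|A-a_0|\le h_n)\le C h_n^2\omega(h_n),
\]
where $\omega$ is the modulus of continuity of $\theta_0''$ on $B_{\delta_1}(a_0)$, and boundedness of $f_A$ is used. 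On the correction piece the bound is
\[
C h_n^2b_n^{-3}\cdot b_n^2\,\omega(b_n)\cdot C b_n=C h_n^2\omega(b_n).
\]
By Assumption~\ref{assm:regul1}(a), $\theta_0''$ is continuous, so $\omega(h_n)+\omega(b_n)\to0$ and $R_{n,1}(a_0)=o(h_n^2)$. No rate is needed on $\tau_n$ here beyond boundedness of $f_A$ near $a_0$, which follows from (b).

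For the uniform claim, use the Hölder condition in Assumption~\ref{assm:G-smooth}(i), which gives $\omega(t)\le Ct^\alpha$ uniformly on $\mathcal A_{\delta_3}$. Then
\[
\sup_{a_0\in\mathcal A_0}|R_{n,1}(a_0)|=O(h_n^{2+\alpha}+h_n^2b_n^\alpha).
\]
Under Assumption~\ref{assm:G-bandwidth}, $b_n\asymp h_n$ because $\tau\in(0,\infty)$, so this is $O(h_n^{2+\alpha})$. Multiplying by $(nh_n\log n)^{1/2}$ gives
\[
(nh_n^5)^{1/2}h_n^\alpha(\log n)^{1/2}=O\{h_n^\alpha(\log n)^{1/2}\}\to0,
\]
since $h_n$ decays polynomially: $nh_n^5=O(1)$ gives $h_n=O(n^{-1/5})$.

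The only step requiring care is the uniformity of the weight bound over $a_0$. This comes from the uniform statements in Lemma~\ref{lem:D-expansion} together with the uniform lower bound on $f_A$, which keep the inverse moment matrices and $c_{0,h_n,a_0,2}$ uniformly bounded.
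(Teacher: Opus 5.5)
Your proof is correct and follows the paper's argument essentially step for step. The quadratic reproduction identities of $\Gamma_{0,h_n,b_n,a_0}$ reduce $R_{n,1}(a_0)$ to a second-derivative oscillation term, the two-piece envelope from Lemma~\ref{lem:D-expansion} bounds it by $Ch_n^2\{\omega(h_n)+\omega(b_n)\}$, and H\"older continuity together with $nh_n^5=O(1)$ gives the uniform rates; the only cosmetic difference is that you use the Lagrange form of the Taylor remainder, whereas the paper uses the integral form, which avoids having to argue measurability of the intermediate point $\widetilde A$.
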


\begin{proof}
Write \(h=h_n\), \(b=b_n\), and
\(\Gamma_0=\Gamma_{0,h,b,a_0}\).  Direct substitution of the definitions of
\(\Gamma_0\), \(D_{0,h,a_0,1}\), \(D_{0,b,a_0,2}\), and
\(c_{0,h,a_0,2}\) gives the quadratic reproduction identities
\[
P_0\{\Gamma_0(A)\}=1,
\qquad
P_0\{\Gamma_0(A)(A-a_0)\}=0,
\qquad
P_0\{\Gamma_0(A)(A-a_0)^2\}=0.
\]
Indeed, the local-linear component reproduces the constant and linear terms
and maps the quadratic term to \(h^2c_{0,h,a_0,2}\), whereas the
local-quadratic bias-correction component maps the quadratic term to the same
quantity and maps the constant and linear terms to zero.

For every \(a\) in a sufficiently small neighborhood of \(a_0\), the
integral form of Taylor's expansion gives
\[
\begin{aligned}
\theta_0(a)
={}&
\theta_0(a_0)
+\theta_0'(a_0)(a-a_0)
+\frac{1}{2}\theta_0''(a_0)(a-a_0)^2
\\
&\qquad+
(a-a_0)^2
\int_0^1(1-t)
\left\{
\theta_0''\bigl(a_0+t(a-a_0)\bigr)-\theta_0''(a_0)
\right\}\,dt.
\end{aligned}
\]
The reproduction identities therefore imply
\[
\begin{aligned}
R_{n,1}(a_0)
=
\int
&\Gamma_0(a)(a-a_0)^2
\int_0^1(1-t)
\left\{
\theta_0''\bigl(a_0+t(a-a_0)\bigr)-\theta_0''(a_0)
\right\}\,dt\,dF_0(a).
\end{aligned}
\]

By Lemma~\ref{lem:D-expansion}, positivity of \(f_A(a_0)\), and
\(h/b=O(1)\),
\[
\|D_{0,h,a_0,1}^{-1}\|_\infty
+
\|D_{0,b,a_0,2}^{-1}\|_\infty
+
|c_{0,h,a_0,2}|
\le C
\]
for all sufficiently large \(n\).  Since \(K\) is bounded and supported on
\([-1,1]\),
\[
|\Gamma_0(a)|
\le
C h^{-1}I(|a-a_0|\le h)
+
C h^2b^{-3}I(|a-a_0|\le b).
\]
Since \(f_A\) is bounded near \(a_0\), it follows that
\[
\begin{aligned}
|R_{n,1}(a_0)|
\le{}&
C h^2
\sup_{|u-a_0|\le h}
|\theta_0''(u)-\theta_0''(a_0)|+
C h^2
\sup_{|u-a_0|\le b}
|\theta_0''(u)-\theta_0''(a_0)|.
\end{aligned}
\]
Because \(h\to0\), \(b\to0\), and \(\theta_0''\) is continuous at \(a_0\),
the right-hand side is \(o(h^2)\).

For the uniform result, the matrix bounds above hold uniformly over
\(a_0\in\mathcal A_0\).  H\"older continuity gives
\[
\sup_{\substack{a_0\in\mathcal A_0\\|u-a_0|\le h}}
|\theta_0''(u)-\theta_0''(a_0)|
\le Ch^\alpha,
\qquad
\sup_{\substack{a_0\in\mathcal A_0\\|u-a_0|\le b}}
|\theta_0''(u)-\theta_0''(a_0)|
\le Cb^\alpha.
\]
Substitution into the preceding bound yields
\[
\sup_{a_0\in\mathcal A_0}|R_{n,1}(a_0)|
\le
C\{h^{2+\alpha}+h^2b^\alpha\}.
\]
Under Assumption~\ref{assm:G-bandwidth}, \(h/b\to\tau\in(0,\infty)\), so
the last display is \(O(h^{2+\alpha})\).  Moreover,
\[
\sqrt{nh\log n}\,h^{2+\alpha}
=
\sqrt{nh^5}\,h^\alpha\sqrt{\log n}
\longrightarrow0,
\]
because \(nh^5=O(1)\) implies \(h=O(n^{-1/5})\).  This proves the final
assertion.
\end{proof}

We next control the local-polynomial matrix remainder \(R_{n,6}\).  Throughout
the proof, \(\|\cdot\|_\infty\) denotes the elementwise maximum norm.

\begin{lemma}\label{lem:R6-rate}
If \ref{assm:kernel}, \ref{assm:bandwidths}, and
\ref{assm:regul1}(a)--(b) hold, then
\[
R_{n,6}(a_0)=O_p\{(nh_n)^{-1}\}.
\]
If Assumption~\ref{assm:G-smooth}\textup{(i)}--\textup{(ii)} also holds and
\(nh_n/\log(h_n^{-1})\to\infty\), then
\[
\sup_{a_0\in\mathcal A_0}|R_{n,6}(a_0)|
=
O_p\!\left\{\frac{\log(h_n^{-1})}{nh_n}\right\}.
\]
\end{lemma}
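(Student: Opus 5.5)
The proof of Lemma~\ref{lem:R6-rate} reduces \(R_{n,6}(a_0)\) to products of two stochastic factors. Each factor is an empirical design deviation of order \((nh_n)^{-1/2}\) pointwise, or \(\{\log(h_n^{-1})/(nh_n)\}^{1/2}\) uniformly. These products are multiplied by deterministic factors that are bounded uniformly in \(a_0\) and \(n\).

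\textbf{Deterministic factors.} First collect bounds on the deterministic factors appearing in \(R_{n,6}\).
\begin{itemize}
\item By Lemma~\ref{lem:D-expansion}, \(\|D_{0,h_n,a_0,1}^{-1}\|_\infty\), \(\|D_{0,b_n,a_0,2}^{-1}\|_\infty\) and \(|c_{0,h_n,a_0,2}|\) are bounded, uniformly over \(a_0\in\mathcal A_0\) under \ref{assm:G-smooth}(ii).
\item Since \(K\) is supported on \([-1,1]\) and \(\theta_0\) is bounded near \(a_0\), the vectors \(P_0\{w_{h_n,a_0,1}K_{h_n,a_0}\theta_0\}\), \(P_0\{\widetilde w_{h_n,a_0,1}K_{h_n,a_0}\}\) and \(P_0\{w_{b_n,a_0,2}K_{b_n,a_0}\theta_0\}\) are \(O(1)\).
\item The factor \(e_3^\top D_{0,b_n,a_0,2}^{-1}P_0\{w_{b_n,a_0,2}K_{b_n,a_0}\theta_0\}=\tfrac12 b_n^2\theta_{0,b_n}''(a_0)\) is in fact \(O(b_n^2)\), as shown in the proof of Lemma~\ref{lem:smoothing_negligible_single_a}. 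Only boundedness is needed here.
\item Finally, \(\tau_n^2=O(1)\).
\end{itemize}

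\textbf{Stochastic factors.} Next bound the stochastic factors. Each entry of \(D_{n,h_n,a_0,1}-D_{0,h_n,a_0,1}\) has the form \((P_n-P_0)g\) with \(|g|\le Ch_n^{-1}I(|A-a_0|\le h_n)\). Hence its variance is at most \(n^{-1}P_0g^2\le C(nh_n)^{-1}\), and Chebyshev's inequality gives \(O_p\{(nh_n)^{-1/2}\}\). The same argument applies to \((P_n-P_0)\{\widetilde w_{h_n,a_0,1}K_{h_n,a_0}\}\), and with \(b_n\) in place of \(h_n\) to \(D_{n,b_n,a_0,2}-D_{0,b_n,a_0,2}\). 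Because \(h_n/b_n\to\tau<\infty\), we have \((nb_n)^{-1/2}=O\{(nh_n)^{-1/2}\}\).

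Since \(nh_n\to\infty\), these deviations are \(o_p(1)\), and \(D_0^{-1}\) is bounded. The empirical matrices are therefore nonsingular with probability tending to one, and \(\|D_n^{-1}\|_\infty=O_p(1)\). The identity \(D_n^{-1}-D_0^{-1}=D_0^{-1}(D_0-D_n)D_n^{-1}\) then gives \(\|D_n^{-1}-D_0^{-1}\|_\infty=O_p\{(nh_n)^{-1/2}\}\). The decomposition of \(c_{n,h_n,a_0,2}-c_{0,h_n,a_0,2}\) displayed in the proof of Lemma~\ref{lem:prox-six-rem-decomp} then yields \(O_p\{(nh_n)^{-1/2}\}\) for that difference as well, using \(P_n\{\widetilde w K\}=O_p(1)\).

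\textbf{Pointwise bound.} Each of the five terms of \(R_{n,6}\) contains exactly two such deviation factors multiplied by bounded deterministic quantities. Each term is therefore \(O_p\{(nh_n)^{-1}\}\).

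\textbf{Uniform bound.} For the uniform statement, the only new ingredient is
\[
\sup_{a_0\in\mathcal A_0}\|D_{n,h_n,a_0,1}-D_{0,h_n,a_0,1}\|_\infty=O_p\bigl[\{\log(h_n^{-1})/(nh_n)\}^{1/2}\bigr],
\]
together with the analogous bounds for the \(b_n\)-matrix and for \(\widetilde wK\). I expect this step to be the main obstacle.

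The plan is to use Assumption~\ref{assm:kernel}. Together with Lemma~11 of \citet{Takatsu2025debiased}, it implies that the classes \(\{a\mapsto((a-a_0)/h)^jK((a-a_0)/h):a_0\in\mathcal A_0\}\) for \(j\le4\) are uniformly VC-type, with a constant envelope and variance \(O(h)\). A Talagrand/Einmahl--Mason-type maximal inequality, for instance Corollary~5.1 of \citet{Chernozhukov2014Gaussian}, gives
\[
E\sup_{a_0}|(P_n-P_0)g|\lesssim\sqrt{h\log(h^{-1})/n}+\log(h^{-1})/n.
\]
After division by \(h\), this is \(O[\{\log(h^{-1})/(nh)\}^{1/2}]\) under \(nh_n/\log(h_n^{-1})\to\infty\).

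With this rate in hand, the uniform bounds on the deterministic factors make the inverse-matrix argument uniform. This includes uniform invertibility of \(D_{n}\) with probability tending to one. Multiplying the two uniform factors in each term gives the stated \(O_p\{\log(h_n^{-1})/(nh_n)\}\).
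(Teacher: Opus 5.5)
Your proposal is correct and follows the paper's proof step for step: Chebyshev bounds of order $(nh_n)^{-1/2}$ for the entrywise design deviations (using $(nb_n)^{-1/2}=O\{(nh_n)^{-1/2}\}$), the inverse-difference identity together with the expansion of $c_{n,h_n,a_0,2}-c_{0,h_n,a_0,2}$, and bounded deterministic factors, so that each of the five summands of $R_{n,6}$ is a product of two deviations; for the uniform $\{\log(h_n^{-1})/(nh_n)\}^{1/2}$ rates you use a VC-type maximal inequality, where the paper invokes Lemma~20 of \citet{Takatsu2025debiased}. Your remark that $e_3^\top D_{0,b_n,a_0,2}^{-1}P_0\{w_{b_n,a_0,2}K_{b_n,a_0}\theta_0\}=O(b_n^2)$ is correct but, as you note, only boundedness is needed.
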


\begin{proof}
Write \(h=h_n\) and \(b=b_n\).  We first establish the pointwise empirical
matrix rates.  Each entry of
\[
D_{n,h,a_0,1}-D_{0,h,a_0,1}
\]
is the empirical mean of a centered function of the form
\[
\left(\frac{A-a_0}{h}\right)^jK_{h,a_0}(A),
\qquad j\in\{0,1,2\}.
\]
Because \(K\) is bounded and compactly supported and \(f_A\) is bounded near
\(a_0\),
\[
P_0\left[
\left\{
\left(\frac{A-a_0}{h}\right)^jK_{h,a_0}(A)
\right\}^2
\right]
\le Ch^{-1}.
\]
Consequently, Chebyshev's inequality yields
\[
\|D_{n,h,a_0,1}-D_{0,h,a_0,1}\|_\infty
=
O_p\{(nh)^{-1/2}\}.
\]
The same argument gives
\[
\|(P_n-P_0)
\{\widetilde w_{h,a_0,1}(A)K_{h,a_0}(A)\}\|_\infty
=
O_p\{(nh)^{-1/2}\}
\]
and
\[
\|D_{n,b,a_0,2}-D_{0,b,a_0,2}\|_\infty
=
O_p\{(nb)^{-1/2}\}.
\]

Lemma~\ref{lem:D-expansion} implies that the population moment matrices are
nonsingular with bounded inverses for all sufficiently large \(n\).  Since
the empirical matrices converge to their population counterparts, their
inverses are \(O_p(1)\).  The identity
\[
D_n^{-1}-D_0^{-1}
=
D_n^{-1}(D_0-D_n)D_0^{-1}
\]
therefore gives
\[
\|D_{n,h,a_0,1}^{-1}-D_{0,h,a_0,1}^{-1}\|_\infty
=
O_p\{(nh)^{-1/2}\},
\]
and
\[
\|D_{n,b,a_0,2}^{-1}-D_{0,b,a_0,2}^{-1}\|_\infty
=
O_p\{(nb)^{-1/2}\}.
\]

Moreover,
\[
\begin{aligned}
c_{n,h,a_0,2}-c_{0,h,a_0,2}
={}&
e_1^\top
(D_{n,h,a_0,1}^{-1}-D_{0,h,a_0,1}^{-1})
P_n\{\widetilde w_{h,a_0,1}K_{h,a_0}\}
\\
&\qquad+
e_1^\top D_{0,h,a_0,1}^{-1}
(P_n-P_0)\{\widetilde w_{h,a_0,1}K_{h,a_0}\}.
\end{aligned}
\]
Since
\(P_n\{\widetilde w_{h,a_0,1}K_{h,a_0}\}=O_p(1)\), the preceding rates imply
\[
|c_{n,h,a_0,2}-c_{0,h,a_0,2}|
=
O_p\{(nh)^{-1/2}\}.
\]

Continuity of \(\theta_0\), boundedness of \(f_A\), and compact support of
\(K\) imply
\[
\begin{aligned}
&
\left\|
P_0\{w_{h,a_0,1}(A)K_{h,a_0}(A)\theta_0(A)\}
\right\|_\infty
+
\left\|
P_0\{w_{b,a_0,2}(A)K_{b,a_0}(A)\theta_0(A)\}
\right\|_\infty
\\
&\qquad+
\left\|
P_0\{\widetilde w_{h,a_0,1}(A)K_{h,a_0}(A)\}
\right\|_\infty
\le C.
\end{aligned}
\]
Also, \(h/b=O(1)\).  Hence, from the definition of \(R_{n,6}\),
\[
\begin{aligned}
|R_{n,6}(a_0)|
\lesssim{}&
\|D_{n,h,a_0,1}-D_{0,h,a_0,1}\|_\infty
\|D_{n,h,a_0,1}^{-1}-D_{0,h,a_0,1}^{-1}\|_\infty
\\
&\quad+
\|D_{n,b,a_0,2}-D_{0,b,a_0,2}\|_\infty
\|D_{n,b,a_0,2}^{-1}-D_{0,b,a_0,2}^{-1}\|_\infty
\\
&\quad+
\|D_{n,h,a_0,1}^{-1}-D_{0,h,a_0,1}^{-1}\|_\infty
\|(P_n-P_0)
\{\widetilde w_{h,a_0,1}K_{h,a_0}\}\|_\infty
\\
&\quad+
\|D_{n,h,a_0,1}^{-1}-D_{0,h,a_0,1}^{-1}\|_\infty
\|D_{n,h,a_0,1}-D_{0,h,a_0,1}\|_\infty
\\
&\quad+
|c_{n,h,a_0,2}-c_{0,h,a_0,2}|
\|D_{n,b,a_0,2}^{-1}-D_{0,b,a_0,2}^{-1}\|_\infty.
\end{aligned}
\]
The first, third, and fourth terms are \(O_p\{(nh)^{-1}\}\), the second is
\(O_p\{(nb)^{-1}\}\), and the fifth is
\(O_p\{n^{-1}(hb)^{-1/2}\}\).  Since \(h/b=O(1)\),
\[
(nb)^{-1}
+
n^{-1}(hb)^{-1/2}
=
O\{(nh)^{-1}\}.
\]
This proves the pointwise assertion.

For the uniform result, Assumption~\ref{assm:kernel} implies that the
translated and rescaled kernel classes appearing above are VC--type \citep{Takatsu2025debiased}. Following Lemma 20 in \citet{Takatsu2025debiased},  a
standard VC--type maximal inequality, together with
\(nh/\log(h^{-1})\to\infty\), yields
\[
\begin{aligned}
\sup_{a_0\in\mathcal A_0}
\|D_{n,h,a_0,1}-D_{0,h,a_0,1}\|_\infty
&=
O_p\!\left[
\left\{\frac{\log(h^{-1})}{nh}\right\}^{1/2}
\right],\\
\sup_{a_0\in\mathcal A_0}
\|(P_n-P_0)
\{\widetilde w_{h,a_0,1}K_{h,a_0}\}\|_\infty
&=
O_p\!\left[
\left\{\frac{\log(h^{-1})}{nh}\right\}^{1/2}
\right].
\end{aligned}
\]
Uniform nonsingularity of the population matrices and the inverse identity
then give
\[
\sup_{a_0\in\mathcal A_0}
\|D_{n,h,a_0,1}^{-1}-D_{0,h,a_0,1}^{-1}\|_\infty
=
O_p\!\left[
\left\{\frac{\log(h^{-1})}{nh}\right\}^{1/2}
\right],
\]
and the preceding expansion of \(c_{n,h,a_0,2}-c_{0,h,a_0,2}\) gives the
same rate uniformly.

The corresponding \(b\)-bandwidth quantities have rate
\[
O_p\!\left[
\left\{\frac{\log(b^{-1})}{nb}\right\}^{1/2}
\right].
\]
Since \(h/b=O(1)\),
\[
\frac{\log(b^{-1})}{nb}
=
O\!\left\{\frac{\log(h^{-1})}{nh}\right\}.
\]
Substituting these uniform rates into the preceding bound for \(R_{n,6}\)
shows that each summand is
\[
O_p\!\left\{\frac{\log(h^{-1})}{nh}\right\}
\]
uniformly over \(a_0\in\mathcal A_0\), which proves the result.
\end{proof}

\subsection{Control of the proximal drift and product-empirical remainders}
\label{sec:R4-R5-control}

\subsubsection{Control of the proximal doubly robust remainder}
\begin{lemma}
\label{lem:R4-control}
Under Assumptions~\ref{assm:A}, \ref{assm:B}, and~\ref{assm:D}--\ref{assm:F},
\[
    R_{n,4}(a_0)=o_p\{(nh_n)^{-1/2}\}.
\]
Under Assumptions~\ref{assm:A}, \ref{assm:B}, \ref{assm:D}, and~\ref{assm:G},
\[
    \sup_{a_0\in\mathcal A_0}|R_{n,4}(a_0)|
    =
    o_p\{(nh_n\log n)^{-1/2}\}.
\]
\end{lemma}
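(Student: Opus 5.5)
Write \(\Delta_{q,k}(a):=\hat q_n^{(-k)}(Z,a,X)-q_0(Z,a,X)\) and \(\Delta_{h,k}(a):=h_0(W,a,X)-\hat h_n^{(-k)}(W,a,X)\). The plan is to bound the conditional product-bias \(E_0[\Delta_{q,k}(a)\Delta_{h,k}(a)\mid A=a,\trainingsigma_k]\) uniformly in \(a\) on the relevant neighbourhood. This will use the partition from Assumption~\ref{assm:DR_and_rates}, or Assumption~\ref{assm:G-DR-rates} in the uniform case. The resulting bound is then multiplied by \(\int|\Gamma_n(a)|\,dF_0(a)=O_p(1)\). Since \(\hat q_n^{(-k)}\) and \(\hat h_n^{(-k)}\) are \(\trainingsigma_k\)-measurable, the conditional expectation in \(R_{n,4}\) is an expectation over an independent copy. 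Hence \(|R_{n,4}(a_0)|\le\max_k\sup_{a}|E_0[\Delta_{q,k}(a)\Delta_{h,k}(a)\mid A=a,\trainingsigma_k]|\int|\Gamma_n(a)|\,dF_0(a)\). The supremum runs over \(a\) in the support of \(\Gamma_n\), which lies in \(B_{\max(h_n,b_n)}(a_0)\subset B_{\delta_1}(a_0)\) for large \(n\). On the event of nonsingularity it lies in \(\mathcal A_{\delta_3}\) uniformly over \(a_0\in\mathcal A_0\).

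For the weight factor, I would show \(\int|\Gamma_n(a)|\,dF_0(a)=O_p(1)\) pointwise, and \(\sup_{a_0\in\mathcal A_0}\int|\Gamma_n(a)|\,dF_0(a)=O_p(1)\) uniformly. The argument follows the bound on \(\int|\Gamma_0|\,dF_0\) in the proof of Lemma~\ref{lem:smoothing_negligible_single_a}. The ingredients are \(|\Gamma_n(a)|\le C\{h_n^{-1}I(|a-a_0|\le h_n)+h_n^2b_n^{-3}I(|a-a_0|\le b_n)\}\) on the event that \(\|D_{n,h_n,a_0,1}^{-1}\|_\infty\), \(\|D_{n,b_n,a_0,2}^{-1}\|_\infty\) and \(|c_{n,h_n,a_0,2}|\) are bounded. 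The proof of Lemma~\ref{lem:R6-rate} shows this event has probability tending to one: the empirical matrices converge to the population ones (uniformly in \(a_0\) under \(nh_n/\log(h_n^{-1})\to\infty\)), and Lemma~\ref{lem:D-expansion} bounds the population inverses. Integrating against the locally bounded \(f_A\) and using \(h_n/b_n=O(1)\) then gives a constant bound.

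For the product-bias, I split on \(\mathcal S_1,\mathcal S_2,\mathcal S_3\). On \(\mathcal S_1\), \(q_\infty=q_0\), so \(\Delta_{q,k}=\hat q_n^{(-k)}-q_\infty\), while \(h_0-\hat h_n^{(-k)}=(h_0-h_\infty)+(h_\infty-\hat h_n^{(-k)})\). Cauchy--Schwarz bounds this piece by \(d_{q,k}(\cdot;\mathcal S_1)\) times \(\{\sup_a E_0[(h_0-h_\infty)^2\mid A=a]^{1/2}+d_{h,k}(\cdot;\mathcal S_1)\}\). The first factor is \(o_p\{(nh_n)^{-1/2}\}\), and the second is \(O(1)+o_p(1)\) by the finite second-moment condition on \(h_\infty-h_0\). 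The region \(\mathcal S_2\) is symmetric with the roles reversed. On \(\mathcal S_3\) both limits coincide with the truth, so Cauchy--Schwarz gives directly the product \(d_{q,k}d_{h,k}=o_p\{(nh_n)^{-1/2}\}\). I will use the indicator splitting \(I_{\mathcal S_1}+I_{\mathcal S_2}+I_{\mathcal S_3}=1\) a.s. under \(P_0(\cdot\mid A=a)\) for a.e. \(a\), which is needed since the sets are indexed by \(a\). The uniform case repeats the argument with \(\mathcal S_j'\), the \((nh_n\log n)^{-1/2}\) rates of Assumption~\ref{assm:G-DR-rates}, and the \(o_p(1)\) rates of Assumption~\ref{assm:G-nuisance-L2}.

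I expect the main obstacle to be the uniform weight bound \(\sup_{a_0}\int|\Gamma_n|\,dF_0=O_p(1)\) for the \emph{empirical} weights. It requires the uniform control of the empirical inverse matrices, which I import from the VC-type maximal inequality in the proof of Lemma~\ref{lem:R6-rate}. Off the event of nonsingularity, \(\Gamma_n\equiv0\), so \(R_{n,4}=0\) trivially there.
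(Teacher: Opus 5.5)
Your proposal is correct and follows essentially the same route as the paper. Both bound $\int|\Gamma_{n,h_n,b_n,a_0}(a)|\,dF_0(a)=O_p(1)$, pointwise and uniformly, via the empirical moment-matrix bounds from the $R_{n,6}$ analysis; reduce $|R_{n,4}(a_0)|$ to this factor times $\max_k\sup_a$ of the conditional product bias over the kernel support; and control that supremum by Cauchy--Schwarz on $\mathcal S_1,\mathcal S_2,\mathcal S_3$, with the triangle inequality giving the $O_p(1)$ cross factors. Your explicit appeal to Assumption~\ref{assm:G-nuisance-L2} for the $o_p(1)$ cross factors on $\mathcal S_1'$ and $\mathcal S_2'$ in the uniform case fills in a step the paper leaves implicit, since Assumption~\ref{assm:G-DR-rates} alone does not supply those factors.
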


\begin{proof}
Write \(h=h_n\), \(b=b_n\), and \(\tau_n=h/b\).  We first establish a
bound on the \(L_1(F_0)\)-norm of the estimated equivalent kernel.  By
the empirical moment-matrix bounds used in the analysis of
\(R_{n,6}\),
\[
\begin{aligned}
\|D_{n,h,a_0,1}^{-1}\|_\infty
+
\|D_{n,b,a_0,2}^{-1}\|_\infty
+
|c_{n,h,a_0,2}|
=
O_p(1).
\end{aligned}
\]
Also, boundedness and compact support of \(K\), together with local
boundedness of \(f_A\), imply
\[
\begin{aligned}
\int
\|w_{h,a_0,1}(a)\|
|K_{h,a_0}(a)|\,dF_0(a)
&=O(1),\\
\int
\|w_{b,a_0,2}(a)\|
|K_{b,a_0}(a)|\,dF_0(a)
&=O(1).
\end{aligned}
\]
Consequently,
\begin{equation}
\label{eq:Gamma-L1-R4}
\int|\Gamma_{n,h,b,a_0}(a)|\,dF_0(a)
=
O_p(1),
\end{equation}
since \(\tau_n=O(1)\).  Under the uniform matrix bounds, the same
argument gives
\begin{equation}
\label{eq:Gamma-L1-R4-uniform}
\sup_{a_0\in\mathcal A_0}
\int|\Gamma_{n,h,b,a_0}(a)|\,dF_0(a)
=
O_p(1).
\end{equation}

We next record the conditional proximal product-bias identity.  Fix a fold
\(k\) and condition on \(\trainingsigma_k\).  The nuisance estimators are then
fixed measurable functions, while an independent generic observation
\(O\) has law \(P_0\).  For \(F_0\)-almost every \(a\),
the outcome bridge equation gives
\[
\begin{aligned}
E_0\Big[
\hat q_n^{(-k)}(Z,a,X)
\{Y-h_0(W,a,X)\}
\mid A=a,\trainingsigma_k
\Big]
=0.
\end{aligned}
\]
Moreover, the treatment bridge equation and the density-ratio identity give
\[
\begin{aligned}
&E_0\Big[
q_0(Z,a,X)
\{h_0(W,a,X)-\hat h_n^{(-k)}(W,a,X)\}
\mid A=a,\trainingsigma_k
\Big]\\
&\qquad =
E_0\Big[
h_0(W,a,X)-\hat h_n^{(-k)}(W,a,X)
\mid\trainingsigma_k
\Big].
\end{aligned}
\]
Subtracting and adding the true bridges therefore yields
\begin{align}
\label{eq:conditional-proximal-product-bias}
&E_0\left[
\hat\xi_{n,1}^{(-k)}(O)
+
S_{\hat h_n^{(-k)}}(A)
\mid A=a,\trainingsigma_k
\right]
-\theta_0(a)
\nonumber\\
&\quad =
E_0\Big[
\{\hat q_n^{(-k)}(Z,a,X)-q_0(Z,a,X)\}
\{h_0(W,a,X)-\hat h_n^{(-k)}(W,a,X)\}
\mid A=a,\trainingsigma_k
\Big].
\end{align}
This argument uses only the two bridge equations.  In particular,
\(\hat q_n^{(-k)}\) and \(\hat h_n^{(-k)}\) need not themselves solve
either bridge equation.

For brevity, denote the right-hand side of
\eqref{eq:conditional-proximal-product-bias} by
\(\Delta_{n,k}(a)\).  From the definition of \(R_{n,4}\),
\[
\begin{aligned}
|R_{n,4}(a_0)|
\le
\sum_{k=1}^K\frac{n_k}{n}
\int
|\Gamma_{n,h,b,a_0}(a)|
|\Delta_{n,k}(a)|
\,dF_0(a).
\end{aligned}
\]
For all sufficiently large \(n\), the support of
\(\Gamma_{n,h,b,a_0}\) is contained in
\(B_{\delta_1}(a_0)\).  Hence
\begin{equation}
\label{eq:R4-reduced-to-drift}
|R_{n,4}(a_0)|
\le
\left\{
\int|\Gamma_{n,h,b,a_0}(a)|\,dF_0(a)
\right\}
\max_{1\le k\le K}
\sup_{a\in B_{\delta_1}(a_0)}
|\Delta_{n,k}(a)|.
\end{equation}

It remains to control the conditional drift.  Since
\(\mathcal S_1,\mathcal S_2,\mathcal S_3\) partition the relevant
conditional support,
\[
\begin{aligned}
\Delta_{n,k}(a)
=
\sum_{j=1}^3
E_0\Big[
&I_{\mathcal S_j}(a,Z,W,X)
\{\hat q_n^{(-k)}(Z,a,X)-q_0(Z,a,X)\}\\
&\qquad\times
\{h_0(W,a,X)-\hat h_n^{(-k)}(W,a,X)\}
\mid A=a,\trainingsigma_k
\Big].
\end{aligned}
\]
The Cauchy--Schwarz inequality gives
\[
\begin{aligned}
\sup_{a\in\mathcal I}|\Delta_{n,k}(a)|
\le
\sum_{j=1}^3
&d_{q,k}(
\hat q_n^{(-k)},q_0;\mathcal I,\mathcal S_j)
d_{h,k}(
\hat h_n^{(-k)},h_0;\mathcal I,\mathcal S_j).
\end{aligned}
\]

On \(\mathcal S_1\), \(q_\infty=q_0\), and therefore
\[
d_{q,k}(
\hat q_n^{(-k)},q_0;\mathcal I,\mathcal S_1)
=
d_{q,k}(
\hat q_n^{(-k)},q_\infty;\mathcal I,\mathcal S_1).
\]
Furthermore,
\[
\begin{aligned}
d_{h,k}(
\hat h_n^{(-k)},h_0;\mathcal I,\mathcal S_1)
\le 
d_{h,k}(
\hat h_n^{(-k)},h_\infty;\mathcal I,\mathcal S_1)+
d_{h,k}(
h_\infty,h_0;\mathcal I,\mathcal S_1)
=
O_p(1).
\end{aligned}
\]
Thus, the \(\mathcal S_1\) contribution is
\(o_p\{(nh)^{-1/2}\}\).  The same argument with the roles of the
bridges reversed shows that the \(\mathcal S_2\) contribution is
\(o_p\{(nh)^{-1/2}\}\).  On \(\mathcal S_3\), both limiting
bridges are correct, and the product-rate condition gives the same order.
Therefore,
\[
\max_{1\le k\le K}
\sup_{a\in B_{\delta_1}(a_0)}
|\Delta_{n,k}(a)|
=
o_p\{(nh)^{-1/2}\}.
\]
Combining this result with \eqref{eq:Gamma-L1-R4} and
\eqref{eq:R4-reduced-to-drift} proves the pointwise assertion.

For the uniform assertion, the support of
\(\Gamma_{n,h,b,a_0}\) is contained in
\(\mathcal A_{\delta_3}\), uniformly over \(a_0\in\mathcal A_0\),
for all sufficiently large \(n\).  Hence
\[
\begin{aligned}
\sup_{a_0\in\mathcal A_0}|R_{n,4}(a_0)|
\le{}&
\sup_{a_0\in\mathcal A_0}
\int|\Gamma_{n,h,b,a_0}(a)|\,dF_0(a)
\max_{1\le k\le K}
\sup_{a\in\mathcal A_{\delta_3}}
|\Delta_{n,k}(a)|.
\end{aligned}
\]
The first factor is \(O_p(1)\) by
\eqref{eq:Gamma-L1-R4-uniform}, and the partitioned uniform rate
conditions make the second factor
\(o_p\{(nh\log n)^{-1/2}\}\).  This proves the result.
\end{proof}

\subsubsection{A conditional product empirical-process bound}

\begin{lemma}
\label{lem:conditional-product-process}
Under Assumptions~\ref{assm:basic.proxy}, \ref{assm:kernel},
\ref{assm:bandwidths}, \ref{assm:folds}, and~\ref{assm:nuisance1}, for
\(\ell\in\{h_n,b_n\}\), \(j\in\{0,1,2\}\), and fixed \(a_0\),
\[
\begin{aligned}
&\iint
\left(\frac{a-a_0}{\ell}\right)^j
K_{\ell,a_0}(a)\hat h_n^{(-k)}(w,a,x)
d(Q_{n,k}^{WX}-Q_0^{WX})(w,x)\,d(F_{n,k}-F_0)(a)\\
&\qquad=
O_p\{(n_k\sqrt\ell)^{-1}\}.
\end{aligned}
\]
Under Assumptions~\ref{assm:basic.proxy}, \ref{assm:kernel},
\ref{assm:G-bandwidth}, and~\ref{assm:G-crossfit}, uniformly over
\(a_0\in\mathcal A_0\), the order is
\(O_p\{(n_k\ell)^{-1}\}\).
\end{lemma}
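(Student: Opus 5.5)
Condition on $\trainingsigma_k$ throughout, so that $g:=\hat h_n^{(-k)}$ is a fixed function bounded by $C_h$ on the relevant treatment window with probability tending to one (Assumption~\ref{assm:nuisance1}). Set $\psi_\ell(a):=\{(a-a_0)/\ell\}^jK_{\ell,a_0}(a)$, which is bounded by $C\ell^{-1}$ and supported on $[a_0-\ell,a_0+\ell]$. Expanding the two empirical measures, the quantity equals
\[
\frac{1}{n_k^2}\sum_{i\in I_k}\sum_{j'\in I_k}H(O_i,O_{j'}),
\]
where
\[
H(o_1,o_2):=\psi_\ell(a_2)g(w_1,a_2,x_1)-\int\psi_\ell(a)g(w_1,a,x_1)\,dF_0(a)-\int\psi_\ell(a_2)g(w,a_2,x)\,dQ_0^{WX}(w,x)+\iint\psi_\ell g\,dQ_0^{WX}\,dF_0 .
\]
This kernel is completely degenerate: it integrates to zero in each argument separately, since $(W_i,X_i)$ and $A_{j'}$ are conditionally i.i.d.\ from $P_0$ given $\trainingsigma_k$. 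The plan is to split the double sum into the diagonal $i=j'$ and the off-diagonal part, which is a degenerate $U$-statistic, and then bound the second moment of each piece.

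For the off-diagonal part, degeneracy kills all cross terms except those with matched index pairs. Its conditional second moment is therefore at most $C n_k^{-2}E_0\{H(O_1,O_2)^2\mid\trainingsigma_k\}$ with $O_1,O_2$ independent. Since $|g|\le C_h$ on the window,
\[
E H^2\lesssim E\{\psi_\ell(A_2)^2\}\le C\ell^{-2}\,P_0(|A-a_0|\le\ell)\le C\ell^{-1},
\]
using $\sup f_A<\infty$ from Assumption~\ref{assm:basic.proxy}. Hence the off-diagonal part is $O_p\{(n_k\sqrt\ell)^{-1}\}$ by conditional Chebyshev, which transfers to an unconditional $O_p$ statement by dominated convergence of conditional probabilities. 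For the diagonal $n_k^{-2}\sum_iH(O_i,O_i)$, the first moment is bounded by $n_k^{-1}E|H(O,O)|\lesssim n_k^{-1}E|\psi_\ell(A)|+n_k^{-1}\le Cn_k^{-1}$. Because $(W,X)$ and $A$ are dependent, the mean of the diagonal is not zero, but $E|\psi_\ell(A)|=O(1)$, so this term is $O_p(n_k^{-1})=o_p\{(n_k\sqrt\ell)^{-1}\}$. Together these give the pointwise claim.

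For the uniform statement over $a_0\in\mathcal A_0$, the pointwise Chebyshev argument no longer suffices; instead I would apply a conditional maximal inequality for degenerate $U$-processes, for instance Lemma 20/Corollary of \citet{Takatsu2025debiased} or the inequalities of Nolan--Pollard / de la Pe\~na--Gin\'e. These apply to the class $\{H_{a_0}:a_0\in\mathcal A_0\}$, which is VC-type by Assumption~\ref{assm:kernel}, since multiplying by the fixed bounded $g$ and integrating against $F_0$ or $Q_0^{WX}$ preserve VC-type (Lemma~5.2 of \citet{VanderVaart2006estimating}), and which has envelope $C\ell^{-1}$. Such a bound gives $E\sup|U_n|\lesssim n_k^{-1}\ell^{-1}\log(\cdot)$ in the crude envelope form, and with the variance scale $\ell^{-1/2}$ together with $nh^\kappa\to\infty$ the logarithmic factors are absorbed, leaving $O_p\{(n_k\ell)^{-1}\}$. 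The diagonal term is bounded uniformly by the envelope, giving $O_p\{(n_k\ell)^{-1}\}$ deterministically. The main obstacle is this uniform $U$-process step: one must verify measurability and VC characteristics conditionally on $\trainingsigma_k$ without entropy conditions on $g$ itself. This works because $g$ enters only as a single fixed bounded multiplier, so the index set of the class is the scalar $a_0$ alone.
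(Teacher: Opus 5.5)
Your proposal is correct and follows essentially the same route as the paper: your $H$ is exactly the paper's two-coordinate centering $f^\circ$, and the paper takes the same steps you do. These are a diagonal/off-diagonal split with an $O\{(n_k^2\ell)^{-1}\}$ conditional second moment for the degenerate part and an $O_p(n_k^{-1})$ diagonal. For the uniform claim it uses a canonical $U$-process maximal inequality over the VC-type kernel class, with $\hat h_n^{(-k)}$ entering only as a single fixed bounded multiplier.

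The one loose point is your remark about logarithmic factors being absorbed in the uniform step. With envelope $C\ell^{-1}$ and the finite uniform entropy integral of a VC-type class, the standard bound for degenerate $U$-processes already gives a conditional expected supremum of order $(n_k\ell)^{-1}$ with no logarithm. That is how the paper concludes, without appealing to $nh_n^\kappa\to\infty$.
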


\begin{proof}
Fix \(k\) and condition on \(\trainingsigma_k\).  For all sufficiently large
\(n\), the kernel support is contained in \(B_{\delta_1}(a_0)\) in the
pointwise setting and in \(\mathcal A_{\delta_3}\), uniformly over
\(a_0\in\mathcal A_0\), in the uniform setting.  On the corresponding events
in Assumptions~\ref{assm:nuisance1} and~\ref{assm:G-crossfit}, respectively,
the product of the kernel and \(\hat h_n^{(-k)}\) is bounded by the kernel
envelope, up to a fixed constant.  The observations in \(I_k\) are IID from
\(P_0\).

For
\[
o_r=(y_r,a_r,z_r,w_r,x_r),\qquad r\in\{1,2\},
\]
define
\[
f_{n,k,\ell,j,a_0}(o_1,o_2)
:=
\left(\frac{a_2-a_0}{\ell}\right)^j
K_{\ell,a_0}(a_2)
\hat h_n^{(-k)}(w_1,a_2,x_1).
\]
Then, exactly,
\[
\begin{aligned}
&\iint
\left(\frac{a-a_0}{\ell}\right)^j
K_{\ell,a_0}(a)
\hat h_n^{(-k)}(w,a,x)\,
d(Q_{n,k}^{WX}-Q_0^{WX})(w,x)\,
d(F_{n,k}-F_0)(a)\\
&\qquad =
\iint
f_{n,k,\ell,j,a_0}(o_1,o_2)\,
d(P_{n,k}-P_0)(o_1)\,
d(P_{n,k}-P_0)(o_2).
\end{aligned}
\]
Thus the left-hand side is a same-sample second-order product empirical
process, not a product of independent empirical processes.

Write \(f=f_{n,k,\ell,j,a_0}\) and define
\[
\begin{aligned}
f^\circ(o_1,o_2)
:={}&f(o_1,o_2)
-\int f(u,o_2)\,dP_0(u)
-\int f(o_1,u)\,dP_0(u)\\
&
\quad+\iint f(u,v)\,dP_0(u)\,dP_0(v).
\end{aligned}
\]
Then \(f^\circ\) is \(P_0\)-degenerate in each argument.  Define
\[
\widetilde f^\circ(o_1,o_2)
:=
\frac{f^\circ(o_1,o_2)+f^\circ(o_2,o_1)}{2}.
\]
The ordered off-diagonal sums of \(f^\circ\) and \(\widetilde f^\circ\)
are equal.  Exactly,
\[
\begin{aligned}
\iint f\,d(P_{n,k}-P_0)^{\otimes2}
={}&\frac{1}{n_k^2}
\sum_{\substack{r,s\in I_k\\r\ne s}}
\widetilde f^\circ(O_r,O_s)
+
\frac{1}{n_k^2}\sum_{r\in I_k}f^\circ(O_r,O_r).
\end{aligned}
\]

For fixed \(a_0\), the relevant class contains a single function.  Since
\(K\) is supported on \([-1,1]\), the polynomial factor is bounded by one
on the support of \(K_{\ell,a_0}\).  Hence an envelope is
\[
F_{n,k,\ell,a_0}(o_1,o_2)
=
C_h|K_{\ell,a_0}(a_2)|.
\]
A standard change of variables gives
\[
\begin{aligned}
\|F_{n,k,\ell,a_0}\|_{P_0\otimes P_0,2}^2
&\le
C
\int K_{\ell,a_0}(a)^2\,dF_0(a)\\
&\le C\ell^{-1}.
\end{aligned}
\]
The same bound, up to a fixed multiplicative constant, holds for
\(\|f^\circ\|_{P_0\otimes P_0,2}\).  Conditional degeneracy and direct
expansion of the second moment give
\[
\begin{aligned}
E_0\left[
\left|
\frac{1}{n_k^2}
\sum_{\substack{r,s\in I_k\\r\ne s}}\widetilde f^\circ(O_r,O_s)
\right|^2
\mathrel{\Big|}\trainingsigma_k
\right]
\le
\frac{C}{n_k^2\ell}.
\end{aligned}
\]
Moreover, boundedness of \(\hat h_n^{(-k)}\) and
\(\int|K_{\ell,a_0}(a)|\,dF_0(a)\le C\) imply
\[
\begin{aligned}
&E_0\{|f(O,O)|\mid\trainingsigma_k\}
+E_0\left\{\left|\int f(u,O)\,dP_0(u)\right|
\mathrel{\Big|}\trainingsigma_k\right\} +
\sup_o\left|\int f(o,u)\,dP_0(u)\right|\\
&\quad
+\left|\iint f(u,v)\,dP_0(u)\,dP_0(v)\right|
\le C.
\end{aligned}
\]
Hence
\[
E_0\{|f^\circ(O,O)|\mid\trainingsigma_k\}\le C.
\]
Consequently, the off-diagonal and diagonal terms are respectively
\(O_p\{(n_k\sqrt\ell)^{-1}\}\) and \(O_p(n_k^{-1})\).
Since \(\ell\to0\), this proves the pointwise assertion.

For the uniform assertion, conditionally on \(\trainingsigma_k\), consider
\[
\mathcal F_{n,k,\ell,j}
:=
\left\{
f_{n,k,\ell,j,u}:u\in\mathcal A_0
\right\}.
\]
Let \(f_{n,k,\ell,j,u}^\circ\) denote the preceding two-coordinate
centering of \(f_{n,k,\ell,j,u}\), and let
\(\widetilde f_{n,k,\ell,j,u}^\circ\) denote its symmetrization.
Assumption~\ref{assm:kernel} implies that
\[
\left\{
a\mapsto
\left(\frac{a-u}{\ell}\right)^jK_{\ell,u}(a):
u\in\mathcal A_0
\right\}
\]
is VC--type with characteristics not depending on \(n\) or \(\ell\).
Multiplication by the single realized function
\[
(o_1,a_2)
\mapsto
\hat h_n^{(-k)}(w_1,a_2,x_1)
\]
does not introduce a nuisance-class entropy term.  Indeed, for any
probability measure \(Q\) on the product sample space and any \(u,v\),
\[
\begin{aligned}
\|f_{n,k,\ell,j,u}-f_{n,k,\ell,j,v}\|_{Q,2}^2
=
\int
\{g_{\ell,j,u}(a_2)-g_{\ell,j,v}(a_2)\}^2
\,d\nu_{Q,k}(a_2),
\end{aligned}
\]
where
\[
g_{\ell,j,u}(a)
:=
\left(\frac{a-u}{\ell}\right)^jK_{\ell,u}(a)
\]
and
\[
\nu_{Q,k}(B)
:=
\int
I(a_2\in B)
\{\hat h_n^{(-k)}(w_1,a_2,x_1)\}^2
\,dQ(o_1,o_2).
\]
If \(\nu_{Q,k}(\mathbb R)=0\), the preceding distance is zero.  Otherwise,
normalizing \(\nu_{Q,k}\) gives a probability measure.  The uniform
covering-number bound for the kernel class therefore transfers directly to
\(\mathcal F_{n,k,\ell,j}\), uniformly over the realized training sample.

A uniform envelope is
\[
F_{n,k,\ell,j}(o_1,o_2)
=
C_h\sup_{u\in\mathcal A_0}
\left|
\left(\frac{a_2-u}{\ell}\right)^j
K_{\ell,u}(a_2)
\right|
\le C\ell^{-1}.
\]
Consequently,
\[
\|F_{n,k,\ell,j}\|_{P_0\otimes P_0,2}
\lesssim \ell^{-1},
\]
and the normalized entropy integrals are bounded uniformly.  Centering and
symmetrization preserve these bounds by the argument in the proof of
Lemma~18 of \citet{Takatsu2025debiased}.  The off-diagonal canonical
U-process inequality used in that proof gives
\[
\begin{aligned}
E_0\left[
\sup_{u\in\mathcal A_0}
\left|
\frac{1}{n_k^2}
\sum_{\substack{r,s\in I_k\\r\ne s}}
\widetilde f_{n,k,\ell,j,u}^\circ(O_r,O_s)
\right|
\Bigm|\trainingsigma_k
\right]
\lesssim
\frac{1}{n_k\ell}.
\end{aligned}
\]
For the diagonal term, the uniform envelope and its two one-coordinate
projections give
\[
\sup_{u\in\mathcal A_0}|f_{n,k,\ell,j,u}^\circ(o,o)|\le C\ell^{-1}.
\]
Therefore
\[
\sup_{u\in\mathcal A_0}
\left|
\frac{1}{n_k^2}\sum_{r\in I_k}
f_{n,k,\ell,j,u}^\circ(O_r,O_r)
\right|
\le \frac{C}{n_k\ell}.
\]
The uniform result follows from conditional Markov's inequality and the
high-probability boundedness of \(\hat h_n^{(-k)}\).
\end{proof}

\subsubsection{Control of the product empirical-measure remainder}

\begin{lemma}
\label{lem:R5-control}
Under Assumptions~\ref{assm:A} and~\ref{assm:D}--\ref{assm:F},
\[
    R_{n,5}(a_0)
    =
    O_p\{(n\sqrt{h_n})^{-1}\}
    =
    o_p\{(nh_n)^{-1/2}\}.
\]
Under Assumptions~\ref{assm:A}, \ref{assm:D}, and~\ref{assm:G},
\[
    \sup_{a_0\in\mathcal A_0}|R_{n,5}(a_0)|
    =
    O_p\{(nh_n)^{-1}\}
    =
    o_p\{(nh_n\log n)^{-1/2}\}.
\]
\end{lemma}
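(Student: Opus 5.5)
The plan is to reduce $R_{n,5}(a_0)$ to finitely many terms, each of exactly the form treated in Lemma~\ref{lem:conditional-product-process}, and then sum over folds. On the event that $D_{n,h_n,a_0,1}$ and $D_{n,b_n,a_0,2}$ are nonsingular (probability tending to one, by the matrix rates in the proof of Lemma~\ref{lem:R6-rate}), expand the empirical weight as
\[
\Gamma_{n,h_n,b_n,a_0}(a)=\sum_{j=0}^{1}\alpha_{n,j}\Bigl(\frac{a-a_0}{h_n}\Bigr)^jK_{h_n,a_0}(a)-\tau_n^2c_{n,h_n,a_0,2}\sum_{j=0}^{2}\beta_{n,j}\Bigl(\frac{a-a_0}{b_n}\Bigr)^jK_{b_n,a_0}(a).
\]
Here $\alpha_{n,j}$ are the entries of $e_1^\top D_{n,h_n,a_0,1}^{-1}$ and $\beta_{n,j}$ are the entries of $e_3^\top D_{n,b_n,a_0,2}^{-1}$. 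These coefficients are computed from the full sample, so they are random, but they do not depend on the integration variables $(w,x,a)$ inside $R_{n,5}$. They can therefore be pulled out of the double integral, and $R_{n,5}(a_0)$ becomes a finite linear combination, with coefficients $(n_k/n)\alpha_{n,j}$ and $(n_k/n)\tau_n^2c_{n,h_n,a_0,2}\beta_{n,j}$, of the foldwise double integrals
\[
\iint\Bigl(\frac{a-a_0}{\ell}\Bigr)^jK_{\ell,a_0}(a)\hat h_n^{(-k)}(w,a,x)\,d(Q_{n,k}^{WX}-Q_0^{WX})(w,x)\,d(F_{n,k}-F_0)(a),\qquad \ell\in\{h_n,b_n\}.
\]

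Next I bound the coefficients. Lemma~\ref{lem:D-expansion} and the empirical matrix rates give $\|D_{n,h_n,a_0,1}^{-1}\|_\infty+\|D_{n,b_n,a_0,2}^{-1}\|_\infty+|c_{n,h_n,a_0,2}|=O_p(1)$, and $\tau_n=O(1)$ by Assumption~\ref{assm:bandwidths}. The pointwise part of Lemma~\ref{lem:conditional-product-process} bounds each double integral by $O_p\{(n_k\sqrt{\ell})^{-1}\}$. Using $n_k\ge\underline\pi n$ and $b_n^{-1/2}=\tau_n^{1/2}h_n^{-1/2}=O(h_n^{-1/2})$, every term is $O_p\{(n\sqrt{h_n})^{-1}\}$. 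Since $(n\sqrt{h_n})^{-1}(nh_n)^{1/2}=n^{-1/2}\to0$, this gives $o_p\{(nh_n)^{-1/2}\}$.

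The only point needing care is that the random coefficients come from the full sample and so are not $\trainingsigma_k$-measurable. The argument handles this by never conditioning them away: the $O_p$ bound for each double integral is obtained unconditionally through Lemma~\ref{lem:conditional-product-process}, and the product of an $O_p(1)$ quantity with an $O_p(r_n)$ quantity is $O_p(r_n)$ regardless of dependence.

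For the uniform statement the same decomposition applies with suprema over $a_0$. Under \ref{assm:G-smooth}(ii) and $nh_n/\log(h_n^{-1})\to\infty$, the uniform versions of the matrix rates in Lemma~\ref{lem:R6-rate} give $\sup_{a_0}$ of the coefficients $=O_p(1)$. The uniform part of Lemma~\ref{lem:conditional-product-process} gives $O_p\{(n_k\ell)^{-1}\}$, which is $O_p\{(nh_n)^{-1}\}$ because $\tau\in(0,\infty)$. Finally, $(nh_n)^{-1}(nh_n\log n)^{1/2}=(\log n/(nh_n))^{1/2}\to0$, since $nh_n^\kappa\to\infty$ with $\kappa>1$ implies that $nh_n$ grows polynomially in $n$. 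This completes the uniform bound.
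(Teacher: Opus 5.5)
Your proposal is correct and follows the paper's proof essentially step for step. Like the paper, you pull the full-sample coefficients $e_1^\top D_{n,h_n,a_0,1}^{-1}$ and $c_{n,h_n,a_0,2}\tau_n^2e_3^\top D_{n,b_n,a_0,2}^{-1}$ out of the double integrals as an $O_p(1)$ factor on every realization, so no independence from the validation fold is needed, and you apply Lemma~\ref{lem:conditional-product-process} to each kernel-basis term with $\tau_n=O(1)$ to reach $O_p\{(1+\tau_n^{5/2})/(n\sqrt{h_n})\}$ pointwise and $O_p\{(1+\tau_n^{3})/(nh_n)\}$ uniformly; your explicit check that $nh_n^\kappa\to\infty$ with $\kappa>1$ forces $\log n/(nh_n)\to0$ is a small detail the paper leaves implicit.
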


\begin{proof}
Write \(h=h_n\), \(b=b_n\), and \(\tau_n=h/b\).  Expanding the two
local-polynomial components of \(\Gamma_n\) gives
\[
\begin{aligned}
&R_{n,5}(a_0)\\
&=
\sum_{k=1}^K\frac{n_k}{n}
e_1^\top D_{n,h,a_0,1}^{-1}
\iint
w_{h,a_0,1}(a)K_{h,a_0}(a)
\hat h_n^{(-k)}(w,a,x)\\
&\hspace{4.6cm}\times
d(Q_{n,k}^{WX}-Q_0^{WX})(w,x)
\,d(F_{n,k}-F_0)(a)\\
&\qquad-
\sum_{k=1}^K\frac{n_k}{n}
c_{n,h,a_0,2}\tau_n^2
e_3^\top D_{n,b,a_0,2}^{-1}
\iint
w_{b,a_0,2}(a)K_{b,a_0}(a)
\hat h_n^{(-k)}(w,a,x)\\
&\qquad\qquad\qquad\qquad\qquad\qquad\qquad\qquad\qquad\times
d(Q_{n,k}^{WX}-Q_0^{WX})(w,x)
\,d(F_{n,k}-F_0)(a).
\end{aligned}
\]

The empirical moment-matrix bounds imply
\[
\|D_{n,h,a_0,1}^{-1}\|_\infty
+
|c_{n,h,a_0,2}|
\|D_{n,b,a_0,2}^{-1}\|_\infty
=
O_p(1).
\]
Hence, for some \(C_n=O_p(1)\),
\[
\begin{aligned}
&|R_{n,5}(a_0)|\\
&\le
C_n\sum_{k=1}^K\frac{n_k}{n}
\Bigg[
\sum_{j=0}^1
\left|
\begin{aligned}
&
\iint
\left(\frac{a-a_0}{h}\right)^j
K_{h,a_0}(a)
\hat h_n^{(-k)}(w,a,x)\\
&\qquad\qquad\qquad\qquad\times
d(Q_{n,k}^{WX}-Q_0^{WX})(w,x)
\,d(F_{n,k}-F_0)(a)
\end{aligned}
\right|\\
&\qquad +
\tau_n^2
\sum_{j=0}^2
\left|
\begin{aligned}
&
\iint
\left(\frac{a-a_0}{b}\right)^j
K_{b,a_0}(a)
\hat h_n^{(-k)}(w,a,x)\\
&\qquad\qquad\qquad\qquad\times
d(Q_{n,k}^{WX}-Q_0^{WX})(w,x)
\,d(F_{n,k}-F_0)(a)
\end{aligned}
\right|
\Bigg].
\end{aligned}
\]
This deterministic inequality is valid on every realization of the data.  It
therefore does not require \(\Gamma_n\) to be independent of the validation
fold.  All full-sample dependence of \(\Gamma_n\) is absorbed into the
finite-dimensional random coefficient \(C_n\).

By Lemma~\ref{lem:conditional-product-process}, for fixed \(a_0\),
the terms involving bandwidth \(h\) are
\[
O_p\{(n_k\sqrt h)^{-1}\},
\]
and the terms involving bandwidth \(b\) are
\[
O_p\{(n_k\sqrt b)^{-1}\}.
\]
Since \(K\) is fixed,
\[
\sum_{k=1}^K
\frac{n_k}{n}\frac{1}{n_k}
=
\frac{K}{n}.
\]
It follows that
\[
\begin{aligned}
|R_{n,5}(a_0)|
&=
O_p\left[
\frac1n
\left\{
h^{-1/2}
+
\tau_n^2b^{-1/2}
\right\}
\right]\\
&=
O_p\left[
\frac{1+\tau_n^{5/2}}{n\sqrt h}
\right]
=
O_p\{(n\sqrt h)^{-1}\},
\end{aligned}
\]
because \(\tau_n=O(1)\).  Furthermore,
\[
\frac{(n\sqrt h)^{-1}}{(nh)^{-1/2}}
=
n^{-1/2},
\]
so \(R_{n,5}(a_0)=o_p\{(nh)^{-1/2}\}\).

For the uniform assertion, the uniform matrix bounds and the uniform part of
Lemma~\ref{lem:conditional-product-process} yield
\[
\begin{aligned}
\sup_{a_0\in\mathcal A_0}|R_{n,5}(a_0)|
&=
O_p\left[
\frac1n
\left\{
h^{-1}
+
\tau_n^2b^{-1}
\right\}
\right]\\
&=
O_p\left[
\frac{1+\tau_n^3}{nh}
\right]
=
O_p\{(nh)^{-1}\}.
\end{aligned}
\]
Therefore, whenever \(nh/\log n\to\infty\),
\[
\begin{aligned}
\frac{(nh)^{-1}}{(nh\log n)^{-1/2}}
=
\left(\frac{\log n}{nh}\right)^{1/2}
\longrightarrow0,
\end{aligned}
\]
and hence
\[
\sup_{a_0\in\mathcal A_0}|R_{n,5}(a_0)|
=
o_p\{(nh\log n)^{-1/2}\}.
\]
\end{proof}

\subsubsection{Control of the second and third remainders}
\label{sec:R2-R3-control}

We now control the nuisance empirical-process remainder \(R_{n,2}\) and the
random-weight empirical-process remainder \(R_{n,3}\).  The arguments are the
cross-fitted analogues of the empirical-process bounds used for the
corresponding remainders in \citet{Takatsu2025debiased}.  The only difference is
that, conditionally on the training sigma-field \(\trainingsigma_k\), the
estimated bridges are fixed functions.  Hence no entropy condition on the
possible values of \(\hat q_n^{(-k)}\) or \(\hat h_n^{(-k)}\) is needed.

For \(k=1,\ldots,K\), define
\[
    \hat\vartheta_{n,k}(O)
    :=
    \hat\xi_{n,1}^{(-k)}(O)
    +
    S_{\hat h_n^{(-k)}}(A),
    \qquad
    \vartheta_\infty(O):=\xi_\infty(O).
\]
For pointwise inference we take \(\mathcal I=B_{\delta_1}(a_0)\).  For uniform
inference we take \(\mathcal I=\mathcal A_{\delta_3}\).  Throughout this
subsection, \(K\) is fixed and \(\min_k n_k/n\) is bounded away from zero.

\begin{lemma}
\label{lem:conditional-kernel-ep}
Under Assumptions~\ref{assm:basic.proxy}, \ref{assm:kernel},
and~\ref{assm:folds}, let
\(\ell\in\{h_n,b_n\}\) and \(j\in\{0,1,2\}\), and define
\[
    p_{\ell,u,j}(a)
    :=
    \left(\frac{a-u}{\ell}\right)^jK_{\ell,u}(a).
\]
Let \(\psi_{n,k}\) be \(\trainingsigma_k\)-measurable.  For
\(\mathcal I\subseteq\mathcal A\), define
\[
    \delta_{\psi,k}(\mathcal I)
    :=
    \sup_{a\in\mathcal I}
    \{E_0(\psi_{n,k}^2\mid A=a,\trainingsigma_k)\}^{1/2}.
\]
If \(\delta_{\psi,k}(\mathcal I)<\infty\) and
\(\{a:|a-u|\le\ell\}\subseteq\mathcal I\), then, for fixed
\(u\in\mathcal I\),
\[
    (P_{n,k}-P_0)\{p_{\ell,u,j}(A)\psi_{n,k}(O)\}
    =
    O_p\left(\frac{\delta_{\psi,k}(\mathcal I)}{\sqrt{n_k\ell}}\right).
\]
If Assumption~\ref{assm:G-bandwidth} also holds,
\[
\|\psi_{n,k}(O)I(A\in\mathcal A_{\delta_3})\|_{P_0,\infty}=O_p(1),
\]
and
\(\{a:|a-u|\le\ell\}\subseteq\mathcal A_{\delta_3}\) for every
\(u\in\mathcal A_0\), then
\[
\sup_{u\in\mathcal A_0}
\left|
(P_{n,k}-P_0)\{p_{\ell,u,j}(A)\psi_{n,k}(O)\}
\right|
=
O_p\left\{
\delta_{\psi,k}(\mathcal A_{\delta_3})
\sqrt{\frac{\log(\ell^{-1})}{n_k\ell}}
+
\frac{\log(\ell^{-1})}{n_k\ell}
\right\}.
\]
\end{lemma}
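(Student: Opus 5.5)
The plan is to condition on $\trainingsigma_k$ throughout. By Assumption~\ref{assm:folds}, $\{O_i:i\in I_k\}$ is then an IID sample from $P_0$, and $\psi_{n,k}$ and $p_{\ell,u,j}$ are fixed measurable functions. Each claim will first be proved as a conditional bound and then converted to an unconditional $O_p$ statement. For this conversion I use the standard fact that if $E\{|X_n|\mid\trainingsigma_k\}\le c\,r_n$ with a $\trainingsigma_k$-measurable $r_n$, then $X_n=O_p(r_n)$. This follows from conditional Markov and dominated convergence, applied on the event $\{\delta_{\psi,k}<\infty\}$.

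\textbf{Pointwise case.} For fixed $u$ the conditional mean of $(P_{n,k}-P_0)\{p_{\ell,u,j}\psi_{n,k}\}$ is zero. Its conditional variance is at most $n_k^{-1}P_0\{p_{\ell,u,j}(A)^2\psi_{n,k}^2\}$. Iterating expectations over $A$ gives
\[
P_0\{p_{\ell,u,j}(A)^2\psi_{n,k}^2\}
=\int p_{\ell,u,j}(a)^2E_0(\psi_{n,k}^2\mid A=a,\trainingsigma_k)\,dF_0(a)
\le \delta_{\psi,k}(\mathcal I)^2\int p_{\ell,u,j}(a)^2 f_A(a)\,da .
\]
Here the inclusion $\{|a-u|\le\ell\}\subseteq\mathcal I$ lets us bound the conditional second moment by $\delta_{\psi,k}(\mathcal I)^2$. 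Because $|(a-u)/\ell|\le1$ on the support of $K$ and $\sup f_A<\infty$ (Assumption~\ref{assm:basic.proxy}), the change of variables $a=u+\ell s$ bounds the last integral by $C\ell^{-1}\int K^2$. Conditional Chebyshev then gives the rate $\delta_{\psi,k}/\sqrt{n_k\ell}$.

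\textbf{Uniform case.} Here I apply a Talagrand/Bernstein-type maximal inequality for VC-type classes, for example Corollary~5.1 of \citet{Chernozhukov2014Gaussian}. It is applied conditionally on $\trainingsigma_k$ to the class $\mathcal G_{n,k}:=\{p_{\ell,u,j}\psi_{n,k}:u\in\mathcal A_0\}$. This requires checking three ingredients.

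\emph{(i) Entropy.} By Assumption~\ref{assm:kernel} and Lemma~11 of \citet{Takatsu2025debiased}, $\{p_{\ell,u,j}:u\}$ is VC-type with characteristics free of $n$ and $\ell$. Multiplying by the single fixed function $\psi_{n,k}$ preserves the uniform covering numbers. The argument is the measure reweighting used in Lemma~\ref{lem:conditional-product-process}: $\|(g_u-g_v)\psi\|_{Q,2}$ is an $L_2$ distance of $g_u-g_v$ under the measure $\psi^2\,dQ$, which after normalization is a probability measure. Hence no nuisance entropy enters. Pointwise measurability follows from continuity of $u\mapsto p_{\ell,u,j}(a)$, since $K$ is Lipschitz.

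\emph{(ii) Envelope.} The envelope is $F=C\ell^{-1}\|\psi_{n,k}I(A\in\mathcal A_{\delta_3})\|_{P_0,\infty}$. By hypothesis this is $O_p(\ell^{-1})$, so we may work on an event where $F\le M\ell^{-1}$.

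\emph{(iii) Variance.} The pointwise computation above gives $\sup_u P_0 g^2\le\sigma^2:=C\delta_{\psi,k}(\mathcal A_{\delta_3})^2\ell^{-1}$.

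With these, the maximal inequality yields
\[
E\sup_u|G_n g|\lesssim \sigma\sqrt{\log(A\|F\|/\sigma)}+\frac{\|F\|_\infty\log(A\|F\|/\sigma)}{\sqrt{n_k}} .
\]
Dividing by $\sqrt{n_k}$ gives the two stated terms, provided $\log(\|F\|/\sigma)\lesssim\log(\ell^{-1})$.

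\textbf{Main obstacle.} The delicate step is that logarithm. When $\delta_{\psi,k}$ is very small, $\sigma$ is tiny and $\log(\|F\|/\sigma)$ can exceed $\log(\ell^{-1})$. I would handle this by replacing $\sigma$ with $\bar\sigma:=\max\{\sigma,\ell^{-1/2}(n_k\ell)^{-1/2}\}$, which is legitimate because enlarging the variance proxy keeps the inequality valid. With $\bar\sigma$, the log is $O(\log(\ell^{-1})+\log(n_k\ell))$. By Assumption~\ref{assm:G-bandwidth} this is $O(\log\ell^{-1})$, since $nh^\kappa\to\infty$ and $nh^5=O(1)$ make $\log n\asymp\log h^{-1}$. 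The extra contribution from the floor in $\bar\sigma$ is $(n_k\ell)^{-1}\sqrt{\log}$, which is absorbed into the second term $\log(\ell^{-1})/(n_k\ell)$. Finally, taking the $M$-truncation event of (ii) with probability tending to one and using the conditional Markov conversion gives the uniform $O_p$ statement.
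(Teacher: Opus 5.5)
Your proposal is correct and follows essentially the same route as the paper. You condition on \(\trainingsigma_k\), bound the conditional variance by \(\delta_{\psi,k}^2(\mathcal I)\ell^{-1}\) and apply Chebyshev for fixed \(u\). For the uniform claim you apply a VC-type maximal inequality to the kernel class multiplied by the single fixed function \(\psi_{n,k}\), with envelope of order \(\ell^{-1}\), and your floored variance proxy \(\max\{\sigma,\ell^{-1}n_k^{-1/2}\}\) is exactly the paper's variance parameter. The closing steps, \(\log n\asymp\log(\ell^{-1})\) under Assumption~\ref{assm:G-bandwidth} followed by conditional Markov, are also how the paper concludes.
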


\begin{proof}
Condition on \(\trainingsigma_k\).  Then \(\psi_{n,k}\) is fixed and
\(\{O_i:i\in I_k\}\) is IID from \(P_0\).  For fixed \(u\),
\[
\begin{aligned}
P_0\{p_{\ell,u,j}^2(A)\psi_{n,k}^2(O)\mid\trainingsigma_k\}
&=
\int p_{\ell,u,j}^2(a)
E_0\{\psi_{n,k}^2(O)\mid A=a,\trainingsigma_k\}\,dF_0(a)\\
&\le
\delta_{\psi,k}^2(\mathcal I)
\int p_{\ell,u,j}^2(a)\,dF_0(a)
\lesssim
\delta_{\psi,k}^2(\mathcal I)\ell^{-1}.
\end{aligned}
\]
Chebyshev's inequality gives the pointwise bound.

For the uniform bound, condition again on \(\trainingsigma_k\) and restrict to
events on which the displayed \(L_\infty(P_0)\)-norm is bounded.  The class
\[
    \left\{
    o\mapsto p_{\ell,u,j}(A)\psi_{n,k}(o):u\in\mathcal A_0
    \right\}
\]
has the same VC--type complexity as the translated kernel class, since
\(\psi_{n,k}\) is a single fixed bounded multiplier.  Its envelope is of order
\(\ell^{-1}\), and its \(L_2(P_0)\)-radius is bounded by
\(C\delta_{\psi,k}(\mathcal A_{\delta_3})\ell^{-1/2}\).  The standard VC--type
maximal inequality for bounded empirical processes, applied with variance
parameter
\[
\max\left\{
C\delta_{\psi,k}(\mathcal A_{\delta_3})\ell^{-1/2},
C\ell^{-1}n_k^{-1/2}
\right\},
\]
gives
\[
E_0\left[
\sup_{u\in\mathcal A_0}
\left|
(P_{n,k}-P_0)\{p_{\ell,u,j}(A)\psi_{n,k}(O)\}
\right|
\Bigm|\trainingsigma_k
\right]
\lesssim
\delta_{\psi,k}(\mathcal A_{\delta_3})
\sqrt{\frac{\log n}{n_k\ell}}
+
\frac{\log n}{n_k\ell}.
\]
Assumption~\ref{assm:G-bandwidth} implies
\[
\frac{1}{5}\log n-O(1)
\le \log(h_n^{-1})
\le \frac{1}{p}\log n,
\qquad
\log(b_n^{-1})=\log(h_n^{-1})+O(1),
\]
and hence
\[
\log n\asymp\log(h_n^{-1})\asymp\log(b_n^{-1}).
\]
The asserted stochastic bound follows by conditional Markov's inequality.
\end{proof}

\begin{lemma}
\label{lem:conditional-integrated-kernel}
Under Assumptions~\ref{assm:basic.proxy}, \ref{assm:kernel},
and~\ref{assm:folds}, let
\(\ell\in\{h_n,b_n\}\), \(j\in\{0,1,2\}\), and let
\(\chi_{n,k}(w,a,x)\) be \(\trainingsigma_k\)-measurable.
Let \(\mathcal I\subseteq\mathcal A\).
Define
\[
    H_{n,k,\ell,j,u}(O)
    :=
    \int p_{\ell,u,j}(a)\chi_{n,k}(W,a,X)\,dF_0(a).
\]
If
\[
\sup_{\substack{a\in\mathcal I\\w,x}}
|\chi_{n,k}(w,a,x)|=O_p(1)
\]
and \(\{a:|a-u|\le\ell\}\subseteq\mathcal I\), then, for fixed \(u\),
\[
    (P_{n,k}-P_0)H_{n,k,\ell,j,u}
    =
    O_p(n_k^{-1/2}).
\]
If
\[
\sup_{\substack{a\in\mathcal A_{\delta_3}\\w,x}}
|\chi_{n,k}(w,a,x)|=O_p(1)
\]
and \(\{a:|a-u|\le\ell\}\subseteq\mathcal A_{\delta_3}\) for every
\(u\in\mathcal A_0\), then
\[
    \sup_{u\in\mathcal A_0}
    |(P_{n,k}-P_0)H_{n,k,\ell,j,u}|
    =
    O_p\left\{
    \sqrt{\frac{\log(\ell^{-1})}{n_k}}
    +
    \frac{\log(\ell^{-1})}{n_k}
    \right\}.
\]
\end{lemma}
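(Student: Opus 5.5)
The plan is to condition on the training sigma-field $\trainingsigma_k$ throughout. Under Assumption~\ref{assm:folds}, $\chi_{n,k}$ is then a fixed measurable function and $\{O_i:i\in I_k\}$ is IID from $P_0$. The key structural point is that $H_{n,k,\ell,j,u}(O)$ depends on $O$ only through $(W,X)$. Moreover, it is uniformly bounded on the event where $\chi_{n,k}$ is bounded, because the bound
\[
|H_{n,k,\ell,j,u}(O)|\le \sup_{a\in\mathcal I,w,x}|\chi_{n,k}(w,a,x)|\int |p_{\ell,u,j}(a)|\,dF_0(a)
\]
holds with the last integral at most $C\int|K_{\ell,u}(a)|\,dF_0(a)\le C$. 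To check this, I will use that $K$ is supported on $[-1,1]$, so the polynomial factor is at most one on the support of $p_{\ell,u,j}$, and that $f_A$ is bounded by Assumption~\ref{assm:basic.proxy}. The integration against $F_0$ therefore removes the $\ell^{-1/2}$ blow-up that appeared in Lemma~\ref{lem:conditional-kernel-ep}. This is exactly why the rate here is $n_k^{-1/2}$ rather than $(n_k\ell)^{-1/2}$.

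\textbf{Pointwise case.} On the event $\{\sup|\chi_{n,k}|\le M\}$, which has probability arbitrarily close to one for large $M$, the conditional variance satisfies $P_0(H^2\mid\trainingsigma_k)\le CM^2$. Conditional Chebyshev then gives
\[
\Pr\{|(P_{n,k}-P_0)H|>t n_k^{-1/2}\mid\trainingsigma_k\}\le CM^2/t^2.
\]
Integrating over $\trainingsigma_k$ and combining with the event's probability yields $O_p(n_k^{-1/2})$.

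\textbf{Uniform case.} I will bound the complexity of $\{H_{n,k,\ell,j,u}:u\in\mathcal A_0\}$ for the realized $\chi_{n,k}$. Following the integration argument cited via Lemma~5.2 of \citet{VanderVaart2006estimating} in the proof of Lemma~\ref{lem:leading_term_gaussian_approx}, the class
\[
\{(a,w,x)\mapsto p_{\ell,u,j}(a)\chi_{n,k}(w,a,x):u\in\mathcal A_0\}
\]
is VC-type. This holds because it is the VC-type translated kernel class, made VC-type by Assumption~\ref{assm:kernel} and Lemma~11 of \citet{Takatsu2025debiased}, multiplied by a single fixed bounded function, with the covering transfer done exactly as in the proof of Lemma~\ref{lem:conditional-product-process}. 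Integrating out $a$ against the fixed measure $F_0$ preserves the VC-type property with characteristics independent of $n$ and of the training sample.

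A more direct alternative would be Lipschitz continuity in $u$. Since $|H_u-H_v|\le CM\ell^{-1}|u-v|$, this would give covering numbers of order $(\ell\varepsilon)^{-1}$, which again yields a $\log(\ell^{-1})$ entropy. I will use whichever is cleaner. The envelope is the constant $CM$, and the $L_2$ radius is $O(1)$.

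The standard maximal inequality for VC-type classes with bounded envelope (Corollary~5.1 of \citet{Chernozhukov2014Gaussian}), applied conditionally with $\sigma\asymp1$, envelope $\asymp1$ and entropy exponent $\log(\ell^{-1})$, then gives
\[
E\Bigl[\sup_u|(P_{n,k}-P_0)H_u|\Bigm|\trainingsigma_k\Bigr]\lesssim\sqrt{\log(\ell^{-1})/n_k}+\log(\ell^{-1})/n_k.
\]
Conditional Markov's inequality then concludes. The main obstacle I expect is justifying that the entropy bound is uniform over realizations of $\chi_{n,k}$ and that the log factor is $\log(\ell^{-1})$ rather than $\log n$. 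For this I will rely on the Lipschitz-in-$u$ bound with Lipschitz constant $\ell^{-1}$, which yields polynomial covering numbers in $1/(\ell\varepsilon)$. Pointwise measurability follows from continuity of $u\mapsto H_u$ by dominated convergence, which justifies taking the supremum over a countable dense subset.
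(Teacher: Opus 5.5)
Your proposal is correct and follows the paper's proof essentially step for step. For fixed $u$, you use uniform boundedness of $H_{n,k,\ell,j,u}$ from $\int|p_{\ell,u,j}|\,dF_0\lesssim1$ together with conditional Chebyshev. For the uniform case, you use the Lipschitz-in-$u$ bound $\|H_u-H_v\|_\infty\le C\min\{1,|u-v|/\ell\}$, which gives $L_\infty$ covering numbers of order $(\ell\varepsilon)^{-1}$, then a bounded-class maximal inequality and conditional Markov.

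Relying on the Lipschitz route is the right choice. Your alternative claim, that the integrated class is VC-type with $n$-independent characteristics, holds only relative to an envelope of order $\ell^{-1}$. For a bounded envelope, the constant $A$ instead grows like $\ell^{-1}$, and that growth is exactly the source of the $\log(\ell^{-1})$ factor.
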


\begin{proof}
Condition on \(\trainingsigma_k\).  On events on which the applicable
supremum norm of \(\chi_{n,k}\) is bounded, since
\[
    \int |p_{\ell,u,j}(a)|\,dF_0(a)\lesssim 1
\]
uniformly in \(u\), the functions \(H_{n,k,\ell,j,u}\) are uniformly bounded.
This gives the fixed-\(u\) rate by Chebyshev's inequality.

For \(u,v\in\mathcal A_0\), boundedness of \(f_A\) and
\(\chi_{n,k}\), compact support of \(K\), and Lipschitz continuity of \(K\)
give
\[
\|H_{n,k,\ell,j,u}-H_{n,k,\ell,j,v}\|_\infty
\le
C\min\{1,|u-v|/\ell\}.
\]
Consequently, for \(0<\varepsilon<1\),
\[
N\left(\varepsilon,
\{H_{n,k,\ell,j,u}:u\in\mathcal A_0\},
L_\infty\right)
\le \frac{C}{\ell\varepsilon}.
\]
The bounded empirical-process maximal inequality therefore gives
\[
E_0\left[
\sup_{u\in\mathcal A_0}
|(P_{n,k}-P_0)H_{n,k,\ell,j,u}|
\Bigm|\trainingsigma_k
\right]
\lesssim
\sqrt{\frac{\log(\ell^{-1})}{n_k}}
+
\frac{\log(\ell^{-1})}{n_k}.
\]
Conditional Markov's inequality completes the proof.
\end{proof}

\begin{lemma}
\label{lem:R2-control}
Under Assumptions~\ref{assm:A}, \ref{assm:D}--\ref{assm:F},
\[
    R_{n,2}(a_0)=o_p\{(nh_n)^{-1/2}\}.
\]
Under Assumptions~\ref{assm:A}, \ref{assm:D}, and~\ref{assm:G},
\[
    \sup_{a_0\in\mathcal A_0}|R_{n,2}(a_0)|
    =
    o_p\{(nh_n\log n)^{-1/2}\}.
\]
\end{lemma}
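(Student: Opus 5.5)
I would expand $\Gamma_0=\Gamma_{0,h_n,b_n,a_0}$ into its local-linear and local-quadratic pieces, so that $R_{n,2}(a_0)$ becomes a finite linear combination, with deterministic coefficients, of fold-wise centered empirical averages of two kinds. The first kind is
\[
(P_{n,k}-P_0)\{p_{\ell,a_0,j}(A)\psi_{n,k}(O)\},
\qquad
\psi_{n,k}:=\hat\vartheta_{n,k}-\vartheta_\infty .
\]
The second kind is $(P_{n,k}-P_0)H_{n,k,\ell,j,a_0}$, where
\[
\chi_{n,k}:=\hat h_n^{(-k)}-h_\infty .
\]
The coefficients are $e_1^\top D_{0,h,a_0,1}^{-1}$ and $\tfrac12 h^2c_{0,h,a_0,2}\,2b^{-2}e_3^\top D_{0,b,a_0,2}^{-1}$. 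By Lemma~\ref{lem:D-expansion} and $\tau_n=O(1)$ these are $O(1)$ and $O(\tau_n^2)=O(1)$, uniformly over $a_0\in\mathcal A_0$ in the uniform setting. Both $\psi_{n,k}$ and $\chi_{n,k}$ are $\trainingsigma_k$-measurable, so Lemmas~\ref{lem:conditional-kernel-ep} and~\ref{lem:conditional-integrated-kernel} apply directly.

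The key step is bounding $\delta_{\psi,k}(\mathcal I)$. I would write
\[
\psi_{n,k}=(\hat q-q_\infty)(Y-\hat h)+q_\infty(h_\infty-\hat h)+\{S_{\hat h}(A)-m_\infty(A)\}.
\]
Conditionally on $A=a$ and $\trainingsigma_k$, the first term has second moment bounded by
\[
\{\hat q-q_\infty\}^2\,2\{Y^2+C_h^2\}.
\]
Here I use boundedness of $\hat h$ and $h_\infty$ (conditions \ref{assm:nuisance1} and \ref{assm:DR_and_rates}). To avoid a fourth-moment condition on $\hat q-q_\infty$, I would exploit that $\hat q$ and $q_\infty$ are both bounded by $C_q$. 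Then Hölder's inequality with the $(2+\delta_2)$ conditional moment of $Y$ from \ref{assm:regul1}(c) gives a bound
\[
C\,\{E(\hat q-q_\infty)^2\mid A=a,\trainingsigma_k\}^{\delta_2/(2+\delta_2)}=o_p(1)
\]
by \ref{assm:pointwise-pseudooutcome-rate}. In the uniform case $Y$ is bounded by \ref{assm:G-smooth}(iii), so the direct bound applies. The second term is controlled by $C_q\, d_{h}$. The third term is bounded by $\sup_a |E_0\{\hat h(W,a,X)-h_\infty(W,a,X)\mid\trainingsigma_k\}|$. I would control this via Lemma~\ref{lem:DR_lemma}, which converts the marginal expectation into a conditional-on-$A=a$ expectation with density-ratio weight. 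Under positivity, \ref{assm:basic.proxy} gives $f(a\mid W,X)\ge c_1$, so the weight is at most $\sup f_A/c_1$, and Cauchy--Schwarz then bounds this term by $C\,d_h$. Hence $\delta_{\psi,k}=o_p(1)$, and pointwise Lemma~\ref{lem:conditional-kernel-ep} yields $o_p\{(n_k\ell)^{-1/2}\}=o_p\{(nh)^{-1/2}\}$, using $b\gtrsim h$ and $n_k\ge\underline\pi n$. The integrated terms are $O_p(n^{-1/2})=o_p\{(nh)^{-1/2}\}$.

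For the uniform claim, I would apply the second part of Lemma~\ref{lem:conditional-kernel-ep}. Its bound $\|\psi_{n,k}I(A\in\mathcal A_{\delta_3})\|_\infty=O_p(1)$ holds because $Y$, the fitted bridges and the limiting bridges are all bounded. The resulting rate is
\[
\delta_{\psi,k}\sqrt{\log(h^{-1})/(nh)}+\log(h^{-1})/(nh).
\]
Multiplying by $(nh\log n)^{1/2}$ gives two pieces. The first is $\delta_{\psi,k}\sqrt{\log(h^{-1})\log n}$, which is $o_p(1)$ by \ref{assm:G-nuisance-L2} through the same three-term bound, now with bounded $Y$. The second is $\log(h^{-1})\sqrt{\log n/(nh)}$, which tends to $0$ as noted after Assumption~\ref{assm:G}. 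For the integrated terms, Lemma~\ref{lem:conditional-integrated-kernel} gives $O_p(\sqrt{\log n/n})$, and $\sqrt{nh\log n}\cdot\sqrt{\log n/n}=\sqrt{h}\log n\to0$.

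I expect the main obstacle to be the pointwise step for the $\hat q$ term, where only a $2+\delta_2$ moment of $Y$ is available. The truncation/Hölder argument above is what I would try first, relying on the uniform bounds $C_q$.
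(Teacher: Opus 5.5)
Your proposal is correct and follows essentially the paper's proof. You use the same split of $R_{n,2}$ into the kernel-weighted pseudo-outcome term and the integrated outcome-bridge term, and the same three-term bound on $\delta_{\psi,k}$: the H\"older step with exponent $\delta_2/(2+\delta_2)$ relying on the $C_q$ bounds, and the density-ratio/positivity bound for $S_{\hat h}-m_\infty$. You then apply Lemmas~\ref{lem:conditional-kernel-ep} and~\ref{lem:conditional-integrated-kernel} in both the pointwise and uniform settings, exactly as the paper does. The remaining differences are cosmetic: you pair $\hat q-q_\infty$ with $Y-\hat h$ rather than $Y-h_\infty$, and you invoke Lemma~\ref{lem:DR_lemma} instead of Jensen's inequality plus Bayes' formula.
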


\begin{proof}
Write \(h=h_n\), \(b=b_n\), and \(\tau_n=h/b\).  Decompose
\[
    R_{n,2}(a_0)=R_{n,2}^{(1)}(a_0)+R_{n,2}^{(2)}(a_0),
\]
where
\[
\begin{aligned}
R_{n,2}^{(1)}(a_0)
:=
\sum_{k=1}^K\frac{n_k}{n}(P_{n,k}-P_0)
\left[
\Gamma_0(A)
\{\hat\vartheta_{n,k}(O)-\vartheta_\infty(O)\}
\right],
\end{aligned}
\]
and
\[
\begin{aligned}
R_{n,2}^{(2)}(a_0)
:=
\sum_{k=1}^K\frac{n_k}{n}(P_{n,k}-P_0)
\left[
\int
\Gamma_0(\bar a)
\{\hat h_n^{(-k)}(W,\bar a,X)-h_\infty(W,\bar a,X)\}
\,dF_0(\bar a)
\right].
\end{aligned}
\]

We first convert the primitive nuisance conditions into a local pseudo-outcome convergence bound.
\[
\begin{aligned}
\hat\vartheta_{n,k}(O)-\vartheta_\infty(O)
={}&
\{\hat q_n^{(-k)}(Z,A,X)-q_\infty(Z,A,X)\}
\{Y-h_\infty(W,A,X)\}\\
&\quad+
\hat q_n^{(-k)}(Z,A,X)
\{h_\infty(W,A,X)-\hat h_n^{(-k)}(W,A,X)\}\\
&\quad+
S_{\hat h_n^{(-k)}}(A)-m_\infty(A).
\end{aligned}
\]
Assumption~\ref{assm:basic.proxy} implies
\[
f(a\mid W,X)
=
E_0\{f(a\mid U,X)\mid W,X\}
\ge c_1
\]
for Lebesgue-almost every \(a\).  Conditional on \(\trainingsigma_k\),
Jensen's inequality and Bayes' formula give
\[
\begin{aligned}
\left|S_{\hat h_n^{(-k)}}(a)-m_\infty(a)\right|^2
&\le
\int
\{\hat h_n^{(-k)}(w,a,x)-h_\infty(w,a,x)\}^2
\,dP_{0,WX}(w,x)\\
&=
E_0\left[
\frac{f_A(a)}{f(a\mid W,X)}
\{\hat h_n^{(-k)}(W,a,X)-h_\infty(W,a,X)\}^2
\,\middle|\,A=a,\trainingsigma_k
\right]\\
&\le
\frac{\sup_{u\in\mathcal A}f_A(u)}{c_1}
E_0\left[
\{\hat h_n^{(-k)}(W,a,X)-h_\infty(W,a,X)\}^2
\,\middle|\,A=a,\trainingsigma_k
\right].
\end{aligned}
\]
The boundedness conditions and conditional H\"older's inequality also give
\[
\begin{aligned}
&\left[
E_0\left\{
[\hat q_n^{(-k)}(Z,a,X)-q_\infty(Z,a,X)]^2
[Y-h_\infty(W,a,X)]^2
\,\middle|\,A=a,\trainingsigma_k
\right\}
\right]^{1/2}\\
&\quad\le
C
\left[
E_0\left\{
[\hat q_n^{(-k)}(Z,a,X)-q_\infty(Z,a,X)]^2
\,\middle|\,A=a,\trainingsigma_k
\right\}
\right]^{\delta_2/\{2(2+\delta_2)\}}
\end{aligned}
\]
uniformly over \(a\in B_{\delta_1}(a_0)\).  Minkowski's inequality,
Assumption~\ref{assm:pointwise-pseudooutcome-rate}, and the preceding two
displays yield
\[
\max_{1\le k\le K}\sup_{a\in B_{\delta_1}(a_0)}
\left[
E_0\{(\hat\vartheta_{n,k}(O)-\vartheta_\infty(O))^2\mid A=a,\trainingsigma_k\}
\right]^{1/2}
=o_p(1).
\]
Under Assumptions~\ref{assm:G-crossfit},
\ref{assm:G-DR-rates}\textup{(i)}, and
\ref{assm:G-smooth}\textup{(iii)}, the same argument gives
the sharper bound
\[
\begin{aligned}
&\sup_{a\in\mathcal A_{\delta_3}}
\left[
E_0\{(\hat\vartheta_{n,k}-\vartheta_\infty)^2
\mid A=a,\trainingsigma_k\}
\right]^{1/2}\\
&\quad\le C\sup_{a\in\mathcal A_{\delta_3}}
\left[
E_0\{(\hat q_n^{(-k)}(Z,a,X)-q_\infty(Z,a,X))^2
\mid A=a,\trainingsigma_k\}
\right]^{1/2}\\
&\qquad+C\sup_{a\in\mathcal A_{\delta_3}}
\left[
E_0\{(\hat h_n^{(-k)}(W,a,X)-h_\infty(W,a,X))^2
\mid A=a,\trainingsigma_k\}
\right]^{1/2}.
\end{aligned}
\]
Assumption~\ref{assm:G-nuisance-L2} therefore implies
\[
\max_{1\le k\le K}\sup_{a\in\mathcal A_{\delta_3}}
\left[
E_0\{(\hat\vartheta_{n,k}(O)-\vartheta_\infty(O))^2\mid A=a,\trainingsigma_k\}
\right]^{1/2}
\sqrt{\log(h^{-1})\log n}
=o_p(1).
\]

The population equivalent kernel satisfies
\[
\Gamma_0(a)
=
e_1^\top D_{0,h,a_0,1}^{-1}
w_{h,a_0,1}(a)K_{h,a_0}(a)
-
c_{0,h,a_0,2}\tau_n^2
e_3^\top D_{0,b,a_0,2}^{-1}
w_{b,a_0,2}(a)K_{b,a_0}(a).
\]
The population moment matrices are uniformly nonsingular locally, and
\(|c_{0,h,a_0,2}|=O(1)\).  Therefore \(\Gamma_0\) is a fixed finite linear
combination of kernel basis functions with bandwidths \(h\) and \(b\), with
bounded coefficients.  Applying Lemma~\ref{lem:conditional-kernel-ep} with
\[
    \psi_{n,k}:=\hat\vartheta_{n,k}-\vartheta_\infty
\]
gives
\[
    R_{n,2}^{(1)}(a_0)
    =
    o_p\{(nh)^{-1/2}\}.
\]
Here we used \(b\gtrsim h\), which follows from \(h/b=O(1)\), and \(K\) is
fixed.

For \(R_{n,2}^{(2)}\), define
\[
    \chi_{n,k}(w,a,x):=
    \hat h_n^{(-k)}(w,a,x)-h_\infty(w,a,x).
\]
For all sufficiently large \(n\), the supports of the \(h\)- and
\(b\)-bandwidth kernels are contained in \(B_{\delta_1}(a_0)\), and
Assumptions~\ref{assm:nuisance1} and~\ref{assm:DR_and_rates} imply
\[
\max_{1\le k\le K}
\sup_{\substack{a\in B_{\delta_1}(a_0)\\w,x}}
|\chi_{n,k}(w,a,x)|=O_p(1).
\]
Lemma~\ref{lem:conditional-integrated-kernel} with
\(\mathcal I=B_{\delta_1}(a_0)\) gives
\[
    R_{n,2}^{(2)}(a_0)=O_p(n^{-1/2}).
\]
Since \(h\to0\),
\[
    n^{-1/2}=o\{(nh)^{-1/2}\}.
\]
This proves the pointwise result.

Assumption~\ref{assm:G-bandwidth} implies \(b\asymp h\) and
\(\log(b^{-1})\asymp\log(h^{-1})\).  The uniform assertion of
Lemma~\ref{lem:conditional-kernel-ep} applies because
\(\hat\vartheta_{n,k}-\vartheta_\infty\) is uniformly bounded with
probability tending to one.  The uniform pseudo-outcome convergence bound
above gives
\[
\begin{aligned}
\sup_{a_0\in\mathcal A_0}|R_{n,2}^{(1)}(a_0)|
=o_p\{(nh\log n)^{-1/2}\}
+O_p\left\{\frac{\log(h^{-1})}{nh}\right\}
=o_p\{(nh\log n)^{-1/2}\},
\end{aligned}
\]
where the last equality follows from Assumption~\ref{assm:G-bandwidth}.
For all sufficiently large \(n\), the kernel supports are contained in
\(\mathcal A_{\delta_3}\), uniformly over \(a_0\in\mathcal A_0\).
Assumptions~\ref{assm:G-crossfit} and~\ref{assm:G-DR-rates}\textup{(i)} give
\[
\max_{1\le k\le K}
\sup_{\substack{a\in\mathcal A_{\delta_3}\\w,x}}
|\chi_{n,k}(w,a,x)|=O_p(1).
\]
Lemma~\ref{lem:conditional-integrated-kernel} gives
\[
\sup_{a_0\in\mathcal A_0}|R_{n,2}^{(2)}(a_0)|
=
O_p\left\{
\sqrt{\frac{\log(h^{-1})}{n}}
+
\frac{\log(h^{-1})}{n}
\right\}.
\]
The first term is negligible relative to \((nh\log n)^{-1/2}\) because
\[
    h\log(h^{-1})\log n\to0,
\]
and the second term is smaller still.  Hence
\[
    \sup_{a_0\in\mathcal A_0}|R_{n,2}(a_0)|
    =
    o_p\{(nh\log n)^{-1/2}\}.
\]
\end{proof}

\begin{lemma}
\label{lem:Gamma-difference-expansion}
Under Assumptions~\ref{assm:D}--\ref{assm:F}, for fixed \(a_0\),
\[
\begin{aligned}
\Gamma_n(a)-\Gamma_0(a)
=
&\sum_{j=0}^1
\alpha_{n,h,j}(a_0)
\left(\frac{a-a_0}{h_n}\right)^jK_{h_n,a_0}(a)\\
&\quad+
\tau_n^2
\sum_{j=0}^2
\alpha_{n,b,j}(a_0)
\left(\frac{a-a_0}{b_n}\right)^jK_{b_n,a_0}(a),
\end{aligned}
\]
where
\[
    \max_{j\le1}|\alpha_{n,h,j}(a_0)|
    +
    \max_{j\le2}|\alpha_{n,b,j}(a_0)|
    =
    O_p\{(nh_n)^{-1/2}\}.
\]
Under Assumptions~\ref{assm:D} and~\ref{assm:G},
\[
\sup_{a_0\in\mathcal A_0}
\left[
\max_{j\le1}|\alpha_{n,h,j}(a_0)|
+
\max_{j\le2}|\alpha_{n,b,j}(a_0)|
\right]
=
O_p\left\{
\sqrt{\frac{\log(h_n^{-1})}{nh_n}}
\right\}.
\]
\end{lemma}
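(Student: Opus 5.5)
The plan is to write $\Gamma_n-\Gamma_0$ out explicitly and read off the coefficients $\alpha$ directly from the definitions. By definition,
\[
\Gamma_n(a)=e_1^\top D_{n,h,a_0,1}^{-1}w_{h,a_0,1}(a)K_{h,a_0}(a)-c_{n,h,a_0,2}\tau_n^2e_3^\top D_{n,b,a_0,2}^{-1}w_{b,a_0,2}(a)K_{b,a_0}(a),
\]
because $\tfrac12h^2\cdot2b^{-2}=\tau_n^2$. The population weight $\Gamma_0$ has the same form, with $D_{0,\cdot}$ and $c_{0,h,a_0,2}$ in place of their empirical versions. Subtracting the two, the coordinates of $w_{h,a_0,1}(a)$ and $w_{b,a_0,2}(a)$ are exactly the monomials $((a-a_0)/h)^j$ and $((a-a_0)/b)^j$. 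This gives the stated form with
\[
(\alpha_{n,h,0},\alpha_{n,h,1})=e_1^\top(D_{n,h,a_0,1}^{-1}-D_{0,h,a_0,1}^{-1})
\]
and
\[
(\alpha_{n,b,j})_{j\le2}=-\bigl\{c_{n,h,a_0,2}e_3^\top D_{n,b,a_0,2}^{-1}-c_{0,h,a_0,2}e_3^\top D_{0,b,a_0,2}^{-1}\bigr\}.
\]
We work on the event where both empirical matrices are nonsingular. By the pointwise rates below, this event has probability tending to one. Off it, $\Gamma_n=0$, and the representation still holds trivially with $\alpha$ built from $-\Gamma_0$'s coefficients, which does not affect an $O_p$ statement.

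For the rates, reuse the bounds established in the proof of Lemma~\ref{lem:R6-rate}. Pointwise, these are $\|D_{n,h,a_0,1}^{-1}-D_{0,h,a_0,1}^{-1}\|_\infty=O_p\{(nh)^{-1/2}\}$, $\|D_{n,b,a_0,2}^{-1}-D_{0,b,a_0,2}^{-1}\|_\infty=O_p\{(nb)^{-1/2}\}$ and $|c_{n,h,a_0,2}-c_{0,h,a_0,2}|=O_p\{(nh)^{-1/2}\}$. Lemma~\ref{lem:D-expansion} makes $|c_{0,h,a_0,2}|$ and $\|D_{0,b,a_0,2}^{-1}\|_\infty$ bounded. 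The $b$-coefficients then follow from the add-and-subtract identity
\[
c_ne_3^\top D_n^{-1}-c_0e_3^\top D_0^{-1}=(c_n-c_0)e_3^\top D_n^{-1}+c_0e_3^\top(D_n^{-1}-D_0^{-1}),
\]
together with $\|D_{n,b,a_0,2}^{-1}\|_\infty=O_p(1)$. Since $h/b=O(1)$ by Assumption~\ref{assm:bandwidths}, we have $(nb)^{-1/2}\lesssim(nh)^{-1/2}$, and this gives the pointwise claim.

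The uniform claim follows the same algebra, now using the uniform rates in the second half of the proof of Lemma~\ref{lem:R6-rate}. These are the VC-type maximal inequality for the translated kernel classes, which is valid because Assumption~\ref{assm:G-bandwidth} gives $nh/\log(h^{-1})\to\infty$, and the uniform bounds of Lemma~\ref{lem:D-expansion} under \ref{assm:G-smooth}(ii). They yield $\sup_{a_0}\|D_n^{-1}-D_0^{-1}\|_\infty=O_p[\{\log(h^{-1})/(nh)\}^{1/2}]$ for both matrices, since $\log(b^{-1})/(nb)=O\{\log(h^{-1})/(nh)\}$, and the same rate holds for $\sup_{a_0}|c_n-c_0|$. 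The one point needing care, and the main (mild) obstacle, is showing that $\sup_{a_0}\|D_{n,b,a_0,2}^{-1}\|_\infty=O_p(1)$. For this, argue that the population matrices have smallest eigenvalue bounded below uniformly over $a_0\in\mathcal A_0$, which follows from Lemma~\ref{lem:D-expansion} and $\inf f_A>0$. The uniform entrywise convergence of $D_n$ to $D_0$ then keeps the empirical eigenvalues bounded below with probability tending to one, by Weyl's inequality. This simultaneously gives uniform nonsingularity of the empirical matrices on an event of probability tending to one, so the representation holds on that event for all $a_0$ at once, and the uniform rate follows.
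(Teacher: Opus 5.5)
Your proposal is correct and follows the paper's proof essentially step for step. You subtract the definitions to read off $\alpha_{n,h,\cdot}=e_1^\top(D_{n,h_n,a_0,1}^{-1}-D_{0,h_n,a_0,1}^{-1})$ and the $b$-coefficients, split the latter by the same add-and-subtract identity, and invoke the pointwise and uniform rates for the moment matrices and for $c_{n,h_n,a_0,2}$ from Lemma~\ref{lem:R6-rate}, together with $h_n/b_n=O(1)$. Your explicit treatment of the singular-matrix event, and the Weyl-inequality argument giving uniform boundedness of $D_{n,b_n,a_0,2}^{-1}$, only make precise steps that the paper leaves implicit.
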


\begin{proof}
The first display follows by subtracting the definitions of
\(\Gamma_n\) and \(\Gamma_0\):
\[
\begin{aligned}
\Gamma_n-\Gamma_0
={}&
e_1^\top
(D_{n,h,a_0,1}^{-1}-D_{0,h,a_0,1}^{-1})
w_{h,a_0,1}K_{h,a_0}\\
&\quad-
\tau_n^2
\left\{
c_{n,h,a_0,2}e_3^\top D_{n,b,a_0,2}^{-1}
-
c_{0,h,a_0,2}e_3^\top D_{0,b,a_0,2}^{-1}
\right\}
w_{b,a_0,2}K_{b,a_0}.
\end{aligned}
\]
The coefficient of the \(h\)-bandwidth component is
\(e_1^\top(D_{n,h,a_0,1}^{-1}-D_{0,h,a_0,1}^{-1})\), which is
\(O_p\{(nh)^{-1/2}\}\) by the local-polynomial matrix bounds.  For the
\(b\)-bandwidth component,
\[
\begin{aligned}
&c_{n,h,a_0,2}e_3^\top D_{n,b,a_0,2}^{-1}
-
c_{0,h,a_0,2}e_3^\top D_{0,b,a_0,2}^{-1} \\
&\quad =
(c_{n,h,a_0,2}-c_{0,h,a_0,2})
e_3^\top D_{n,b,a_0,2}^{-1}
+
c_{0,h,a_0,2}
e_3^\top(D_{n,b,a_0,2}^{-1}-D_{0,b,a_0,2}^{-1}).
\end{aligned}
\]
The first term is \(O_p\{(nh)^{-1/2}\}\), and the second is
\(O_p\{(nb)^{-1/2}\}\).  Since \(h/b=O(1)\), \(b\gtrsim h\), and hence
\((nb)^{-1/2}=O\{(nh)^{-1/2}\}\).  This proves the pointwise coefficient
bound.  The uniform bound follows identically from the uniform versions of
the matrix bounds and the expansion of \(c_{n,h,a_0,2}-c_{0,h,a_0,2}\).
\end{proof}

\begin{lemma}
\label{lem:R3-control}
Under Assumptions~\ref{assm:A} and~\ref{assm:D}--\ref{assm:F},
\[
    R_{n,3}(a_0)=O_p\{(nh_n)^{-1}\}
    =
    o_p\{(nh_n)^{-1/2}\}.
\]
Under Assumptions~\ref{assm:A}, \ref{assm:D}, and~\ref{assm:G},
\[
    \sup_{a_0\in\mathcal A_0}|R_{n,3}(a_0)|
    =
    o_p\{(nh_n\log n)^{-1/2}\}.
\]
\end{lemma}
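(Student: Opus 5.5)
The plan is to feed the coefficient expansion of Lemma~\ref{lem:Gamma-difference-expansion} into the two empirical-process lemmas already proved. Lemma~\ref{lem:Gamma-difference-expansion} writes $\Gamma_n-\Gamma_0$ as a finite linear combination of the kernel basis functions $p_{h_n,a_0,j}$, $j\le1$, and $\tau_n^2p_{b_n,a_0,j}$, $j\le2$. The coefficients $\alpha_{n,\cdot,j}(a_0)$ are random and depend on the full sample, but they are scalars of order $O_p\{(nh_n)^{-1/2}\}$ pointwise and $O_p[\{\log(h_n^{-1})/(nh_n)\}^{1/2}]$ uniformly. Since they are constants in $o$, I pull them outside $(P_{n,k}-P_0)$ deterministically on every realization, exactly as $C_n$ was handled in the proof of Lemma~\ref{lem:R5-control}. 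This gives
\[
|R_{n,3}(a_0)|\le \Bigl(\max_j|\alpha_{n,h,j}|+\tau_n^2\max_j|\alpha_{n,b,j}|\Bigr)\sum_{k}\frac{n_k}{n}\sum_{\ell,j}\Bigl\{\bigl|(P_{n,k}-P_0)\{p_{\ell,a_0,j}(A)\hat\vartheta_{n,k}(O)\}\bigr|+\bigl|(P_{n,k}-P_0)H_{n,k,\ell,j,a_0}\bigr|\Bigr\},
\]
where $H$ is built from $\chi_{n,k}=\hat h_n^{(-k)}$. The validation-fold dependence of $\Gamma_n$ never enters a conditional argument, so no entropy of $\Gamma_n$ is required.

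\emph{Pointwise bound.} Condition on $\trainingsigma_k$, so that $\hat\vartheta_{n,k}=\hat\xi_{n,1}^{(-k)}+S_{\hat h_n^{(-k)}}$ is a fixed function. Using Assumption~\ref{assm:nuisance1} and the moment bound on $Y$ in \ref{assm:regul1}(c), $\sup_{a\in B_{\delta_1}(a_0)}E_0(\hat\vartheta_{n,k}^2\mid A=a,\trainingsigma_k)=O_p(1)$. Lemma~\ref{lem:conditional-kernel-ep} with $\psi_{n,k}=\hat\vartheta_{n,k}$ then gives each kernel term as $O_p\{(n_k\ell)^{-1/2}\}=O_p\{(nh_n)^{-1/2}\}$, using $b_n\gtrsim h_n$. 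Lemma~\ref{lem:conditional-integrated-kernel} gives each integrated term as $O_p(n^{-1/2})$. Multiplying by the coefficient rate yields $R_{n,3}(a_0)=O_p\{(nh_n)^{-1}\}$, and this is $o_p\{(nh_n)^{-1/2}\}$ because $nh_n\to\infty$.

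\emph{Uniform bound.} Take suprema over $a_0\in\mathcal A_0$. Under \ref{assm:G-crossfit}, \ref{assm:G-DR-rates}, and \ref{assm:G-smooth}(iii), $\hat\vartheta_{n,k}I(A\in\mathcal A_{\delta_3})$ is bounded with probability tending to one, so the uniform part of Lemma~\ref{lem:conditional-kernel-ep} with $\delta_{\psi,k}=O_p(1)$ applies. It gives $O_p[\{\log(h_n^{-1})/(nh_n)\}^{1/2}+\log(h_n^{-1})/(nh_n)]$, and the integrated part is $O_p[\{\log(h_n^{-1})/n\}^{1/2}]$. The product with the uniform coefficient rate is
\[
O_p\{\log(h_n^{-1})/(nh_n)\}.
\]
Comparing with $(nh_n\log n)^{-1/2}$ gives the ratio $\log(h_n^{-1})\{\log n/(nh_n)\}^{1/2}$, which tends to zero by the consequences of Assumption~\ref{assm:G-bandwidth} recorded after Assumption~\ref{assm:G}.

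The main obstacle is justifying that the random, full-sample coefficients can be factored out while the suprema of the basis processes are still controlled conditionally on $\trainingsigma_k$. This needs the maximal inequality in Lemma~\ref{lem:conditional-kernel-ep} to cover the whole finite family of basis functions $p_{\ell,u,j}$ simultaneously in $u$, rather than $\Gamma_n$ itself. It also needs the product of two $O_p$ bounds, one uniform in $a_0$ for the coefficients and one uniform for the processes, to be taken as the product of their suprema. Both requirements are handled by bounding $\sup_{a_0}|\alpha|\cdot\sup_{u}|(P_{n,k}-P_0)(\cdot)|$ directly.
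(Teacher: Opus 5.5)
Your proposal is correct and follows essentially the same route as the paper. You expand $\Gamma_n-\Gamma_0$ via Lemma~\ref{lem:Gamma-difference-expansion}, pull the full-sample coefficients out of $(P_{n,k}-P_0)$ pathwise, and bound the remaining basis processes conditionally on $\trainingsigma_k$ with Lemmas~\ref{lem:conditional-kernel-ep} and~\ref{lem:conditional-integrated-kernel}, which yields $O_p\{(nh_n)^{-1}\}$ pointwise and $O_p\{\log(h_n^{-1})/(nh_n)\}$ uniformly. The only difference is cosmetic: the paper splits the kernel-weighted and integrated pieces into $R_{n,3}^{(1)}$ and $R_{n,3}^{(2)}$ rather than combining them in a single inequality.
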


\begin{proof}
Write \(h=h_n\), \(b=b_n\), and \(\tau_n=h/b\).  Decompose
\[
    R_{n,3}(a_0)=R_{n,3}^{(1)}(a_0)+R_{n,3}^{(2)}(a_0),
\]
where
\[
\begin{aligned}
R_{n,3}^{(1)}(a_0)
:=
\sum_{k=1}^K\frac{n_k}{n}(P_{n,k}-P_0)
\left[
\{\Gamma_n(A)-\Gamma_0(A)\}\hat\vartheta_{n,k}(O)
\right],
\end{aligned}
\]
and
\[
\begin{aligned}
R_{n,3}^{(2)}(a_0)
:=
\sum_{k=1}^K\frac{n_k}{n}(P_{n,k}-P_0)
\left[
\int
\{\Gamma_n(\bar a)-\Gamma_0(\bar a)\}
\hat h_n^{(-k)}(W,\bar a,X)
\,dF_0(\bar a)
\right].
\end{aligned}
\]

By Lemma~\ref{lem:Gamma-difference-expansion},
\(\Gamma_n-\Gamma_0\) is a finite linear combination of kernel basis functions
with coefficients of order \(O_p\{(nh)^{-1/2}\}\).  Conditional on
\(\trainingsigma_k\), the functions \(\hat\vartheta_{n,k}\) are fixed.  Since
\(\sup_a|S_{\hat h_n^{(-k)}}(a)|\le C_h\), Assumptions~\ref{assm:regul1}
\textup{(c)} and~\ref{assm:nuisance1} give
\[
\max_{1\le k\le K}\sup_{a\in B_{\delta_1}(a_0)}
E_0\{\hat\vartheta_{n,k}^2\mid A=a,\trainingsigma_k\}
=O_p(1).
\]
The pointwise assertion of Lemma~\ref{lem:conditional-kernel-ep} therefore
gives
\[
    R_{n,3}^{(1)}(a_0)
    =
    O_p\{(nh)^{-1/2}\}
    O_p\{(nh)^{-1/2}\}
    =
    O_p\{(nh)^{-1}\}.
\]
No independence between the random coefficients in \(\Gamma_n-\Gamma_0\) and
the fold empirical process is required: the product of two
\(O_p\)-bounded sequences has the product rate.

For all sufficiently large \(n\), the kernel supports are contained in
\(B_{\delta_1}(a_0)\).  Assumption~\ref{assm:nuisance1} and
Lemma~\ref{lem:conditional-integrated-kernel} with
\(\mathcal I=B_{\delta_1}(a_0)\) give
\[
    (P_{n,k}-P_0)
    \left[
    \int
    p_{\ell,a_0,j}(\bar a)
    \hat h_n^{(-k)}(W,\bar a,X)
    \,dF_0(\bar a)
    \right]
    =
    O_p(n_k^{-1/2})
\]
for \(\ell\in\{h,b\}\).  Multiplying by the coefficient bound
\(O_p\{(nh)^{-1/2}\}\) yields
\[
    R_{n,3}^{(2)}(a_0)
    =
    O_p\{(nh)^{-1/2}n^{-1/2}\}
    =
    O_p\{(n\sqrt h)^{-1}\}.
\]
Since \(h\to0\),
\[
    (n\sqrt h)^{-1}
    =
    o\{(nh)^{-1/2}\}.
\]
Thus
\[
    R_{n,3}(a_0)=O_p\{(nh)^{-1}\}+O_p\{(n\sqrt h)^{-1}\}
    =
    O_p\{(nh)^{-1}\}
    =
    o_p\{(nh)^{-1/2}\}.
\]

For the uniform result, Lemma~\ref{lem:Gamma-difference-expansion} gives
\[
\sup_{a_0\in\mathcal A_0}
|\alpha_{n,\ell,j}(a_0)|
=
O_p\left\{
\sqrt{\frac{\log(h^{-1})}{nh}}
\right\},
\qquad \ell\in\{h,b\}.
\]
Assumptions~\ref{assm:G-crossfit}, \ref{assm:G-DR-rates}\textup{(i)},
and~\ref{assm:G-smooth}\textup{(iii)} imply
that \(\hat\vartheta_{n,k}\) is uniformly bounded with probability tending
to one.  The uniform assertion of Lemma~\ref{lem:conditional-kernel-ep}
gives
\[
\sup_{a_0\in\mathcal A_0}
\left|
(P_{n,k}-P_0)\{p_{\ell,a_0,j}(A)\hat\vartheta_{n,k}(O)\}
\right|
=
O_p\left\{
\sqrt{\frac{\log(h^{-1})}{nh}}
+
\frac{\log(h^{-1})}{nh}
\right\}.
\]
Therefore
\[
\sup_{a_0\in\mathcal A_0}|R_{n,3}^{(1)}(a_0)|
=
O_p\left\{
\frac{\log(h^{-1})}{nh}
\right\}.
\]
For all sufficiently large \(n\), the kernel supports are contained in
\(\mathcal A_{\delta_3}\), uniformly over \(a_0\in\mathcal A_0\).
Assumption~\ref{assm:G-crossfit} gives
\[
\max_{1\le k\le K}
\sup_{\substack{a\in\mathcal A_{\delta_3}\\w,x}}
|\hat h_n^{(-k)}(w,a,x)|=O_p(1).
\]
Lemma~\ref{lem:conditional-integrated-kernel} yields
\[
\sup_{a_0\in\mathcal A_0}|R_{n,3}^{(2)}(a_0)|
=
O_p\left\{
\sqrt{\frac{\log(h^{-1})}{nh}}
\left[
\sqrt{\frac{\log(h^{-1})}{n}}
+
\frac{\log(h^{-1})}{n}
\right]
\right\}
=
O_p\left\{
\frac{\log(h^{-1})}{n\sqrt h}
\right\}.
\]
The latter is smaller than \(\log(h^{-1})/(nh)\) because \(h\to0\).  Hence
\[
\sup_{a_0\in\mathcal A_0}|R_{n,3}(a_0)|
=
O_p\left\{
\frac{\log(h^{-1})}{nh}
\right\}.
\]
Finally,
\[
\frac{\log(h^{-1})/(nh)}
{(nh\log n)^{-1/2}}
=
\log(h^{-1})\sqrt{\frac{\log n}{nh}}
\to0,
\]
which proves the uniform assertion.
\end{proof}

\section{Additional numerical details}
\label{sec:additional-numerical-details}

\subsection{Bandwidth selection}
\label{sec:implementation-bandwidth}

This subsection describes the data-adaptive bandwidth selectors used in the
numerical experiments.  Motivated by Section 2.5 of
\citet{Takatsu2025debiased}, PDR--CV1 and PDR--PI1 use
leave-one-out cross-validation and plug-in selection, respectively.
PDR--CV2 and PDR--PI2 provide two additional selectors, and ProxLL--US is an
undersmoothed local-linear competitor.  
All four PDR estimators set \(b=h\).
The criteria are evaluated conditional on the cross-fitted pseudo-outcomes,
without re-estimating the nuisance functions across candidate bandwidths or
leave-one-out samples.  We suppress the nuisance-model scenario index except
where it is needed to define the PDR--CV2 candidate set.  Each criterion
involving \(\widehat\xi^{\mathrm{cf}}\) is evaluated separately within each
scenario; in particular, \(\widehat{\operatorname{CV}}_{\mathrm{DB},s}\)
denotes \eqref{eq:implementation-db-loocv} evaluated using the pseudo-outcomes
under scenario \(s\).

Let
\[
  h_{\mathrm{NR}}
  :=
  1.06\min\left\{
  \widehat{\operatorname{sd}}(A),
  \frac{\widehat{\operatorname{IQR}}(A)}{1.349}
  \right\}n^{-1/5},
\]
and define the initial candidate set
\[
  \mathcal H_0
  :=
  \left\{
  h_{\mathrm{NR}}
  \exp\!\left[
  \log(0.35)+\frac{j-1}{14}
  \{\log(2.75)-\log(0.35)\}
  \right]
  :j=1,\ldots,15
  \right\}.
\]
Thus,
\[
  h_{\min}:=\min\mathcal H_0=0.35h_{\mathrm{NR}},
  \qquad
  h_{\max}:=\max\mathcal H_0=2.75h_{\mathrm{NR}}.
\]
Let \(\widehat q_{.05}\) and \(\widehat q_{.95}\) be the empirical \(0.05\)
and \(0.95\) quantiles of \(A\), respectively, and set
\[
  \mathcal T_n
  :=
  \{i:\widehat q_{.05}\le A_i\le\widehat q_{.95}\}.
\]
For \(h>0\), let
\(\mathcal V_n(h)\subseteq\mathcal T_n\) contain the indices for which the
degree-one and degree-two local-polynomial design matrices are nonsingular
and the local-linear and local-quadratic leave-one-out residuals below are
well defined.

Now we formally provide the form of each bandwidth selection method:

\paragraph{ProxLL--US.}
Define the cross-fitted local-linear estimator by
\[
  \widehat\theta_{n,h}^{\mathrm{LL,cf}}(a)
  :=
  e_1^\top D_{n,h,a,1}^{-1}
  \widehat M_{n,h,a,1}^{\mathrm{cf}},
\]
and its leave-one-out criterion by
\begin{equation}
  \widehat{\operatorname{CV}}_{\mathrm{LL}}(h)
  :=
  \frac{1}{|\mathcal V_n(h)|}
  \sum_{i\in\mathcal V_n(h)}
  \left[
  \frac{
  \widehat\xi_i^{\mathrm{cf}}
  -
  \widehat\theta_{n,h}^{\mathrm{LL,cf}}(A_i)
  }{
  1-r_{n,h,A_i}(A_i)/n
  }
  \right]^2.
  \label{eq:implementation-ll-loocv}
\end{equation}
After minimizing \eqref{eq:implementation-ll-loocv} over
\(\mathcal H_0\), we add the geometric midpoints of the grid intervals
adjacent to the preliminary minimizer.  If \(\mathcal H_{\mathrm{LL}}\)
denotes the resulting candidate set, define
\[
  \widehat h_{\mathrm{LL,CV}}
  :=
  \min\operatorname*{arg\,min}_{h\in\mathcal H_{\mathrm{LL}}}
  \widehat{\operatorname{CV}}_{\mathrm{LL}}(h).
\]
ProxLL--US uses
\begin{equation}
  \widehat h_{\mathrm{US}}
  :=
  \frac{\widehat h_{\mathrm{LL,CV}}}{\log_{10}(n)}.
  \label{eq:implementation-undersmoothing}
\end{equation}

\paragraph{PDR--CV1.}
Conditional on
\(\widehat\xi_1^{\mathrm{cf}},\ldots,\widehat\xi_n^{\mathrm{cf}}\), define
\begin{equation}
  \widehat{\operatorname{CV}}_{\mathrm{DB}}(h)
  :=
  \frac{1}{|\mathcal V_n(h)|}
  \sum_{i\in\mathcal V_n(h)}
  \left[
  \frac{
  \widehat\xi_i^{\mathrm{cf}}
  -
  \widehat\theta_{n,h,h}^{\mathrm{DB,cf}}(A_i)
  }{
  1-\widehat\Gamma_{n,h,h,A_i}(A_i)/n
  }
  \right]^2.
  \label{eq:implementation-db-loocv}
\end{equation}
After minimizing \eqref{eq:implementation-db-loocv} over
\(\mathcal H_0\), we apply the same one-step geometric refinement.  If
\(\mathcal H_{\mathrm{DB}}\) denotes the resulting candidate set, PDR--CV1
uses
\[
  \widehat h_{\mathrm{CV}}
  :=
  \min\operatorname*{arg\,min}_{h\in\mathcal H_{\mathrm{DB}}}
  \widehat{\operatorname{CV}}_{\mathrm{DB}}(h),
  \qquad
  \widehat b_{\mathrm{CV}}
  :=
  \widehat h_{\mathrm{CV}}.
\]

\paragraph{PDR--PI1.}
Let \(\Pi_{[h_{\min},h_{\max}]}\) denote truncation to
\([h_{\min},h_{\max}]\), and define
\[
  h_1
  :=
  \Pi_{[h_{\min},h_{\max}]}
  \{1.5\widehat h_{\mathrm{LL,CV}}\}.
\]
A local-quadratic fit with bandwidth \(h_1\) provides
\[
  \widehat\theta_{n,h_1}''(a)
  :=
  2h_1^{-2}e_3^\top D_{n,h_1,a,2}^{-1}
  \widehat M_{n,h_1,a,2}^{\mathrm{cf}}.
\]
Let \(\widehat\sigma_\xi^2(a)\) be the sample variance of the
\[
  k_n:=\max\{15,\lceil n^{1/2}\rceil\}
\]
nearest cross-fitted pseudo-outcomes, with \(k_n\) increased by one when even
and capped at the largest odd integer not exceeding \(n\).  Boundary windows
are shifted to contain \(k_n\) observations, and the variance estimate is
truncated below at \(10^{-10}\).  This conditional pseudo-outcome variance is
distinct from the influence function variance used for inference.  Define
\begin{align}
  \widehat B_{\mathrm{LL}}(a)
  &:=
  e_1^\top D_{n,h_1,a,1}^{-1}
  P_n\!\left[
  w_{h_1,a,1}(A)K_{h_1,a}(A)
  \frac{\widehat\theta_{n,h_1}''(a)}{2}
  \left\{\frac{A-a}{h_1}\right\}^2
  \right],
  \label{eq:implementation-plugin-bias}\\
  \widehat V_{\mathrm{LL}}(a)
  &:=
  h_1e_1^\top D_{n,h_1,a,1}^{-1}
  P_n\!\left[
  w_{h_1,a,1}(A)K_{h_1,a}(A)^2
  \widehat\sigma_\xi^2(A)
  w_{h_1,a,1}(A)^\top
  \right]
  D_{n,h_1,a,1}^{-1}e_1.
  \label{eq:implementation-plugin-variance}
\end{align}
The estimated integrated mean squared error is
\begin{equation}
  \widehat{\operatorname{IMSE}}_{\mathrm{LL}}(h)
  :=
  h^4\int_{-0.9}^{0.9}\widehat B_{\mathrm{LL}}(a)^2\,da
  +
  \frac{1}{nh}
  \int_{-0.9}^{0.9}\widehat V_{\mathrm{LL}}(a)\,da.
  \label{eq:implementation-plugin-imse}
\end{equation}
The integrals in \eqref{eq:implementation-plugin-imse} are evaluated by the
trapezoidal rule over the valid coordinates of \(\mathcal G\).  A coordinate
is included when the local quantities are finite and
\(\widehat V_{\mathrm{LL}}(a)>0\).
The resulting unconstrained minimizer is
\begin{equation}
  \widehat h_{\mathrm{PI}}^{\mathrm{raw}}
  :=
  n^{-1/5}
  \left\{
  \frac{
  \int_{-0.9}^{0.9}\widehat V_{\mathrm{LL}}(a)\,da
  }{
  4\int_{-0.9}^{0.9}\widehat B_{\mathrm{LL}}(a)^2\,da
  }
  \right\}^{1/5}.
  \label{eq:implementation-plugin-bandwidth}
\end{equation}
PDR--PI1 uses
\begin{equation}
  \widehat h_{\mathrm{PI}}
  :=
  \Pi_{[h_{\min},h_{\max}]}
  \{\widehat h_{\mathrm{PI}}^{\mathrm{raw}}\},
  \qquad
  \widehat b_{\mathrm{PI}}
  :=
  \widehat h_{\mathrm{PI}}.
  \label{eq:implementation-plugin-b-equals-h}
\end{equation}

\paragraph{PDR--PI2.}
Let
\[
  \bar A:=P_nA,
  \qquad
  \widehat s_A
  :=
  \left\{
  \frac{1}{n-1}\sum_{i=1}^n(A_i-\bar A)^2
  \right\}^{1/2}.
\]
PDR--PI2 uses the direct scale rule
\begin{equation}
  \widehat h_{\mathrm{PI2}}
  :=
  1.5\,\widehat s_A n^{-1/5},
  \qquad
  \widehat b_{\mathrm{PI2}}
  :=
  \widehat h_{\mathrm{PI2}}.
  \label{eq:simulation-pi2-bandwidth}
\end{equation}
Unlike PDR--PI1, PDR--PI2 does not estimate an integrated mean squared error.

\paragraph{PDR--CV2.}
Define the relative candidate set
\begin{equation}
  \mathcal H_{2,n}
  :=
  \left\{
  \widehat h_{\mathrm{PI2}}2^{j/10}:j=-15,\ldots,15
  \right\}.
  \label{eq:simulation-cv2-grid}
\end{equation}
Thus, \(\mathcal H_{2,n}\) spans
\([2^{-3/2}\widehat h_{\mathrm{PI2}},
2^{3/2}\widehat h_{\mathrm{PI2}}]\).  For nuisance-model scenario \(s\),
let \(\mathcal H_{2,n,s}^{\mathrm{fin}}\) contain the candidates for which
\(\widehat{\operatorname{CV}}_{\mathrm{DB},s}(h)\) is finite and well
defined.  PDR--CV2 uses
\begin{equation}
  \widehat h_{\mathrm{CV2},s}
  :=
  \min\operatorname*{arg\,min}_{h\in\mathcal H_{2,n,s}^{\mathrm{fin}}}
  \widehat{\operatorname{CV}}_{\mathrm{DB},s}(h),
  \qquad
  \widehat b_{\mathrm{CV2},s}
  :=
  \widehat h_{\mathrm{CV2},s}.
  \label{eq:simulation-cv2-bandwidth}
\end{equation}
A replication is excluded if
\(\mathcal H_{2,n,s}^{\mathrm{fin}}\) is empty for any \(s\).
Since \(\widehat s_A\overset{p}{\longrightarrow}\sqrt{105/128}\),
\[
  \widehat h_{\mathrm{PI2}}\asymp_p n^{-1/5},
  \qquad
  \widehat h_{\mathrm{CV2},s}\asymp_p n^{-1/5}.
\]

\subsection{Nuisance estimation}
\label{sec:simulation-nuisance}

The nuisance functions are estimated separately on each training sample.
For fold \(k\), write
\[
  n_{-k}:=n-n_k,
  \qquad
  P_{n,-k}f
  :=
  \frac{1}{n_{-k}}\sum_{i\notin I_k}f(O_i).
\]

For the outcome bridge, define
\[
  B_h(Z,A,X)
  :=
  (1,A,A^2,X_1,X_2,Z)^\top,
\]
and let
\[
  R_h^{\mathrm C}(W,A,X)
  :=
  (1,A,A^2,X_1,X_2,W)^\top,
  \qquad
  R_h^{\mathrm M}(W,A,X)
  :=
  (1,A,A^2,X_1,X_2)^\top.
\]
For \(R_h\in\{R_h^{\mathrm C},R_h^{\mathrm M}\}\), define
\[
  \widehat g_{h,k}(b)
  :=
  P_{n,-k}\!\left[
  B_h\{Y-b^\top R_h\}
  \right],
  \qquad
  \widehat\Omega_{h,k}
  :=
  P_{n,-k}(B_hB_h^\top).
\]
With \(\lambda=10^{-10}\), let
\begin{equation}
  \widehat b^{(-k)}
  :=
  \arg\min_b
  \widehat g_{h,k}(b)^\top
  (\widehat\Omega_{h,k}+\lambda I)^{-1}
  \widehat g_{h,k}(b)
  +
  \lambda\|b\|_2^2,
  \label{eq:simulation-outcome-gmm}
\end{equation}
and set
\[
  \widehat h_n^{(-k)}(w,a,x)
  :=
  \widehat b^{(-k)\top}R_h(w,a,x).
\]
The correct model is exactly identified; the misspecified model is
overidentified.  Here and below, correctness refers to inclusion of the true
bridge in the working model; the fixed ridge in
\eqref{eq:simulation-outcome-gmm} is applied due to the numerical stability.

For the treatment bridge, define
\[
\begin{aligned}
  B_q^{\mathrm C}(Z,A,X)
  :=(&1,Z,A,X_1,X_2,Z^2,A^2,X_1^2,X_2^2,\\
     &~ZA,ZX_1,ZX_2,AX_1,AX_2,X_1X_2)^\top,
\end{aligned}
\]
\[
  B_q^{\mathrm M}(Z,A,X)
  :=
  (1,Z,A,X_1,X_2)^\top,
\]
and
\[
\begin{aligned}
  G_q(W,A,X)
  :=(&1,W,A,X_1,X_2,W^2,A^2,X_1^2,X_2^2,\\
     &~WA,WX_1,WX_2,AX_1,AX_2,X_1X_2)^\top.
\end{aligned}
\]
The empirical calibration target is
\[
  \widehat\tau_{q,k}
  :=
  \frac{1}{n_{-k}^2}
  \sum_{\substack{i\notin I_k\\j\notin I_k}}
  G_q(W_i,A_j,X_i).
\]
For \(B_q\in\{B_q^{\mathrm C},B_q^{\mathrm M}\}\), let \(S_{B,k}\) and
\(S_{G,k}\) be the diagonal matrices of empirical root-mean-square column
scales computed on the \(k\)th training sample for \(B_q\) and \(G_q\),
respectively, with the intercept scales fixed at one.  Any nonfinite scale or
scale below \(10^{-8}\) is replaced by one.  Define
\[
  \widetilde B_{q,k}:=S_{B,k}^{-1}B_q,
  \qquad
  \widetilde G_{q,k}:=S_{G,k}^{-1}G_q,
  \qquad
  \widetilde\tau_{q,k}:=S_{G,k}^{-1}\widehat\tau_{q,k},
\]
and
\[
  \widetilde g_{q,k}(\eta)
  :=
  P_{n,-k}\!\left[
  \widetilde G_{q,k}
  \exp\{\eta^\top\widetilde B_{q,k}\}
  \right]
  -
  \widetilde\tau_{q,k}.
\]
We minimize the treatment-bridge criterion
\begin{equation}
  \eta
  \longmapsto
  \|\widetilde g_{q,k}(\eta)\|_2^2,
  \label{eq:simulation-treatment-gmm}
\end{equation}
numerically and denote the accepted solution by
\(\widehat\eta^{(-k)}\).  The resulting estimator is
\[
  \widehat q_n^{(-k)}(z,a,x)
  :=
  \exp\!\left\{
  \widehat\eta^{(-k)\top}S_{B,k}^{-1}B_q(z,a,x)
  \right\}.
\]
The standardized criterion in
\eqref{eq:simulation-treatment-gmm} weights the raw calibration moments by
\(S_{G,k}^{-2}\).  The correct model is exactly identified, whereas the
misspecified main-effects model is overidentified.

\begin{table}[!b]
\centering
\caption{Working nuisance models under the four specification scenarios. The
first and second letters indicate the outcome- and treatment-bridge models,
respectively.}
\label{tab:simulation-scenarios}
\begin{tabular}{@{}clll@{}}
\toprule
Scenario & Outcome bridge & Treatment bridge & Population implication\\
\midrule
CC & \(R_h^{\mathrm C}\) & \(B_q^{\mathrm C}\) & Both bridges correct\\
CM & \(R_h^{\mathrm C}\) & \(B_q^{\mathrm M}\) & Outcome bridge correct\\
MC & \(R_h^{\mathrm M}\) & \(B_q^{\mathrm C}\) & Treatment bridge correct\\
MM & \(R_h^{\mathrm M}\) & \(B_q^{\mathrm M}\) & Neither bridge correct\\
\bottomrule
\end{tabular}
\end{table}

For \(i\in I_k\), the cross-fitted pseudo-outcome is constructed from
\((\widehat q_n^{(-k)},\widehat h_n^{(-k)})\).  These fitted bridges are
reused over all evaluation points and candidate bandwidths.

\subsection{Additional tables and figures}
\label{sec:simulation-additional-results}

This subsection supplements Section~\ref{sec:simulation-results} with
coordinatewise estimation results, paired comparisons, and
bandwidth-selector diagnostics.  Scenario MM is examined as a misspecification or sensitivity diagnostic. 
Note that, under this scenario, we do not claim consistency nor nominal-coverage.  
Within each replication, all methods use the same
generated data and cross-fitting folds.  Each simultaneous band is based on
\(10{,}000\) draws from the estimated Gaussian law.

\subsubsection{Estimation and pointwise inference}

Table~\ref{tab:sim-pointwise-grid-summary} reports empirical pointwise
coverage and mean interval width averaged over \(\mathcal G\).

\begin{table}[!t]
\centering
\caption{Empirical coverage and mean width of nominal 95\% pointwise
confidence intervals, averaged over the 61 treatment values. Each entry gives
coverage, followed by mean width in brackets, over \(5{,}000\) replications.}
\label{tab:sim-pointwise-grid-summary}
\begin{tabular}{@{}rrccccc@{}}
\toprule
\(n\) & Scenario & PDR--PI1 & PDR--PI2 & PDR--CV1 & PDR--CV2 & ProxLL--US\\
\midrule
% n & scenario & PDR-PI1 & PDR-PI2 & PDR-CV1 & PDR-CV2 & ProxLL-US; coverage [mean width]
2500 & CC & 0.9481 [0.241] & 0.9508 [0.252] & 0.9494 [0.218] & 0.9478 [0.204] & 0.9511 [0.276] \\
2500 & CM & 0.9477 [0.239] & 0.9507 [0.248] & 0.9495 [0.216] & 0.9476 [0.202] & 0.9506 [0.273] \\
2500 & MC & 0.9479 [0.501] & 0.9495 [0.601] & 0.9477 [0.462] & 0.9463 [0.394] & 0.9507 [0.622] \\
5000 & CC & 0.9513 [0.176] & 0.9528 [0.185] & 0.9518 [0.158] & 0.9519 [0.147] & 0.9525 [0.209] \\
5000 & CM & 0.9508 [0.174] & 0.9526 [0.183] & 0.9514 [0.157] & 0.9516 [0.146] & 0.9526 [0.206] \\
5000 & MC & 0.9491 [0.373] & 0.9503 [0.454] & 0.9489 [0.344] & 0.9497 [0.293] & 0.9522 [0.481] \\
7500 & CC & 0.9472 [0.147] & 0.9489 [0.155] & 0.9474 [0.132] & 0.9463 [0.122] & 0.9502 [0.178] \\
7500 & CM & 0.9467 [0.146] & 0.9488 [0.153] & 0.9472 [0.130] & 0.9459 [0.121] & 0.9502 [0.176] \\
7500 & MC & 0.9490 [0.316] & 0.9507 [0.385] & 0.9494 [0.291] & 0.9500 [0.247] & 0.9523 [0.415] \\
10000 & CC & 0.9478 [0.129] & 0.9500 [0.137] & 0.9483 [0.115] & 0.9474 [0.106] & 0.9511 [0.159] \\
10000 & CM & 0.9474 [0.128] & 0.9497 [0.135] & 0.9485 [0.114] & 0.9476 [0.105] & 0.9511 [0.157] \\
10000 & MC & 0.9499 [0.281] & 0.9513 [0.343] & 0.9502 [0.258] & 0.9491 [0.219] & 0.9530 [0.374] \\
\bottomrule

\end{tabular}
\end{table}

Table~\ref{tab:sim-pointwise-grid-summary} shows that all five methods have
mean pointwise coverage close to the nominal level in CC, CM, and MC.  The
widths in CC and CM are similar, whereas those in MC are uniformly larger.
Interval width decreases with sample size for every method and scenario.
PDR--CV2 gives the narrowest intervals throughout, while all four PDR
estimators are more precise than ProxLL--US.

The comparison within each selector family exhibits a stable
coverage--precision tradeoff.  PDR--PI2 has slightly higher coverage and
wider intervals than PDR--PI1 throughout.  PDR--CV2 is uniformly narrower
than PDR--CV1, while neither cross-validation estimator uniformly dominates
the other in coverage.

Figures~\ref{fig:sim-scaled-bias}--\ref{fig:sim-scaled-mse} decompose the
estimation error into squared bias and variance.  In CC, CM, and MC, the
scaled mean squared error is driven primarily by variance, with the largest
variance occurring in MC.  Under MM, squared bias becomes the dominant
component, consistent with the separation between the estimated and true
curves in Figure~\ref{fig:sim-mean-curves}.

\begin{figure}[!t]
\centering
\includegraphics[width=\textwidth]{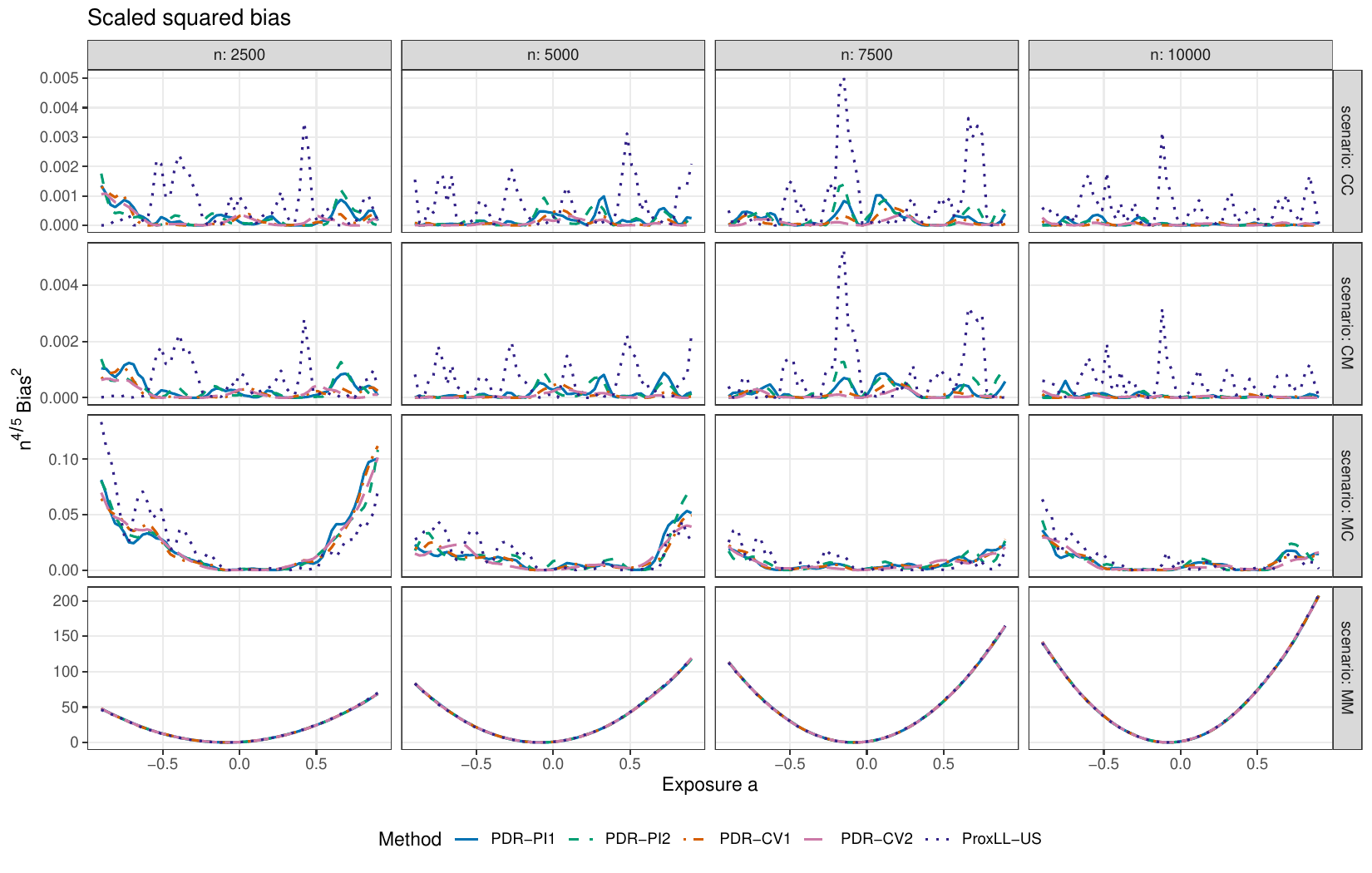}
\caption{Squared empirical bias multiplied by \(n^{4/5}\). The vertical
scale varies across scenarios.}
\label{fig:sim-scaled-bias}
\end{figure}

\begin{figure}[!t]
\centering
\includegraphics[width=\textwidth]{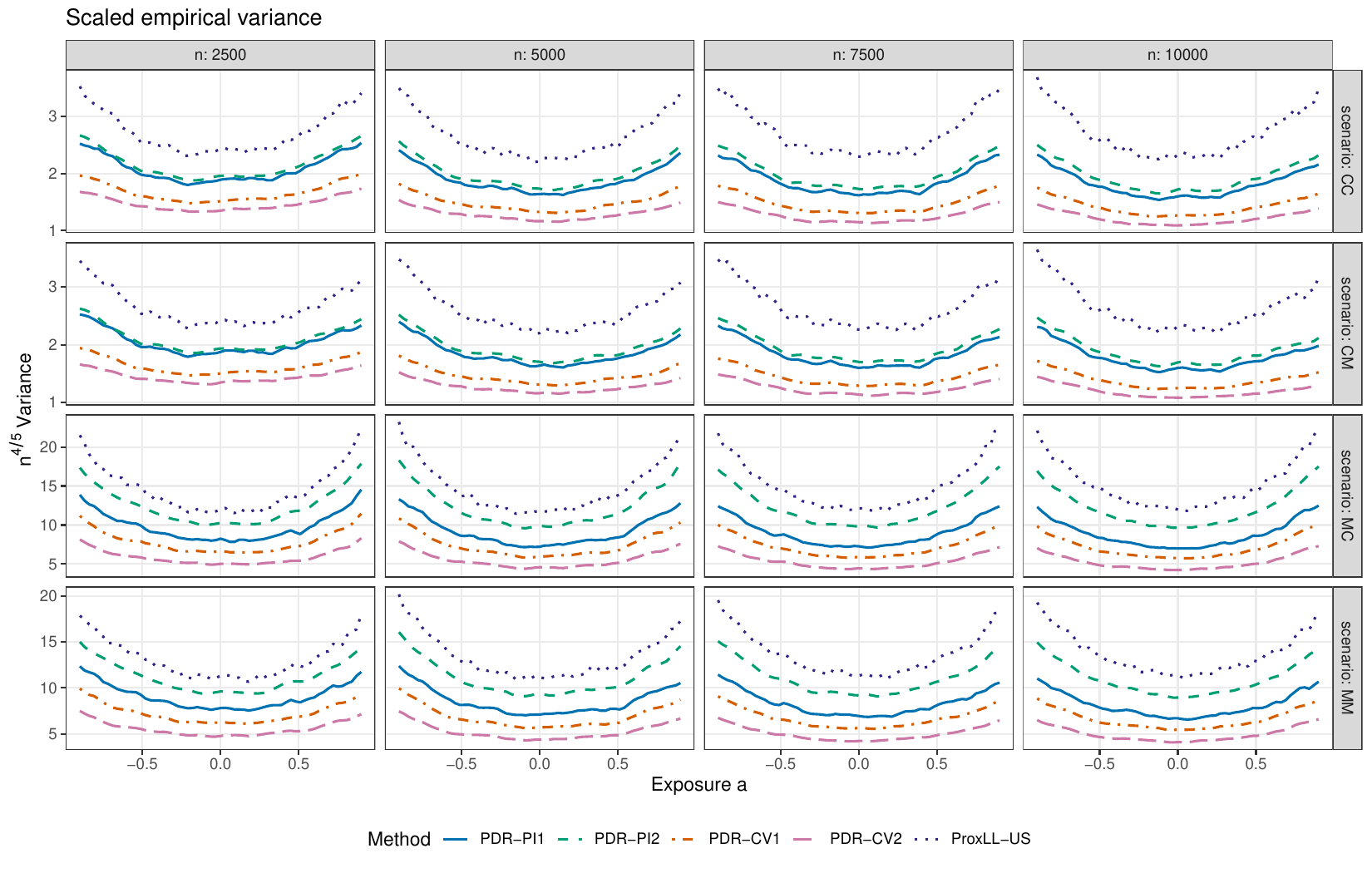}
\caption{Empirical variance multiplied by \(n^{4/5}\). The vertical scale
varies across scenarios.}
\label{fig:sim-scaled-variance}
\end{figure}

\begin{figure}[!t]
\centering
\includegraphics[width=\textwidth]{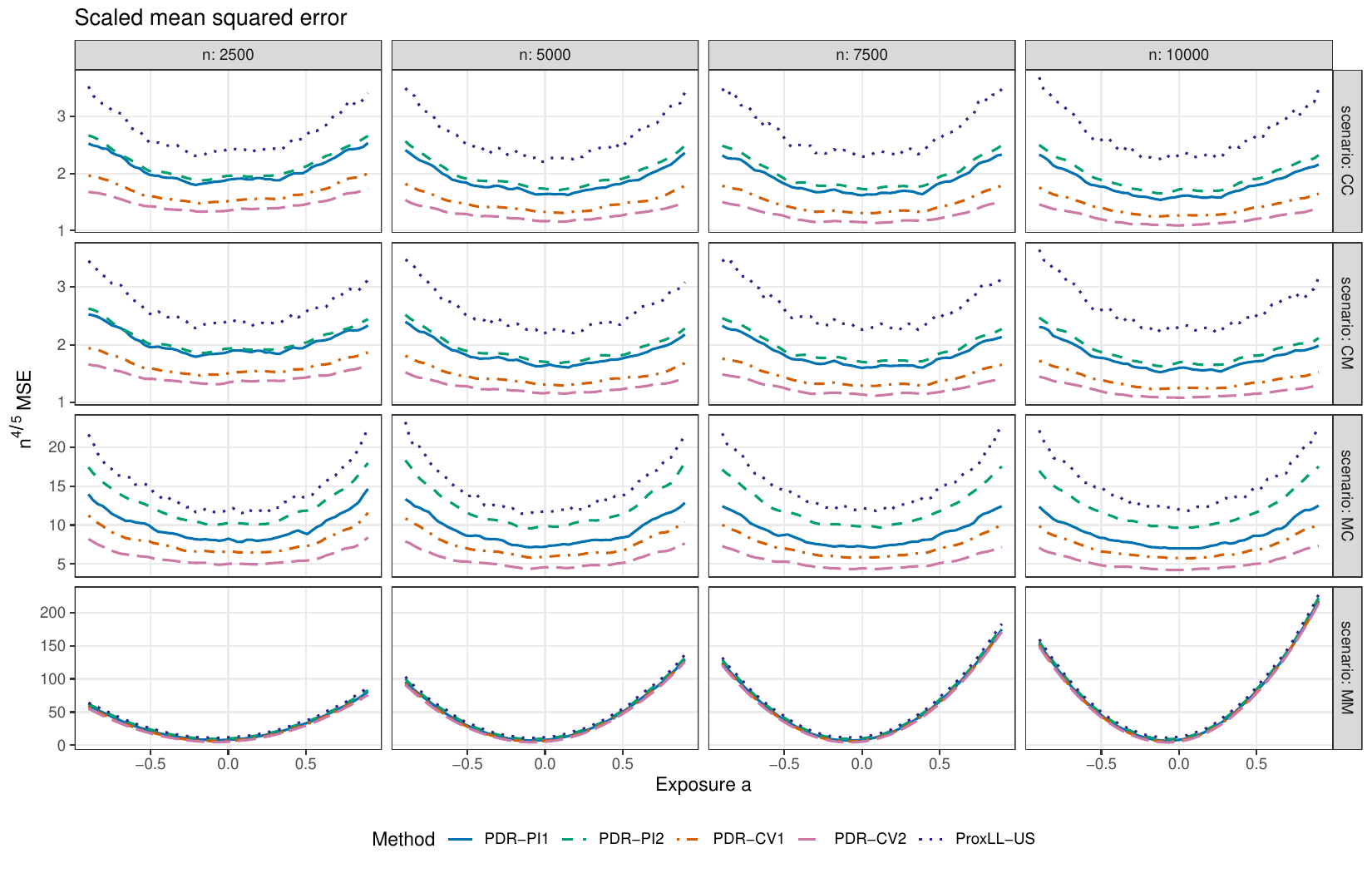}
\caption{Empirical mean squared error multiplied by \(n^{4/5}\). The
vertical scale varies across scenarios.}
\label{fig:sim-scaled-mse}
\end{figure}

\begin{figure}[!t]
\centering
\includegraphics[width=\textwidth]{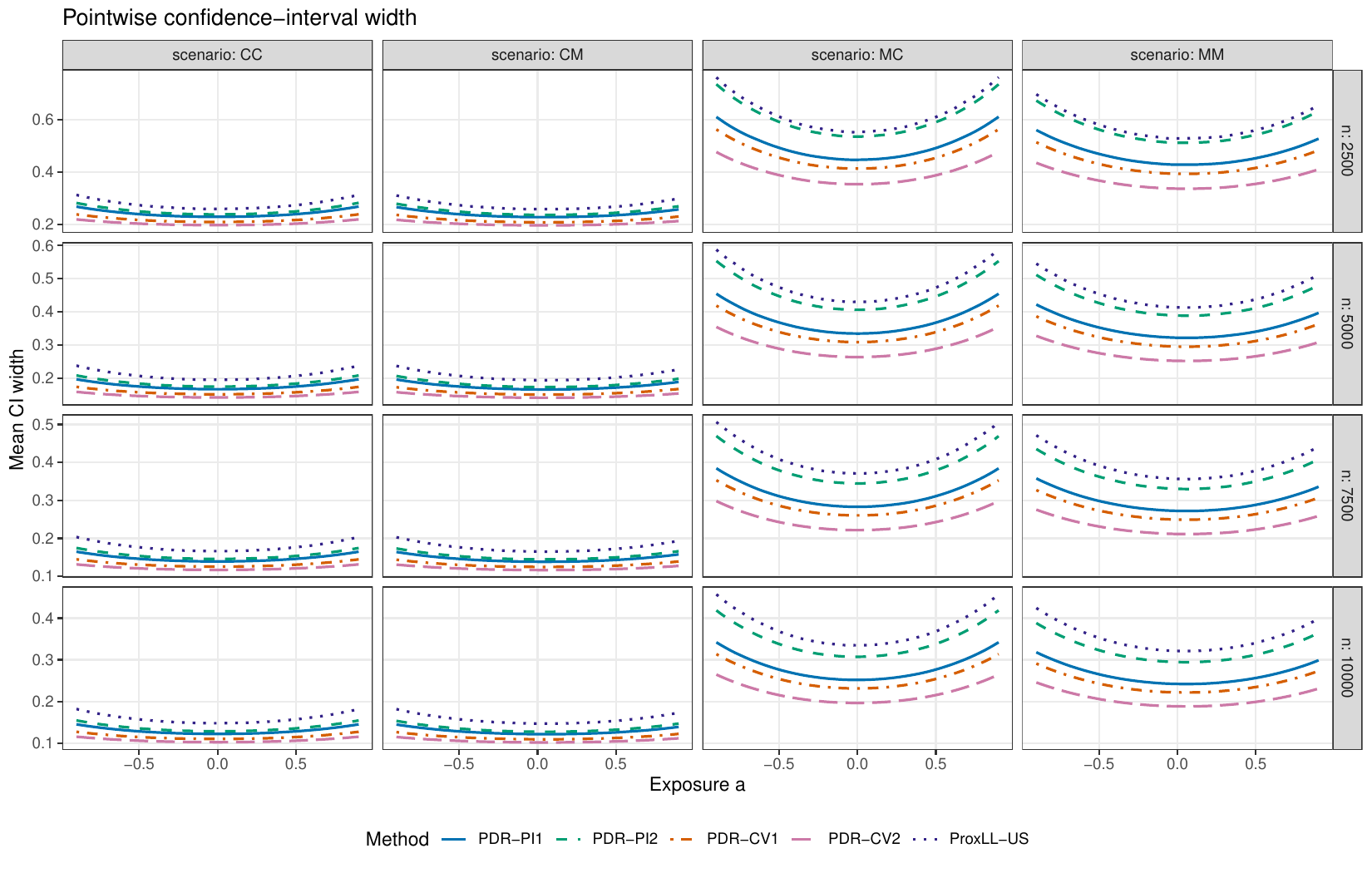}
\caption{Mean width of nominal 95\% pointwise confidence intervals over the
61-point treatment grid. Rows index sample size and columns index
nuisance-model scenario.}
\label{fig:sim-pointwise-width}
\end{figure}

\begin{figure}[!t]
\centering
\includegraphics[width=0.95\textwidth]{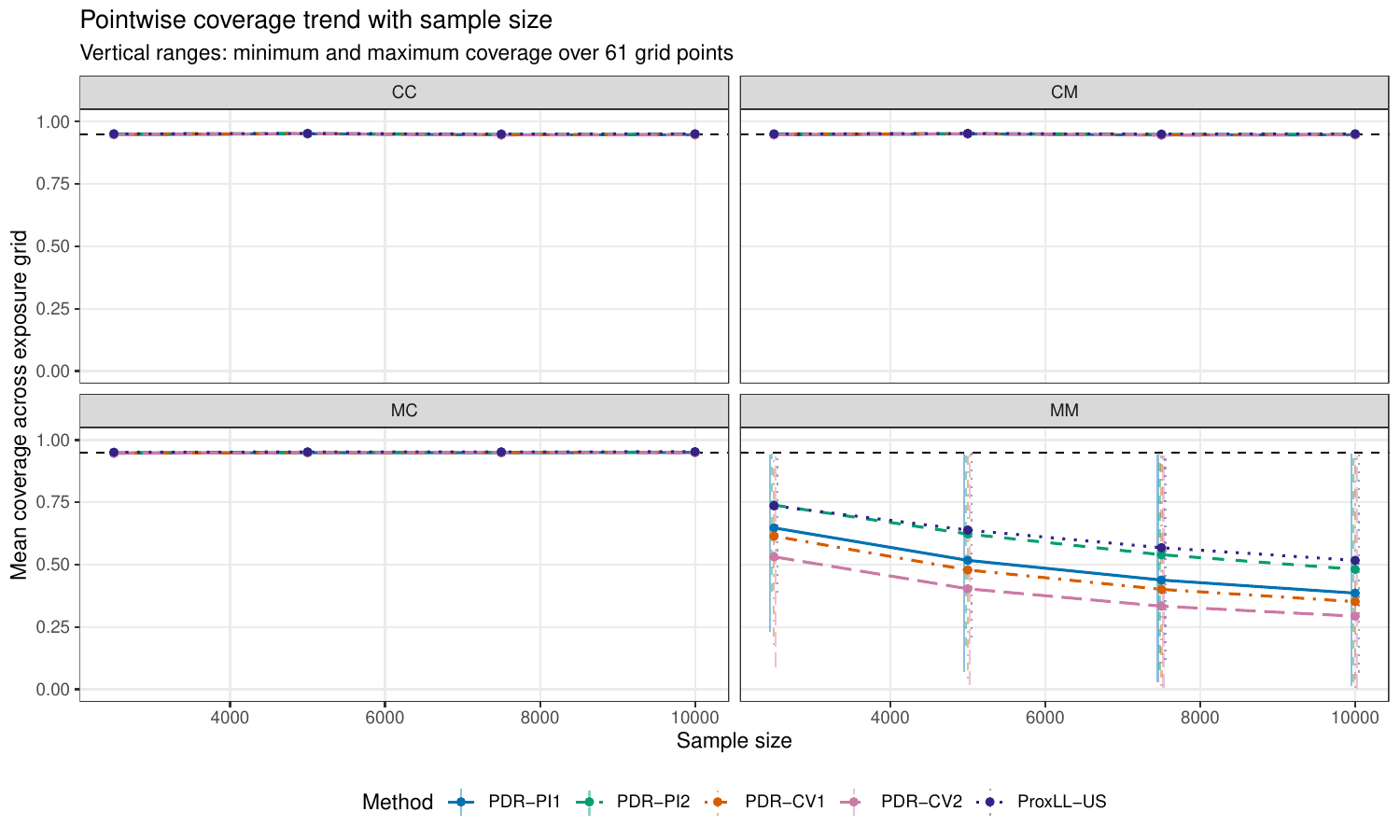}
\caption{Mean pointwise coverage over the 61 treatment values by sample size.
Vertical ranges give the minimum and maximum coordinatewise coverage, and the
dashed line denotes \(0.95\).}
\label{fig:sim-pointwise-coverage-trends}
\end{figure}

\begin{figure}[!t]
\centering
\includegraphics[width=0.95\textwidth]{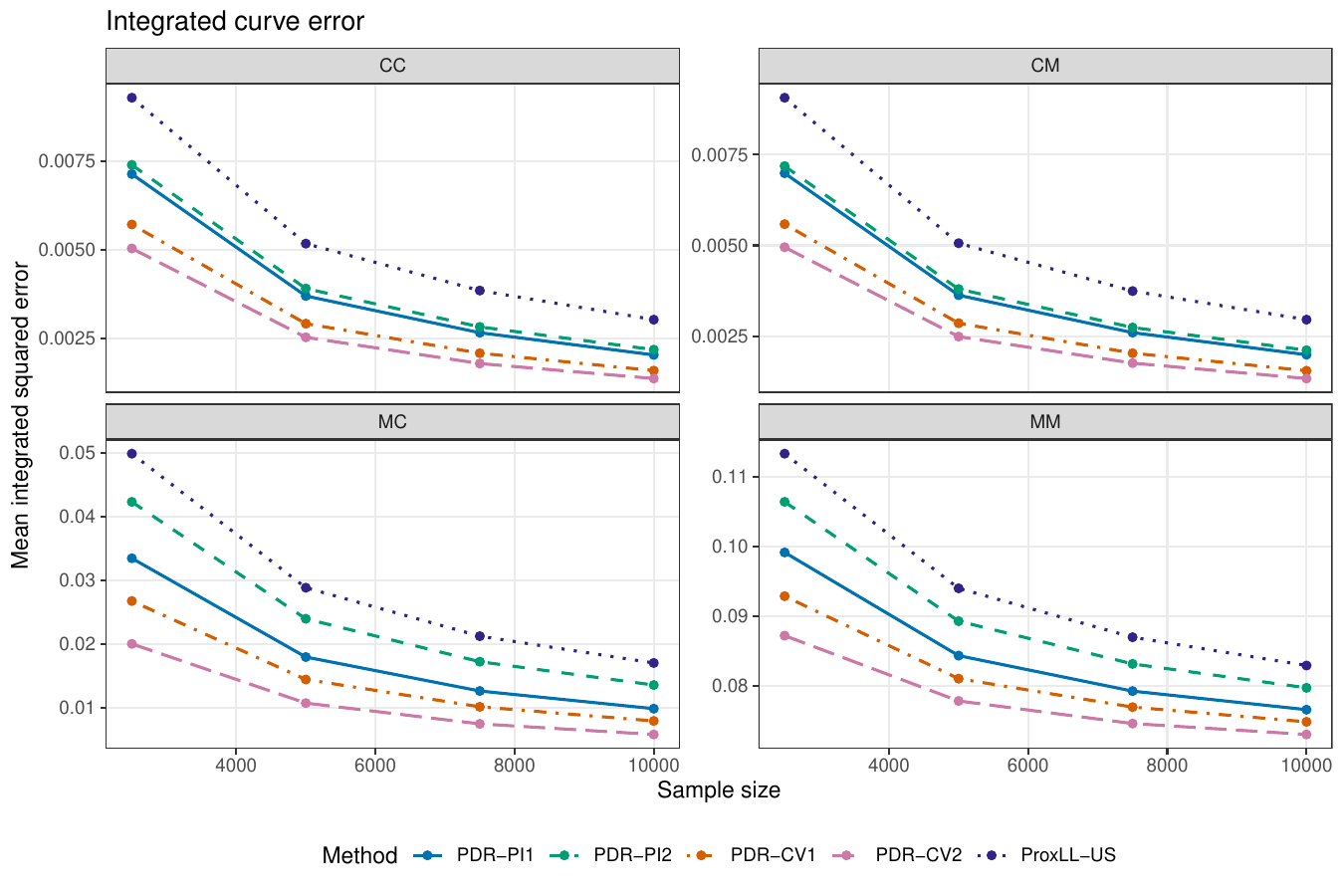}
\caption{Mean integrated squared error over the 61-point treatment grid by
sample size. The integral is evaluated by the trapezoidal rule.}
\label{fig:sim-integrated-error}
\end{figure}

\begin{figure}[!t]
\centering
\includegraphics[width=\textwidth]{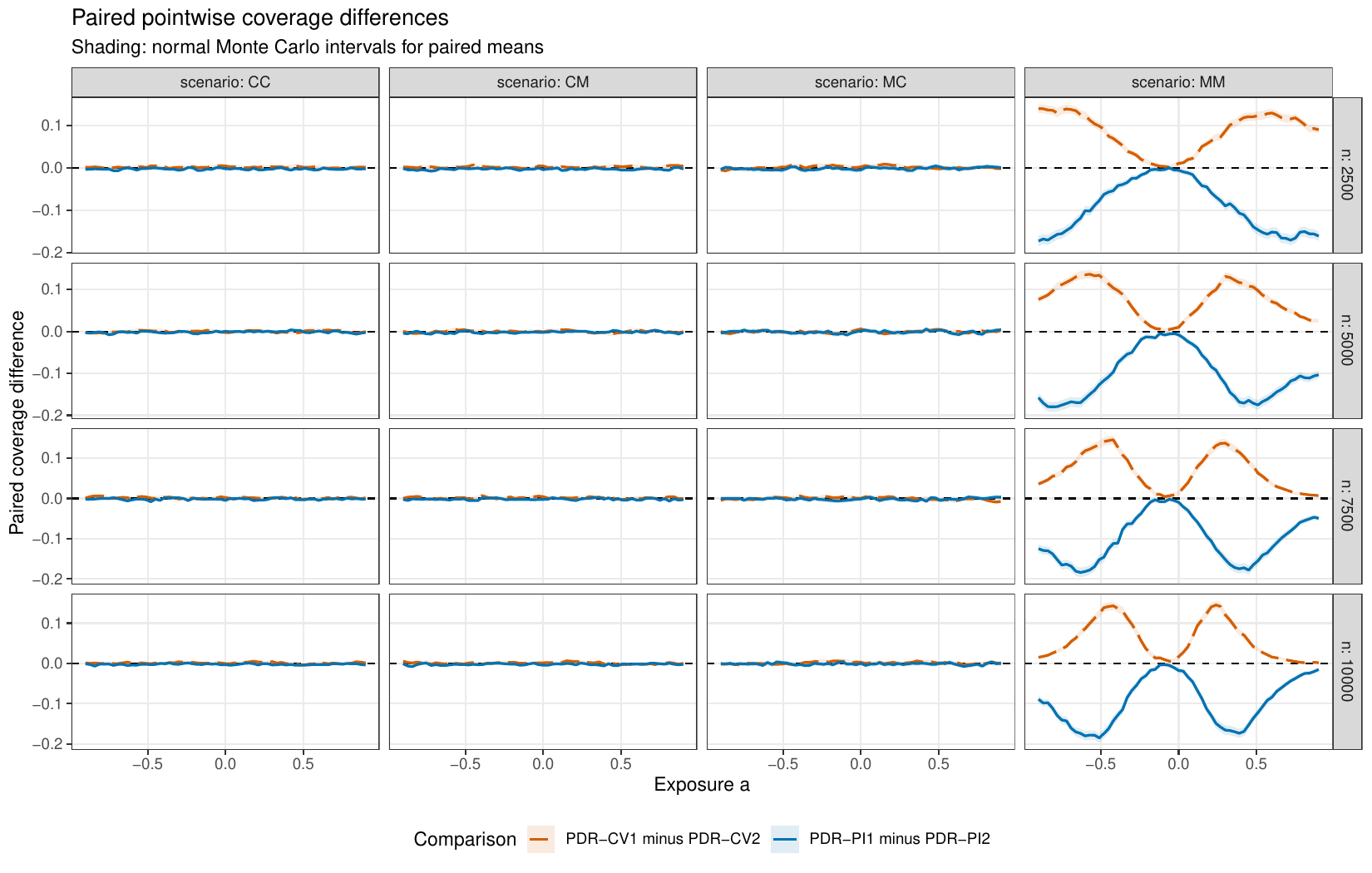}
\caption{Replication-matched differences in pointwise coverage at each
treatment value for PDR--PI1 minus PDR--PI2 and PDR--CV1 minus PDR--CV2.
Shaded regions are pointwise 95\% normal intervals for the mean paired
differences.}
\label{fig:sim-paired-pointwise-coverage-differences}
\end{figure}

For the detailed pointwise results, define
\[
  \mathcal G_9
  :=
  \{-0.90,-0.69,-0.45,-0.21,0,0.21,0.45,0.69,0.90\}.
\]

\begin{landscape}
\scriptsize
\setlength{\tabcolsep}{1.8pt}
\renewcommand{\arraystretch}{1.00}
\begin{longtable}{@{}rrl*{9}{c}@{}}
\caption{Empirical coverage and mean width of nominal 95\% pointwise
confidence intervals at nine prespecified treatment values. Each entry gives
coverage, followed by mean width in brackets, over \(5{,}000\) replications.}
\label{tab:sim-ci-nine-point}\\
\toprule
\(n\) & Scenario & Method
& \(-0.90\) & \(-0.69\) & \(-0.45\) & \(-0.21\) & \(0\)
& \(0.21\) & \(0.45\) & \(0.69\) & \(0.90\)\\
\midrule
\endfirsthead
\multicolumn{12}{c}{\tablename\ \thetable\ (continued)}\\
\toprule
\(n\) & Scenario & Method
& \(-0.90\) & \(-0.69\) & \(-0.45\) & \(-0.21\) & \(0\)
& \(0.21\) & \(0.45\) & \(0.69\) & \(0.90\)\\
\midrule
\endhead
\midrule
\multicolumn{12}{r}{Continued on next page}\\
\endfoot
\bottomrule
\endlastfoot
% n & scenario & method & nine cells, each coverage [mean width]
2500 & CC & PDR--PI1 & 0.950 [0.2678] & 0.947 [0.2500] & 0.952 [0.2370] & 0.954 [0.2305] & 0.946 [0.2288] & 0.948 [0.2306] & 0.948 [0.2370] & 0.946 [0.2501] & 0.950 [0.2683] \\
2500 & CC & PDR--PI2 & 0.953 [0.2818] & 0.954 [0.2620] & 0.952 [0.2474] & 0.954 [0.2400] & 0.948 [0.2380] & 0.952 [0.2401] & 0.950 [0.2474] & 0.949 [0.2620] & 0.952 [0.2824] \\
2500 & CC & PDR--CV1 & 0.955 [0.2382] & 0.951 [0.2249] & 0.956 [0.2152] & 0.954 [0.2103] & 0.951 [0.2091] & 0.947 [0.2104] & 0.949 [0.2152] & 0.943 [0.2249] & 0.947 [0.2385] \\
2500 & CC & PDR--CV2 & 0.953 [0.2190] & 0.949 [0.2089] & 0.951 [0.2017] & 0.952 [0.1979] & 0.947 [0.1970] & 0.946 [0.1980] & 0.944 [0.2016] & 0.943 [0.2089] & 0.946 [0.2192] \\
2500 & CC & ProxLL--US & 0.951 [0.3126] & 0.948 [0.2883] & 0.953 [0.2704] & 0.953 [0.2617] & 0.948 [0.2593] & 0.951 [0.2617] & 0.950 [0.2706] & 0.948 [0.2887] & 0.947 [0.3132] \\
2500 & MC & PDR--PI1 & 0.952 [0.6113] & 0.948 [0.5373] & 0.947 [0.4842] & 0.946 [0.4553] & 0.943 [0.4474] & 0.949 [0.4548] & 0.948 [0.4834] & 0.951 [0.5378] & 0.955 [0.6118] \\
2500 & MC & PDR--PI2 & 0.956 [0.7364] & 0.952 [0.6466] & 0.945 [0.5813] & 0.943 [0.5463] & 0.948 [0.5361] & 0.950 [0.5456] & 0.947 [0.5810] & 0.953 [0.6462] & 0.955 [0.7364] \\
2500 & MC & PDR--CV1 & 0.952 [0.5635] & 0.952 [0.4963] & 0.945 [0.4473] & 0.941 [0.4208] & 0.942 [0.4135] & 0.947 [0.4203] & 0.950 [0.4467] & 0.950 [0.4962] & 0.956 [0.5640] \\
2500 & MC & PDR--CV2 & 0.957 [0.4769] & 0.953 [0.4220] & 0.939 [0.3818] & 0.937 [0.3597] & 0.939 [0.3538] & 0.942 [0.3593] & 0.951 [0.3811] & 0.951 [0.4220] & 0.958 [0.4776] \\
2500 & MC & ProxLL--US & 0.956 [0.7629] & 0.949 [0.6688] & 0.948 [0.6006] & 0.950 [0.5636] & 0.947 [0.5537] & 0.951 [0.5635] & 0.946 [0.5999] & 0.954 [0.6703] & 0.953 [0.7632] \\
2500 & CM & PDR--PI1 & 0.951 [0.2661] & 0.947 [0.2492] & 0.951 [0.2367] & 0.953 [0.2300] & 0.944 [0.2278] & 0.947 [0.2287] & 0.948 [0.2333] & 0.945 [0.2434] & 0.950 [0.2574] \\
2500 & CM & PDR--PI2 & 0.953 [0.2789] & 0.953 [0.2604] & 0.953 [0.2463] & 0.953 [0.2388] & 0.948 [0.2363] & 0.951 [0.2372] & 0.950 [0.2425] & 0.946 [0.2537] & 0.953 [0.2693] \\
2500 & CM & PDR--CV1 & 0.953 [0.2361] & 0.951 [0.2236] & 0.957 [0.2144] & 0.954 [0.2095] & 0.951 [0.2079] & 0.945 [0.2086] & 0.948 [0.2120] & 0.943 [0.2194] & 0.947 [0.2298] \\
2500 & CM & PDR--CV2 & 0.951 [0.2174] & 0.950 [0.2079] & 0.950 [0.2011] & 0.953 [0.1973] & 0.946 [0.1962] & 0.945 [0.1966] & 0.943 [0.1993] & 0.945 [0.2049] & 0.944 [0.2127] \\
2500 & CM & ProxLL--US & 0.951 [0.3106] & 0.947 [0.2878] & 0.950 [0.2704] & 0.953 [0.2612] & 0.946 [0.2581] & 0.949 [0.2593] & 0.951 [0.2657] & 0.947 [0.2796] & 0.948 [0.2985] \\
2500 & MM & PDR--PI1 & 0.412 [0.5611] & 0.548 [0.5054] & 0.769 [0.4625] & 0.918 [0.4374] & 0.931 [0.4286] & 0.808 [0.4312] & 0.559 [0.4478] & 0.340 [0.4818] & 0.231 [0.5280] \\
2500 & MM & PDR--PI2 & 0.584 [0.6740] & 0.689 [0.6062] & 0.826 [0.5536] & 0.930 [0.5236] & 0.937 [0.5124] & 0.867 [0.5159] & 0.683 [0.5365] & 0.506 [0.5773] & 0.392 [0.6339] \\
2500 & MM & PDR--CV1 & 0.356 [0.5144] & 0.510 [0.4635] & 0.739 [0.4245] & 0.910 [0.4018] & 0.926 [0.3937] & 0.789 [0.3960] & 0.516 [0.4113] & 0.288 [0.4419] & 0.178 [0.4841] \\
2500 & MM & PDR--CV2 & 0.216 [0.4357] & 0.373 [0.3939] & 0.665 [0.3621] & 0.895 [0.3433] & 0.916 [0.3367] & 0.725 [0.3384] & 0.396 [0.3510] & 0.171 [0.3760] & 0.088 [0.4105] \\
2500 & MM & ProxLL--US & 0.575 [0.6980] & 0.698 [0.6277] & 0.829 [0.5722] & 0.935 [0.5402] & 0.935 [0.5293] & 0.863 [0.5328] & 0.676 [0.5544] & 0.494 [0.5973] & 0.387 [0.6566] \\
5000 & CC & PDR--PI1 & 0.950 [0.1966] & 0.947 [0.1828] & 0.953 [0.1731] & 0.948 [0.1680] & 0.955 [0.1666] & 0.954 [0.1680] & 0.951 [0.1731] & 0.951 [0.1829] & 0.955 [0.1966] \\
5000 & CC & PDR--PI2 & 0.953 [0.2081] & 0.948 [0.1927] & 0.954 [0.1817] & 0.953 [0.1760] & 0.955 [0.1744] & 0.954 [0.1760] & 0.948 [0.1818] & 0.950 [0.1928] & 0.957 [0.2081] \\
5000 & CC & PDR--CV1 & 0.949 [0.1736] & 0.950 [0.1633] & 0.951 [0.1560] & 0.949 [0.1523] & 0.955 [0.1513] & 0.956 [0.1523] & 0.947 [0.1561] & 0.955 [0.1633] & 0.950 [0.1736] \\
5000 & CC & PDR--CV2 & 0.951 [0.1586] & 0.951 [0.1507] & 0.951 [0.1452] & 0.949 [0.1424] & 0.958 [0.1416] & 0.954 [0.1424] & 0.950 [0.1452] & 0.953 [0.1508] & 0.950 [0.1586] \\
5000 & CC & ProxLL--US & 0.949 [0.2377] & 0.953 [0.2181] & 0.954 [0.2042] & 0.953 [0.1970] & 0.958 [0.1949] & 0.953 [0.1969] & 0.955 [0.2043] & 0.952 [0.2183] & 0.954 [0.2374] \\
5000 & MC & PDR--PI1 & 0.947 [0.4544] & 0.950 [0.4004] & 0.946 [0.3614] & 0.950 [0.3402] & 0.951 [0.3345] & 0.944 [0.3401] & 0.953 [0.3609] & 0.946 [0.4006] & 0.954 [0.4544] \\
5000 & MC & PDR--PI2 & 0.950 [0.5533] & 0.949 [0.4868] & 0.951 [0.4388] & 0.949 [0.4129] & 0.949 [0.4060] & 0.948 [0.4129] & 0.953 [0.4384] & 0.946 [0.4873] & 0.950 [0.5530] \\
5000 & MC & PDR--CV1 & 0.953 [0.4185] & 0.947 [0.3690] & 0.947 [0.3332] & 0.944 [0.3139] & 0.951 [0.3086] & 0.946 [0.3138] & 0.954 [0.3328] & 0.945 [0.3690] & 0.954 [0.4186] \\
5000 & MC & PDR--CV2 & 0.953 [0.3544] & 0.949 [0.3136] & 0.950 [0.2838] & 0.948 [0.2679] & 0.945 [0.2635] & 0.949 [0.2678] & 0.951 [0.2837] & 0.949 [0.3135] & 0.952 [0.3546] \\
5000 & MC & ProxLL--US & 0.947 [0.5875] & 0.947 [0.5164] & 0.953 [0.4652] & 0.954 [0.4370] & 0.947 [0.4300] & 0.950 [0.4371] & 0.954 [0.4642] & 0.950 [0.5162] & 0.954 [0.5868] \\
5000 & CM & PDR--PI1 & 0.947 [0.1959] & 0.946 [0.1826] & 0.952 [0.1730] & 0.947 [0.1677] & 0.953 [0.1659] & 0.954 [0.1665] & 0.951 [0.1703] & 0.953 [0.1779] & 0.952 [0.1885] \\
5000 & CM & PDR--PI2 & 0.951 [0.2064] & 0.948 [0.1918] & 0.955 [0.1810] & 0.951 [0.1751] & 0.955 [0.1730] & 0.954 [0.1737] & 0.948 [0.1779] & 0.952 [0.1864] & 0.954 [0.1982] \\
5000 & CM & PDR--CV1 & 0.949 [0.1725] & 0.950 [0.1627] & 0.952 [0.1556] & 0.949 [0.1517] & 0.956 [0.1504] & 0.955 [0.1509] & 0.946 [0.1537] & 0.952 [0.1593] & 0.949 [0.1672] \\
5000 & CM & PDR--CV2 & 0.948 [0.1576] & 0.949 [0.1502] & 0.951 [0.1448] & 0.948 [0.1419] & 0.957 [0.1410] & 0.954 [0.1413] & 0.951 [0.1434] & 0.951 [0.1477] & 0.950 [0.1537] \\
5000 & CM & ProxLL--US & 0.947 [0.2369] & 0.950 [0.2181] & 0.957 [0.2045] & 0.952 [0.1967] & 0.958 [0.1940] & 0.954 [0.1949] & 0.955 [0.2004] & 0.951 [0.2113] & 0.952 [0.2262] \\
5000 & MM & PDR--PI1 & 0.200 [0.4220] & 0.347 [0.3794] & 0.649 [0.3471] & 0.912 [0.3285] & 0.929 [0.3218] & 0.719 [0.3237] & 0.347 [0.3362] & 0.148 [0.3618] & 0.070 [0.3968] \\
5000 & MM & PDR--PI2 & 0.358 [0.5113] & 0.515 [0.4592] & 0.758 [0.4196] & 0.925 [0.3969] & 0.937 [0.3887] & 0.803 [0.3911] & 0.512 [0.4063] & 0.278 [0.4378] & 0.173 [0.4802] \\
5000 & MM & PDR--CV1 & 0.141 [0.3864] & 0.288 [0.3476] & 0.607 [0.3181] & 0.900 [0.3012] & 0.923 [0.2951] & 0.691 [0.2969] & 0.284 [0.3082] & 0.100 [0.3315] & 0.042 [0.3633] \\
5000 & MM & PDR--CV2 & 0.064 [0.3275] & 0.165 [0.2955] & 0.494 [0.2711] & 0.878 [0.2571] & 0.913 [0.2521] & 0.599 [0.2534] & 0.176 [0.2628] & 0.048 [0.2820] & 0.016 [0.3082] \\
5000 & MM & ProxLL--US & 0.390 [0.5453] & 0.539 [0.4890] & 0.771 [0.4467] & 0.925 [0.4220] & 0.938 [0.4133] & 0.807 [0.4159] & 0.532 [0.4323] & 0.299 [0.4657] & 0.207 [0.5121] \\
7500 & CC & PDR--PI1 & 0.949 [0.1648] & 0.943 [0.1530] & 0.948 [0.1447] & 0.946 [0.1403] & 0.947 [0.1392] & 0.947 [0.1404] & 0.948 [0.1447] & 0.947 [0.1531] & 0.948 [0.1649] \\
7500 & CC & PDR--PI2 & 0.952 [0.1746] & 0.944 [0.1614] & 0.950 [0.1521] & 0.946 [0.1471] & 0.952 [0.1458] & 0.952 [0.1472] & 0.952 [0.1520] & 0.948 [0.1614] & 0.950 [0.1746] \\
7500 & CC & PDR--CV1 & 0.947 [0.1447] & 0.944 [0.1358] & 0.947 [0.1296] & 0.945 [0.1263] & 0.949 [0.1255] & 0.947 [0.1264] & 0.953 [0.1296] & 0.946 [0.1358] & 0.946 [0.1446] \\
7500 & CC & PDR--CV2 & 0.946 [0.1315] & 0.943 [0.1247] & 0.947 [0.1200] & 0.944 [0.1176] & 0.947 [0.1170] & 0.946 [0.1176] & 0.952 [0.1201] & 0.948 [0.1248] & 0.945 [0.1315] \\
7500 & CC & ProxLL--US & 0.952 [0.2035] & 0.948 [0.1865] & 0.953 [0.1745] & 0.957 [0.1678] & 0.952 [0.1663] & 0.957 [0.1681] & 0.954 [0.1744] & 0.946 [0.1867] & 0.953 [0.2037] \\
7500 & MC & PDR--PI1 & 0.956 [0.3846] & 0.948 [0.3391] & 0.949 [0.3056] & 0.949 [0.2879] & 0.949 [0.2832] & 0.948 [0.2880] & 0.948 [0.3055] & 0.953 [0.3391] & 0.954 [0.3844] \\
7500 & MC & PDR--PI2 & 0.956 [0.4696] & 0.949 [0.4139] & 0.952 [0.3726] & 0.954 [0.3505] & 0.951 [0.3448] & 0.948 [0.3507] & 0.947 [0.3726] & 0.949 [0.4138] & 0.951 [0.4694] \\
7500 & MC & PDR--CV1 & 0.958 [0.3531] & 0.951 [0.3116] & 0.944 [0.2810] & 0.949 [0.2648] & 0.944 [0.2605] & 0.946 [0.2648] & 0.947 [0.2809] & 0.952 [0.3115] & 0.956 [0.3531] \\
7500 & MC & PDR--CV2 & 0.963 [0.2983] & 0.954 [0.2639] & 0.945 [0.2386] & 0.945 [0.2253] & 0.944 [0.2216] & 0.943 [0.2252] & 0.949 [0.2387] & 0.953 [0.2638] & 0.963 [0.2982] \\
7500 & MC & ProxLL--US & 0.956 [0.5066] & 0.953 [0.4460] & 0.953 [0.4009] & 0.954 [0.3774] & 0.947 [0.3711] & 0.953 [0.3777] & 0.952 [0.4012] & 0.954 [0.4461] & 0.953 [0.5065] \\
7500 & CM & PDR--PI1 & 0.945 [0.1641] & 0.941 [0.1528] & 0.948 [0.1445] & 0.944 [0.1399] & 0.948 [0.1384] & 0.947 [0.1389] & 0.949 [0.1421] & 0.946 [0.1486] & 0.946 [0.1578] \\
7500 & CM & PDR--PI2 & 0.947 [0.1734] & 0.944 [0.1607] & 0.953 [0.1515] & 0.946 [0.1463] & 0.951 [0.1446] & 0.952 [0.1452] & 0.953 [0.1487] & 0.946 [0.1560] & 0.949 [0.1662] \\
7500 & CM & PDR--CV1 & 0.942 [0.1437] & 0.942 [0.1352] & 0.947 [0.1292] & 0.945 [0.1258] & 0.950 [0.1247] & 0.947 [0.1251] & 0.951 [0.1275] & 0.948 [0.1323] & 0.946 [0.1391] \\
7500 & CM & PDR--CV2 & 0.942 [0.1308] & 0.943 [0.1244] & 0.948 [0.1198] & 0.941 [0.1173] & 0.946 [0.1165] & 0.945 [0.1168] & 0.950 [0.1186] & 0.946 [0.1222] & 0.946 [0.1274] \\
7500 & CM & ProxLL--US & 0.949 [0.2029] & 0.950 [0.1864] & 0.955 [0.1744] & 0.957 [0.1675] & 0.950 [0.1653] & 0.955 [0.1661] & 0.952 [0.1707] & 0.949 [0.1803] & 0.949 [0.1935] \\
7500 & MM & PDR--PI1 & 0.094 [0.3579] & 0.218 [0.3215] & 0.545 [0.2939] & 0.893 [0.2780] & 0.923 [0.2724] & 0.647 [0.2740] & 0.225 [0.2847] & 0.062 [0.3065] & 0.027 [0.3359] \\
7500 & MM & PDR--PI2 & 0.219 [0.4354] & 0.386 [0.3909] & 0.670 [0.3570] & 0.914 [0.3373] & 0.933 [0.3305] & 0.753 [0.3325] & 0.403 [0.3458] & 0.151 [0.3724] & 0.077 [0.4087] \\
7500 & MM & PDR--CV1 & 0.056 [0.3271] & 0.155 [0.2942] & 0.485 [0.2690] & 0.879 [0.2546] & 0.914 [0.2495] & 0.601 [0.2508] & 0.171 [0.2606] & 0.033 [0.2804] & 0.012 [0.3072] \\
7500 & MM & PDR--CV2 & 0.020 [0.2756] & 0.074 [0.2485] & 0.342 [0.2277] & 0.848 [0.2159] & 0.900 [0.2116] & 0.482 [0.2127] & 0.082 [0.2208] & 0.012 [0.2371] & 0.005 [0.2591] \\
7500 & MM & ProxLL--US & 0.270 [0.4715] & 0.427 [0.4225] & 0.700 [0.3856] & 0.922 [0.3642] & 0.933 [0.3568] & 0.770 [0.3590] & 0.440 [0.3734] & 0.205 [0.4026] & 0.117 [0.4420] \\
10000 & CC & PDR--PI1 & 0.948 [0.1451] & 0.946 [0.1346] & 0.950 [0.1270] & 0.948 [0.1231] & 0.946 [0.1220] & 0.951 [0.1231] & 0.947 [0.1270] & 0.946 [0.1345] & 0.953 [0.1450] \\
10000 & CC & PDR--PI2 & 0.949 [0.1544] & 0.948 [0.1426] & 0.952 [0.1340] & 0.949 [0.1296] & 0.949 [0.1284] & 0.955 [0.1296] & 0.951 [0.1340] & 0.946 [0.1425] & 0.957 [0.1543] \\
10000 & CC & PDR--CV1 & 0.945 [0.1272] & 0.941 [0.1193] & 0.951 [0.1136] & 0.950 [0.1107] & 0.947 [0.1099] & 0.953 [0.1107] & 0.949 [0.1136] & 0.949 [0.1193] & 0.951 [0.1272] \\
10000 & CC & PDR--CV2 & 0.944 [0.1154] & 0.942 [0.1093] & 0.950 [0.1051] & 0.947 [0.1028] & 0.948 [0.1022] & 0.949 [0.1028] & 0.948 [0.1050] & 0.949 [0.1093] & 0.950 [0.1153] \\
10000 & CC & ProxLL--US & 0.944 [0.1818] & 0.946 [0.1664] & 0.950 [0.1551] & 0.951 [0.1494] & 0.951 [0.1476] & 0.952 [0.1493] & 0.951 [0.1552] & 0.952 [0.1662] & 0.955 [0.1817] \\
10000 & MC & PDR--PI1 & 0.955 [0.3418] & 0.952 [0.3015] & 0.947 [0.2714] & 0.949 [0.2556] & 0.949 [0.2517] & 0.946 [0.2559] & 0.950 [0.2716] & 0.949 [0.3015] & 0.953 [0.3420] \\
10000 & MC & PDR--PI2 & 0.956 [0.4185] & 0.952 [0.3688] & 0.949 [0.3315] & 0.949 [0.3123] & 0.948 [0.3072] & 0.949 [0.3126] & 0.952 [0.3318] & 0.954 [0.3687] & 0.953 [0.4186] \\
10000 & MC & PDR--CV1 & 0.959 [0.3135] & 0.954 [0.2767] & 0.946 [0.2493] & 0.945 [0.2349] & 0.947 [0.2312] & 0.950 [0.2351] & 0.953 [0.2494] & 0.947 [0.2767] & 0.951 [0.3135] \\
10000 & MC & PDR--CV2 & 0.961 [0.2645] & 0.955 [0.2341] & 0.948 [0.2114] & 0.942 [0.1995] & 0.944 [0.1964] & 0.948 [0.1996] & 0.949 [0.2116] & 0.950 [0.2341] & 0.951 [0.2645] \\
10000 & MC & ProxLL--US & 0.956 [0.4570] & 0.953 [0.4020] & 0.954 [0.3613] & 0.950 [0.3398] & 0.953 [0.3347] & 0.954 [0.3405] & 0.953 [0.3614] & 0.956 [0.4021] & 0.953 [0.4568] \\
10000 & CM & PDR--PI1 & 0.949 [0.1447] & 0.942 [0.1345] & 0.949 [0.1270] & 0.946 [0.1229] & 0.944 [0.1214] & 0.951 [0.1219] & 0.948 [0.1248] & 0.948 [0.1307] & 0.952 [0.1389] \\
10000 & CM & PDR--PI2 & 0.950 [0.1533] & 0.945 [0.1419] & 0.951 [0.1335] & 0.948 [0.1289] & 0.948 [0.1273] & 0.955 [0.1278] & 0.953 [0.1311] & 0.947 [0.1376] & 0.954 [0.1468] \\
10000 & CM & PDR--CV1 & 0.949 [0.1263] & 0.942 [0.1188] & 0.951 [0.1132] & 0.949 [0.1102] & 0.949 [0.1092] & 0.952 [0.1095] & 0.950 [0.1117] & 0.950 [0.1160] & 0.952 [0.1221] \\
10000 & CM & PDR--CV2 & 0.943 [0.1146] & 0.940 [0.1089] & 0.949 [0.1047] & 0.947 [0.1024] & 0.947 [0.1017] & 0.948 [0.1020] & 0.950 [0.1036] & 0.951 [0.1069] & 0.952 [0.1115] \\
10000 & CM & ProxLL--US & 0.947 [0.1816] & 0.944 [0.1666] & 0.951 [0.1554] & 0.951 [0.1493] & 0.951 [0.1469] & 0.952 [0.1476] & 0.952 [0.1520] & 0.951 [0.1607] & 0.954 [0.1729] \\
10000 & MM & PDR--PI1 & 0.054 [0.3180] & 0.133 [0.2857] & 0.451 [0.2608] & 0.874 [0.2468] & 0.918 [0.2420] & 0.575 [0.2435] & 0.143 [0.2531] & 0.034 [0.2724] & 0.014 [0.2988] \\
10000 & MM & PDR--PI2 & 0.143 [0.3881] & 0.294 [0.3484] & 0.614 [0.3177] & 0.904 [0.3005] & 0.935 [0.2945] & 0.705 [0.2964] & 0.296 [0.3081] & 0.086 [0.3319] & 0.029 [0.3644] \\
10000 & MM & PDR--CV1 & 0.026 [0.2909] & 0.090 [0.2615] & 0.392 [0.2389] & 0.863 [0.2261] & 0.914 [0.2217] & 0.529 [0.2230] & 0.095 [0.2318] & 0.015 [0.2493] & 0.004 [0.2732] \\
10000 & MM & PDR--CV2 & 0.010 [0.2456] & 0.036 [0.2213] & 0.250 [0.2026] & 0.827 [0.1919] & 0.897 [0.1883] & 0.389 [0.1894] & 0.040 [0.1966] & 0.006 [0.2111] & 0.002 [0.2309] \\
10000 & MM & ProxLL--US & 0.201 [0.4242] & 0.351 [0.3804] & 0.655 [0.3466] & 0.913 [0.3276] & 0.939 [0.3213] & 0.725 [0.3232] & 0.346 [0.3363] & 0.132 [0.3623] & 0.067 [0.3984] \\

\end{longtable}
\end{landscape}

Figure~\ref{fig:sim-pointwise-width} and
Table~\ref{tab:sim-ci-nine-point} show that the intervals widen toward the
ends of the treatment grid, especially in MC, without a systematic
deterioration in coverage in CC, CM, or MC.
Figure~\ref{fig:sim-pointwise-coverage-trends} likewise shows no persistent
drift away from the nominal level in these three scenarios as the sample size
increases.  Figure~\ref{fig:sim-integrated-error} confirms that estimation
error decreases with sample size, with PDR--CV2 having the smallest error and
ProxLL--US the largest throughout.
Figure~\ref{fig:sim-paired-pointwise-coverage-differences} shows that the
systematic width orderings within the PI and CV pairs are accompanied by
comparatively small coverage differences.

These findings are similar to the conclusions of
\citet{Takatsu2025debiased} (see Sections~4.2 and N therein).  In both studies,
pointwise coverage remains stable when either nuisance component is
misspecified, outcome-side misspecification produces the greatest variance,
and a plug-in debiased estimator is more precise than local-linear
undersmoothing.

\subsubsection{Simultaneous inference}

Table~\ref{tab:sim-band-performance} reports simultaneous coverage and mean
band width over \(\mathcal G\).

\begin{table}[!t]
\centering
\caption{Empirical simultaneous coverage and mean width of nominal 95\%
confidence bands over the 61-point treatment grid. Each entry gives coverage,
followed by mean width in brackets, over \(5{,}000\) replications.}
\label{tab:sim-band-performance}
\begin{tabular}{@{}rrcccc@{}}
\toprule
\(n\) & Scenario & PDR--PI1 & PDR--PI2 & PDR--CV1 & PDR--CV2\\
\midrule
% n & scenario & PDR-PI1 & PDR-PI2 & PDR-CV1 & PDR-CV2; coverage [mean width]
2500 & CC & 0.9484 [0.355] & 0.9558 [0.379] & 0.9488 [0.305] & 0.9450 [0.271] \\
2500 & CM & 0.9464 [0.351] & 0.9562 [0.373] & 0.9488 [0.300] & 0.9454 [0.268] \\
2500 & MC & 0.9386 [0.754] & 0.9414 [0.938] & 0.9360 [0.684] & 0.9354 [0.563] \\
2500 & MM & 0.1978 [0.703] & 0.3780 [0.872] & 0.1460 [0.634] & 0.0598 [0.521] \\
5000 & CC & 0.9424 [0.264] & 0.9498 [0.283] & 0.9444 [0.225] & 0.9450 [0.199] \\
5000 & CM & 0.9420 [0.261] & 0.9486 [0.278] & 0.9426 [0.222] & 0.9444 [0.196] \\
5000 & MC & 0.9428 [0.569] & 0.9486 [0.716] & 0.9436 [0.515] & 0.9410 [0.424] \\
5000 & MM & 0.0228 [0.534] & 0.1062 [0.669] & 0.0142 [0.481] & 0.0034 [0.396] \\
7500 & CC & 0.9450 [0.223] & 0.9484 [0.239] & 0.9446 [0.189] & 0.9392 [0.166] \\
7500 & CM & 0.9440 [0.220] & 0.9494 [0.235] & 0.9434 [0.186] & 0.9382 [0.164] \\
7500 & MC & 0.9420 [0.485] & 0.9464 [0.613] & 0.9440 [0.438] & 0.9450 [0.359] \\
7500 & MM & 0.0030 [0.456] & 0.0210 [0.573] & 0.0010 [0.410] & 0.0000 [0.334] \\
10000 & CC & 0.9488 [0.197] & 0.9526 [0.212] & 0.9470 [0.167] & 0.9432 [0.146] \\
10000 & CM & 0.9494 [0.194] & 0.9518 [0.209] & 0.9466 [0.164] & 0.9436 [0.144] \\
10000 & MC & 0.9402 [0.433] & 0.9462 [0.548] & 0.9420 [0.391] & 0.9466 [0.320] \\
10000 & MM & 0.0002 [0.407] & 0.0048 [0.513] & 0.0000 [0.366] & 0.0000 [0.300] \\
\bottomrule

\end{tabular}
\end{table}

Table~\ref{tab:sim-band-performance} shows that simultaneous coverage remains
close to the nominal level in CC, CM, and MC, while band width decreases with
sample size for every method.  As for pointwise inference, bands in MC are
wider than those in CC and CM.  Under MM, coverage deteriorates as the bands
contract, consistent with persistent misspecification bias.

Within the PI pair, PDR--PI2 has higher coverage and wider bands than
PDR--PI1 throughout.  PDR--CV2 is uniformly narrower than PDR--CV1, but the
two cross-validation procedures have no uniform coverage ordering.
Figure~\ref{fig:sim-paired-band-differences} makes these distinctions
explicit: the PI comparison exhibits a consistent exchange of width for
coverage, whereas the CV coverage differences remain small relative to their
systematic width difference.  The greater bandwidth sensitivity of
simultaneous coverage relative to pointwise coverage is also consistent with the discovery in Section~4.2 of \citet{Takatsu2025debiased}.

\begin{figure}[!t]
\centering
\includegraphics[width=0.95\textwidth]{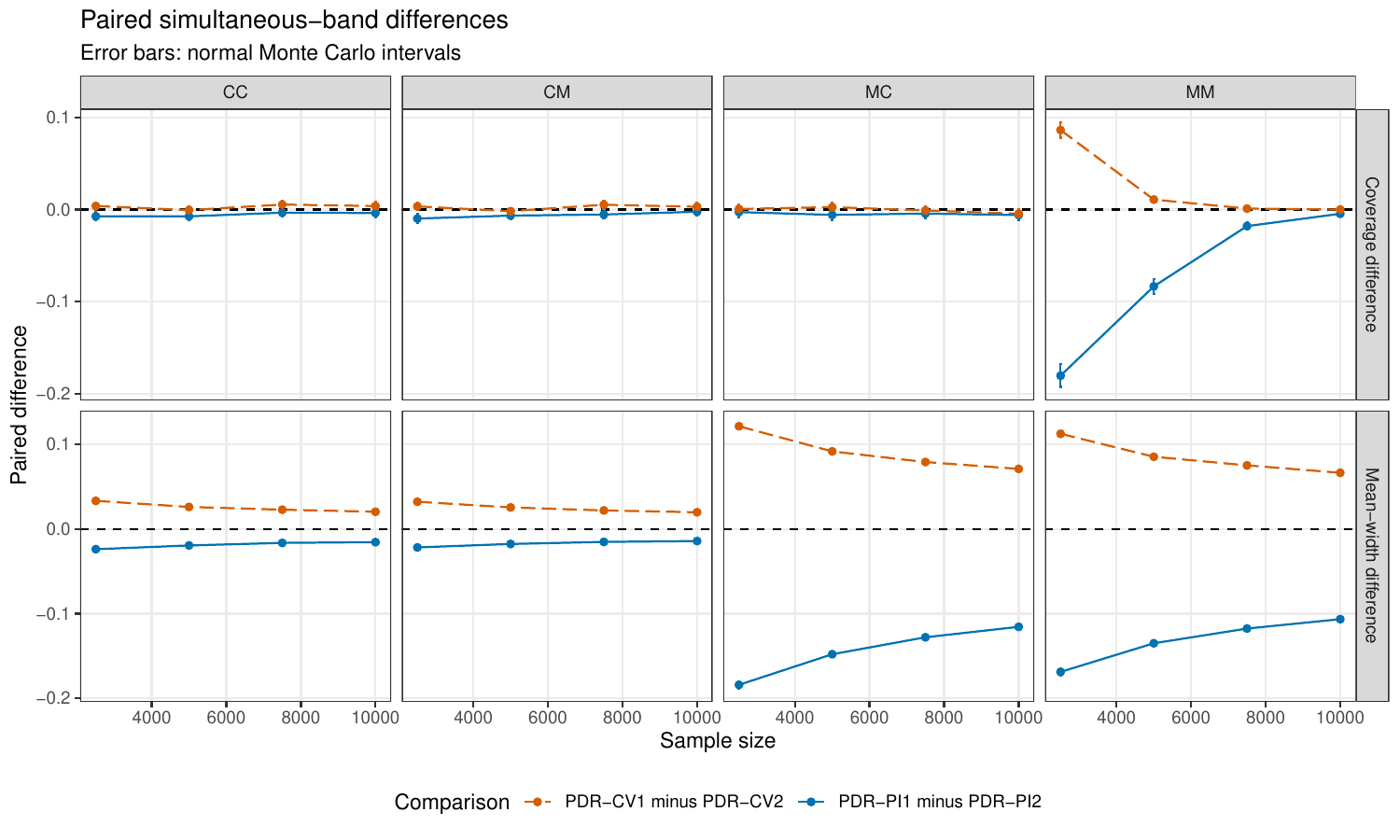}
\caption{Replication-matched differences in simultaneous coverage and mean
band width for PDR--PI1 minus PDR--PI2 and PDR--CV1 minus PDR--CV2. Error
bars are 95\% normal intervals for the mean paired differences.}
\label{fig:sim-paired-band-differences}
\end{figure}

\subsubsection{Bandwidth-selector diagnostics}
\label{sec:simulation-bandwidth-diagnostics}

Figure~\ref{fig:sim-bandwidths} compares the selected bandwidths across all
methods.  Figures~\ref{fig:sim-cv2-selected-multiplier},
\ref{fig:sim-cv2-terminal-risk-diagnostics}, and
\ref{fig:sim-bandwidth-boundary-hits} provide the corresponding
candidate-grid diagnostics for the two cross-validation selectors.

\begin{figure}[!t]
\centering
\includegraphics[width=0.95\textwidth]{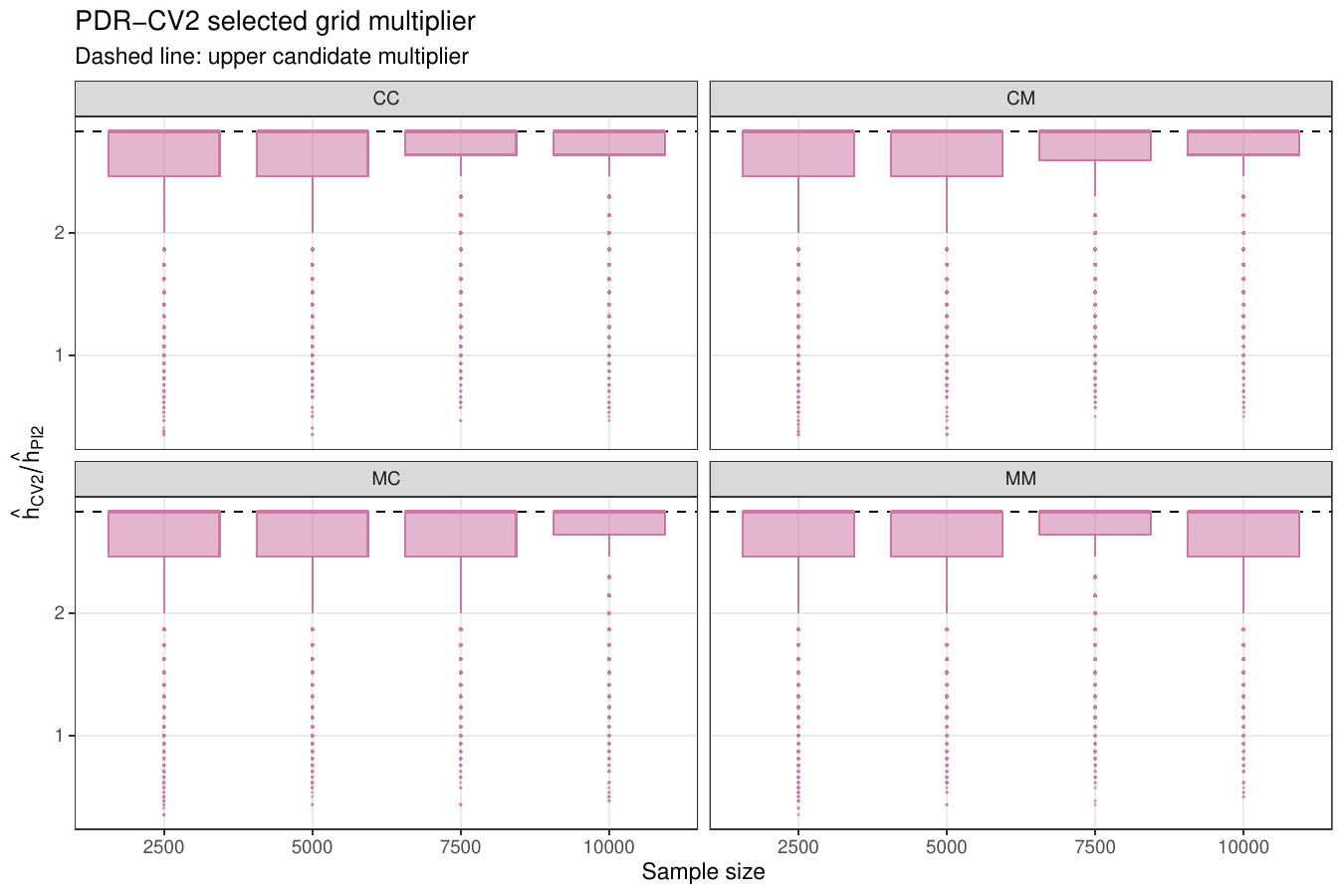}
\caption{Distribution of
\(\widehat h_{\mathrm{CV2}}/\widehat h_{\mathrm{PI2}}\) by sample size and
nuisance-model scenario. The available multipliers are \(2^{j/10}\),
\(j=-15,\ldots,15\), and the dashed line marks the upper endpoint.}
\label{fig:sim-cv2-selected-multiplier}
\end{figure}

\begin{figure}[!t]
\centering
\includegraphics[width=0.95\textwidth]{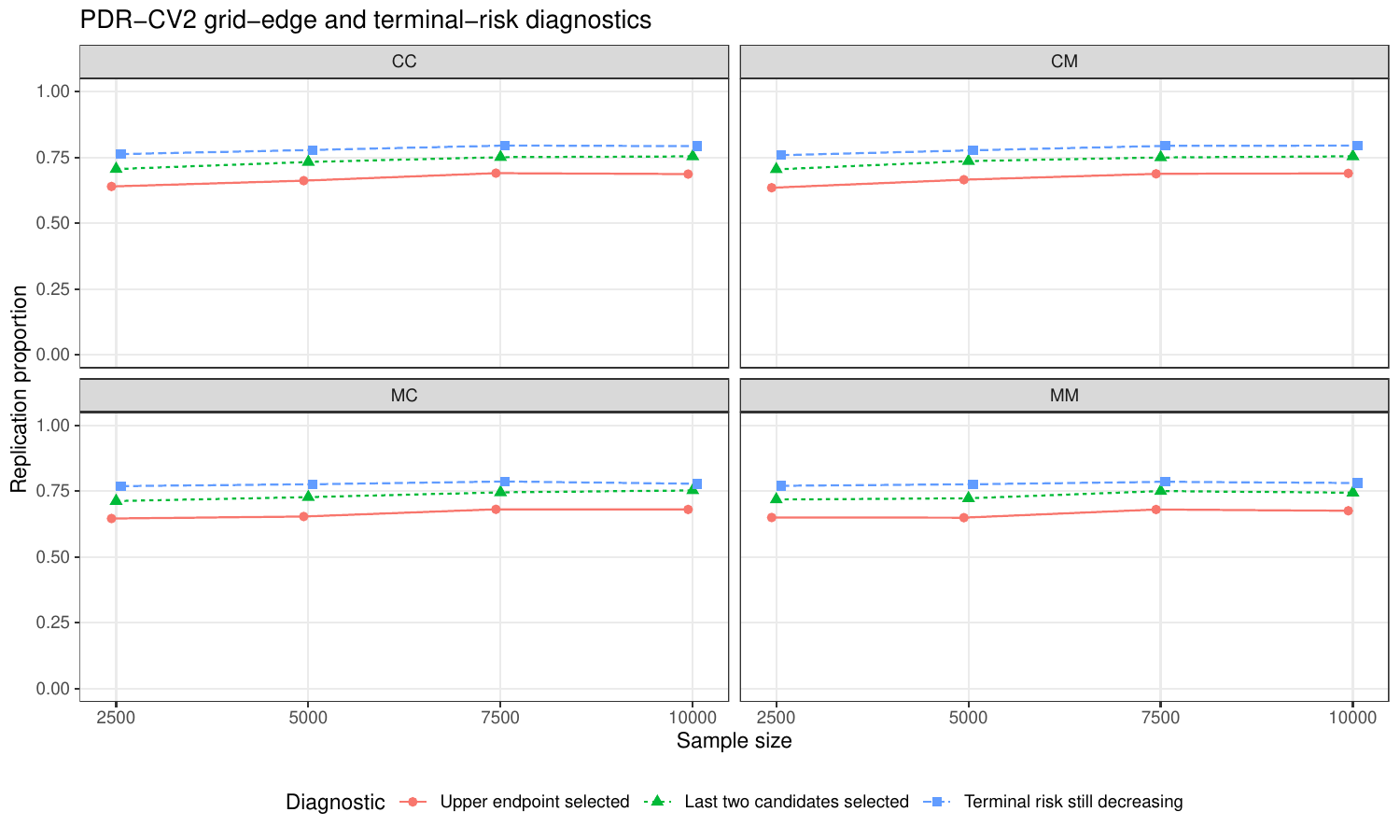}
\caption{PDR--CV2 upper-grid diagnostics by sample size and nuisance-model
scenario. The displayed proportions are selection of the upper endpoint,
selection of either of the two largest candidates, and a smaller criterion at
the largest candidate than at the second largest.}
\label{fig:sim-cv2-terminal-risk-diagnostics}
\end{figure}

When \(b=h\), the unregularized debiased fit equals the local-quadratic
intercept.  
Since \(\theta_0(a)=2.5+a+a^2/2\), this fit has zero
deterministic target smoothing bias.  The cross-validation criteria therefore
favor comparatively large bandwidths.  PDR--CV1 has the strongest
concentration at its upper endpoint.  PDR--CV2 is more dispersed but still
selects its upper endpoint or the adjacent candidate most often.  Its final
two criterion values are nearly equal, so these upper-grid selections occur
where the objective is nearly flat.

By contrast, the final PDR--PI1 bandwidth is never truncated at its lower
endpoint and is only rarely truncated at its upper endpoint, although its
pilot bandwidth frequently reaches the upper pilot cap.  The pilot
truncation therefore does not generally propagate to the final selector.
All requested replications are completed, and all nuisance fits satisfy the
numerical convergence criteria.

\begin{figure}[!t]
\centering
\includegraphics[width=\textwidth]{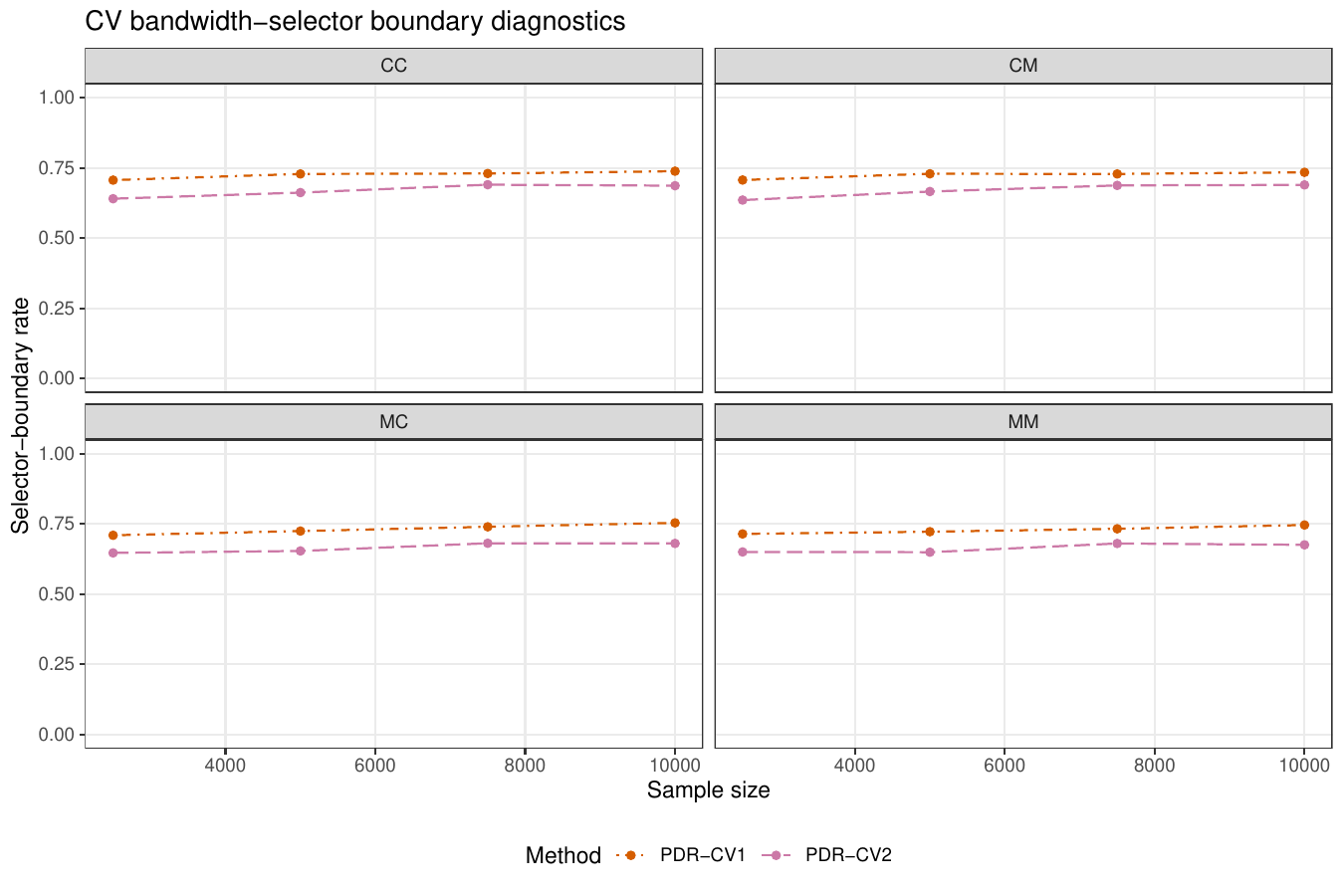}
\caption{Fraction of replications in which PDR--CV1 or PDR--CV2 selects an
endpoint of its candidate grid.}
\label{fig:sim-bandwidth-boundary-hits}
\end{figure}

\begin{figure}[!t]
\centering
\includegraphics[width=\textwidth]{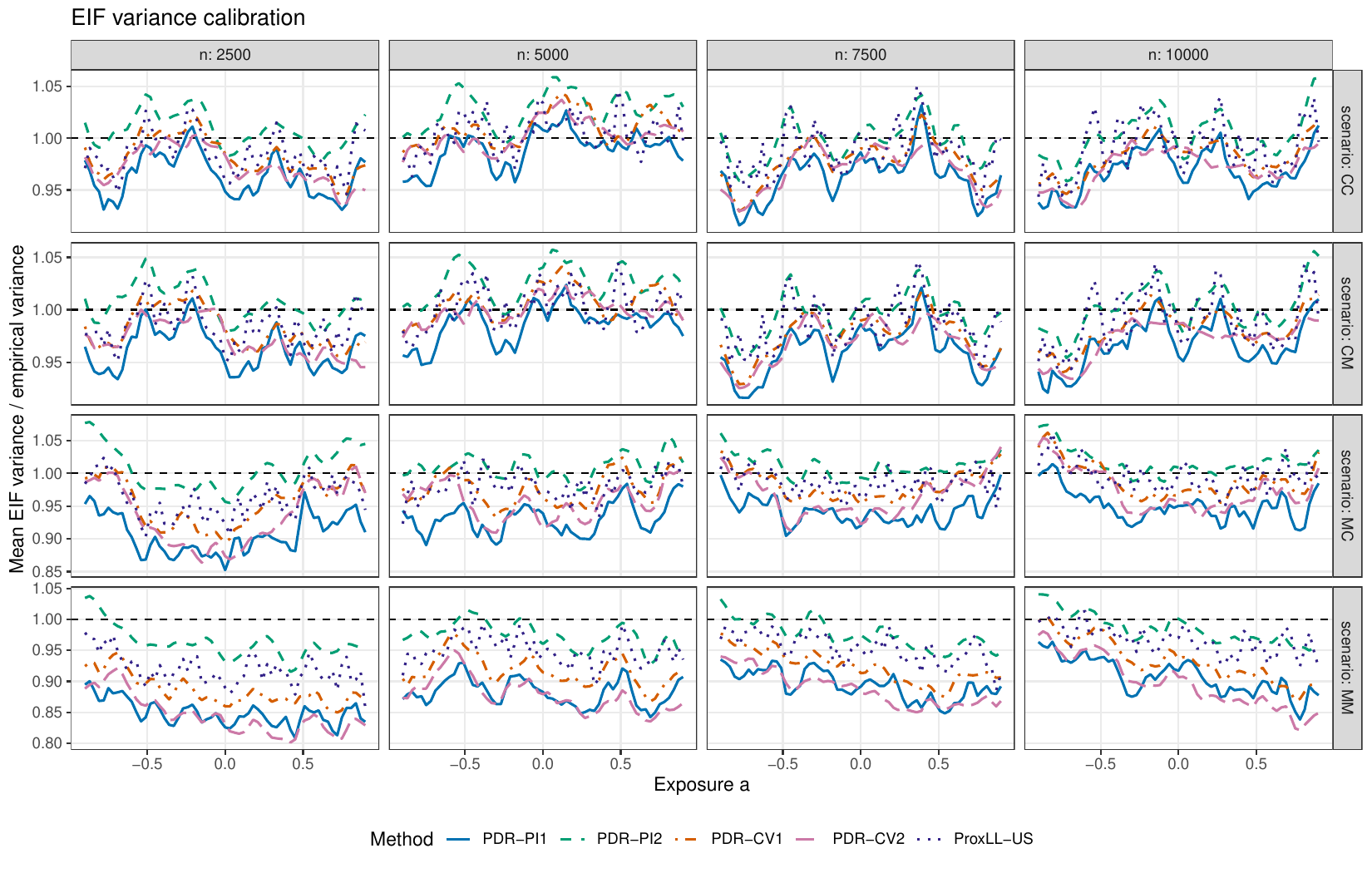}
\caption{Ratio of the mean estimated influence function variance to the
empirical variance across replications over the treatment grid. The dashed
line denotes one.}
\label{fig:sim-eif-calibration}
\end{figure}

Figure~\ref{fig:sim-eif-calibration} shows close agreement between the
influence function variance estimator and the empirical variance in CC and
CM.  Agreement remains generally close in MC, whereas the larger departures
under MM are consistent with that scenario serving only as a
misspecification diagnostic.

\subsection{Bridge compatibility of the continuous-treatment simulation design}
\label{sec:simulation-bridge-compatibility}

We verify that \eqref{eq:simulation-h0} and \eqref{eq:simulation-q0}
satisfy \eqref{eq:Bh} and \eqref{eq:Bq}, respectively.  Since
\[
  Y-h_0(W,A,X)=-\varepsilon_W+\varepsilon_Y
\]
and \((\varepsilon_W,\varepsilon_Y)\) is independent of \((Z,A,X)\),
\[
  E(Y\mid Z,A,X)
  =
  E\{h_0(W,A,X)\mid Z,A,X\}.
\]
Thus, \(h_0\) satisfies \eqref{eq:Bh}.

For the treatment bridge,
\[
  A\mid U=u,X=x
  \sim
  N\!\left\{0.25(x_1+x_2)+0.15u,\,0.79\right\},
  \qquad
  A\sim N(0,V_A).
\]
It follows that
\[
  \frac{f_A(a)}{f(a\mid u,x)}
  =
  \left(\frac{0.79}{V_A}\right)^{1/2}
  \exp\!\left\{
  -\frac{a^2}{2V_A}
  +\frac{\{d(a,x)-0.15u\}^2}{2(0.79)}
  \right\}.
\]
Conditional on \(U=u,A=a,X=x\),
\[
  r(Z,a,x)
  =
  u-\frac{d(a,x)}{0.15}+\varepsilon_Z.
\]
For \(R\sim N(\mu,\tau^2)\) and \(v>\tau^2\),
\[
  E\!\left\{\exp\left(\frac{R^2}{2v}\right)\right\}
  =
  \left(\frac{v}{v-\tau^2}\right)^{1/2}
  \exp\!\left\{\frac{\mu^2}{2(v-\tau^2)}\right\}.
\]
Since
\[
  V_R-0.5^2=\frac{0.79}{0.15^2},
  \qquad
  Q\left(\frac{V_R}{V_R-0.5^2}\right)^{1/2}
  =
  \left(\frac{0.79}{V_A}\right)^{1/2},
\]
we obtain
\[
\begin{aligned}
  E\{q_0(Z,a,X)\mid U=u,A=a,X=x\}
  &=
  \left(\frac{0.79}{V_A}\right)^{1/2}
  \exp\!\left\{
  -\frac{a^2}{2V_A}
  +\frac{\{d(a,x)-0.15u\}^2}{2(0.79)}
  \right\} \\
  &=
  \frac{f_A(a)}{f(a\mid u,x)}.
\end{aligned}
\]

Finally, \(W\perp\!\!\!\perp(Z,A)\mid(U,X)\) and Bayes' rule give
\[
\begin{aligned}
  E\{q_0(Z,a,X)\mid W=w,A=a,X=x\}
  &=
  \int
  \frac{f_A(a)}{f(a\mid u,x)}
  \,dF(u\mid w,a,x) \\
  &=
  \int
  \frac{f_A(a)}{f(a\mid u,x)}
  \frac{f(a\mid u,x)}{f(a\mid w,x)}
  \,dF(u\mid w,x) \\
  &=
  \frac{f_A(a)}{f(a\mid w,x)}.
\end{aligned}
\]
Thus, \(q_0\) satisfies \eqref{eq:Bq}.

\bibliography{allref} 

@article{Chernozhukov2014Gaussian,
author = {Victor Chernozhukov and Denis Chetverikov and Kengo Kato},
title = {{Gaussian approximation of suprema of empirical processes}},
volume = {42},
journal = {The Annals of Statistics},
number = {4},
publisher = {Institute of Mathematical Statistics},
pages = {1564 -- 1597},
year = {2014},
doi = {10.1214/14-AOS1230},
URL = {https://doi.org/10.1214/14-AOS1230}
}

@article{calonico2018effect,
  title={On the effect of bias estimation on coverage accuracy in nonparametric inference},
  author={Calonico, Sebastian and Cattaneo, Matias D and Farrell, Max H},
  journal={Journal of the American Statistical Association},
  volume={113},
  number={522},
  pages={767--779},
  year={2018},
  publisher={Taylor \& Francis}
}

@article{cui2024semiparametric,
  title={Semiparametric proximal causal inference},
  author={Cui, Yifan and Pu, Hongming and Shi, Xu and Miao, Wang and Tchetgen Tchetgen, Eric},
  journal={Journal of the American Statistical Association},
  volume={119},
  number={546},
  pages={1348--1359},
  year={2024},
  publisher={Taylor \& Francis}
}

@article{Westling2020causal,
    author = {Westling, Ted and Gilbert, Peter and Carone, Marco},
    title = {Causal Isotonic Regression},
    journal = {Journal of the Royal Statistical Society Series B: Statistical Methodology},
    volume = {82},
    number = {3},
    pages = {719-747},
    year = {2020},
    month = {05}
}

@article{miao2018identifying,
  title        = {Identifying Causal Effects with Proxy Variables of an Unmeasured Confounder},
  author       = {Miao, Wang and Geng, Zhi and Tchetgen Tchetgen, Eric J.},
  journal      = {Biometrika},
  volume       = {105},
  number       = {4},
  pages        = {987--993},
  year         = {2018},
  doi          = {10.1093/biomet/asy038}
}

@incollection{carrasco2007linear,
  title        = {Linear Inverse Problems in Structural Econometrics Estimation Based on Spectral Decomposition and Regularization},
  author       = {Carrasco, Marine and Florens, Jean-Pierre and Renault, Eric},
  booktitle    = {Handbook of Econometrics},
  editor       = {Heckman, James J. and Leamer, Edward E.},
  volume       = {6B},
  pages        = {5633--5751},
  publisher    = {Elsevier},
  address      = {Amsterdam},
  year         = {2007},
  doi          = {10.1016/S1573-4412(07)06077-1}
}

@book{kress2014linear,
  title        = {Linear Integral Equations},
  author       = {Kress, Rainer},
  series       = {Applied Mathematical Sciences},
  volume       = {82},
  edition      = {3},
  publisher    = {Springer},
  address      = {New York},
  year         = {2014},
  doi          = {10.1007/978-1-4614-9593-2},
  isbn         = {978-1-4614-9592-5}
}

@article{Takatsu2025debiased,
  author  = {Takatsu, Kenta and Westling, Ted},
  title   = {Debiased inference for a covariate-adjusted regression function},
  journal = {Journal of the Royal Statistical Society Series B: Statistical Methodology},
  volume  = {87},
  number  = {1},
  pages   = {33--55},
  year    = {2025},
  doi     = {10.1093/jrsssb/qkae041}
}

@article{Doss2024nonparametric,
author = {Charles R. Doss and Guangwei Weng and Lan Wang and Ira Moscovice and Tongtan Chantarat},
title = {{A nonparametric doubly robust test for a continuous treatment effect}},
volume = {52},
journal = {The Annals of Statistics},
number = {4},
publisher = {Institute of Mathematical Statistics},
pages = {1592 -- 1615},
year = {2024},
doi = {10.1214/24-AOS2405},
URL = {https://doi.org/10.1214/24-AOS2405}
}

@article{belloni2018uniformly,
  author  = {Belloni, Alexandre and Chernozhukov, Victor and Chetverikov, Denis and Wei, Ying},
  title   = {Uniformly Valid Post-Regularization Confidence Regions for Many Functional Parameters in Z-Estimation Framework},
  journal = {The Annals of Statistics},
  year    = {2018},
  volume  = {46},
  number  = {6B},
  pages   = {3643--3675},
  doi     = {10.1214/17-AOS1566}
}

@article{kennedy2023towards,
  title={Towards optimal doubly robust estimation of heterogeneous causal effects},
  author={Kennedy, Edward H},
  journal={Electronic Journal of Statistics},
  volume={17},
  number={2},
  pages={3008--3049},
  year={2023},
  publisher={The Institute of Mathematical Statistics and the Bernoulli Society}
}

@article{doss2026doubly,
  title={Doubly robust pointwise confidence intervals for a monotonic continuous treatment effect curve},
  author={Doss, Charles R},
  journal={Journal of the American Statistical Association},
  pages={1--12},
  year={2026},
  publisher={Taylor \& Francis}
}

@book{wasserman2006all,
  title={All of nonparametric statistics},
  author={Wasserman, Larry},
  year={2006},
  publisher={Springer}
}

@article{kennedy2017non,
  title={Non-parametric methods for doubly robust estimation of continuous treatment effects},
  author={Kennedy, Edward H and Ma, Zongming and McHugh, Matthew D and Small, Dylan S},
  journal={Journal of the Royal Statistical Society Series B: Statistical Methodology},
  volume={79},
  number={4},
  pages={1229--1245},
  year={2017},
  publisher={Oxford University Press}
}

@article{VanderVaart2006estimating,
  author  = {{van der Vaart}, Aad W. and {van der Laan}, Mark J.},
  title   = {Estimating a Survival Distribution with Current Status Data and High-Dimensional Covariates},
  journal = {The International Journal of Biostatistics},
  volume  = {2},
  number  = {1},
  pages   = {1--42},
  year    = {2006},
  doi     = {10.2202/1557-4679.1014}
}

@book{VanderVaartWellner2023weak,
  author    = {{van der Vaart}, Aad W. and Wellner, Jon A.},
  title     = {Weak Convergence and Empirical Processes: With Applications to Statistics},
  edition   = {2},
  series    = {Springer Series in Statistics},
  publisher = {Springer Nature Switzerland},
  address   = {Cham},
  year      = {2023},
  doi       = {10.1007/978-3-031-29040-4},
  isbn      = {978-3-031-29038-1}
}

@article{donohue2001impact,
  author  = {Donohue, John J. and Levitt, Steven D.},
  title   = {The Impact of Legalized Abortion on Crime},
  journal = {The Quarterly Journal of Economics},
  volume  = {116},
  number  = {2},
  pages   = {379--420},
  year    = {2001},
  doi     = {10.1162/00335530151144050}
}

@inproceedings{mastouri2021proximal,
  title={Proximal Causal Learning with Kernels: Two-Stage Estimation and Moment Restriction},
  author={Mastouri, Afsaneh and Zhu, Yuchen and Gultchin, Limor and Korba, Anna and Silva, Ricardo and Kusner, Matt J. and Gretton, Arthur and Muandet, Krikamol},
  booktitle={Proceedings of the 38th International Conference on Machine Learning},
  pages={7512--7523},
  year={2021},
  volume={139},
  series={Proceedings of Machine Learning Research},
  publisher={PMLR}
}

@article{woody2020estimating,
  title={Estimating Heterogeneous Effects of Continuous Exposures Using Bayesian Tree Ensembles: Revisiting the Impact of Abortion Rates on Crime},
  author={Woody, Spencer and Carvalho, Carlos M. and Hahn, P. Richard and Murray, Jared S.},
  journal={arXiv preprint arXiv:2007.09845},
  year={2020}
}

@inproceedings{wu2024doubly,
  title={Doubly Robust Proximal Causal Learning for Continuous Treatments},
  author={Wu, Yong and Fu, Yanwei and Wang, Shouyan and Sun, Xinwei},
  booktitle={The Twelfth International Conference on Learning Representations},
  year={2024},
  url={https://openreview.net/forum?id=TjGJFkU3xL}
}

@incollection{hirano2004propensity,
  author    = {Hirano, Keisuke and Imbens, Guido W.},
  title     = {The Propensity Score with Continuous Treatments},
  booktitle = {Applied Bayesian Modeling and Causal Inference from Incomplete-Data Perspectives},
  editor    = {Gelman, Andrew and Meng, Xiao-Li},
  publisher = {Wiley},
  pages     = {73--84},
  year      = {2004}
}

@article{ImaiVanDyk2004,
  title   = {Causal Inference with General Treatment Regimes: Generalizing the Propensity Score},
  author  = {Imai, Kosuke and Van Dyk, David A.},
  journal = {Journal of the American Statistical Association},
  volume  = {99},
  number  = {467},
  pages   = {854--866},
  year    = {2004}
}

@article{GalvaoWang2015,
  title   = {Uniformly Semiparametric Efficient Estimation of Treatment Effects with a Continuous Treatment},
  author  = {Galvao, Antonio F. and Wang, Liang},
  journal = {Journal of the American Statistical Association},
  volume  = {110},
  number  = {512},
  pages   = {1528--1542},
  year    = {2015}
}

@article{ColangeloLee2026,
  title   = {Double Debiased Machine Learning Nonparametric Inference with Continuous Treatments},
  author  = {Colangelo, Kyle and Lee, Ying-Ying},
  journal = {Journal of Business \& Economic Statistics},
  volume  = {44},
  number  = {1},
  pages   = {67--79},
  year    = {2026},
  doi     = {10.1080/07350015.2025.2505487}
}

@book{pfanzagl2012contributions,
  title={Contributions to a general asymptotic statistical theory},
  author={Pfanzagl, Johann},
  year={2012},
  publisher={Springer Science \& Business Media}
}

@article{bickel1982adaptive,
  title={On adaptive estimation},
  author={Bickel, Peter J},
  journal={The Annals of Statistics},
  pages={647--671},
  year={1982},
  publisher={JSTOR}
}

@article{Chernozhukov2018,
  title   = {Double/Debiased Machine Learning for Treatment and Structural Parameters},
  author  = {Chernozhukov, Victor and Chetverikov, Denis and Demirer, Mert and Duflo, Esther and Hansen, Christian and Newey, Whitney and Robins, James},
  journal = {The Econometrics Journal},
  volume  = {21},
  number  = {1},
  pages   = {C1--C68},
  year    = {2018}
}

@incollection{vanderlaan2011crossvalidated,
  author    = {van der Laan, Mark J. and Rose, Sherri and Zheng, Wenjing},
  title     = {Cross-Validated Targeted Minimum-Loss-Based Estimation},
  booktitle = {Targeted Learning: Causal Inference for Observational and Experimental Data},
  editor    = {van der Laan, Mark J. and Rose, Sherri},
  publisher = {Springer},
  year      = {2011},
  pages     = {459--474}
}

@article{tchetgen2024introduction,
  title   = {An Introduction to Proximal Causal Inference},
  author  = {Tchetgen Tchetgen, Eric J. and Ying, Andrew and Cui, Yifan and Shi, Xu and Miao, Wang},
  journal = {Statistical Science},
  volume  = {39},
  number  = {3},
  pages   = {375--390},
  year    = {2024}
}

@article{Ying2023,
  title   = {Proximal Causal Inference for Complex Longitudinal Studies},
  author  = {Ying, Andrew and Miao, Wang and Shi, Xu and Tchetgen Tchetgen, Eric J.},
  journal = {Journal of the Royal Statistical Society Series B: Statistical Methodology},
  volume  = {85},
  number  = {3},
  pages   = {684--704},
  year    = {2023}
}

@article{Ying2024,
  title   = {Proximal Survival Analysis to Handle Dependent Right Censoring},
  author  = {Ying, Andrew},
  journal = {Journal of the Royal Statistical Society Series B: Statistical Methodology},
  volume  = {86},
  number  = {5},
  pages   = {1414--1434},
  year    = {2024}
}

@article{Dukes2023,
  title   = {Proximal Mediation Analysis},
  author  = {Dukes, Oliver and Shpitser, Ilya and Tchetgen Tchetgen, Eric J.},
  journal = {Biometrika},
  volume  = {110},
  number  = {4},
  pages   = {973--987},
  year    = {2023}
}

@article{shi2026theory,
  title   = {Theory for Identification and Inference with Synthetic Controls: A Proximal Causal Inference Framework},
  author  = {Shi, Xu and Li, Kendrick Qijun and Yu, Myeonghun and Miao, Wang and Kuchibhotla, Arun Kumar and Hu, Mengtong and Tchetgen Tchetgen, Eric},
  journal = {Journal of the American Statistical Association},
  pages   = {1--23},
  year    = {2026}
}

@article{qi2024proximal,
  title   = {Proximal Learning for Individualized Treatment Regimes under Unmeasured Confounding},
  author  = {Qi, Zhengling and Miao, Rui and Zhang, Xiaoke},
  journal = {Journal of the American Statistical Association},
  volume  = {119},
  number  = {546},
  pages   = {915--928},
  year    = {2024}
}

@inproceedings{Shen2023,
  title     = {Optimal Treatment Regimes for Proximal Causal Learning},
  author    = {Shen, Tao and Cui, Yifan},
  booktitle = {Advances in Neural Information Processing Systems},
  volume    = {36},
  pages     = {47735--47748},
  year      = {2023}
}

@inproceedings{shi2022minimax,
  title     = {A Minimax Learning Approach to Off-Policy Evaluation in Confounded Partially Observable Markov Decision Processes},
  author    = {Shi, Chengchun and Uehara, Masatoshi and Huang, Jiawei and Jiang, Nan},
  booktitle = {International Conference on Machine Learning},
  pages     = {20057--20094},
  year      = {2022},
  organization = {PMLR}
}

@inproceedings{sverdrup2023proximal,
  title     = {Proximal Causal Learning of Conditional Average Treatment Effects},
  author    = {Sverdrup, Erik and Cui, Yifan},
  booktitle = {International Conference on Machine Learning},
  pages     = {33285--33298},
  year      = {2023},
  organization = {PMLR}
}

@article{Zhang2026,
  title   = {On Identification of Optimal Dynamic Treatment Regimes with Proxies of Hidden Confounders},
  author  = {Zhang, Jeffrey and Tchetgen Tchetgen, Eric},
  journal = {Observational Studies},
  volume  = {12},
  number  = {1},
  pages   = {1--15},
  year    = {2026}
}

@article{wang2026blessing,
  title   = {Blessing from Human-{AI} Interaction: Super Policy Learning in Confounded Environments},
  author  = {Wang, Jiayi and Shi, Chengchun and Qi, Zhengling},
  journal = {Journal of the American Statistical Association},
  pages   = {1--14},
  year    = {2026}
}

@article{bai2026proximal,
  title={Proximal Path-Specific Inference},
  author={Bai, Yang and Wu, Sihan and Sun, Baoluo and Cui, Yifan},
  journal={arXiv preprint arXiv:2605.09462},
  year={2026}
}

@article{ghassami2025causal,
  title={Causal inference with hidden mediators},
  author={Ghassami, Amiremad and Yang, Alan and Shpitser, Ilya and Tchetgen Tchetgen, Eric},
  journal={Biometrika},
  volume={112},
  number={1},
  pages={asae037},
  year={2025},
  publisher={Oxford University Press}
}

@article{wu2026proximal,
  title={Proximal Mediation Analysis with Hidden Recanting Witnesses},
  author={Wu, Sihan and Bai, Yang and Cui, Yifan},
  journal={arXiv preprint arXiv:2606.17600},
  year={2026}
}

@article{gao2025multiple,
  title={On Multiple Robustness of Proximal Dynamic Treatment Regimes},
  author={Gao, Yuanshan and Bai, Yang and Cui, Yifan},
  journal={arXiv preprint arXiv:2510.20451},
  year={2025}
}

\end{document}